\DeclareMathOperator*{\argmax}{arg\,max}
\definecolor{blue(pigment)}{rgb}{0.2, 0.2, 0.6}
\newtheorem{theorem}{Theorem}
\newtheorem{assumption}{Assumption}
\newtheorem{step}{Step}
\newtheorem{corollary}{Corollary}
\newtheorem{lemma}{Lemma}
\newtheorem{proposition}{Proposition}
\theoremstyle{definition}
\newtheorem{remark}{Remark}
\newtheorem{example}{Example}
\newtheorem*{example*}{Example}
\newtheorem{property}{Property}
\let\counterwithin\relax
\newcites{appendix}{References}
\DeclareRobustCommand{\Id}{\mathbf 1}
\DeclareRobustCommand{\Var}{\mathrm{Var}}
\DeclareRobustCommand{\indep}{\mathop{\perp\!\!\!\!\perp}}
\begin{document}

\title{\vspace{-2em}\Large{Optimal Decision Rules Under Partial Identification}}
\author{Kohei Yata\thanks{Department of Economics, University of Wisconsin--Madison.
email: \href{mailto:yata@wisc.edu}{yata@wisc.edu}.
	I am grateful to my thesis advisors, Yuichi Kitamura, Timothy Armstrong, and Yusuke Narita, for their invaluable advice and encouragement.
	I also thank Donald Andrews, Jack Porter, and seminar and conference participants at the Applied Microeconometrics Conference in Honor of Professor Hidehiko Ichimura, Bravo/JEA/SNSF Workshop, Bristol, BU, CMStatistics 2022, Cornell, CREPE Day 2023, GSE-OSIPP-ISER Joint Conference in Economics 2024, Happy Hour Seminar, IAAE 2023, Michigan State, Notre Dame, Penn State, TSE, UC Irvine, UCL, UW--Madison, Western, and Yale for helpful comments and suggestions.
}
}
\date{\today}
\maketitle
\vspace{-.75cm}
\begin{abstract}
	
        I consider a class of statistical decision problems in which the policymaker must decide between two policies to maximize social welfare (e.g., the population mean of an outcome) based on a finite sample.
        The framework introduced in this paper allows for various types of restrictions on the structural parameter (e.g., the smoothness of a conditional mean potential outcome function) and accommodates settings with partial identification of social welfare.
	As the main theoretical result, I derive a finite-sample optimal decision rule under the minimax regret criterion.
        This rule has a simple form, yet achieves optimality among all decision rules; no ad hoc restrictions are imposed on the class of decision rules.
        I apply my results to the problem of whether to change an eligibility cutoff in a regression discontinuity setup, and illustrate them in an empirical application to a school construction program in Burkina Faso.
    ~\\~\\
    {\it Keywords:} Statistical decision theory, finite-sample minimax regret, partial identification, nonparametric regression models, regression discontinuity.

\end{abstract}


\sloppy

\section{Introduction}

    A fundamental goal of empirical research in economics is to inform policy decisions. Evaluation of counterfactual policies often requires extrapolating from observables to unobservables. Without strong model restrictions such as functional form assumptions,
    the performance of each counterfactual policy may be only partially determined by observed data. In such situations, policy decision-making is challenging, since we have no clear understanding of which policy is the best.
    For example, a regression discontinuity (RD) design only credibly estimates the impact of treatment on individuals at the eligibility cutoff. Therefore, without restrictive assumptions such as constant treatment effects, whether to offer the treatment to those away from the cutoff is ambiguous.
    Even randomized controlled trials may provide only partial knowledge of the impact of a new intervention, as can happen if
    the experimental sample is an unrepresentative subset of the target population.

    This paper studies the problem of using data to make policy decisions in settings in which social welfare under each policy may be only partially identified.
    Following the literature on statistical treatment choice \citep{Manski2004hetero}, I formulate the policy decision problem as a statistical decision problem.
    The framework introduced in this paper allows for various types of restrictions on the structural parameter, which potentially leads to partial identification of social welfare.
    It builds on \citeauthor{donoho1994}'s \citeyearpar{donoho1994} framework for optimal estimation in nonparametric regression models, which has recently been applied to estimation and inference on treatment-effect parameters \citep[e.g.,][]{armstrong2018optimal,Imbens2019RDD,kwon2020rd,Armstrong2021ATE,rambachan2023parallel,Chaisemartin2021aet}.
    I extend the framework to study optimal policy choice in a wide range of empirical settings with partial identification.
    Examples of this paper's framework include treatment choice using experiments with imperfect internal or external validity \citep{Stoye2012minimax,ishihara2021meta}; treatment choice using observational data under unconfoundedness with imperfect overlap; and policy adoption choice in difference-in-differences designs without exact parallel trends.
    
    Specifically, in the setup described in Section \ref{section:setup},
    the policymaker must decide between two policies, policy 1 and policy 0, to maximize social welfare.
    The difference in welfare between the two policies is given by $L(\theta)$,
    where $\theta$ is a possibly infinite-dimensional structural parameter that resides in a vector space $\mathbb{V}$, and 
    $L:\mathbb{V}\rightarrow\mathbb{R}$ is a known linear function.
    If $\theta$ is known, it is optimal to choose policy 1 if $L(\theta)\ge 0$ and policy 0 if $L(\theta)<0$.    
    The policymaker does not know $\theta$, but instead has access to a multivariate Gaussian sample $\boldsymbol Y=(Y_1,...,Y_n)'\in\mathbb{R}^n$ of the form
    \begin{align*}
	\boldsymbol{Y} \sim {\cal N}(\boldsymbol{m}(\theta), \boldsymbol\Sigma),
\end{align*}
    where $\boldsymbol{m}:\mathbb{V}\rightarrow\mathbb{R}^n$ is a known linear function and $\boldsymbol\Sigma$ is known.
    After observing $\boldsymbol{Y}$, the policymaker decides between policies 1 and 0.
    The main structural assumption is that $\theta$ belongs to a known set $\Theta\subset\mathbb{V}$ that is convex and centrosymmetric (i.e., $\theta\in\Theta$ implies $-\theta\in\Theta$), which encodes the policymaker's a priori knowledge of parameter restrictions.
    Depending on the restrictions, the welfare contrast $L(\theta)$ may or may not be point identified from the knowledge of the point-identified reduced-form parameter $\boldsymbol m(\theta)$.
    
    As detailed in Section \ref{section:existing}, an example of this setup is the choice between assigning treatment to everyone in the population (policy 1) and assigning treatment to no one (policy 0).
    Suppose that the policymaker has access to data from a regression model $Y_i=f(x_i,d_i)+U_i$, where $f(x,d)$ represents the conditional mean potential outcome under treatment $d\in\{0,1\}$ given covariates $x$, and $\{(x_i,d_i)\}_{i=1}^n$ is treated as fixed.
    Suppose further that treating everyone is preferred to treating no one if the population average treatment effect is positive.
    This problem is a special case when we assume $(U_1,...,U_n)'\sim {\cal N}(\boldsymbol{0}, \boldsymbol\Sigma)$ and set $\boldsymbol Y=(Y_1,...,Y_n)'$, $\theta=f$, $\boldsymbol{m}(f)=(f(x_1,d_1),...,f(x_n,d_n))'$, and $L(f)=\int[f(x,1)-f(x,0)]dP_X$, where $P_X$ is the population distribution of the covariates and is assumed to be known.
     The parameter space $\Theta$ is a class of conditional mean potential outcome functions $f$ that satisfy, for example, some smoothness restrictions (e.g., bounds on derivatives or the linearity of a function), which leads to either point or partial identification of the average treatment effect $L(f)$.
    The main contribution of this paper is to obtain a finite-sample optimal decision rule under the minimax regret criterion, which is a standard criterion used in the literature on statistical treatment choice \citep[e.g.,][]{Manski2004hetero,Manski2007missing,Stoye2009minimax,Stoye2012minimax,Kitagawa2018EWM}.
    A decision rule is a mapping from the sample $\boldsymbol Y$ to the probability of choosing policy 1.
    The minimax regret criterion evaluates the performance of a decision rule based on its worst-case expected welfare loss, or {\it regret}, relative to the oracle welfare-maximizing policy $\mathbf{1}\{L(\theta)\ge 0\}$, where the worst-case scenario is considered over the parameter space $\Theta$.
    I derive a decision rule that minimizes the worst-case regret among all decision rules, with no functional-form restrictions imposed on the class of rules.
    When $\boldsymbol Y$ is non-Gaussian and/or its variance is unknown, a feasible version of this decision rule can be constructed by plugging in an estimated variance.
    Appendix \ref{appendix:asymptotics} provides conditions under which its maximum regret over a class of distributions of $\boldsymbol Y$ converges to that of a minimax regret rule as $n\rightarrow\infty$.

    To solve the minimax regret problem, I use the {\it hardest one-dimensional subfamily} argument, which \cite{donoho1994} used to solve minimax affine estimation problems.
    The key idea is to search for the hardest one-dimensional subproblem---specifically, the one with the highest minimax risk among all subproblems with parameter spaces defined by one-dimensional linear subfamilies of the original parameter space $\Theta$.
    Then, verify that a minimax rule for the hardest one-dimensional subproblem is also minimax optimal for the original problem.
    Applying this strategy to the minimax regret problem is challenging due to
    the difference in the structure of the risk functions: Unlike standard risk functions for estimation, such as mean squared error (MSE), the regret cannot be decomposed into the bias and variance; instead, it can be decomposed into the probability of misidentifying the best policy and the potential welfare loss due to misidentification.
    To derive a minimax regret rule, I first characterize the hardest one-dimensional subproblem by optimizing a certain measure of the strength of the signal with respect to the best policy within subproblems (Lemma \ref{lemma:hardest-1d}).
    The subproblem with the optimal level of the signal strength achieves the best balance between the probability of misidentification and the potential welfare loss.
    I then propose a specific minimax regret rule for the hardest subproblem, and prove its minimax regret optimality for the original problem (Theorem \ref{theorem:main}).

    The results of this paper provide novel insights into how a minimax regret rule uses data to make decisions.
    First, the derived rule depends on the observations $Y_1,...,Y_n$ only through their weighted sum, $\sum_{i=1}^n{w}_iY_i$, although no such restrictions are imposed a priori.
    The weights ${w}_1,...,{w}_n$ can be calculated by solving a sequence of convex optimization problems, which is computationally and analytically tractable in leading examples.\footnote{Thus, this paper addresses a challenge in the application of statistical decision theory raised by \cite{Manski2020decision}, who wrote (p. 2848): ``The primary challenge to use of statistical decision theory is computational \ldots\ Future advances should continue to expand the scope of applications.''}
    
    Second, the minimax regret rule is nonrandomized or randomized, depending on the strength of the parameter restrictions and the variance of $\boldsymbol{Y}$.
    Specifically, if the restrictions are strong or the variance of $\boldsymbol{Y}$ is large in a certain formal sense, the minimax regret rule is a nonrandomized threshold rule of the form $\mathbf{1}\left\{\sum_{i=1}^n{w}_iY_i\ge 0\right\}$.
    On the other hand, if the restrictions are weak or the variance of $\boldsymbol{Y}$ is small, the rule is a randomized threshold rule of the form $\mathbf{1}\left\{\sum_{i=1}^n{w}_iY_i+\xi\ge 0\right\}$, where $\xi$ is generated independent of $\boldsymbol{Y}$ according to a certain distribution.
    In the latter case, randomization plays the role of reducing the probability of misidentification under worst-case parameter values, and thereby leads to a reduction in worst-case regret.
    This result generalizes \citeauthor{Stoye2012minimax}'s \citeyearpar{Stoye2012minimax} from a specific univariate problem to a general class of multivariate problems.
    
    Third, this paper sheds light on the connection between minimax regret treatment choice and minimax estimation.
    The weighted sum $\sum_{i=1}^n{w}_iY_i$ used by the minimax regret rule can be viewed as an estimator of the welfare contrast $L(\theta)$.
    In other words, the minimax regret rule can be viewed as a {\it plug-in} rule, which plugs the estimator $\sum_{i=1}^n{w}_iY_i$ (plus a random noise $\xi$ for the randomized rule) into the oracle optimal decision $\mathbf{1}\{L(\theta)\ge 0\}$.
    I show that this estimator is optimal in the sense of minimizing the worst-case squared bias (over $\Theta$) among all estimators of $L(\theta)$ subject to a certain bound on the variance (Theorem \ref{theorem:minimax-bias}).
    Furthermore, I show that this estimator places more importance on bias than variance compared with the minimax affine MSE estimator in \cite{donoho1994} (Theorem \ref{theorem:donoho}).

    Fourth, while this paper's main focus is on optimal rules under partial identification, my results are novel even under point identification for problems with restricted parameter spaces.
    When the welfare contrast $L(\theta)$ is point identified, the minimax regret rule is shown to always be a nonrandomized threshold rule, with its form depending on the strength and type of restrictions.
    For example, consider a linear regression model in which $\boldsymbol{m}(\theta)=\boldsymbol{X}\theta$ for some fixed $n\times k$ design matrix $\boldsymbol{X}$, and suppose $\Theta=\{\theta\in\mathbb{R}^k: \|\theta\|_p\le C\}$ for some known constants $C\ge 0$ and $p\ge 1$, where $\|\cdot\|_p$ denotes the $L_p$-norm.
    The results of this paper imply that the minimax regret rule makes decisions based on the sign of $L(\hat\theta)$, where $\hat\theta$ is an estimator for $\theta$ that resolves the bias-variance tradeoff in a certain way (e.g., the ridge estimator with the regularization parameter depending on $C$ for $p=2$).
    The result of \cite{hirano2009asymptotics} applies to problems with no parameter restrictions (i.e., $\Theta=\mathbb{R}^k$), but does not apply to ones with restricted parameter spaces.

    I demonstrate the practical relevance of my framework through an application to the problem of eligibility cutoff choice.
    In Section \ref{section:app}, I consider a situation in which the eligibility for treatment (e.g., social, educational, or welfare programs) is determined based on whether the value of an individual's characteristic exceeds a certain cutoff, as in RD setups.
    The policymaker wants to change the cutoff to a specific new value if the welfare effect of the cutoff change is positive. Here, the welfare effect is defined as the average treatment effect across units whose treatment status would be changed under the new cutoff.
    In this context, a decision rule maps the data collected under the status quo cutoff to the probability of changing the cutoff.
    My results can be used to find an optimal decision rule for this problem, under various restrictions on the conditional mean potential outcome function that enable extrapolation from one side of the cutoff to the other.
    For illustration, I assume that the function satisfies Lipschitz continuity with a known Lipschitz constant (i.e., a bound on the first derivatives), which leads to partial identification of the welfare effect.
    Applying my general results, I show that the minimax regret rule makes a decision based on the difference between a weighted average of observed outcomes for the treated units and that for the untreated units, with the weight vector determined by the choice of the Lipschitz constant.
    The rule can easily be computed by solving finite-dimensional convex optimization problems.

	Finally, in Section \ref{section:empirical}, I apply this rule to the Burkinab\'{e} Response to Improve Girls' Chances to Succeed (BRIGHT) program, a school construction program in Burkina Faso \citep{Kazianga2013bright}.
	With the aim of improving educational outcomes in rural villages, the program constructed primary schools in 132 villages from 2005 to 2008.
	To allocate schools, the Ministry of Education first computed a score that summarized village characteristics for each of the nominated 293 villages, then selected the highest-ranking villages to receive a school.
        Consider a policymaker who uses data collected after the completion of this program to decide whether to scale it up.
        As a hypothetical policy question, I consider whether to construct schools in the top 20\% of previously ineligible villages, using the enrollment rate as the welfare measure.
        I impose the Lipchitz constraint on the counterfactual enrollment rates across villages.
	To consider policy costs, I assume that it is optimal to scale up the program if its cost-effectiveness is better than that of a similar policy.
        Applying my theoretical results,
        I find that the minimax regret rule nonrandomly decides not to scale up the program for a plausible range of the Lipschitz constant.

\paragraph{Related Literature.}
This paper contributes to the literature on minimax regret statistical treatment choice under point identification \citep[e.g.,][]{hirano2009asymptotics,Stoye2009minimax,Stoye2012minimax,tetenov2012asymmetric} and partial identification \citep[e.g.,][]{Manski2007missing,Stoye2012minimax}.
\cite{Stoye2012minimax} derives a minimax regret rule in settings in which the experiment has imperfect validity, considering both Bernoulli and Gaussian models.
My result generalizes \citeauthor{Stoye2012minimax}'s \citeyearpar{Stoye2012minimax} in Gaussian models (Proposition 7(iii)) by allowing for multivariate samples, three- or higher-dimensional parameters, and various forms of parameter spaces.
Recently, \cite{ishihara2021meta} consider the problem of deciding whether to introduce a new policy based on results from multiple external studies.
They restrict attention to the class of nonrandomized threshold rules based on a weighted average of the sample and propose a way to numerically minimize the maximum regret.
In contrast, I do not impose any restrictions on decision rules, and use the hardest one-dimensional subfamily argument to analytically derive a minimax regret rule. My approach does not involve numerically minimizing the maximum regret and, for some problems, offers a closed-form expression for a minimax regret rule.
Since the initial version of this paper was circulated, there have been some advances in the literature.
\cite{kitagawa2023partial,kitagawa2022nonlinear} derive minimax fractional rules under squared welfare regret loss for both point and partial identification settings.
\cite{olea2023partial} point out the nonuniqueness of minimax regret rules for problems in which my minimax regret rule is randomized, and propose the least randomizing rule.
Their work and mine are complementary; their analysis relies on the existence of a minimax regret rule based on a weighted sum of the sample, while I prove the existence of such a rule and derive its formula.


Broadly, this paper contributes to the literature on treatment choice and policy learning under partial identification, which has been growing in econometrics and statistics \citep[e.g.,][]{Manski2000ambiguity,manski2009diversified,Manski2010vaccine,manski2011are,manski2011partial,Manski2020decision,chamberlain2011bayesian,kasy2018taxation,russell2020policy,Mo2020robust,Kallus2020confounding,dadamo2022orthogonal,ben-michael2022safe,adjaho2023,christensen2022discrete,Kido2023locally}.\footnote{An extensive literature examines the problem of learning treatment allocation policies that map an individual's covariates to a treatment. See \cite{Manski2004hetero,Dehejia2005decision,Stoye2009minimax,Stoye2012minimax,qian2011ind,Bhattacharya2012budget,Kitagawa2018EWM,kitagawa2021equal,Athey2021policy}; and \cite{Mbakop2021penalized}, among others, for point identification settings.
My approach can be applied to partial identification settings in which the choice set consists of two treatment assignment policies.
}
Many recent studies, either implicitly or explicitly, consider the worst-case welfare (or welfare loss) over the identified set given the point-identified parameter as the loss function of a statistical decision problem.
By contrast, this paper directly uses the welfare loss as the loss function without taking its worst-case value given the point-identified parameter, following the standard minimax regret criterion.
Furthermore, they derive finite-sample bounds on the expected loss, its convergence rates, or the local asymptotic optimality of their proposed rules, while this paper derives a finite-sample exact optimal rule.
Another difference, which is empirically relevant, is that many studies consider welfare parameters whose upper and lower bounds can be estimated at a parametric rate. In contrast, this paper covers welfare parameters whose bounds cannot be estimated at a parametric rate, such as the average treatment effect at a point of the running variable in a nonparametric RD setup.

The problem of eligibility cutoff choice considered in this paper is related to the literature on extrapolation away from the cutoff in RD designs, including
\cite{rokkanen2015rd,angrist2015rd,dong2015rd,Bertanha2020rd,Bertanha2020many,Bennett2020rd}; and \cite{Cattaneo2020multi}.
Unlike these papers, I explicitly consider the decision problem of whether to change the cutoff and derive an optimal decision rule.
Recently, \cite{zhang2022rd} consider the problem of learning cutoff-based policies under multi-cutoff designs and propose a maximin policy that is guaranteed to perform no worse than the existing policy.

\section{Problem Setup and Examples}\label{section:setup}

In this section, I introduce my framework and provide examples to illustrate it.
Section \ref{subsection:dgp} describes the statistical model that generates the data available to the policymaker. Section \ref{subsection:choice} defines the policymaker's action set and the associated social welfare functions.
Section \ref{section:optimality} introduces the minimax regret criterion as an optimality criterion for decision rules.
Section \ref{subsection:setup-relation} provides a brief discussion of the relationship between this framework and existing ones.
Finally, Section \ref{section:existing} illustrates the framework using two examples.




\subsection{Data-generating Model}\label{subsection:dgp}
Suppose that the policymaker observes a sample $\boldsymbol{Y}=(Y_1,...,Y_n)'\in \mathbb{R}^n$ of the form
\begin{align}
	\boldsymbol{Y} \sim {\cal N}(\boldsymbol{m}(\theta), \boldsymbol\Sigma),\label{model}
\end{align}
where $\theta$ is an unknown parameter that lies in a known subset $\Theta$ of a vector space $\mathbb{V}$; $\boldsymbol{m}:\mathbb{V}\rightarrow \mathbb{R}^n$ is a known linear function; and $\boldsymbol\Sigma$ is a known, positive-definite $n\times n$ matrix.
I allow $\theta$ to be an infinite-dimensional parameter such as a function.

The linearity of $\boldsymbol{m}$ is not necessarily restrictive. If we specify $\theta$ so that it contains each of the expected values of $Y_1,...,Y_n$ as its element, $\boldsymbol m$ is a function that extracts those expected values from $\theta$, which is linear in $\theta$.

This model allows the expected value of $\boldsymbol{Y}$ to depend on other observed variables such as covariates and treatment by treating them as fixed and subsuming them into $\boldsymbol m$ and $\boldsymbol \Sigma$.
For example, a regression model with fixed regressors
\begin{align*}
	Y_i=f(x_i)+U_i,~~~U_i\sim {\cal N}(0,\sigma^2(x_i)) ~\text{independent across $i$}
\end{align*}
is a special case in which $\boldsymbol{Y}=(Y_1,...,Y_n)'$, $\theta=f$, $\Theta$ is a class of functions, $\boldsymbol m(f)=(f(x_1),...,f(x_n))'$, and $\boldsymbol \Sigma={\rm diag}(\sigma^2(x_1),...,\sigma^2(x_n))$.

The normality of $\boldsymbol Y$ and the assumption of known variance are restrictive, but are often imposed to deliver finite-sample optimality results for statistical decision problems.
In some cases, the normal model is motivated as an approximation to a finite-sample problem.
Suppose that we observe an $n$-dimensional vector of statistics derived from the original data, which is an asymptotically normal estimator of its population counterpart.
For example, the mean outcome difference between the treatment and control groups in a randomized experiment is a statistic that is asymptotically normal for the population mean difference.
If we regard the $n$-dimensional vector of statistics as $\boldsymbol Y$, the normal model (\ref{model}) can be viewed as an asymptotic approximation.
Also, in Appendix \ref{appendix:asymptotics}, I consider an asymptotic framework in which the distribution of the error $\boldsymbol Y-\boldsymbol m(\theta)$ is unknown and the sample size $n$ goes to infinity. I propose a feasible decision rule and derive conditions under which its maximum regret over a class of distributions converges to that of a minimax regret rule as $n\rightarrow\infty$.

I assume that the parameter space $\Theta$ is convex and centrosymmetric (i.e., $\theta\in\Theta$ implies $-\theta\in\Theta$) throughout the paper.
Typical parameter spaces considered in empirical analyses are convex.
For example, in the regression model above, classes of functions with bounded derivatives are convex.
The centrosymmetry simplifies the minimax analysis; see Remark \ref{remark:centrosymmetry} in Section \ref{section:hardest} for the role of centrosymmetry. However, it rules out some shape restrictions.
In the regression model above, the class of convex (or concave) functions is noncentrosymmetric.

\subsection{Policy Choice Problem}\label{subsection:choice}
Now, suppose that the policymaker is interested in choosing between two policies, policy $1$ and policy $0$, to maximize social welfare.
Suppose that the welfare resulting from implementing policy $a\in\{0,1\}$ under $\theta$ is $W_a(\theta)$, where $W_a:\mathbb{V}\rightarrow \mathbb{R}$ is a known function specified by the policymaker.
The welfare contrast between policy $1$ and policy $0$ is given by
$$
L(\theta)\coloneqq W_1(\theta) - W_0(\theta).
$$
I assume that $L:\mathbb{V}\rightarrow \mathbb{R}$ is a linear function.
The optimal policy under $\theta$ is policy 1 if $L(\theta)>0$, policy 0 if $L(\theta)<0$, and either if $L(\theta)=0$.

One example of a welfare criterion is a weighted average of an outcome across individuals.
For example, suppose a policy could change the outcome of each individual.
Suppose also that we specify $\theta=(f_1(\cdot),f_0(\cdot))$, where $f_a(x)$ represents the counterfactual mean outcome under policy $a$ across individuals whose observed covariates are $x$.
The welfare under policy $a$ can be defined, for example, by the population mean outcome $W_a(\theta)=\int f_a(x)dP_X$, where $P_X$ is the probability measure of covariates and is assumed to be known.
In this case, the welfare contrast $L(\theta)=\int [f_1(x)-f_0(x)]dP_X$ is linear in $\theta$.
On the other hand, the linearity of $L$ may rule out welfare criteria that depend on the distribution of the counterfactual outcome other than the mean.
See \cite{kitagawa2021equal} for such welfare criteria.

Importantly, this framework allows for cases in which $L(\theta)$ is not point identified.
Let ${\cal M}\coloneqq\{\boldsymbol{m}(\theta):\theta\in\Theta\}\subset\mathbb{R}^n$ denote the set of possible values of the reduced-form parameter $\boldsymbol{m}(\theta)$.
The {\it identified set} of $L(\theta)$ when $\boldsymbol m(\theta)=\boldsymbol \mu\in \mathbb{R}^n$ is defined as
$$
I(\boldsymbol \mu)\coloneqq \{L(\theta):\boldsymbol m(\theta)=\boldsymbol \mu, \theta\in \Theta\}.
$$
$I(\boldsymbol \mu)$ may contain multiple elements for some or all $\boldsymbol \mu\in{\cal M}$.
If $I(\boldsymbol \mu)$ contains both positive and negative values, the superior policy is ambiguous even without sampling uncertainty.

\subsection{Optimality Criterion}\label{section:optimality}

This paper's goal is to provide an optimal decision rule for using data to make a policy decision.
A (randomized) {\it decision rule} is a measurable function $\delta:\mathbb{R}^n\rightarrow [0,1]$, where $\delta(\boldsymbol{y})$ represents the probability of choosing policy $1$ when the realization of the sample $\boldsymbol{Y}$ is $\boldsymbol y$.\footnote{In contexts in which \textit{fractional} treatment allocations, which assign treatment to a fraction of individuals in a population, are permitted, we can also interpret $\delta$ as a \textit{fractional} rule. Here, $\delta(\boldsymbol y)$ represents the fraction of individuals to whom we would assign treatment. If the welfare of treating a fraction $a\in [0,1]$ is defined as $W_a(\theta)=W_0(\theta)+a(W_1(\theta)-W_0(\theta))$, the regret of a fractional rule is equal to that of a randomized rule.}
I consider the minimax regret criterion as an optimality criterion for decision rules.
To introduce it, I first define
the {\it welfare regret loss} for policy choice $a\in\{0,1\}$ under $\theta$ as
\begin{align*}
	l(a,\theta)&\coloneqq \max_{a'\in\{0,1\}}W_{a'}(\theta)-W_a(\theta)=\begin{cases}
		L(\theta)\cdot(1-a)~~&\text{ if } L(\theta)\ge 0,\\
		-L(\theta)\cdot a~~&\text{ if } L(\theta)< 0.
	\end{cases}
\end{align*}
The welfare regret loss $l(a,\theta)$ is the difference in welfare between the optimal policy and policy $a$ under $\theta$.
If the policymaker chooses the superior policy, they do not incur any loss; otherwise, they incur a loss of the absolute value of the welfare contrast $L(\theta)$.

The {\it risk} or {\it regret} of decision rule $\delta$ under $\theta$ is the expected welfare regret loss
$$
R(\delta,\theta)\coloneqq\begin{cases}
L(\theta)(1-\mathbb{E}_\theta[\delta(\boldsymbol{Y})])~~&\text{ if } L(\theta)\ge 0,\\
-L(\theta)\mathbb{E}_\theta[\delta(\boldsymbol{Y})]~~&\text{ if } L(\theta)< 0,
\end{cases}
$$
where $\mathbb{E}_\theta$ denotes the expectation taken with respect to $\boldsymbol{Y}$ under $\theta$.
Using the regret as a performance measure allows one to consider not only the error probabilities ($1-\mathbb{E}_\theta[\delta(\boldsymbol{Y})]$ or $\mathbb{E}_\theta[\delta(\boldsymbol{Y})]$) but also the potential welfare loss ($|L(\theta)|$).

The regret of a decision rule can vary with $\theta$ over the parameter space $\Theta$. Generally, no rule uniformly dominates all other rules.
The minimax regret criterion aggregates the regret over $\Theta$ by considering the {\it maximum} or {\it worst-case regret}, defined as
$\sup_{\theta\in\Theta}R(\delta,\theta)$.
Let ${\cal R}(\Theta)$ denote the {\it minimax risk} or {\it minimax regret}, defined as
\begin{align}
{\cal R}(\Theta)\coloneqq\inf_{\delta\in{\cal D}}\sup_{\theta\in\Theta}R(\delta,\theta), \label{eq:minimax}
\end{align}
where ${\cal D}$ denotes the set of all decision rules.
Under the minimax regret criterion, we aim to derive a {\it minimax regret} decision rule $\delta^*$, which satisfies
$\sup_{\theta\in\Theta}R(\delta^*,\theta)={\cal R}(\Theta)$.

To further understand the minimax regret criterion, note that the above minimax problem can equivalently be written as
$\inf_{\delta\in{\cal D}}\sup_{\theta\in\Theta}\left(\max_{a\in\{0,1\}}W_a(\theta)-U(\delta,\theta)\right)$,
where $U(\delta,\theta)\coloneqq W_1(\theta)\mathbb{E}_\theta[\delta(\boldsymbol{Y})]+W_0(\theta)(1-\mathbb{E}_\theta[\delta(\boldsymbol{Y})])$ is the expected welfare of decision rule $\delta$ under $\theta$.
Thus, the minimax regret criterion optimizes the uniform closeness of the expected welfare to the maximum attainable welfare.
When the minimax risk ${\cal R}(\Theta)$ is small, a minimax regret rule uniformly achieves near-optimal expected welfare across all parameter values.

\sloppy

Alternative optimality criteria include the maximin criterion, which maximizes the worst-case expected welfare.
    Specifically, one aims to solve $\sup_{\delta\in{\cal D}}\inf_{\theta\in\Theta}U(\delta,\theta)$.
    It has been pointed out that the maximin criterion is unreasonably pessimistic and can lead to pathological decision rules \citep{Manski2004hetero,Stoye2009minimax}; see \cite{olea2023partial} for such a result in the setting of this paper.\footnote{For example, suppose the policymaker perfectly knows that the welfare of the status quo policy ($a=0$) is given by $W_0(\theta)=w_0$ for some constant $w_0$.
    If the welfare of the new policy ($a=1$) can be less than $w_0$ under at least one parameter value (i.e., $\inf_{\theta\in\Theta}W_1(\theta)<w_0$), then the decision rule that always maintains the status quo regardless of the data (i.e., $\delta(\boldsymbol{Y})=0$) is optimal under the maximin criterion.}
    By contrast, the minimax regret criterion optimizes the worst-case expected welfare relative to what is achievable at a given parameter value, which often leads to nontrivial decision rules.
    Another alternative is 
    the Bayes criterion, which optimizes the average risk over a prior.
    That is, one aims to solve $\sup_{\delta\in{\cal D}}\int U(\delta,\theta)d\pi(\theta)$ (or equivalently $\inf_{\delta\in{\cal D}}\int R(\delta,\theta)d\pi(\theta)$), where $\pi$ is a prior on $\Theta$. 
    See, for example, \cite{chamberlain2011bayesian} and \cite{kasy2018taxation} for Bayesian treatment choice.
    This paper focuses on the minimax regret criterion, following prior treatment choice studies \citep[e.g.,][]{Manski2004hetero,Manski2007missing,hirano2009asymptotics,Stoye2009minimax,Stoye2012minimax,tetenov2012asymmetric,Kitagawa2018EWM}. Studying optimal rules under the Bayes or other possible criteria is beyond the scope of this paper.

\subsection{Relation to Existing Frameworks}\label{subsection:setup-relation}    
    Here, I discuss the relationship between the above framework and existing ones. For minimax regret treatment choice, the framework in this paper generalizes the univariate Gaussian problems with two-dimensional parameters in \cite{Stoye2012minimax} to accommodate multivariate samples, parameters of three or higher dimensions, and various types of parameter restrictions.
    It also generalizes the limiting version of minimax regret problems under parametric models studied by \cite{hirano2009asymptotics}
    to accommodate partially identified welfare contrasts and restricted parameter spaces.
    In the setting described in Section \ref{section:existing}, \cite{ishihara2021meta} derive a minimax regret rule within the class of nonrandomized threshold rules based on a weighted average of the sample, namely $\delta(\boldsymbol{Y})=\mathbf{1}\{\sum_{i=1}^nw_iY_i\ge 0\}$ with $\sum_{i=1}^nw_i=1$.
    Their result does not require the convexity of the parameter space, but instead assumes that the parameter space is invariant to the addition of vectors of ones, which excludes bounded parameter spaces.\footnote{A generalization of their invariance assumption to the general setup of this paper is as follows: There exists $\iota\in\Theta$ such that $L(\iota)=1$ and $\theta+c\iota\in\Theta$ for all $\theta\in\Theta$ and $c\in\mathbb{R}$. Under this condition and the centrosymmetry of $\Theta$, it is possible to extend their approach to derive a minimax regret rule among rules of the form $\delta(\boldsymbol{Y})=\mathbf{1}\{\boldsymbol{w}'\boldsymbol{Y}\ge 0\}$ with $\boldsymbol{w}\in\mathbb{R}^n$ and $\boldsymbol{w}'\boldsymbol{m}(\iota)=1$, in the general setup of this paper.}
    The setting of \cite{ishihara2021meta} with any convex parameter space, whether bounded or unbounded, is a special case of my setting.

    \cite{donoho1994} studies the optimal estimation of a linear functional of $\theta$ in a more general version of this paper's model, which allows for infinite-dimensional Gaussian models and noncentrosymmetric parameter spaces.
    \cite{donoho1994} derives minimax estimators and confidence intervals within the class of affine procedures, using squared error, absolute error, or the length of a fixed-length two-sided confidence interval as a loss function.
    Using the framework of \cite{donoho1994}, \cite{armstrong2018optimal} provide a one-sided confidence interval that minimizes the maximum $\beta$th quantile of the excess length among all one-sided confidence intervals of a given confidence level.
    In contrast to these studies, this paper focuses on a binary decision problem under welfare regret loss.
    Section \ref{section:relation} discusses the connection between minimax estimation and minimax regret treatment choice in detail.



\subsection{Examples}\label{section:existing}


I illustrate my framework using two examples.
The first is a special case of \citeauthor{ishihara2021meta}'s \citeyearpar{ishihara2021meta} setup, which I use as a running example to illustrate theoretical results in Section \ref{section:minimax}.
The second is policy choice using observational data under unconfoundedness.

\begin{example}[Evidence Aggregation \citep{ishihara2021meta}]\label{example:intersection}
    Consider a policymaker who is interested in deciding whether to introduce a new policy to a specific local population based on causal evidence of similar policies implemented in other populations.
    We observe an $n$-dimensional sample $\boldsymbol Y\sim {\cal N}(\boldsymbol m(\theta), \boldsymbol\Sigma)$,  where $\theta=(\theta_1,...,\theta_n,\theta_{n+1})'\in \Theta\subset \mathbb{R}^{n+1}$, $\boldsymbol m(\theta)=(\theta_1,...,\theta_n)'$, and $\boldsymbol{\Sigma}={\rm diag}(\sigma_1^2,...,\sigma_n^2)$.
    The welfare contrast is given by $L(\theta)=\theta_{n+1}$. Here, $\theta_{n+1}$ is the average welfare effect of a new policy on the target population; $\theta_1,...,\theta_n$ are the average welfare effects on $n$ study populations; and $Y_1,...,Y_n$ are estimators for $\theta_1,...,\theta_n$.
    $\Theta$ imposes restrictions on the differences between $\theta_i$'s, so that $\theta_{n+1}$ is point or partially identified from $\boldsymbol m(\theta)$.
    To illustrate my results in Section \ref{section:minimax}, I focus on a simple case in which $n=2$ and $\sigma_1^2=\sigma_2^2=\sigma^2$ for some $\sigma^2>0$. Also, I specify
    $$
    \Theta=\{\theta\in\mathbb{R}^3: |\theta_1-\theta_3|\le C_1, |\theta_2-\theta_3|\le C_2\}
    $$
    for some known constants $C_1,C_2\ge 0$.
    For $i=1,2$, $C_i$ reflects prior knowledge about how similar study population $i$ and the target population are in terms of the average welfare effect.
    In this example, ${\cal M}=\{\boldsymbol{m}(\theta):\theta\in\Theta\}=\{(\theta_1,\theta_2)'\in\mathbb{R}^2:|\theta_1-\theta_2|\le C_1+C_2\}$. The identified set of $L(\theta)$ when $\boldsymbol{m}(\theta)=\boldsymbol{\mu}\in {\cal M}$ is given by the following intersection bounds:
    $$
    I(\boldsymbol{\mu})=\{\theta_3:\boldsymbol{m}(\theta)=\boldsymbol{\mu},\theta\in\Theta\}=\left[\max\{\mu_1-C_1,\mu_2-C_2\},\min\{\mu_1+C_1,\mu_2+C_2\}\right].
    $$
    Note that the upper bound (and the lower bound) is not differentiable with respect to $\boldsymbol{\mu}$ if $\mu_1+C_1=\mu_2+C_2$ (and $\mu_1-C_1=\mu_2-C_2$, respectively).
    My framework covers problems with nondifferentiable upper and lower bounds on the welfare contrast.
    

\end{example}

    \begin{example}[Choice of Treatment Assignment Policy under Unconfoundedness]\label{example:unconfoundedness}
        Consider a policymaker interested in choosing who should be treated based on an individual’s observable covariates.
        Suppose each member $i$ of the population is characterized by potential outcomes $Y_i(1)$ and $Y_i(0)$ with and without treatment; a vector of covariates $X_i\in{\cal X}\subset\mathbb{R}^k$; and a treatment indicator $D_i\in\{0,1\}$.
    The policymaker observes a random sample $\{(Y_i,X_i,D_i)\}_{i=1}^n$, where $Y_i=Y_i(1)D_i+Y_i(0)(1-D_i)$ is the realized outcome.
    Let $f(x,d)=\mathbb{E}[Y_i(d)|X_i=x]$ and $\sigma^2(x,d)={\rm Var}(Y_i(d)|X_i=x)$ for $(x,d)\in {\cal X}\times\{0,1\}$.
    Assume that the unconfoundedness condition, $(Y_i(1),Y_i(0))\indep D_i|X_i$, holds, so that $f(x,d)=\mathbb{E}[Y_i|X_i=x,D_i=d]$ and $\sigma^2(x,d)={\rm Var}(Y_i|X_i=x,D_i=d)$ for $x\in{\cal X}_d$ and $d\in\{0,1\}$, where ${\cal X}_d$ denotes the support of $X_i$ conditional on $D_i=d$.
    On the other hand, I do not impose the overlap condition, $0<\mathbb{P}(D_i=1|X_i)<1$, so the conditional average treatment effect $f(x,1)-f(x,0)$ is generally not point identified without additional structure.
    I condition on the realized values $\{(x_i,d_i)\}_{i=1}^n$ of $\{(X_i,D_i)\}_{i=1}^n$ to obtain a regression model with fixed regressors
    $$
    Y_i=f(x_i,d_i)+U_i,
    $$
    where $U_i$ is independent across $i$, $\mathbb{E}[U_i]=0$, and ${\rm Var}(U_i)=\sigma^2(x_i,d_i)$.
    This model fits into my framework by assuming $U_i$ is normal and setting $\boldsymbol Y=(Y_1,...,Y_n)'$, $\theta=f$, $\boldsymbol m(f)=(f(x_1,d_1),...,f(x_n,d_n))'$, and $\boldsymbol \Sigma={\rm diag}(\sigma^2(x_1,d_1),...,\sigma^2(x_n,d_n))$.

    Now, suppose the policymaker must decide between two treatment assignment policies, $\pi_1$ and $\pi_0$, where each policy $\pi_a:{\cal X}\rightarrow[0,1]$, $a\in\{0,1\}$, specifies the probability of assigning treatment to individuals with covariates $x\in{\cal X}$.
    For example, $\pi_0$ may be the status quo policy that generates the treatment assignment of the units in the sample (i.e., $\pi_0(X_i)=\mathbb{P}(D_i=1|X_i)$), and $\pi_1$ may be a new policy; in this case, if $\pi_0$ is a deterministic policy (i.e., $\pi_0(x)\in\{0,1\}$ for all $x$), then there is no overlap between the supports of covariates for the treated and untreated groups in the sampling population (i.e., ${\cal X}_1\cap{\cal X}_0=\varnothing$).
    Alternatively, $\pi_0$ may assign treatment to no one (i.e., $\pi_0(x)=0$ for all $x$), and $\pi_1$ may assign treatment to everyone (i.e., $\pi_1(x)=1$ for all $x$).
    Suppose the welfare under policy $a\in\{0,1\}$ is an average of the conditional mean potential outcome across different values of covariates
	$$
	W_a(f) = \int [f(x,1)\pi_a(x)+f(x,0)(1-\pi_a(x))]d\nu(x)
	$$
	for some known measure $\nu$.
	The welfare contrast between the two policies is
	$$
	L(f)=W_1(f)-W_0(f)=\int (\pi_1(x)-\pi_0(x))[f(x,1)-f(x,0)]d\nu(x).
	$$

	To point or partially identify $L(f)$ under imperfect overlap, we need to impose restrictions on $f$. Suppose that $f\in{\cal F}$, where ${\cal F}$ is a known convex and centrosymmetric function class and plays the role of the parameter space $\Theta$.
    Possible function classes include the class of functions with a known bound on derivatives.
    In Section \ref{section:app}, I consider a special case of this setting, in which $x$ is scalar, $\pi_0$ is the status quo cutoff-based policy that generates the data, and $\pi_1$ is a new cutoff-based policy.
    I derive a minimax regret rule under the assumption that $f$ belongs to the Lipschitz class with a known Lipschitz constant.
    \end{example}
	
\section{Minimax Regret Rules in General Setup}\label{section:minimax}

    In this section, I solve the minimax regret problem by using the hardest one-dimensional subfamily argument, which \cite{donoho1994} used to solve minimax affine estimation problems.
    This approach consists of three steps.
    The first step is to solve one-dimensional subproblems, in which the parameter space is restricted to a one-dimensional linear bounded subfamily.
    The second step is to search for the hardest one-dimensional subproblem, defined as the one with the highest minimax risk.
    The final step is to show that a minimax rule for the hardest one-dimensional subproblem is also minimax optimal for the original problem.
    
    A key distinction between minimax regret treatment choice and minimax affine estimation lies in the structure of their risk functions.
    For estimation, standard risk functions such as mean squared error (MSE) can be decomposed into bias and variance. In contrast, the regret can be decomposed into the error probability and the potential welfare loss.
    Consequently, substantially different arguments are required for each of the above three steps.


I normalize $\boldsymbol\Sigma=\sigma ^2\boldsymbol I_n$ for some $\sigma>0$ throughout this section, where $\boldsymbol I_n$ is the identity matrix.
This normalization is without loss of generality, since $\boldsymbol \Sigma$ is known.\footnote{Specifically, let $\tilde{\boldsymbol Y}=\boldsymbol{\Sigma}^{-1/2}\boldsymbol{Y}$ and $\tilde{\boldsymbol{m}}(\theta)=\boldsymbol{\Sigma}^{-1/2}\boldsymbol{m}(\theta)$ so that 
$\tilde{\boldsymbol Y}\sim {\cal N}(\tilde{\boldsymbol{m}}(\theta), \boldsymbol I_n)$.
For any rule $\delta(\boldsymbol{Y})$, its regret in the problem with $(L,\boldsymbol{m},\Theta, \boldsymbol{\Sigma})$ is the same as the regret of the rule $\tilde \delta(\tilde{\boldsymbol{Y}})$ in the problem with $(L,\tilde{\boldsymbol{m}},\Theta, \boldsymbol{I}_n)$, where $\tilde \delta(\tilde{\boldsymbol{Y}})=\delta(\boldsymbol{\Sigma}^{1/2}\tilde{\boldsymbol{Y}})=\delta(\boldsymbol{Y})$.}
I use the following assumption to derive a minimax regret rule.

\begin{assumption}\label{assumption:problem}
    (i) $L:\mathbb{V}\rightarrow\mathbb{R}$ and $\boldsymbol m:\mathbb{V}\rightarrow\mathbb{R}^n$ are linear; (ii) $\Theta$ is a nonempty, convex, and centrosymmetric subset of $\mathbb{V}$; (iii) $L(\theta)\neq 0$ for at least one $\theta\in\Theta$; (iv) $\sup I(\boldsymbol{0})<\infty$.
\end{assumption}

The first two conditions are introduced in Section \ref{section:setup}.
It is straightforward to see that ${\cal M}=\{\boldsymbol{m}(\theta):\theta\in\Theta\}$ is a nonempty, convex, and centrosymmetric subset of $\mathbb{R}^n$ under these two conditions.
These conditions also imply the following relationship between the lower and upper bounds on the welfare contrast: $\sup I(\boldsymbol{\mu})=-\inf I(-\boldsymbol{\mu})$ for all $\boldsymbol{\mu}\in {\cal M}$.
By this symmetry, it is sufficient to focus on $\sup I(\boldsymbol{\mu})$ in the analysis below.
The third condition excludes the trivial case in which $L(\theta)=0$ for any $\theta\in\Theta$ and hence the worst-case regret of any decision rule is zero.
The fourth condition is also necessary to obtain nontrivial results: If $\sup I(\boldsymbol{0})=\infty$, the regret of any decision rule is unbounded on $\{\theta\in\Theta:\boldsymbol m(\theta)=\boldsymbol 0\}$, and hence the worst-case regret of any rule is infinity.

    In the following, I first solve one-dimensional subproblems in Section \ref{section:one-dim}.
    Next, I characterize the hardest one-dimensional subproblem in Section \ref{section:hardest} and present a minimax regret rule for the original problem in Section \ref{section:mmr}.

\subsection{Minimax Regret Rules for One-dimensional Subproblems}\label{section:one-dim}

First, I consider one-dimensional subproblems.
To define a one-dimensional subproblem, take any $\bar\theta\in \Theta$ such that $L(\bar\theta)\ge 0$.
A {\it one-dimensional subfamily}, denoted by $[-\bar\theta,\bar\theta]$, is defined as the set of all convex combinations of $\bar\theta$ and $-\bar\theta$:
$$
[-\bar\theta,\bar\theta]\coloneqq\{\theta\in \mathbb{V}:\theta=\lambda\bar\theta,\lambda\in [-1,1]\}.
$$
$[-\bar\theta,\bar\theta]$ is a subset of $\Theta$, since $\Theta$ is convex and centrosymmetric.
Given $\bar\theta$, $[-\bar\theta,\bar\theta]$ can be viewed as a one-dimensional parameter space with a scalar parameter $\lambda\in[-1,1]$.
A {\it one-dimensional subproblem} is the problem of finding a minimax regret rule for $[-\bar\theta,\bar\theta]$, whose maximum regret over $[-\bar\theta,\bar\theta]$ equals ${\cal R}([-\bar\theta,\bar\theta])$, where ${\cal R}([-\bar\theta,\bar\theta])=\inf_{\delta\in{\cal D}}\sup_{\theta\in [-\bar\theta,\bar\theta]}R(\delta,\theta)$.

The following result derives minimax regret rules for one-dimensional subproblems.
Let $\|\cdot\|$ denote the Euclidean norm and $\Phi$ denote the cumulative distribution function of a standard normal random variable.

\begin{lemma}[Minimax Regret Rules for One-dimensional Subproblems]\label{lemma:one-dim0}
    Suppose Assumption \ref{assumption:problem} holds, and consider a one-dimensional subproblem for $[-\bar\theta,\bar\theta]$, where $\bar \theta\in\Theta$ and $L(\bar\theta)\ge 0$.
    Then, the following holds.
    \begin{enumerate}[label=(\roman*)]
        \item If $\boldsymbol{m}(\bar\theta)\neq \boldsymbol{0}$,
	then the decision rule
	$
	\delta^*(\boldsymbol{Y})=\mathbf{1}\left\{\boldsymbol{m}(\bar \theta)'\boldsymbol{Y}\ge 0\right\}
	$
	is minimax regret.
    \item If $\boldsymbol{m}(\bar\theta)=\boldsymbol{0}$, then any decision rule $\delta^*$ such that
	$
	\mathbb{E}[\delta^*(\boldsymbol{Y})]=1/2,
	$
    where $\boldsymbol Y\sim {\cal N}(\boldsymbol 0,\sigma^2 \boldsymbol I_n)$,
	is minimax regret.
	\item \label{lemma:one-dim0:mmrisk} The minimax risk is given by
	\begin{align*}
{\cal R}([-\bar \theta,\bar \theta])
=\begin{cases}
L(\bar\theta)\Phi\left(-\frac{\|\boldsymbol m(\bar\theta)\|}{\sigma}\right) ~~ &\text{ if } \|\boldsymbol m(\bar\theta)\|\le \tau^*\sigma,\\
\tau^*\sigma \frac{L(\bar\theta)}{\|\boldsymbol m(\bar\theta)\|}\Phi\left(-\tau^*\right) &\text{ if } \|\boldsymbol m(\bar\theta)\|> \tau^*\sigma,
\end{cases}
\end{align*}
where $\tau^*\in \arg\max_{t\ge 0}t\Phi(-t)$, which is unique ($\tau^*\approx 0.752$).
\end{enumerate}
\end{lemma}
\begin{proof}
	See Appendix \ref{proof:lemma:one-dim}.
\end{proof}

Lemma \ref{lemma:one-dim0} provides minimax regret rules for subproblem $[-\bar \theta,\bar \theta]$ separately for the following two cases: (i) $\boldsymbol{m}(\bar\theta)\neq \boldsymbol{0}$ and (ii) $\boldsymbol{m}(\bar\theta)= \boldsymbol{0}$.
In case (i), a linear threshold rule based on $\boldsymbol{m}(\bar \theta)'\boldsymbol{Y}$ is minimax regret. On the other hand, in case (ii), any decision rule that chooses each policy with probability one-half over the distribution of $\boldsymbol Y\sim {\cal N}(\boldsymbol 0,\sigma^2 \boldsymbol I_n)$ is minimax regret.
For example, a linear threshold rule $\delta(\boldsymbol{Y})=\mathbf{1}\left\{\boldsymbol{w}'\boldsymbol{Y}\ge 0\right\}$ for any $\boldsymbol{w}\neq \boldsymbol{0}$, a probit-like randomized rule $\delta(\boldsymbol{Y})=\Phi(\boldsymbol{w}'\boldsymbol{Y})$ for any $\boldsymbol{w}\neq \boldsymbol{0}$, and a data-independent randomized rule $\delta(\boldsymbol{Y})=1/2$ are all minimax regret. There exist infinitely many minimax regret rules for this case.

To gain intuition for Lemma \ref{lemma:one-dim0}, 
I outline the derivation for case (i) and provide the proof for case (ii) when $L(\bar\theta)>0$.
\paragraph{Case (i): $L(\bar\theta)>0$ \textnormal{and} $\boldsymbol{m}(\bar\theta)\neq \boldsymbol{0}$.}
Under $\theta=\lambda\bar\theta$, $\boldsymbol{Y}\sim {\cal N}(\lambda \boldsymbol{m}(\bar\theta),\sigma^2\boldsymbol{I}_n)$ and the welfare constrast is $\lambda L(\bar\theta)$ by the linearity of $\boldsymbol{m}$ and $L$.
Viewing $\lambda\in [-1,1]$ as the underlying parameter of $[-\bar\theta,\bar\theta]=\{\theta\in \mathbb{V}:\theta=\lambda\bar\theta,\lambda\in [-1,1]\}$,
one can show that the scalar statistic
$T(\boldsymbol{Y})=\frac{\boldsymbol m(\bar\theta)'\boldsymbol Y}{\|\boldsymbol m(\bar\theta)\|^2}\sim {\cal N}\left(\lambda,\frac{\sigma^2}{\|\boldsymbol m(\bar\theta)\|^2}\right)$
is a sufficient statistic of $\boldsymbol Y$ for $\lambda$.
Since the class of decision rules that only depend on a sufficient statistic is essentially complete\footnote{A class ${\cal C}$ of decision rules is {\it essentially complete} if, for any decision rule $\delta\notin{\cal C}$, there is a decision rule $\delta'\in{\cal C}$ such that $R(\delta,\theta)\ge R(\delta',\theta)$ for all $\theta\in\Theta$.} \citep[Theorem 1 in Chapter 1]{Berger1985book}, it is justified to restrict one's attention to rules that depend on $\boldsymbol{Y}$ only through $T(\boldsymbol{Y})\in\mathbb{R}$.

With this restricted class of rules, the minimax regret problem for $[-\bar \theta,\bar \theta]$ is equivalent to a {\it univariate} problem in which we observe a univariate sample $T\sim {\cal N}\left(\lambda,\frac{\sigma^2}{\|\boldsymbol m(\bar\theta)\|^2}\right)$ and the welfare contrast is $\lambda L(\bar\theta)$ for $\lambda\in [-1,1]$.
By a mild extension of the results of \cite{hirano2009asymptotics} and \cite{tetenov2012asymmetric} for univariate problems with unbounded parameter spaces to ones with bounded parameter spaces, the simple threshold rule $\delta(T)=\mathbf{1}\left\{T\ge 0\right\}$ is minimax regret for this univariate problem.
Consequently, 
$\delta^*(\boldsymbol Y)=\mathbf{1}\{T(\boldsymbol{Y})\ge 0\}=\mathbf{1}\left\{\boldsymbol{m}(\bar \theta)'\boldsymbol{Y}\ge 0\right\}$ is minimax regret for the original one-dimensional subproblem $[-\bar \theta,\bar \theta]$.

For the minimax risk, a simple calculation shows that the regret of $\delta^*$ under $\theta=\lambda\bar\theta$ is
\begin{align*}
    R(\delta^*,\lambda\bar\theta)&=|\lambda| L(\bar\theta) \cdot \Phi\left(-|\lambda|\cdot \|\boldsymbol{m}(\bar \theta)\|/\sigma\right).
\end{align*}
The first factor $|\lambda|L(\bar\theta)$ is the welfare loss when $\delta^*$ chooses the inferior policy under $\lambda\bar\theta$, which is increasing in $|\lambda|$.
The second factor $\Phi\left(-|\lambda|\cdot \|\boldsymbol{m}(\bar \theta)\|/\sigma\right)$ is the probability of choosing the inferior policy, which decreases in $|\lambda|$.
The regret $R(\delta^*,\lambda\bar\theta)$ is shown to be a bimodal function of $\lambda$ symmetric around zero, globally maximized at $\lambda\in \{-\tau^*\sigma/\|\boldsymbol{m}(\bar \theta)\|,\tau^*\sigma/\|\boldsymbol{m}(\bar \theta)\|\}$.
Maximizing this function over $\lambda\in [-1,1]$ yields the minimax risk in Lemma \ref{lemma:one-dim0}\ref{lemma:one-dim0:mmrisk}.

\paragraph{Case (ii): $L(\bar\theta)>0$ \textnormal{and} $\boldsymbol{m}(\bar\theta)= \boldsymbol{0}$.}
By the linearity of $\boldsymbol{m}$, $\boldsymbol{m}(\theta)=\boldsymbol{0}$ for any $\theta\in [-\bar\theta,\bar\theta]$. 
For a given rule $\delta$, the probability of choosing policy 1, $\mathbb{E}_{\theta}[\delta(\boldsymbol Y)]$, is constant over $\theta\in [-\bar\theta,\bar\theta]$, so that
\begin{align*}
    R(\delta,\theta)
    &=L(\theta)(1-\mathbb{E}[\delta(\boldsymbol Y)])\mathbf{1}\{L(\theta)\ge 0\}+(-L(\theta))\mathbb{E}[\delta(\boldsymbol Y)]\mathbf{1}\{L(\theta)< 0\},
\end{align*}
where $\boldsymbol Y\sim {\cal N}(\boldsymbol 0,\sigma^2 \boldsymbol I_n)$.
Consequently, the maximum regret can be calculated as follows:
$$
\sup_{\theta\in[-\bar\theta,\bar\theta]}R(\delta,\theta)=\begin{cases}
L(\bar \theta) (1-\mathbb{E}[\delta(\boldsymbol{Y})]) ~~\text{(at $\theta=\bar\theta$)}&\text{ if } \mathbb{E}[\delta(\boldsymbol{Y})]<1/2,\\
L(\bar \theta) /2  \quad\quad\quad\quad\hspace{0.62em}~~~\text{(at $\theta\in\{-\bar\theta,\bar\theta\}$)}&\text{ if } \mathbb{E}[\delta(\boldsymbol{Y})]=1/2,\\
L(\bar \theta)\mathbb{E}[\delta(\boldsymbol{Y})] \quad\quad~~~\hspace{.2em}\text{(at $\theta=-\bar\theta$)}&\text{ if } \mathbb{E}[\delta(\boldsymbol{Y})]>1/2.
\end{cases}
$$
Thus, any rule $\delta^*$ with $\mathbb{E}[\delta^*(\boldsymbol{Y})]=1/2$ is minimax regret, and 
the minimax risk is $L(\bar \theta)/2$.

\subsection{Hardest One-dimensional Subproblem}\label{section:hardest}

Now, I search for the {\it hardest one-dimensional subfamily} $[-\bar{\theta}^*,\bar{\theta}^*]\subset\Theta$, which satisfies $
{\cal R}([-\bar{\theta}^*,\bar{\theta}^*])=\sup_{\bar\theta\in\Theta:L(\bar\theta)\ge 0}{\cal R}([-\bar \theta,\bar \theta])$.
The key to characterizing the hardest one-dimensional subfamily is the {\it modulus of continuity}, defined as
\begin{align}
\omega(\epsilon)\coloneqq \sup\{L(\theta): \|\boldsymbol{m}(\theta)\|\le \epsilon,\theta\in\Theta\},~~~\epsilon\ge 0. \label{eq:def-mod}
\end{align}
The modulus of continuity and its variants have been used in constructing minimax estimators and confidence intervals on linear functionals in Gaussian models \citep{donoho1991,donoho1994,low1995tradeoff,armstrong2018optimal}.\footnote{\cite{donoho1994} defines the modulus of continuity as $\tilde \omega(\epsilon)= \sup\{|L(\theta)-L(\tilde\theta)|: \|\boldsymbol{m}(\theta-\tilde\theta)\|\le \epsilon,\theta,\tilde\theta\in\Theta\}$.
	If $\Theta$ is convex and centrosymmetric, the relationship $\tilde\omega(\epsilon)=2\omega(\epsilon/2)$ holds.}
Under Assumption \ref{assumption:problem}, $\omega(\epsilon)$ is the value of a convex optimization problem.
If $\Theta$ is closed, this problem typically has a solution.
\footnote{See \citet[Lemma 2]{donoho1994} for sufficient conditions for the existence of a solution.}
In addition, $\omega(\cdot)$ is nonnegative and nondecreasing by construction.
Furthermore, the modulus of continuity has the following properties.
\begin{lemma}\label{lemma:mod}
    Under Assumption \ref{assumption:problem}, $\omega(\epsilon)<\infty$ for all $\epsilon\ge 0$, and $\omega(\cdot)$ is concave and continuous on $[0,\infty)$ and right differentiable at $0$.
\end{lemma}
\begin{proof}
	See Lemma \ref{lemma:superdifferential} in Appendix \ref{appendix:lemma}.
\end{proof}

The following result shows that the hardest one-dimensional subfamily can be obtained by solving an optimization problem that involves the modulus of continuity.
Define the right derivative of $\omega(\cdot)$ at $0$ as $\omega'(0)\coloneqq\lim_{\epsilon\downarrow 0}\frac{\omega(\epsilon)-\omega(0)}{\epsilon}$.
Let $\phi$ denote the probability density function of a standard normal random variable.
For $\epsilon\ge 0$, I say that {\it $\theta_\epsilon\in \Theta$ attains the modulus of continuity at $\epsilon$} if $L(\theta_\epsilon)=\omega(\epsilon)$ and $\|\boldsymbol{m}(\theta_\epsilon)\|\le \epsilon$.


\begin{lemma}[Hardest One-dimensional Subproblem]\label{lemma:hardest-1d}
	Under Assumption \ref{assumption:problem}, the following holds.
\begin{enumerate}[label=(\roman*)]
    \item \label{lemma:hardest-1d:risk}
    The largest minimax risk among one-dimensional subproblems is given by
    \begin{align}
    \sup_{\bar\theta\in\Theta:L(\bar\theta)\ge 0}{\cal R}([-\bar \theta,\bar \theta])=\sup_{\epsilon\in[0,\tau^*\sigma]}\omega(\epsilon)\Phi(-\epsilon/\sigma).\label{eq:hardest}
    \end{align}
    \item \label{lemma:hardest-1d:epsilon}
    There exists a unique solution $\epsilon^*$ to the right-hand side of \eqref{eq:hardest}.
    Furthermore, $\epsilon^*>0$ if and only if $\sigma\omega'(0)>2\phi(0)\omega(0)$.
    \item \label{lemma:hardest-1d:sol}
    If there exists $\theta_{\epsilon^*}\in\Theta$ that attains the modulus of continuity at $\epsilon^*$, then $[-\theta_{\epsilon^*},\theta_{\epsilon^*}]$ is the hardest one-dimensional subfamily. That is,
     $$
    {\cal R}([-\theta_{\epsilon^*},\theta_{\epsilon^*}])=\sup_{\bar\theta\in\Theta:L(\bar\theta)\ge 0}{\cal R}([-\bar \theta,\bar \theta]).
    $$
	Furthermore, $\boldsymbol{m}(\theta_{\epsilon^*})$ does not depend on the choice of $\theta_{\epsilon^*}$ among potentially multiple $\theta$'s that attain the modulus of continuity at $\epsilon^*$, and $\|\boldsymbol{m}(\theta_{\epsilon^*})\|=\epsilon^*$.
 \end{enumerate}
\end{lemma}
\begin{proof}
	See Appendix \ref{proof:lemma:hardest-1d}.
\end{proof}

To provide intuition for this result, consider a convenient case in which $\omega(\epsilon)=\sup_{\theta\in\Theta:\|\boldsymbol{m}(\theta)\|= \epsilon} L(\theta)$ for each $\epsilon\ge 0$. In other words, suppose, for illustration, that the supremum remains the same if the inequality constraint $\|\boldsymbol{m}(\theta)\|\le \epsilon$ is replaced by the equality constraint $\|\boldsymbol{m}(\theta)\|= \epsilon$.
In this case, the largest minimax risk among one-dimensional subproblems can be calculated as follows:
\begin{align*}
&
\sup_{\bar\theta\in\Theta:L(\bar\theta)\ge 0}{\cal R}([-\bar \theta,\bar \theta])=\sup_{\epsilon\ge 0}\sup_{\bar\theta\in\Theta:\|\boldsymbol m(\bar\theta)\|=\epsilon,L(\bar\theta)\ge 0}{\cal R}([-\bar \theta,\bar \theta]) \\
&=\sup\left\{\sup_{\epsilon\in [0, \tau^*\sigma]}\sup_{\bar\theta\in\Theta:\|\boldsymbol m(\bar\theta)\|=\epsilon}L(\bar\theta)\Phi\left(-\frac{\|\boldsymbol m(\bar\theta)\|}{\sigma}\right),\sup_{\epsilon>\tau^*\sigma}\sup_{\bar\theta\in\Theta:\|\boldsymbol m(\bar\theta)\|=\epsilon} \dfrac{\tau^*\sigma L(\bar\theta)}{\|\boldsymbol m(\bar\theta)\|}\Phi\left(-\tau^*\right)\right\}\\
&=\sup\left\{\sup_{\epsilon\in [0, \tau^*\sigma]}\omega(\epsilon)\Phi(-\epsilon/\sigma),\sup_{\epsilon>\tau^*\sigma}\frac{\tau^*\sigma\omega(\epsilon)}{\epsilon}\Phi(-\tau^*)\right\},
\end{align*}
where 
the second equality uses Lemma \ref{lemma:one-dim0}\ref{lemma:one-dim0:mmrisk} and the third equality uses the assumption that $\omega(\epsilon)=\sup_{\theta\in\Theta:\|\boldsymbol{m}(\theta)\|= \epsilon} L(\theta)$.
To simplify the last expression, note that $\frac{\omega(\epsilon)}{\epsilon}$ is continuous and nonincreasing on $(0,\infty)$ by the concavity of $\omega(\cdot)$, and hence $\sup_{\epsilon>\tau^*\sigma}\frac{\tau^*\sigma\omega(\epsilon)}{\epsilon}\Phi(-\tau^*)=\omega(\tau^*\sigma)\Phi(-\tau^*)\le \sup_{\epsilon\in[0,\tau^*\sigma]}\omega(\epsilon)\Phi(-\epsilon/\sigma)$.
As a result,
$$
\sup_{\bar\theta\in\Theta:L(\bar\theta)\ge 0}{\cal R}([-\bar \theta,\bar \theta])=\sup_{\epsilon\in[0,\tau^*\sigma]}\omega(\epsilon)\Phi(-\epsilon/\sigma).
$$

In light of the above derivation, the problem of searching for the hardest one-dimensional subfamily can be interpreted as the following problem by an adversarial Nature. Nature optimizes $\epsilon\in[0,\tau^*\sigma]$, which represents a level of the strength of the signal provided by the sample $\boldsymbol{Y}$ within subfamily $[-\bar\theta,\bar\theta]$.
The signal strength is measured by $\|\boldsymbol m(\bar\theta)\|$: The larger $\|\boldsymbol m(\bar\theta)\|$ is, the more information the sample $\boldsymbol Y$ provides about the sign of $L(\theta)$, and
the smaller the error probability $\Phi(-\|\boldsymbol m(\bar\theta)\|/\sigma)$ is.
The modulus of continuity $\omega(\epsilon)= \sup\{L(\bar \theta): \|\boldsymbol{m}(\bar \theta)\|= \epsilon,\bar \theta\in\Theta\}$ then represents the maximum potential welfare loss (i.e., $|L(\bar \theta)|$) among subfamilies subject to a given level of signal strength.
Nature finally searches for the best level $\epsilon^*\in\arg\max_{\epsilon\in[0,\tau^*\sigma]}\omega(\epsilon)\Phi(-\epsilon/\sigma)$, which optimizes the balance between the error probability and maximum potential welfare loss.
If $\epsilon^*>0$, the sample $\boldsymbol{Y}$ is informative within the hardest subfamily $[-\theta_{\epsilon^*},\theta_{\epsilon^*}]$.
On the other hand,
if $\epsilon^*=0$, the sample $\boldsymbol{Y}$ is uninformative within the hardest subfamily.

Lemma \ref{lemma:hardest-1d}\ref{lemma:hardest-1d:epsilon} implies that $\epsilon^*>0$ if and only if $\omega'(0)/\omega(0)$ or $\sigma$ is sufficiently large.
This is intuitive from the perspective of Nature's problem of optimizing the signal strength described above.
The larger $\omega'(0)/\omega(0)(=\left.\frac{\partial}{\partial\epsilon}\log\omega(\epsilon)\right\vert_{\epsilon=0})$, the larger the percentage increase in maximum potential welfare loss associated with an increase in the signal level from $0$, and thus the greater Nature's incentive to choose a nonzero level of signal strength. Similarly, the larger $\sigma$ is, the noisier the sample $\boldsymbol{Y}$ becomes, leading to a smaller decrease in the error probability associated with an increase in the signal level, and consequently, a greater incentive for Nature to choose a nonzero signal strength.

\begin{remark}[Role of the Centrosymmetry of $\Theta$]\label{remark:centrosymmetry}
    In Lemma \ref{lemma:hardest-1d}, I derive the hardest one-dimensional subfamily among centrosymmetric one-dimensional subfamilies $[-\bar\theta,\bar\theta]$.
    Under the centrosymmetry of the original parameter space $\Theta$, one can show that this subfamily is also hardest among all one-dimensional subfamilies of the form $[\bar\theta_0,\bar\theta_1]=\{(1-\lambda)\bar\theta_0+\lambda\bar\theta_1:\lambda\in [0,1]\}\subset\Theta$, including noncentrosymmetric ones.
    If $\Theta$ is noncentrosymmetric, it is also necessary to consider noncentrosymmetric subfamilies to characterize the hardest one.
    Although it is possible to solve noncentrosymmetric subproblems, their minimax risks do not take a simple form as in Lemma \ref{lemma:one-dim0}.\footnote{More specifically, the minimax risk of a subproblem $[\bar\theta_0,\bar\theta_1]$ depends on both $\bar\theta_0$ and $\bar\theta_1$ even when $\bar\theta_1-\bar\theta_0$ is held fixed.
    This sharply contrasts with the minimax affine estimation problems studied by \cite{donoho1994}, in which the minimax risk of a subproblem $[\bar\theta_0,\bar\theta_1]$ depends on $\bar\theta_0$ and $\bar\theta_1$ only through $\bar\theta_1-\bar\theta_0$ (in particular $L(\bar\theta_1-\bar\theta_0)$ and $\boldsymbol{m}(\bar\theta_1-\bar\theta_0)$).} Consequently, it is not straightforward to apply a similar argument to the one above to characterize the hardest one-dimensional subfamily.
    I leave the extension to noncentrosymmetric parameter spaces to future work.
\end{remark}

I illustrate the results in Lemma \ref{lemma:hardest-1d} using Example \ref{example:intersection}.
\setcounter{example}{0}
\begin{example}[Continued]
Without loss of generality, I assume $C_1\ge C_2$. 
The modulus of continuity is given by
$\omega(\epsilon)=\sup\{\theta_3:\theta\in\mathbb{R}^3, |\theta_1-\theta_3|\le C_1, |\theta_2-\theta_3|\le C_2, \theta_1^2+\theta_2^2\le\epsilon^2\}$.
Solving the optimization problem on the right-hand side yields
$$
\omega(\epsilon)=\begin{cases}
    \epsilon+C_2 ~~ &\text{ if } 0\le \epsilon\le C_1-C_2,\\
    \frac{1}{2}\left(\left(2\epsilon^2-(C_1-C_2)^2\right)^{1/2}+C_1+C_2\right) ~~ &\text{ if } \epsilon> C_1-C_2,
\end{cases}
$$
and the solution $\theta_\epsilon=(\theta_{\epsilon,1},\theta_{\epsilon,2},\theta_{\epsilon,3})'$ is given by
\begin{align}
\theta_\epsilon=\begin{cases}
    (0,\epsilon,\omega(\epsilon))' ~~ &\text{ if } 0\le \epsilon\le C_1-C_2,\\
    (\theta_{\epsilon,1},\theta_{\epsilon,1}+C_1-C_2,\omega(\epsilon))' ~~ &\text{ if } \epsilon> C_1-C_2,
\end{cases}
\label{eq:intersection-theta}
\end{align}
where $\theta_{\epsilon,1}=\frac{1}{2}\left(\left(2\epsilon^2-(C_1-C_2)^2\right)^{1/2}-C_1+C_2\right)$ if $\epsilon> C_1-C_2$.
By Lemma \ref{lemma:hardest-1d}, the hardest subfamily is $[-\theta_{\epsilon^*},\theta_{\epsilon^*}]$, where $\epsilon^*\in \arg\max_{\epsilon\in[0,\tau^*\sigma]}\omega(\epsilon)\Phi(-\epsilon/\sigma)$.
If $C_1>C_2$, then $\omega(0)=C_2$ and $\omega'(0)=1$, and therefore the condition $\sigma\omega'(0)>2\phi(0)\omega(0)$ corresponds to $\sigma>2\phi(0)C_2$.
If $C_1=C_2$, then $\omega(0)=C_2$ and $\omega'(0)=\frac{1}{\sqrt{2}}$, and therefore the condition corresponds to $\frac{\sigma}{\sqrt{2}}>2\phi(0)C_2$.
In both cases, $\epsilon^*>0$ if and only if $C_2$ is small relative to $\sigma$.
\end{example}

\subsection{Main Result: Minimax Regret Rule for the Full Problem}\label{section:mmr}

In the final step, I propose a minimax regret rule for the hardest one-dimensional subproblem $[-\theta_{\epsilon^*},\theta_{\epsilon^*}]$ and verify that it is indeed minimax regret for the full problem.
For the case in which $\|\boldsymbol{m}(\theta_{\epsilon^*})\|=\epsilon^*>0$, I consider the rule $\delta(\boldsymbol{Y})=\mathbf{1}\left\{\boldsymbol{m}(\theta_{\epsilon^*})'\boldsymbol{Y}\ge 0\right\}$, which is minimax regret for $[-\theta_{\epsilon^*},\theta_{\epsilon^*}]$ by Lemma \ref{lemma:one-dim0}.
On the other hand,
for the case in which $\|\boldsymbol{m}(\theta_{\epsilon^*})\|=\epsilon^*=0$, there exist infinitely many minimax regret rules for $[-\theta_{\epsilon^*},\theta_{\epsilon^*}]$.
Among them, I consider a rule that can be viewed as a continuous extension of the rule $\delta(\boldsymbol{Y})=\mathbf{1}\left\{\boldsymbol{m}(\theta_{\epsilon^*})'\boldsymbol{Y}\ge 0\right\}$ from the case with $\epsilon^*>0$ to the case with $\epsilon^*=0$.

To construct the rule, first define $\bar I(\boldsymbol{\mu})$ as the upper bound on the welfare contrast when the reduced-form parameter is $\boldsymbol{\mu}$:
    $$
    \bar I(\boldsymbol{\mu})\coloneqq 
    \sup I(\boldsymbol{\mu})=\sup\{L(\theta):\boldsymbol m(\theta)=\boldsymbol \mu, \theta\in \Theta\},~~~\boldsymbol{\mu}\in \mathbb{R}^n,
    $$
where I use the convention that $\sup I(\boldsymbol{\mu})=-\infty$ when 
$I(\boldsymbol{\mu})$ is empty.
Next, consider a sequence $\{\boldsymbol{\mu}_\epsilon\}_{\epsilon\in (0,\bar\epsilon)}$ indexed by positive real numbers $\epsilon\in (0,\bar\epsilon)$ with some $\bar\epsilon>0$ such that
\begin{align}
\boldsymbol \mu_\epsilon\in \arg\max_{\boldsymbol{\mu}\in{\cal M}:\|\boldsymbol{\mu}\|\le\epsilon}\bar I(\boldsymbol{\mu}),~~~\epsilon\in (0,\bar\epsilon).\label{eq:mu}
\end{align}
Lastly, define
\begin{align}
\boldsymbol{w^*}\coloneqq\lim_{\epsilon\downarrow 0}\frac{\boldsymbol \mu_\epsilon}{\epsilon}.\label{eq:wstar}
\end{align}
In Lemma \ref{lemma:superdifferential} in Appendix \ref{appendix:lemma}, I show that the following holds under Assumption \ref{assumption:problem}. First, there exists a sequence $\{\boldsymbol{\mu}_\epsilon\}_{\epsilon\in (0,\bar\epsilon)}$ that satisfies \eqref{eq:mu}. Second, if $\omega'(0)>0$, the limit $\boldsymbol{w^*}$ exists and does not depend on the choice of $\{\boldsymbol{\mu}_\epsilon\}_{\epsilon\in (0,\bar\epsilon)}$ among potentially multiple sequences that satisfy \eqref{eq:mu}; that is, $\boldsymbol{w^*}$ is uniquely defined. Third, $\boldsymbol{w}^*$ is the direction in which the directional derivative of $\bar I(\cdot)$ at $\boldsymbol{0}$ is maximized among unit vectors.
In other words, $\boldsymbol w^*$ is the ``least favorable'' direction, in the sense that the upper bound on the welfare contrast increases the most if the reduced-form parameter $\boldsymbol{m}(\theta)$ is changed from $\boldsymbol{0}$ in the direction $\boldsymbol w^*$.

The vector $\boldsymbol{w}^*$ relates to the modulus of continuity $\omega(\cdot)$ as follows: $\omega(\epsilon)=\sup_{\theta\in\Theta:\|\boldsymbol m(\theta)\|\le\epsilon}L(\theta)=\sup_{\boldsymbol{\mu}\in{\cal M}:\|\boldsymbol{\mu}\|\le\epsilon}\bar I(\boldsymbol{\mu})$; and
if $\theta_\epsilon\in\Theta$ attains the modulus of continuity at $\epsilon$, then setting $\boldsymbol \mu_\epsilon=\boldsymbol{m}(\theta_\epsilon)$ satisfies \eqref{eq:mu}, and $\boldsymbol{w}^*=\lim_{\epsilon\downarrow 0}\frac{\boldsymbol{m}(\theta_\epsilon)}{\epsilon}$.

The following result derives a minimax regret rule for the original problem, which covers both the case in which $\epsilon^*>0$ (i.e., $\sigma\omega'(0)> 2\phi(0)\omega(0)$) and the case in which $\epsilon^*=0$ (i.e., $\sigma\omega'(0)\le 2\phi(0)\omega(0)$).
The proof and a technical discussion are provided in Appendix \ref{section:proof}.

\begin{theorem}[Minimax Regret Rule for the Full Problem]\label{theorem:main}
    Suppose that Assumption \ref{assumption:problem} holds.
    Let $\epsilon^*\in\arg\max_{\epsilon\in[0,\tau^*\sigma]}\omega(\epsilon)\Phi(-\epsilon/\sigma)$ and suppose that there exists $\theta_{\epsilon^*}\in\Theta$ that attains the modulus of continuity at $\epsilon^*$.
	Then, the following decision rule is minimax regret:
 	\begin{align}
	\delta^*(\boldsymbol Y)=\begin{cases}\mathbf{1}\left\{\boldsymbol{m}(\theta_{\epsilon^*})'\boldsymbol{Y}\ge 0\right\} & ~~\text{if } \omega'(0)>0, 2\phi(0)\frac{\omega(0)}{\omega'(0)} < \sigma,\\
	\mathbf{1}\left\{(\boldsymbol{w}^*)'\boldsymbol{Y}\ge 0\right\} & ~~\text{if } \omega'(0)>0, 2\phi(0)\frac{\omega(0)}{\omega'(0)} = \sigma,\\
	\Phi\left(\dfrac{(\boldsymbol{w}^*)'\boldsymbol{Y}}{((2\phi(0)\omega(0)/\omega'(0))^2-\sigma^2)^{1/2}}\right)& ~~\text{if } \omega'(0)>0,  2\phi(0)\frac{\omega(0)}{\omega'(0)} > \sigma,\\
        1/2 & ~~\text{if } \omega'(0)=0.
	\end{cases}\label{eq:mmr}
	\end{align}  
    Furthermore,
	the minimax risk is given by
    $$
    {\cal R}(\Theta)=R(\delta^*,-\theta_{\epsilon^*})=R(\delta^*,\theta_{\epsilon^*})=\omega(\epsilon^*)\Phi(-\epsilon^*/\sigma).
    $$
\end{theorem}



Theorem \ref{theorem:main} yields the following implications.
First, the minimax regret rule in Theorem \ref{theorem:main} depends on the sample only through a weighted sum, $\boldsymbol{m}(\theta_{\epsilon^*})'\boldsymbol{Y}$ or $(\boldsymbol{w}^*)'\boldsymbol{Y}$, even though no such restrictions are imposed.
The weights can be calculated by solving convex optimization problems; see Section \ref{subsection:computation} for a computational procedure.

Second, the rule is nonrandomized or randomized, depending on whether the condition $2\phi(0)\frac{\omega(0)}{\omega'(0)}\le \sigma$ holds.
This condition is related to the strength of the identifying restrictions.
Under Assumption \ref{assumption:problem}, the closure of $I(\boldsymbol{0})$ is shown to be
$[-\omega(0),\omega(0)]$.\footnote{Since $L$ and $\boldsymbol m$ are linear and $\Theta$ is centrosymmetric, $-\omega(0)=\inf\{L(\theta):\boldsymbol m(\theta)=\boldsymbol 0, \theta\in\Theta\}$.
	Moreover, for any $\alpha\in (-\omega(0),\omega(0))$, we can find $\theta\in\Theta$ such that $L(\theta)=\alpha$ and $\boldsymbol m(\theta)=\boldsymbol 0$ by the linearity of $L$ and $\boldsymbol m$ and the convexity of $\Theta$.}
We can thus interpret $\omega(0)$ as half the length of the identified set of $L(\theta)$ when $\boldsymbol m(\theta)=\boldsymbol 0$.

If $L(\theta)$ is point identified, the length of the identified set is zero, so $2\phi(0)\frac{\omega(0)}{\omega'(0)}\le \sigma$. Therefore, the minimax regret rule is always nonrandomized under point identification.
Even if $L(\theta)$ is not point identified, when the identified set is small relative to the noise level $\sigma$ (holding $\omega'(0)$ fixed), the condition $2\phi(0)\frac{\omega(0)}{\omega'(0)}\le\sigma$ holds and the rule is nonrandomized.
On the other hand, when the identified set is large relative to $\sigma$, the minimax rule is randomized.

The randomized rule can equivalently be written as $\delta^*(\boldsymbol{Y})=\mathbb{P}\left((\boldsymbol{w}^*)'\boldsymbol{Y}+\xi\ge 0|\boldsymbol{Y}\right)$, where $\xi|\boldsymbol{Y}\sim {\cal N}(0,(2\phi(0)\omega(0)/\omega'(0))^2-\sigma^2)$.
 This rule can be implemented by first adding an independent noise $\xi$ to a scalar statistic $(\boldsymbol{w}^*)'\boldsymbol{Y}$ and then making a decision according to the sign of $(\boldsymbol{w}^*)'\boldsymbol{Y}+\xi$.
This addition artificially increases the standard deviation of $(\boldsymbol{w}^*)'\boldsymbol{Y}$ from $\sigma$ to $2\phi(0)\omega(0)/\omega'(0)$, which is the threshold at which we switch from a nonrandomized rule to a randomized rule.
The larger $\omega(0)$ is, the larger the variance of $\xi$ is and the more dependent the choice is on the noise.
As a result, given any realization of $\boldsymbol Y$, the probabilities of choosing policy 1 and policy 0 approach $1/2$ as $\omega(0)$ increases, which suggests that the decisions become more mixed if we impose weaker restrictions on $\Theta$.



I apply Theorem \ref{theorem:main} to derive a minimax regret rule for Example \ref{example:intersection}.
\setcounter{example}{0}
\begin{example}[Continued]
First, consider the case in which $C_1>C_2$. In this case, for any $\epsilon\in (0,C_1-C_2)$, $\omega(\epsilon)=\epsilon+C_2$ and $\theta_\epsilon=\left(0,\epsilon,\omega(\epsilon)\right)'$. Let $\boldsymbol{\mu}_\epsilon=\left(0,\epsilon\right)'$, which satisfies \eqref{eq:mu} for every $\epsilon\in (0,C_1-C_2)$, so that $\boldsymbol{w}^*=\lim_{\epsilon\downarrow 0}\frac{\boldsymbol \mu_\epsilon}{\epsilon}=\left(0,1\right)'$.
By Theorem \ref{theorem:main}, the following decision rule is minimax regret:
 	\begin{align*}
	\delta^*(\boldsymbol Y)=\begin{cases}\mathbf{1}\left\{\theta_{\epsilon^*,1}Y_1+\theta_{\epsilon^*,2}Y_2\ge 0\right\} & ~~\text{if } 2\phi(0)C_2 < \sigma,\\
	\mathbf{1}\left\{Y_2\ge 0\right\} & ~~\text{if } 2\phi(0)C_2 = \sigma,\\
	\Phi\left(\dfrac{Y_2}{((2\phi(0)C_2)^2-\sigma^2)^{1/2}}\right)& ~~\text{if } 2\phi(0)C_2 > \sigma,
	\end{cases}
	\end{align*}
where $\epsilon^*\in\arg\max_{\epsilon\in[0,\tau^*\sigma]}\omega(\epsilon)\Phi(-\epsilon/\sigma)$ and $\theta_\epsilon$ is given by \eqref{eq:intersection-theta}.
If $C_2$ is sufficiently small relative to $\sigma$, the rule is nonrandomized. Whether it gives a nonzero weight to both $Y_1$ and $Y_2$ depends on $(C_1,C_2)$:
If $C_1$ is sufficiently close to $C_2$ so that $C_1-C_2<\epsilon^*$, then $\theta_{\epsilon^*,1}=\frac{1}{2}\left(\left(2(\epsilon^*)^2-(C_1-C_2)^2\right)^{1/2}-C_1+C_2\right)>0$; otherwise, $\theta_{\epsilon^*,1}=0$ even if the rule is nonrandomized.
If $C_2$ is sufficiently large so that $2\phi(0)C_2 > \sigma$, the rule is randomized.

Next, consider the case in which $C_1=C_2$.
In this case, for any $\epsilon>0$, $\omega(\epsilon)=\frac{1}{\sqrt{2}}\epsilon+C_2$ and $\theta_\epsilon=\left(\frac{\epsilon}{\sqrt{2}},\frac{\epsilon}{\sqrt{2}},\omega(\epsilon)\right)'$.
Let $\boldsymbol{\mu}_\epsilon=\left(\frac{\epsilon}{\sqrt{2}},\frac{\epsilon}{\sqrt{2}}\right)'$, which satisfies \eqref{eq:mu} for every $\epsilon>0$, so that $\boldsymbol{w}^*=\lim_{\epsilon\downarrow 0}\frac{\boldsymbol \mu_\epsilon}{\epsilon}=\left(\frac{1}{\sqrt{2}},\frac{1}{\sqrt{2}}\right)'$.
By Theorem \ref{theorem:main}, the following decision rule is minimax regret:
 	\begin{align*}
	\delta^*(\boldsymbol Y)=\begin{cases}\mathbf{1}\left\{\frac{1}{\sqrt{2}}(Y_1+Y_2)\ge 0\right\} & ~~\text{if } 2\sqrt{2}\phi(0)C_2 \le \sigma,\\
	\Phi\left(\dfrac{\frac{1}{\sqrt{2}}(Y_1+Y_2)}{((2\sqrt{2}\phi(0)C_2)^2-\sigma^2)^{1/2}}\right)& ~~\text{if } 2\sqrt{2}\phi(0)C_2 > \sigma.
	\end{cases}
	\end{align*}
The rule always gives equal weights to $Y_1$ and $Y_2$, unlike in the case in which $C_1>C_2$.
\end{example}

Below, I discuss the role of randomization in Section \ref{section:randomization} and the relationship between Theorem \ref{theorem:main} and existing results in Section \ref{section:stoye}. Section \ref{section:proof-strategy} introduces the proof strategy.
    Finally, Section \ref{subsection:computation} provides the procedure for computing the minimax regret rule.

\subsubsection{The Role of Randomization}\label{section:randomization}
Randomization plays the role of reducing the probability of choosing the inferior policy under parameter values at which the error probability of an original rule exceeds one-half.
If the maximum regret of the original rule is attained at such parameter values, randomization can lead to a reduction in maximum regret.

To illustrate this, it is useful to compare the nonrandomized rule $\delta^*_{\rm NR}(\boldsymbol{Y})=\mathbf{1}\left\{(\boldsymbol{w}^*)'\boldsymbol{Y}\ge 0\right\}$ with the randomized rule $\delta^*$ in Theorem \ref{theorem:main}.
In Appendix \ref{appendix:discussion}, I show that, under a mild condition on the local behavior of $\bar I(\boldsymbol{\mu})$ around $\boldsymbol{0}$, if $2\phi(0)\frac{\omega(0)}{\omega'(0)}>\sigma$, there exists $\theta^*\in\Theta$ such that $L(\theta^*)\ge 0$ (i.e., the optimal policy is policy 1),
\begin{align}
&1-\mathbb{E}_{\theta^*}[\delta^*_{\rm NR}(\boldsymbol{Y})]=\Phi\left(-\frac{(\boldsymbol{w}^*)'\boldsymbol{m}(\theta^*)}{\sigma}\right)>1/2,\label{eq:randomization-error}\\
~\text{ and }~&R(\delta^*_{\rm NR},\theta^*)=L(\theta^*)\Phi\left(-\frac{(\boldsymbol{w}^*)'\boldsymbol{m}(\theta^*)}{\sigma}\right)>\mathcal{R}(\Theta).
\label{eq:randomization}
\end{align}
Condition \eqref{eq:randomization-error} means that the error probability of $\delta^*_{\rm NR}$ exceeds one-half under $\theta^*$.
Condition \eqref{eq:randomization} says that the regret of $\delta^*_{\rm NR}$ under $\theta^*$ exceeds the minimax risk over $\Theta$, and therefore $\delta^*_{\rm NR}$ is not minimax regret.

Now, consider the randomized rule $\delta^*$ in Theorem \ref{theorem:main}.
Under any $\theta^*$ that satisfies \eqref{eq:randomization-error} and \eqref{eq:randomization}, randomization reduces the error probability and therefore regret of $\delta^*_{\rm NR}$. Specifically, a simple calculation shows that if $2\phi(0)\frac{\omega(0)}{\omega'(0)}>\sigma$,
$$
1-\mathbb{E}_{\theta^*}[\delta^*(\boldsymbol{Y})]=\Phi\left(-\frac{(\boldsymbol{w}^*)'\boldsymbol{m}(\theta^*)}{2\phi(0)\omega(0)/\omega'(0)}\right)<\Phi\left(-\frac{(\boldsymbol{w}^*)'\boldsymbol{m}(\theta^*)}{\sigma}\right)=1-\mathbb{E}_{\theta^*}[\delta^*_{\rm NR}(\boldsymbol{Y})],
$$
and hence $R(\delta^*,\theta^*)<R(\delta^*_{\rm NR},\theta^*)$.
Although randomization may increase regret under other values of $\theta\in\Theta$,
it turns out that $\delta^*$ achieves a smaller maximum regret over $\Theta$ than $\delta^*_{\rm NR}$.

\begin{remark}[Approaches to Avoiding Randomization]
    In practice, randomization may not be permitted in some cases due to ethical or legislative constraints.
    An informal way of avoiding randomization is to reconsider the problem specification to reach the regime in which the minimax regret rule in Theorem \ref{theorem:main} is nonrandomized. This can be done by imposing more restrictions on the parameter space and/or respecifying policies 1 and 0.
    Note that this is not a post hoc analysis, since the problem specification and the associated decision rule are still determined before observing the data.
    A more formal way is to consider using the recently proposed least randomizing minimax regret rule by \cite{olea2023partial}, which is a piecewise linear function of $(\boldsymbol{w}^*)'\boldsymbol{Y}$.
    This approach reduces the range of data realizations for which the decision is randomized. However, it does not lead to the complete elimination of randomization.
\end{remark}

\subsubsection{Relation to Existing Results}\label{section:stoye}
Theorem \ref{theorem:main} generalizes \citeauthor{Stoye2012minimax}'s \citeyearpar{Stoye2012minimax} result from univariate problems to multivariate problems.
For the case with randomization, the use of a probit-like rule in Theorem \ref{theorem:main} is inspired by \citeauthor{Stoye2012minimax}'s \citeyearpar{Stoye2012minimax} for univariate problems.
A novelty of my result is to use the hardest one-dimensional subfamily argument to construct a certain scalar statistic, $\boldsymbol{m}(\theta_{\epsilon^*})'\boldsymbol{Y}$ or $(\boldsymbol{w}^*)'\boldsymbol{Y}$, of the multivariate sample $\boldsymbol{Y}$ such that a threshold rule based on the statistic (plus a noise for the case with randomization) is minimax regret.
This sharply contrasts with univariate problems, in which the only natural choice of the statistic is the univariate sample $Y$ itself.

In the setting described in Example \ref{example:intersection}, \cite{ishihara2021meta} propose a way of numerically computing a minimax regret rule within the class of decision rules of the form $\delta(\boldsymbol{Y})=\mathbf{1}\{\sum_{i=1}^nw_iY_i\ge 0\}$ with $\sum_{i=1}^nw_i=1$.
An application of Theorem \ref{theorem:main}
shows that this restricted class contains an unconstrained minimax regret rule when $2\phi(0)\frac{\omega(0)}{\omega'(0)}\le \sigma$ and may not when $2\phi(0)\frac{\omega(0)}{\omega'(0)}>\sigma$.

\subsubsection{Proof Strategy}\label{section:proof-strategy}

I briefly describe the approach to proving Theorem \ref{theorem:main}.
First note that lower and upper bounds on the minimax risk for the full problem are given by
$$
\sup_{\theta\in [-\theta_{\epsilon^*},\theta_{\epsilon^*}]}R(\delta^*,\theta)={\cal R}([-\theta_{\epsilon^*},\theta_{\epsilon^*}])\le {\cal R}(\Theta)\le \sup_{\theta\in\Theta}R(\delta^*,\theta),
$$
where the equality follows from the fact that $\delta^*$ is minimax regret for the subproblem $[-\theta_{\epsilon^*},\theta_{\epsilon^*}]$ by Lemma \ref{lemma:one-dim0}, and the two inequalities hold by the definition of the minimax risk.
To prove that $\delta^*$ is also minimax regret for the full problem, it suffices to show that the above lower and upper bounds coincide.
More specifically, I show that
the maximum regret of $\delta^*$ over $\Theta$ is attained at $-\theta_{\epsilon^*}$ and $\theta_{\epsilon^*}$; that is,
\begin{align}
R(\delta^*,-\theta_{\epsilon^*})=R(\delta^*,\theta_{\epsilon^*})=\sup_{\theta\in\Theta}R(\delta^*,\theta).\label{eq:property}
\end{align}
To show this, I first express the maximum regret of $\delta^*$ over $\Theta$ as 
$$
\sup_{\theta\in\Theta}R(\delta^*,\theta)=\begin{cases}\sup_{\boldsymbol{\mu}\in {\cal M}:\bar I(\boldsymbol{\mu})\ge 0}\bar I(\boldsymbol{\mu})\Phi\left(-\dfrac{\boldsymbol{m}(\theta_{\epsilon^*})'\boldsymbol{\mu}}{\sigma\epsilon^*}\right) & ~~\text{if } \epsilon^*>0,\\
	\sup_{\boldsymbol{\mu}\in{\cal M}:\bar I(\boldsymbol{\mu})\ge 0}\bar I(\boldsymbol{\mu})\Phi\left(-\dfrac{(\boldsymbol{w}^*)'\boldsymbol{\mu}}{2\phi(0)\omega(0)/\omega'(0)}\right)& ~~\text{if } \epsilon^*=0.
	\end{cases}
$$
Here, the objective function on the right-hand side represents the maximum regret of $\delta^*$ over $\{\theta\in\Theta:\boldsymbol{m}(\theta)=\boldsymbol{\mu},L(\theta)\ge 0\}$.
I then show that the supremum is attained at $\boldsymbol{\mu}=\boldsymbol{m}(\theta_{\epsilon^*})$, which proves \eqref{eq:property}.

The arguments used to show \eqref{eq:property} are substantially different from the arguments used by \cite{donoho1994}, who shows a counterpart of \eqref{eq:property} for minimax affine estimation problems.
\citeauthor{donoho1994}'s \citeyearpar{donoho1994} arguments rely on the following property of the maximum risk (such as the maximum MSE) of an affine estimator: The maximum risk and maximum squared bias are attained at the same parameter values.
Since the maximum regret does not have this property, the arguments by \cite{donoho1994} cannot be applied to show \eqref{eq:property} for the minimax regret problem.

\subsubsection{Computation}\label{subsection:computation}
I conclude this section by summarizing the procedure for computing the minimax regret rule for a given problem $(L,\boldsymbol{m},\Theta,\sigma^2\boldsymbol{I}_n)$.
For a general variance $\boldsymbol{\Sigma}$, the minimax regret rule can be obtained by applying the procedure after normalizing $\boldsymbol{Y}$ and $(L,\boldsymbol{m},\Theta,\boldsymbol{\Sigma})$ to $\boldsymbol{\Sigma}^{-1/2}\boldsymbol{Y}$ and $(L,\boldsymbol{\Sigma}^{-1/2}\boldsymbol{m},\Theta,\boldsymbol{I}_n)$, respectively.
The procedure consists of the following steps:
\begin{enumerate}
    \item Compute $\epsilon^*\in \arg\max_{\epsilon\in[0,\tau^*\sigma]}\omega(\epsilon)\Phi(-\epsilon/\sigma)$.\footnote{For numerical optimization, possible algorithms include grid search or ternary search for finding a maximum of a unimodal function. The bisection method can also be used if a closed-form expression for $\omega'(\cdot)$ is available; see Lemma \ref{lemma:dif_omega} in Appendix \ref{appendix:lemma} for the differentiability of $\omega(\cdot)$ and Appendix \ref{appendix:computation} for the bisection method.}
    For a given $\epsilon\ge 0$, $\omega(\epsilon)$ is computed by solving the convex optimization problem \eqref{eq:def-mod}.
    Even if $\theta$ is infinite dimensional, the problem may be reduced to a finite-dimensional problem, as illustrated in Section \ref{section:app}.
    If the upper bound $\bar I(\boldsymbol{\mu})$ is analytically tractable, $\omega(\epsilon)$ can also be computed by solving the convex optimization problem \eqref{eq:mu}.
    \item If $\epsilon^*>0$ (i.e., $\sigma\omega'(0)> 2\phi(0)\omega(0)$), then solve \eqref{eq:def-mod} for $\epsilon=\epsilon^*$ to calculate $\theta_{\epsilon^*}$ and compute $\boldsymbol{m}(\theta_{\epsilon^*})$.
    Alternatively, $\boldsymbol{m}(\theta_{\epsilon^*})$ can be directly computed as a solution to \eqref{eq:mu} for $\epsilon=\epsilon^*$. 
    \item If $\epsilon^*=0$ (i.e., $\sigma\omega'(0)\le 2\phi(0)\omega(0)$), then compute $\boldsymbol{w}^*$, $\omega(0)$, and $\omega'(0)$.
    $\boldsymbol{w}^*$ can be computed by using the definition given by \eqref{eq:wstar} or one of the characterizations \eqref{eq:wstar2} and \eqref{eq:wstar3} in Appendix \ref{subsection:characterization}.
    $\omega'(0)$ can be analytically computed if an analytical expression of $\omega(\epsilon)$ is available for any sufficiently small $\epsilon\ge 0$.
    Alternatively, $\omega'(0)$ can be computed using the characterization in \eqref{eq:omega-dstar} in Appendix \ref{subsection:characterization}.
    \item Compute the decision rule given by \eqref{eq:mmr} in Theorem \ref{theorem:main}.
\end{enumerate}

\section{Relation to Optimal Estimation}\label{section:relation}

In this section, I study the relationship between optimal treatment choice and optimal estimation.
Given an estimator $\hat L$ of the welfare contrast $L(\theta)$, a decision rule can be constructed by plugging the estimator into the oracle optimal decision $\mathbf{1}\{L(\theta)\ge 0\}$: $\delta(\boldsymbol{Y})=\mathbf{1}\{\hat L(\boldsymbol{Y})\ge 0\}$.
Such a rule is called a {\it plug-in} rule.
For example, a plug-in rule can be constructed by using an estimator of $L(\theta)$ that is optimal under some standard criterion for estimation, such as minimax MSE optimality.
The minimax regret rule in Theorem \ref{theorem:main} can also be viewed as a plug-in rule, which uses a (possibly randomized) estimator.
A natural question is whether the estimator used in the minimax regret rule is optimal in a certain sense.
Another question is how the estimator used in the minimax regret rule differs from optimal estimators under standard criteria.
This section aims to answer these two questions.

As a preliminary step, Section \ref{section:low} presents a class of estimators that optimally trade off bias and variance in the estimation of $L(\theta)$.
Using the results, Section \ref{section:interpretation} discusses an interpretation of the minimax regret rule as a plug-in rule based on an estimator that satisfies a certain optimality.
In Section \ref{section:donoho}, I compare this estimator with a minimax affine MSE estimator, which is an existing optimal estimator in the setting of this paper.

Throughout Section \ref{section:relation}, I normalize $\boldsymbol\Sigma=\sigma ^2\boldsymbol I_n$ for some $\sigma>0$ as in Section \ref{section:minimax}.
I further assume that $\omega(\cdot)$ is differentiable on $(0,\infty)$ to simplify the presentation and proof of the results. The results can be modified to allow for nondifferentiability of $\omega(\cdot)$ by using the superdifferentials of $\omega(\cdot)$, which exist by the concavity of $\omega(\cdot)$.
I also note that $\omega(\cdot)$ is differentiable in Example \ref{example:intersection} and for eligibility cutoff choice in Section \ref{section:app}.
See Lemma \ref{lemma:dif_omega} in Appendix \ref{appendix:lemma} for a sufficient condition for the differentiability.


\subsection{Optimal Bias and Variance Tradeoff}\label{section:low}

As a basis for the discussion in Sections \ref{section:interpretation} and \ref{section:donoho}, I introduce a class of estimators that optimally trade off bias and variance.
Let ${\cal C}$ denote the class of all (nonrandomized) estimators for $L(\theta)$ (i.e., measurable functions from $\mathbb{R}^n$ to $\mathbb{R}$).
For estimator $\tilde L\in {\cal C}$, let ${\rm Bias}(\tilde L,\theta)\coloneqq\mathbb{E}_\theta[\tilde L(\boldsymbol{Y})]-L(\theta)$ and $\Var(\tilde L,\theta)\coloneqq\mathbb{E}_\theta[(\tilde L(\boldsymbol{Y})-\mathbb{E}_\theta[\tilde L(\boldsymbol{Y})])^2]$.
For scalar $V\ge 0$, let ${\cal C}(V)$ denote the class of estimators with the maximum variance over $\Theta$ less than or equal to $V$:
${\cal C}(V)\coloneqq\{\tilde L\in {\cal C}:\sup_{\theta\in\Theta}\Var(\tilde L,\theta)\le V\}$.
Consider the following minimax problem:
$$
\inf_{\tilde L\in{\cal C}(V)}\sup_{\theta\in\Theta}{\rm Bias}(\tilde L,\theta)^2.
$$
Solving this problem yields a class of estimators indexed by $V$ that optimally trade off the maximum squared bias and the maximum variance.

\cite{low1995tradeoff} derives estimators that achieve minimax optimality in the above sense for infinite-dimensional Gaussian models.
In Theorem \ref{theorem:low} in Appendix \ref{proof:low}, I extend the result of \cite{low1995tradeoff} to the multivariate Gaussian models in this paper.
The following is a corollary of Theorem \ref{theorem:low}, which translates a class of optimal estimators indexed by $V$ in Theorem \ref{theorem:low} into a class of optimal estimators indexed by $\epsilon\ge 0$.
For $\epsilon\ge 0$, define
        \begin{align*}
        V_\epsilon\coloneqq (\sigma\omega'(\epsilon))^2~~~\text{ and }~~~
            \hat L_{\epsilon}(\boldsymbol{Y})\coloneqq\begin{cases}
            \omega'(0)(\boldsymbol{w}^*)'\boldsymbol{Y} & ~~\text{if } \epsilon=0,\\
            \omega'(\epsilon)\frac{\boldsymbol{m}(\theta_{\epsilon})'}{\|\boldsymbol{m}(\theta_{\epsilon})\|}\boldsymbol{Y} & ~~\text{if } \epsilon>0,
        \end{cases}
        \end{align*}
assuming that $\omega(\cdot)$ is differentiable on $(0,\infty)$ and $\theta_{\epsilon}\in\Theta$ attains the modulus of continuity at $\epsilon$ with $\|\boldsymbol m(\theta_{\epsilon})\|=\epsilon$.
\begin{theorem}[Minimax Optimality of $\hat L_\epsilon$]\label{theorem:minimax-bias}
    Suppose that Assumption \ref{assumption:problem} holds; that $\omega'(0)>0$; that $\omega(\cdot)$ is differentiable on $(0,\infty)$; and that for each $\epsilon> 0$, $\theta_{\epsilon}\in\Theta$ attains the modulus of continuity at $\epsilon$ with $\|\boldsymbol m(\theta_{\epsilon})\|=\epsilon$.
    Then, the following holds.
    \begin{enumerate}[label=(\roman*)]
        \item For each $\epsilon\ge 0$, $\hat L_\epsilon$ has a constant variance of $V_\epsilon$:
        $\Var(\hat L_\epsilon,\theta)=V_\epsilon$ for all $\theta\in\Theta$.
        \item For each $\epsilon\ge 0$, $\hat L_\epsilon$ minimizes the maximum squared bias among all estimators with the maximum variance less than or equal to $V_\epsilon$:
        $$\sup_{\theta\in\Theta}{\rm Bias}(\hat L_{\epsilon},\theta)^2=\inf_{\tilde L\in {\cal C}(V_\epsilon)}\sup_{\theta\in\Theta}{\rm Bias}(\tilde L,\theta)^2.$$
        Furthermore, $\hat L_{0}$ minimizes the maximum squared bias among all estimators:
        $$\sup_{\theta\in\Theta}{\rm Bias}(\hat L_{0},\theta)^2=\inf_{\tilde L\in {\cal C}}\sup_{\theta\in\Theta}{\rm Bias}(\tilde L,\theta)^2.$$
        \item As $\epsilon$ increases, the maximum squared bias of $\hat L_\epsilon$ weakly increases and the variance of $\hat L_\epsilon$ weakly decreases.
    \end{enumerate}
\end{theorem}
\begin{proof}
	See Appendix \ref{proof:low}.
\end{proof}

Theorem \ref{theorem:minimax-bias} shows that the linear estimator $\hat L_{\epsilon}$ minimizes the maximum squared bias among all estimators (including nonlinear ones) with variance bounded by $V_\epsilon$.
Furthermore, Theorem \ref{theorem:minimax-bias} shows that the linear estimator $\hat L_{0}$ achieves the minimum maximum squared bias among all estimators. In contrast, \cite{low1995tradeoff} does not provide a minimax squared bias estimator when variance constraints are absent in infinite-dimensional Gaussian models.\footnote{Specifically, Theorem 2 in \cite{low1995tradeoff} does not provide a minimax squared bias estimator for the range of variance bound $V$ for which $0$ is the unique maximizer of $(\omega(\epsilon)-\epsilon\sqrt{V}/\sigma)$ over $\epsilon\ge 0$.}

\subsection{Interpreting the Minimax Regret Rule as a Plug-in Rule}\label{section:interpretation}

Theorem \ref{theorem:minimax-bias} provides an interpretation of the minimax regret rule $\delta^*$ in Theorem \ref{theorem:main}.
If $2\phi(0)\frac{\omega(0)}{\omega'(0)}<\sigma$, the minimax regret rule is given by $\delta^*(\boldsymbol Y)=\mathbf{1}\{\boldsymbol{m}(\theta_{\epsilon^*})'\boldsymbol Y\ge 0\}=\mathbf{1}\{\hat L_{\epsilon^*}(\boldsymbol Y)\ge 0\}$, where $\epsilon^*\in\arg\max_{\epsilon\in [0,\tau^*\sigma]}\omega(\epsilon)\Phi(-\epsilon/\sigma)$.
This corresponds to a plug-in rule based on the linear estimator $\hat L_{\epsilon^*}$, which minimizes the maximum squared bias among all estimators with variance bounded by $V_{\epsilon^*}$.
In contrast, if $2\phi(0)\frac{\omega(0)}{\omega'(0)}>\sigma$, the minimax regret rule is given by $\delta^*(\boldsymbol Y)=\Phi\left(\frac{(\boldsymbol{w}^*)'\boldsymbol{Y}}{((2\phi(0)\omega(0)/\omega'(0))^2-\sigma^2)^{1/2}}\right)=\Phi\left(\frac{\hat L_0(\boldsymbol{Y})}{((2\phi(0)\omega(0))^2-(\sigma\omega'(0))^2)^{1/2}}\right)$. Equivalently, this can be written as $\delta^*(\boldsymbol Y)=\mathbb{P}\left(\hat L_0(\boldsymbol{Y})+\xi\ge 0|\boldsymbol{Y}\right)$, where $\xi|\boldsymbol{Y}\sim {\cal N}(0,(2\phi(0)\omega(0))^2-(\sigma\omega'(0))^2)$.
Thus, this rule can be interpreted as a plug-in rule based on the {\it randomized} estimator $\hat L_0(\boldsymbol{Y})+\xi$ for $L(\theta)$. This estimator is constructed by adding a mean-zero Gaussian noise to the linear estimator $\hat L_0$, which minimizes the maximum squared bias among all estimators.

The above interpretation highlights a key distinction between minimax regret treatment choice and minimax estimation.
The fact that a plug-in rule based on a randomized estimator can be minimax regret suggests that increasing the variance of an estimator while holding the bias constant may reduce the maximum regret of the resulting plug-in rule.
This means that the maximum regret cannot be expressed as an increasing function of the maximum squared bias and the variance.\footnote{This observation aligns with a result from \cite{ishihara2021meta}, who show that in their setting, the maximum regret of a 
linear threshold rule $\delta(\boldsymbol{Y})=\mathbf{1}\{\boldsymbol{w}'\boldsymbol{Y}\ge 0\}$ with $\sum_{i=1}^nw_i=1$ can be written as a function of the maximum absolute bias and the variance of $\boldsymbol{w}'\boldsymbol{Y}$, though the function is not monotonic in variance.}
This observation sharply contrasts with certain performance measures used in minimax affine estimation.
For example, the maximum MSE of an affine estimator $\tilde L(\boldsymbol{Y})=c+\boldsymbol{w}'\boldsymbol{Y}$ is given by
$\sup_{\theta\in \Theta}\mathbb{E}_\theta[(\tilde L(\boldsymbol{Y})-L(\theta))^2]=\sup_{\theta\in\Theta}{\rm Bias}(\tilde L,\theta)^2+\sigma^2\|\boldsymbol{w}\|^2$,
which is increasing in both the maximum squared bias and the variance.

\subsection{Comparison with Minimax Affine MSE Estimator}\label{section:donoho}

To shed further light on the distinction between minimax regret treatment choice and minimax estimation, I compare the linear estimator used by the minimax regret rule with a minimax affine MSE estimator.
To introduce the latter,
let ${\cal C}_{\rm affine}$ denote the class of all affine estimators of $L(\theta)$: ${\cal C}_{\rm affine}\coloneqq\{\tilde L\in {\cal C}:\tilde L(\boldsymbol{Y})=c+\boldsymbol{w}'\boldsymbol{Y}, c\in\mathbb{R},\boldsymbol{w}\in\mathbb{R}^n\}$.
An estimator $\hat L\in {\cal C}_{\rm affine}$ is {\it minimax affine MSE} if
$\sup_{\theta\in \Theta}\mathbb{E}_\theta[(\hat L(\boldsymbol{Y})-L(\theta))^2]=\inf_{\tilde L\in {\cal C}_{\rm affine}}\sup_{\theta\in \Theta}\mathbb{E}_\theta[(\tilde L(\boldsymbol{Y})-L(\theta))^2]$.

Given the results in Theorem \ref{theorem:minimax-bias}, one approach to finding a minimax affine MSE estimator is to minimize the maximum MSE of $\hat{L}_\epsilon$ over $\epsilon \geq 0$. However, for comparison with the minimax regret rule, the following theorem instead relies on an alternative characterization of the minimax affine MSE estimator based on \citeauthor{donoho1994}'s \citeyearpar{donoho1994} approach.


\begin{theorem}[Comparison with Minimax Affine MSE Estimator]\label{theorem:donoho}
    Suppose that Assumption \ref{assumption:problem} holds, that $\omega'(0)>0$, and that $\omega(\cdot)$ is differentiable on $(0,\infty)$.
    Also, suppose that there exists $\epsilon_{\rm MSE}\in\arg\max_{\epsilon\ge 0}\frac{\sigma^2}{\epsilon^2+\sigma^2}\omega(\epsilon)^2$ and that $\theta_{\epsilon_{\rm MSE}}\in\Theta$ attains the modulus of continuity at $\epsilon_{\rm MSE}$ with $\|\boldsymbol m(\theta_{\epsilon_{\rm MSE}})\|=\epsilon_{\rm MSE}$.
    Then, the following holds.
    \begin{enumerate}[label=(\roman*)]
        \item\label{theorem:donoho-original} $\epsilon_{\rm MSE}>0$. Furthermore, $\hat L_{\epsilon_{\rm MSE}}(\boldsymbol{Y})=\omega'(\epsilon_{\rm MSE})\frac{\boldsymbol{m}(\theta_{\epsilon_{\rm MSE}})'}{\|\boldsymbol{m}(\theta_{\epsilon_{\rm MSE}})\|}\boldsymbol{Y}$ is a minimax affine MSE estimator.
        \item\label{theorem:donoho-comparison} Let $\epsilon^*\in\arg\max_{\epsilon\in [0,\tau^*\sigma]}\omega(\epsilon)\Phi(-\epsilon/\sigma)$. Then, $\epsilon_{\rm MSE}>\epsilon^*$.
    \end{enumerate}
\end{theorem}

\begin{proof}
	See Appendix \ref{proof:donoho}.
\end{proof}

Theorem \ref{theorem:donoho}\ref{theorem:donoho-original} follows from the results in \cite{donoho1994}.
The optimization problem $\max_{\epsilon\ge 0}\frac{\sigma^2}{\epsilon^2+\sigma^2}\omega(\epsilon)^2$ corresponds to calculating the largest minimax affine MSE over all one-dimensional subfamilies.
The estimator $\hat L_{\epsilon_{\rm MSE}}$ is shown to be minimax affine MSE for the hardest one-dimensional subproblem $[-\theta_{\epsilon_{\rm MSE}},\theta_{\epsilon_{\rm MSE}}]$ and also for the full problem $\Theta$.
Furthermore, the optimal level $\epsilon_{\rm MSE}$ is always positive, regardless of the degree of partial identification or the noise level $\sigma$.
This implies that, in the hardest one-dimensional subproblem for the minimax affine MSE problem, the sample $\boldsymbol{Y}$ is informative about the sign of $L(\theta)$, unlike in the minimax regret problem, where $\boldsymbol{Y}$ can be completely uninformative.

Theorem \ref{theorem:donoho}\ref{theorem:donoho-comparison} compares the optimal level of $\epsilon$ for the minimax affine MSE and minimax regret problems.
Together with Theorem \ref{theorem:minimax-bias}, this result implies that
$\hat L_{\epsilon^*}$ has a maximum squared bias that is no larger and a variance that is no smaller than those of $\hat L_{\epsilon_{\rm MSE}}$.
\footnote{\cite{ishihara2021meta} derive a related result using a different argument in their setting.}

\section{Application to Eligibility Cutoff Choice}\label{section:app}


	In many policy domains, the eligibility for treatment is determined based on an individual's observable characteristics.
    In this section, I demonstrate how my framework can be valuable for using data collected under the status quo eligibility criterion to decide whether to change it to a new one.
    This approach does not require conducting a randomized experiment that directly evaluates the performance of the status quo and new criteria.
    
\subsection{Setup}\label{section:example}
    Consider the following special case of Example \ref{example:unconfoundedness}.
	For each unit $i=1,..,n$, we observe a fixed running variable $x_i\in\mathbb{R}$, a binary treatment status $d_i\in\{0,1\}$, and an outcome $Y_i\in\mathbb{R}$.
	The eligibility for treatment is determined based on whether the running variable exceeds a specific cutoff $c_0\in \mathbb{R}$, so that $d_i=\mathbf{1}\{x_i\ge c_0\}$.
	Suppose
	\begin{align*}
	Y_i=f(x_i,d_i)+U_i,~~~U_i\sim {\cal N}(0,\sigma^2(x_i,d_i)) ~\text{independent across $i$}, 
	\end{align*}
	where $f:\mathbb{R}\times\{0,1\}\rightarrow \mathbb{R}$ is an unknown function and plays the role of the parameter $\theta$ and $\sigma^2(x_i,d_i)>0$.
	We interpret $f(x,d)$ as the conditional mean potential outcome under treatment status $d\in\{0,1\}$ given $x$.
	\sloppy
    We can write the model in a vector form
	$\boldsymbol Y \sim{\cal N}(\boldsymbol m(f), \boldsymbol\Sigma)$,
	where $\boldsymbol Y=(Y_1,...,Y_n)'$, $\boldsymbol m(f)=(f(x_1,d_1),...,f(x_n,d_n))'$, and $\boldsymbol\Sigma={\rm diag}(\sigma^2(x_1,d_1),...,\sigma^2(x_n,d_n))$.
	
	Now, suppose we are interested in changing the cutoff from $c_0$ to a specific value $c_1$.
	For illustration, assume $c_1<c_0$.
	The welfare under the cutoff $c_a$, $a\in\{0,1\}$, is given by
	$$
	W_a(f) = \int [f(x,1)\mathbf{1}\{x\ge c_a\}+f(x,0)\mathbf{1}\{x< c_a\}]d\nu(x)
	$$
	for some known measure $\nu$.
        An implicit assumption made here is that $f$ is invariant to the cutoff change.
    For an illustration of the results, I use an empirical measure as $\nu$, for which the welfare is the unweighted sample average:
	$
	W_a(f) = \frac{1}{n}\sum_{i=1}^n[f(x_i,1)\mathbf{1}\{x_i\ge c_a\}+f(x_i,0)\mathbf{1}\{x_i< c_a\}].
	$
    I define the welfare contrast between the two cutoffs as
	$$
	L(f)=\frac{n}{\tilde n}(W_1(f)-W_0(f))=\frac{1}{\tilde n}\sum_{i=1}^n\mathbf{1}\{c_1\le x_i< c_0\}[f(x_i,1)-f(x_i,0)],
	$$
    where $\tilde n=\sum_{i=1}^n\mathbf{1}\{c_1\le x_i<c_0\}$ denotes the number of units between the two cutoffs $c_1$ and $c_0$, whose treatment status would be changed if the cutoff were changed.
    I scale $W_1(f)-W_0(f)$ by $n/\tilde n$ so that $L(f)$ represents the sample average treatment effect for these units.
    This scaling does not change the form of a minimax regret rule.
    Figure \ref{fig:RD} presents an example of this setting.
    Panel (a) shows an example of the conditional mean potential outcome function $f$.
    In this example, the welfare contrast is given by $L(f)=\frac{1}{2}\sum_{i\in\{2,3\}}[f(x_i,1)-f(x_i,0)]$, which is the sample average treatment effect for $x_2$ and $x_3$.

\begin{figure}[!t]
	\centering
	\caption{Illustration of Eligibility Cutoff Choice} 
	\begin{subfigure}[b]{0.49\textwidth}   
		\centering 
		\includegraphics[width=\textwidth]{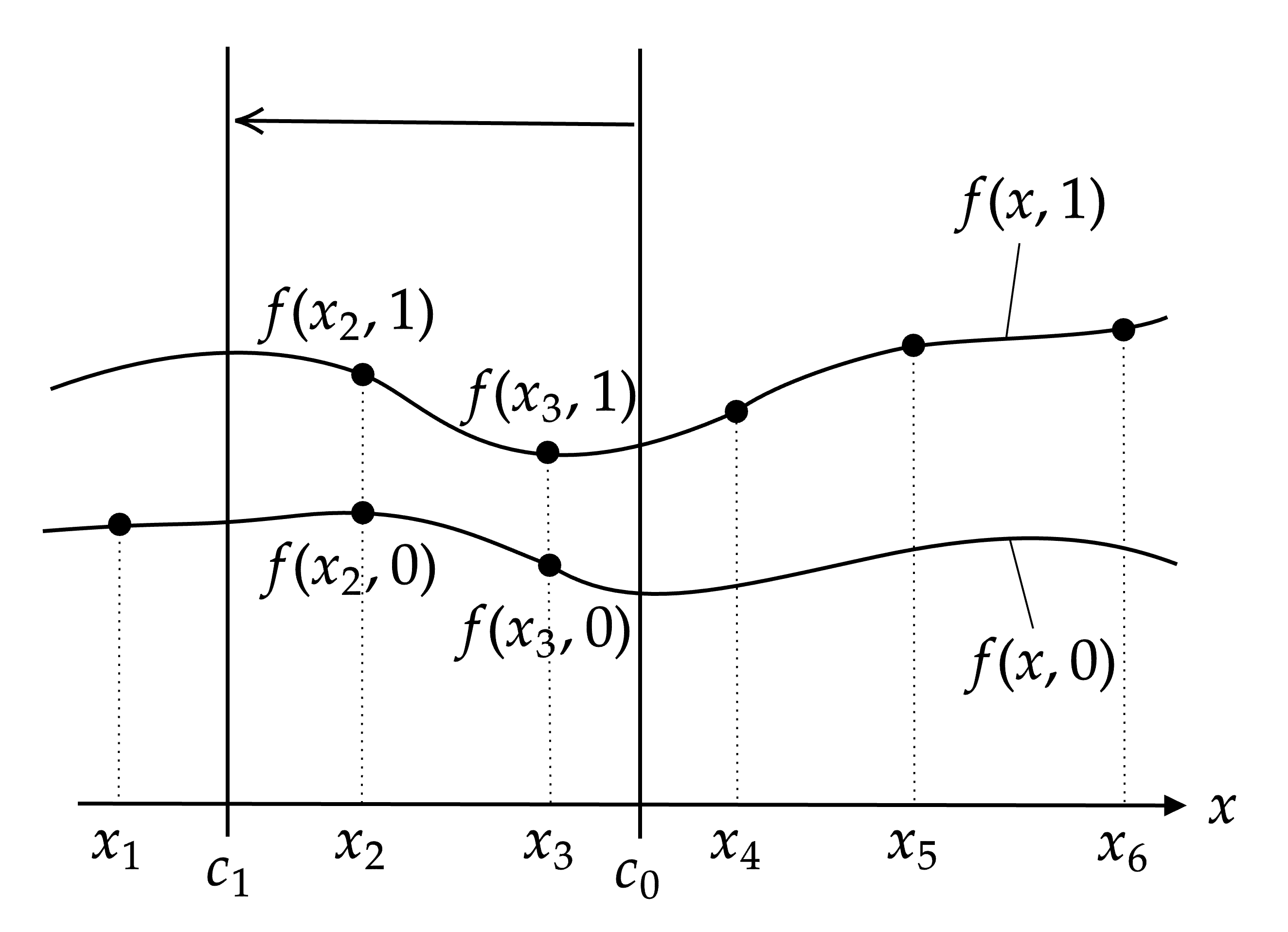}
		\caption[]%
		{Conditional Mean Potential Outcome $f$} 
		\label{subfig:RD1}
	\end{subfigure}
	\begin{subfigure}[b]{0.49\textwidth}   
		\centering 
		\includegraphics[width=\textwidth]{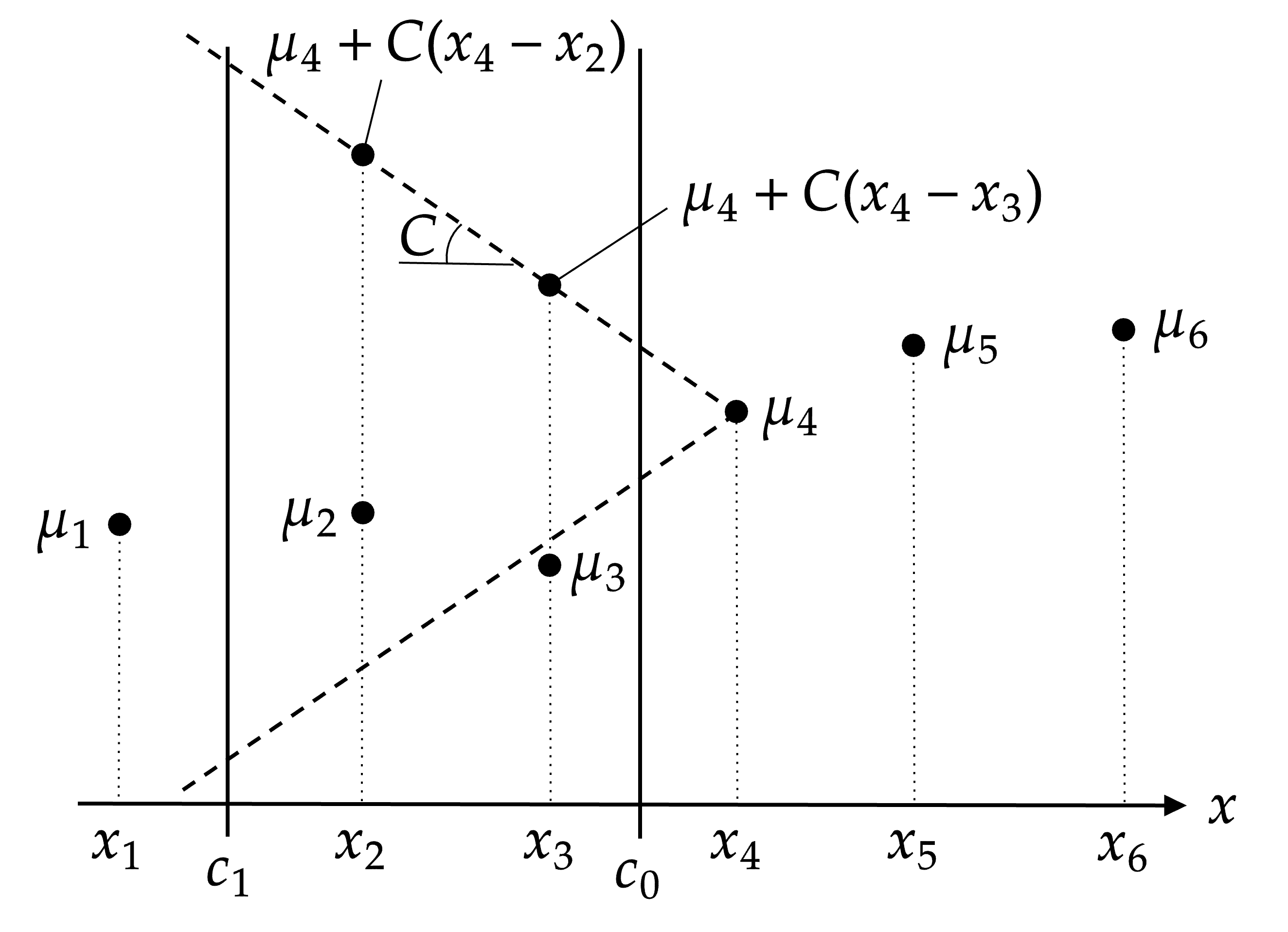}
		\caption[]%
		{Bounds on $f(x,1)$ for $x<0$}
		\label{subfig:RD2}
	\end{subfigure}
	
	\caption*{\scriptsize{\it Notes}: This figure shows an example of the setting of eligibility cutoff choice described in Section \ref{section:example}.
    The two vertical lines indicate the original and new cutoffs $c_0$ and $c_1$.
    Panel (a) shows an example of the conditional mean potential outcome function $f$.
    In Panel (b), the two dashed lines indicate the upper and lower bounds on the function $f(x,1)$ on the range of $x<x_4$ under the assumption that $f\in{\cal F}_{\rm Lip}(C)$.}
	\label{fig:RD}
\end{figure}

	To point or partially identify $L(f)$, suppose that $f\in{\cal F}$ for some function class ${\cal F}$.
	Here, I focus on the {\it Lipchitz class} 
	$$
	{\cal F}_{\rm Lip}(C)=\{f:|f(x,d)-f(\tilde x,d)|\le C|x-\tilde x| \text{ for every } x, \tilde x\in\mathbb{R} \text{ and } d\in\{0,1\}\}.
	$$
	The Lipschitz constraint bounds the maximum possible change in $f(x,d)$ in response to a shift in $x$ by one unit.
    I assume the Lipschitz constant is common for $f(\cdot,1)$ and $f(\cdot,0)$ for simplicity; it is possible to impose two separate constants.
    Other possible function classes include the class of functions with a known bound on the second derivative, as used by \cite{Imbens2019RDD} for inference in RD designs.

    To illustrate how the Lipschitz constraint allows one to partially identify $L(f)$, I present the upper bound on $L(f)$.
    The lower bound can be obtained analogously.
    Let ${\cal M}=\{\boldsymbol m(f):f\in {\cal F}_{\rm Lip}(C)\}$ and $x_{+,{\rm min}}=\min\{x_i:x_i\ge c_0\}$ be the value of $x$ of the treated unit closest to the original cutoff $c_0$.
    The upper bound on $L(f)$ when $\boldsymbol m(f)=\boldsymbol{\mu}\in {\cal M}$ is given by
    \begin{align*}
        \bar I(\boldsymbol{\mu})&=\sup\{L(f):f(x_i,d_i)=\mu_i \text{ for } i=1,...,n, f\in {\cal F}_{\rm Lip}(C)\}\\
        &=\sup\left\{\frac{1}{\tilde n}\sum_{i:c_1\le x_i< c_0}[f(x_i,1)-\mu_i]:f(x_i,1)=\mu_i \text{ for $i$ with $x_i\ge c_0$}, f\in {\cal F}_{\rm Lip}(C)\right\}\\
        &=\frac{1}{\tilde n}\sum_{i:c_1\le x_i< c_0}[\mu_{+,\min}+C(x_{+,{\rm min}} - x_i)-\mu_i],
    \end{align*}
    where I define $\mu_{+,\min}=\mu_i$ for the unit $i$ with $x_i=x_{+,\min}$.
    The second equality holds, since $f(x_i,0)=\mu_i$ for any $i$ with $x_i<c_0$ (i.e., $d_i=0$) and $f(x_i,1)=\mu_i$ for any $i$ with $x_i\ge c_0$ (i.e., $d_i=1$). The last equality holds, since the upper bound on $f(x_i,1)$ for any unit $i$ with $x_i<c_0$ is shown to be $\mu_{+,\min}+C(x_{+,{\rm min}} - x_i)$ under the Lipschitz constraint.
    The upper bound $\bar I(\boldsymbol{\mu})$ increases with the Lipschitz constant $C$ and weakly increases with the size of cutoff change $|c_1-c_0|$ (holding $c_0$ fixed).
    In the example presented in Figure \ref{fig:RD}, $x_4$ is the treated unit closest to the original cutoff $c_0$.
    The two dashed lines in Panel (b) indicate the upper and lower bounds on the function $f(x,1)$ on the range of $x<x_4$.
    In this example, the upper bound on $L(f)$ is given by $\bar I(\boldsymbol{\mu})=\frac{1}{2}\sum_{i\in\{2,3\}}[\mu_4+C(x_4 - x_i)-\mu_i]$.

    Note that we are not interested in the identified set per se, but are interested in using the sample $\boldsymbol{Y}$ to choose between the two cutoffs given the specified function class ${\cal F}$.
    For a decision rule $\delta:\mathbb{R}^n\rightarrow[0,1]$, $\delta(\boldsymbol y)\in [0,1]$ represents the probability of changing the cutoff from $c_0$ to $c_1$ when the realized sample is $\boldsymbol y$.
    Alternatively, we can interpret $\delta(\boldsymbol y)$ as the fraction of individuals to whom we would assign treatment within the units between $c_1$ and $c_0$.
    In Section \ref{section:app-result}, I derive a minimax regret rule when the welfare is the sample average outcome and ${\cal F}={\cal F}_{\rm Lip}(C)$.
    The form of the rule depends on the empirical distribution of $x_i$, the two cutoffs $c_0$ and $c_1$, the Lipschitz constant $C$, and the conditional variance $\sigma^2(x_i,d_i)$, all of which are treated as known.
    In practice, the policymaker must specify $C$ and $\sigma^2(x_i,d_i)$ to implement the rule.
    In Section \ref{section:implementation}, I provide practical guidance on how to specify them.

\subsection{Minimax Regret Rule}\label{section:app-result}


To apply the results in Section \ref{section:minimax}, I normalize $\tilde{\boldsymbol Y}=\boldsymbol\Sigma^{-1/2}\boldsymbol Y=(Y_1/\sigma(x_1,d_1),...,Y_n/\sigma(x_n,d_n))'$ and $\tilde{\boldsymbol m}(f)=\boldsymbol\Sigma^{-1/2}\boldsymbol m(f)=(f(x_1,d_1)/\sigma(x_1,d_1),...,f(x_n,d_n)/\sigma(x_n,d_n))'$, so that $\tilde{\boldsymbol Y} \sim{\cal N}(\tilde{\boldsymbol m}(f), \boldsymbol I_n)$.
Then, $\omega(\epsilon)=\sup\{L(f): \|\tilde {\boldsymbol m}(f)\|\le \epsilon,f\in{\cal F}_{\rm Lip}(C)\}$ is the value of
\begin{align}
	\sup_{f\in{\cal F}_{\rm Lip}(C)}\frac{1}{\tilde n}\sum_{i=1}^n\mathbf{1}\{c_1\le x < c_0\}[f(x_i,1)-f(x_i,0)]~~s.t.~~\sum_{i=1}^n\frac{f(x_i,d_i)^2}{\sigma^2(x_i,d_i)}\le\epsilon^2.\label{eq:mod_lip}
\end{align}
The unknown parameter $f$ is infinite dimensional, but the objective and the norm constraint $\sum_{i=1}^n\frac{f(x_i,d_i)^2}{\sigma^2(x_i,d_i)}\le\epsilon^2$ depend on $f$ only through its values at $(x_1,0),...,(x_n,0),(x_1,1),...,(x_n,1)$.
By a slight modification of Theorem 2.2 in \cite{Armstrong2021ATE}, this optimization problem can be reduced to the following problem:
\begin{align}
	&\max_{(f(x_i,0),f(x_i,1))_{i=1,...,n}\in\mathbb{R}^{2n}} \frac{1}{\tilde n}\sum_{i=1}^n\mathbf{1}\{c_1\le x_i < c_0\}[f(x_i,1)-f(x_i,0)] \label{eq:mod_lip_finite}\\
	s.t. & ~~\sum_{i=1}^n\frac{f(x_i,d_i)^2}{\sigma^2(x_i,d_i)}\le \epsilon^2, ~f(x_i,d)-f(x_j,d)\le C|x_i-x_j|, ~~d\in\{0,1\}, i,j\in\{1,...,n\}.\nonumber
\end{align}
A solution to (\ref{eq:mod_lip_finite}) exists, since the objective function is continuous and the set of the vectors of $2n$ unknowns that satisfy the constraints is closed and bounded.
Once we find a solution $(f(x_i,0),f(x_i,1))_{i=1,...,n}$, we can always find a function $f\in{\cal F}_{\rm Lip}(C)$ that interpolates the points $(x_i,f(x_i,0)),(x_i,f(x_i,1))$, $i=1,...,n$ \citep[Theorem 4]{BELIAKOV2006lipschitz}, which is a solution to the original problem (\ref{eq:mod_lip}).
Problem (\ref{eq:mod_lip_finite}) is a finite-dimensional convex optimization problem with $2n$ unknowns, one quadratic and $2n(n-1)$ linear constraints, and a linear objective function, and can be solved using off-the-shelf convex optimization packages.\footnote{In the empirical application in Section \ref{section:empirical}, I use CVXPY, a Python-embedded modeling language for convex optimization problems \citep{diamond2016cvxpy,agrawal2018rewriting}.}

The following result derives a minimax regret rule.

\begin{proposition}[Minimax Regret Rule for Eligibility Cutoff Choice]\label{prop:eligibility}
    Consider the setup in Section \ref{section:example} with $L(f)=\frac{1}{\tilde n}\sum_{i:c_1\le x_i< c_0}[f(x_i,1)-f(x_i,0)]$ and ${\cal F}={\cal F}_{\rm Lip}(C)$.
    For simplicity, suppose $x_i\neq x_j$ for any $i\neq j$.
    Let $\omega(\epsilon)$ be the value of (\ref{eq:mod_lip_finite}) for $\epsilon\ge 0$, $\epsilon^*\in\arg\max_{\epsilon\in[0,\tau^*]}\omega(\epsilon)\Phi(-\epsilon)$, and $(f_{\epsilon^*}(x_i,0),f_{\epsilon^*}(x_i,1))_{i=1,...,n}$ solve (\ref{eq:mod_lip_finite}) for $\epsilon=\epsilon^*$.
    Then, the following decision rule is minimax regret:
    \begin{align*}
    \delta^*(\boldsymbol Y)=\begin{cases}
    \mathbf{1}\left\{\sum_{i=1}^nf_{\epsilon^*}(x_i,d_i)Y_i/\sigma^2(x_i,d_i)\ge 0\right\}~~&\text{ if }  ~s^*<\bar\sigma,\\
    \mathbf{1}\left\{Y_{+,\min}-\frac{1}{\tilde n}\sum_{i:c_1\le x_i<c_0}Y_i\ge 0\right\}~~&\text{ if } ~s^*=\bar\sigma,\\
    \Phi\left(\dfrac{Y_{+,\min}-\frac{1}{\tilde n}\sum_{i:c_1\le x_i<c_0}Y_i}{((s^*)^2-\bar\sigma^2)^{1/2}}\right)~~&\text{ if } ~s^*>\bar\sigma,
    \end{cases}
\end{align*}
    where $x_{+,{\rm min}}=\min\{x_i:x_i\ge c_0\}$, $Y_{+,\min}=Y_i$ for the unit $i$ with $x_i=x_{+,\rm min}$, $\bar\sigma=(\sigma^2(x_{+,{\rm min}},1)+\frac{1}{\tilde n^2}\sum_{i:c_1\le x_i<c_0}\sigma^2(x_i,0))^{1/2}$, and
    $s^*=2\phi(0)C\frac{1}{\tilde n}\sum_{i:c_1\le x_i<c_0}[x_{+,{\rm min}} - x_i]$.
    Furthermore, the maximum regret of $\delta^*$ over ${\cal F}_{\rm Lip}(C)$ is attained at $-f^*$ and $f^*$, where $f^*$ is any function in ${\cal F}_{\rm Lip}(C)$ that interpolates the points $(x_i,f_{\epsilon^*}(x_i,0)),(x_i,f_{\epsilon^*}(x_i,1))$, $i=1,...,n$.
\end{proposition}

\begin{proof}
    In Appendix \ref{appendix:derivation-wstar}, I derive a solution to (\ref{eq:mod_lip_finite}) for any sufficiently small $\epsilon\ge0$, which provides closed-form expressions for $\omega(0)$, $\omega'(0)$, and $\boldsymbol{w}^*$.
    The results then follow from an application of Theorem \ref{theorem:main} and simple calculations.
\end{proof}

The minimax regret rule is randomized or nonrandomized, depending on $s^*$ and $\bar\sigma$. $s^*$ is 
increasing in the Lipschitz constant $C$ and nondecreasing in the size of cutoff change $|c_1-c_0|$.
$\bar\sigma$ is the standard deviation of $Y_{+,\min}-\frac{1}{\tilde n}\sum_{i:c_1\le x_i<c_0}Y_i$, and therefore increases with the variances of the treated unit closest to the status quo cutoff and the untreated units between the two cutoffs.
If the Lipschitz constant $C$ or the cutoff change is large relative to the variances so that $s^*>\bar\sigma$, the minimax regret rule is a randomized rule based on $Y_{+,\min}-\frac{1}{\tilde n}\sum_{i:c_1\le x_i<c_0}Y_i$.
In view of the results in Section \ref{section:low}, $Y_{+,\min}-\frac{1}{\tilde n}\sum_{i:c_1\le x_i<c_0}Y_i$ can be interpreted as an estimator of $L(f)$ that minimizes the maximum squared bias over ${\cal F}_{\rm Lip}(C)$ among all estimators.
As $C$ or the cutoff change increases, $(s^*)^2-\bar\sigma^2$ increases, and the decision is more randomized given the realization of the estimator.

On the other hand, if $s^*<\bar\sigma$, the minimax regret rule is a nonrandomized rule based on $\sum_{i=1}^nf_{\epsilon^*}(x_i,d_i)Y_i/\sigma^2(x_i,d_i)$.
Using the results in Appendix \ref{appendix:derivation-wstar}, we can show that if $s^*$ is marginally below $\bar\sigma$,
$\sum_{i=1}^nf_{\epsilon^*}(x_i,d_i)Y_i/\sigma^2(x_i,d_i)$ is proportional to $Y_{+,\min}-\frac{1}{\tilde n}\sum_{i:c_1\le x_i<c_0}Y_i$.
As the Lipschitz constant $C$ or cutoff change decreases or the variances increase so that $s^*$ becomes sufficiently smaller than $\bar\sigma$,
the weighted sum assigns nonzero weights to some of the units with $x_i<c_1$ or $x_i>x_{+,\rm min}$.
In Section \ref{section:empirical}, I numerically examine the relationship between the weights and the choice of $C$ in the empirical application; see Figure \ref{fig:weight}.

\subsection{Practical Issues}\label{section:implementation}

In practice, the conditional variance $\sigma^2(x_i,d_i)$ is unknown.
A feasible version of the minimax regret rule is obtained by using a consistent estimator in place of the true $\sigma^2(x_i,d_i)$.
The conditional variance can be estimated, for example, by applying a local linear regression to the squared residuals \citep{fan1998variance} or by the nearest-neighbor variance estimator \citep{abadie2006matching}.
In the case in which unit $i$ represents a group of individuals and $Y_i$ is the sample mean outcome within group $i$, as in the empirical application in Section \ref{section:empirical}, it is natural to use the conventional standard error of the sample mean as $\sigma(x_i,d_i)$.

Implementation of the minimax regret rule requires choosing the Lipschitz constant $C$.
In principle, it is not possible to choose $C$ that applies to both sides of the cutoff $c_0$ in a data-driven way, since we only observe outcomes either under treatment or under no treatment on each side.
It is, however, possible to estimate a lower bound on $C$.
If $f\in{\cal F}_{\rm Lip}(C)$ is differentiable, a lower bound on $C$ is given by $\max\left\{\max_{\tilde x\ge c_0}\left\vert\frac{\partial f(\tilde x,1)}{\partial x}\right\vert,\max_{\tilde x< c_0}\left\vert\frac{\partial f(\tilde x,0)}{\partial x}\right\vert\right\}$, since $\left\vert\frac{\partial f(\tilde x,d)}{\partial x}\right\vert\le C$ for all $\tilde x$ and $d$.
To estimate the lower bound, we could estimate the derivatives $\frac{\partial f(\tilde x,1)}{\partial x}$ for $\tilde x\ge c_0$ and $\frac{\partial f(\tilde x,0)}{\partial x}$ for $\tilde x< c_0$ by a local polynomial regression and then take the maximum of their absolute values.
In practice, I suggest choosing the initial value of $C$ by estimating the lower bound or using application-specific knowledge, and considering a range of plausible values of $C$ to conduct a sensitivity analysis.

\section{Empirical Application}\label{section:empirical}

I now illustrate my approach in an empirical application to consider whether to scale up the BRIGHT program in Burkina Faso.

\subsection{Background and Data}

With the aim of improving children's---especially girls'---educational outcomes in rural villages,
the BRIGHT program constructed well-resourced village-based schools with three classrooms for grades 1 to 3 in 132 villages from 47 departments\footnote{Departments are the third-level administrative divisions of Burkina Faso, below regions and provinces.} during the period 2005 to 2008.
The Ministry of Education determined the villages in which schools would be built through the following process.
First, 293 villages were nominated based on low school enrollment rates.
Second, the Ministry administered a survey in each village and assigned each village a score using a set formula.
	The formula attached a large weight to the estimated number of children to be served from the nominated and neighboring villages, giving additional weight to girls.
The Ministry then ranked villages within each department and selected the top half of the villages to receive a school.
For further details on the BRIGHT program and allocation process, see \cite{levy2009bright} and \cite{Kazianga2013bright}.

Since the school allocation was determined at department level, the cutoff score for program eligibility differed across departments.
Following \cite{Kazianga2013bright}, I define the {\it relative score} as the score for each village minus the cutoff score for the department the village belongs to.
As a result, a village is eligible for the program when the relative score is larger than zero.
\cite{Kazianga2013bright} use the relative score as a running variable and evaluate the causal effect of the program on educational outcomes using an RD design.

I use the replication data for \citeauthor{Kazianga2013bright}'s \citeyearpar{Kazianga2013bright} results \citep{Kazianga2019data} and consider whether we should expand the program.
The dataset contains survey results on 30 households from 287 nominated villages, for a total sample of 23,282 children between the ages of 5 and 12.
The survey was conducted in 2008---namely, 2.5 years after the start of the program.
Table \ref{table:summary} in Appendix \ref{appendix:empirical_result} reports summary statistics on child educational outcomes and characteristics.

I consider school enrollment as the target outcome.
Since the score and eligibility are determined at village level, I use the village-level mean outcome---namely, the enrollment rate for each village.
This setting fits into the setup in Section \ref{section:example}, where $i$ represents a village, $Y_i$ is the sample enrollment rate of village $i$, $d_i$ is program eligibility, and $x_i$ is the relative score.
The original cutoff is $c_0=0$; that is, $d_i=\mathbf{1}\{x_i\ge 0\}$.
The parameter is a function $f:\mathbb{R}\times\{0,1\}\rightarrow \mathbb{R}$, where $f(x,d)$ represents the counterfactual enrollment rate conditional on the relative score if the eligibility status were set to $d\in\{0,1\}$. 
Since $Y_i$ is a village-level sample mean, it is plausible to assume that $Y_i$ is approximately normally distributed.
I use the conventional standard error of the sample mean as the standard deviation of $Y_i$.\footnote{The sample enrollment rate is zero in 21 out of 287 villages.
	I exclude these villages from the analysis, since the standard error of $Y_i$ is zero.}

\subsection{Hypothetical Policy Choice Problem}\label{section:hypothetical}

Suppose we are evaluating the program to decide whether to scale it up.
Specifically, consider the following decision problem.
The counterfactual policy is to build BRIGHT schools in previously ineligible villages whose relative scores are in the top 20\%,
which corresponds to lowering the cutoff from $0$ to $-0.256$; in Section \ref{section:sensitivity}, I examine the sensitivity of the result to the choice of the new cutoff.
I use the average enrollment rate across villages as the welfare criterion, so that the welfare effect of this policy relative to the status quo is
$$
L(f)=\frac{1}{\tilde n}\sum_{i=1}^n\mathbf{1}\{-0.256\le x_i<0\}[f(x_i,1)-f(x_i,0)],
$$
where $\tilde n=\sum_{i=1}^n\mathbf{1}\{-0.256\le x_i<0\}$ is the number of villages that would receive a school under the new policy.
Some assumptions underlying this choice of $L(f)$ are: (i) we consider the set of villages in the sample rather than a new set of villages, and (ii) the counterfactual enrollment rate function $f$ remains constant over time between the period when the BRIGHT program was implemented and the period when the program is expanded.\footnote{Another underlying assumption is no spillover effects. The plausibility of this assumption can be indirectly verified, for example, based on how isolated each village is from other villages.}
In principle, these assumptions can potentially be relaxed by a suitable choice of $L(f)$ and the function class ${\cal F}$, while I focus on the above choice of $L(f)$ for a simple illustration of my approach.

When deciding whether to implement the policy, it is important to consider the benefit relative to the cost.
\cite{Kazianga2013bright} provide an estimate of the cost of constructing a BRIGHT school, which is \$4,758 per village.\footnote{I assume that the cost is known and constant across villages.
    If village-level cost data are available, my framework allows for unknown and heterogeneous costs by introducing the cost model on top of the outcome model.}
To incorporate the cost in the decision problem, suppose that the policymaker cares about the cost-effectiveness of this new policy relative to similar programs.
Cost-effectiveness is defined as the ratio of the policy cost to the increase in the enrollment.
I assume that it is optimal to implement the policy if its cost-effectiveness is smaller than that of a benchmark policy, denoted by $CE_0$---that is,
$$
\frac{\text{\$4,758}}{416\cdot\frac{1}{\tilde n}\sum_{i=1}^n\mathbf{1}\{-0.256\le x_i<0\}[f(x_i,1)-f(x_i,0)]}\le CE_0,
$$
where $416$ is the number of children per village.
The denominator represents the increase in the average enrollment across villages that would receive a BRIGHT school under the new policy.
For concreteness, I set the benchmark cost-effectiveness to $\$83.77$, which is the cost-effectiveness of a school construction program in Indonesia \citep{duflo2001school,Kazianga2013bright}.\footnote{The cost per village and the cost-effectiveness of a school construction program in Indonesia are found in Tables A18 and A20, respectively, in Online Appendix of \cite{Kazianga2013bright}.
	I compute the number of children per village by dividing the total enrollment by the enrollment rate reported in Table A17 in Online Appendix of \cite{Kazianga2013bright}.}
The above condition with $CE_0=\$83.77$ is equivalent to
$$
\frac{1}{\tilde n}\sum_{i=1}^n\mathbf{1}\{-0.256\le x_i<0\}[f(x_i,1)-0.137-f(x_i,0)]\ge 0.
$$
My method can be used to consider this decision problem by setting the outcome to $Y_i-0.137d_i$, where $0.137$ can be viewed as the policy cost measured in the unit of the enrollment rate.
Alternatively, $0.137$ can be viewed as the effect on the enrollment rate of implementing a benchmark policy with the same cost as the new policy.
I present the results for this scenario as well as for a benchmark scenario in which we ignore the policy cost.

I implement my method assuming that the counterfactual outcome function $f$ belongs to the Lipschitz class ${\cal F}_{\rm Lip}(C)$.\footnote{It is possible to incorporate the natural bound of $[0,1]$ on enrollment rates by setting the outcome to $Y_i-0.5$ and assuming $f(x,d)\in [-0.5,0.5]$ for all $(x,d)$ in addition to the Lipschitz constraint.
I find that this adjustment does not change the minimax regret rule and its maximum regret for the range of the Lipschitz constant $C$ considered in this analysis.}
The Lipschitz constant $C$ represents the maximum possible change in the enrollment rate in response to a one-unit change in the relative score.
While the relative score is computed based on multiple village-level characteristics, it is largely based on the estimated number of students to be served.
All other characteristics being equal, a one-unit increase in the relative score corresponds to around 100 additional children in the village, where the average number of children per village is 416.
To obtain a reasonable range of $C$, I estimate a lower bound on $C$ using the method described in Section \ref{section:implementation}, which yields the lower bound estimate of $0.149$.\footnote{I estimate $\frac{\partial f(x,0)}{\partial x}$ at $x\in\{-2.5,-2.45,...,-0.05\}$ and $\frac{\partial f(x,1)}{\partial x}$ at $x\in\{0.05,0.1,...,2.5\}$ by local quadratic regression and take the maximum of their absolute values.
	For local quadratic regression, I use the MSE-optimal bandwidth selection procedure of \cite{calonic2018bian}, which can be implemented by R package ``nprobust.''}
I present the results for $C\in\{0.05,0.1,...,0.95,1\}$ and examine their sensitivity to the choice of $C$.

\subsection{Results}

Figure \ref{fig:choice_mmr} plots $\delta^*(\boldsymbol Y)$, the probability of choosing the new policy computed by the minimax regret rule, against the Lipschitz constant $C$.
When $C< 0.6$, the minimax regret rule is nonrandomized.
It chooses the new policy in the no-cost scenario and maintains the status quo in the scenario in which the policy cost is $0.137$.
When $C\ge 0.6$, on the other hand, the minimax regret rule is randomized.
The decisions become more mixed as $C$ increases.
Given that the estimate of the lower bound on $C$ is 0.149, the minimax regret rule is nonrandomized when $C$ is less than four times the estimated lower bound.
Under this reasonable range of $C$, the optimal decision is the same in each scenario.


\begin{figure}[t]
	\centering
	\caption{Optimal Decisions: Probability of Choosing the New Policy} 
	\includegraphics[width=0.5\textwidth]{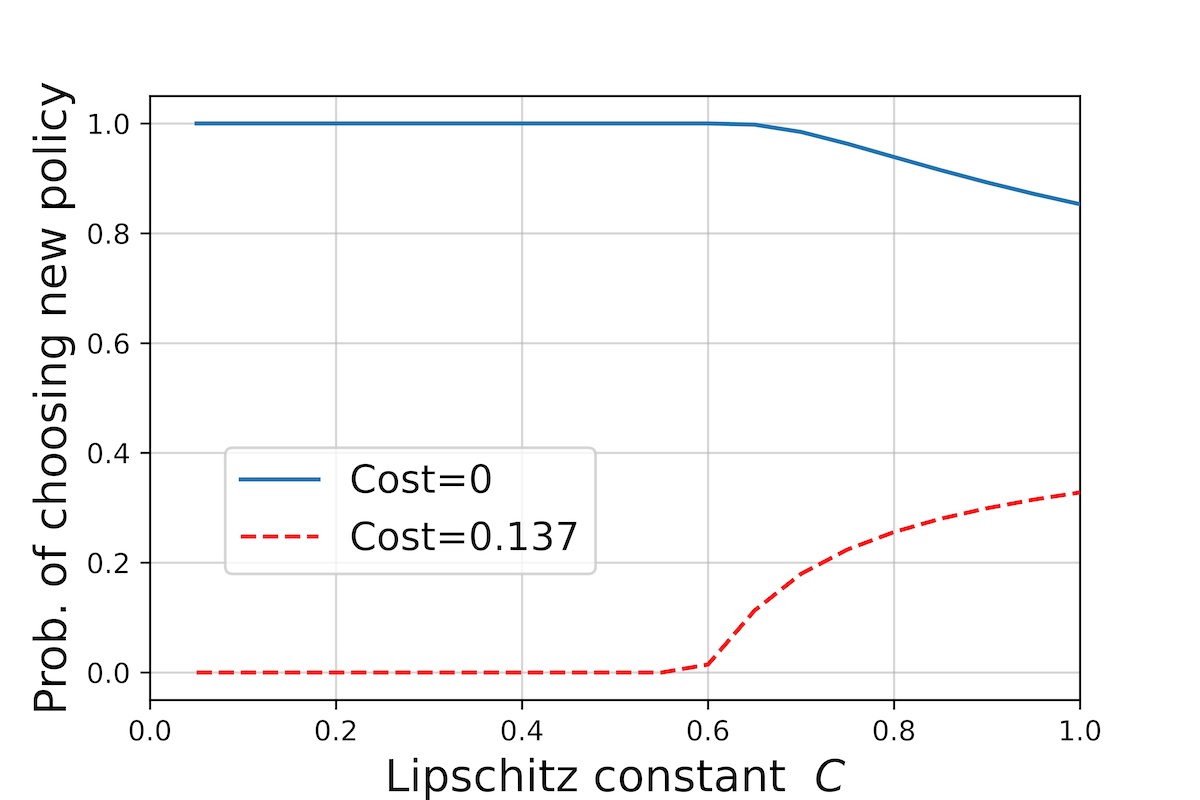}
	\vspace{1em}
	\caption*{\scriptsize{\it Notes}: This figure shows the probability of choosing the new policy computed by the minimax regret rule.
		The new policy is to construct BRIGHT schools in previously ineligible villages whose relative scores are in the top 20\%.
		The solid line shows results for the scenario in which we ignore the policy cost.
		The dashed line shows results for the scenario in which the policy cost measured in the unit of the enrollment rate is 0.137.
		I report results for the range $[0.05,0.1,...,0.95,1]$ of the Lipschitz constant $C$.}
	\label{fig:choice_mmr}
\end{figure}

\begin{figure}[t]
	\centering
	\caption{Weight to Each Village Attached by the Minimax Regret Rule} 
	\begin{subfigure}[b]{0.49\textwidth}   
		\centering 
		\includegraphics[width=\textwidth]{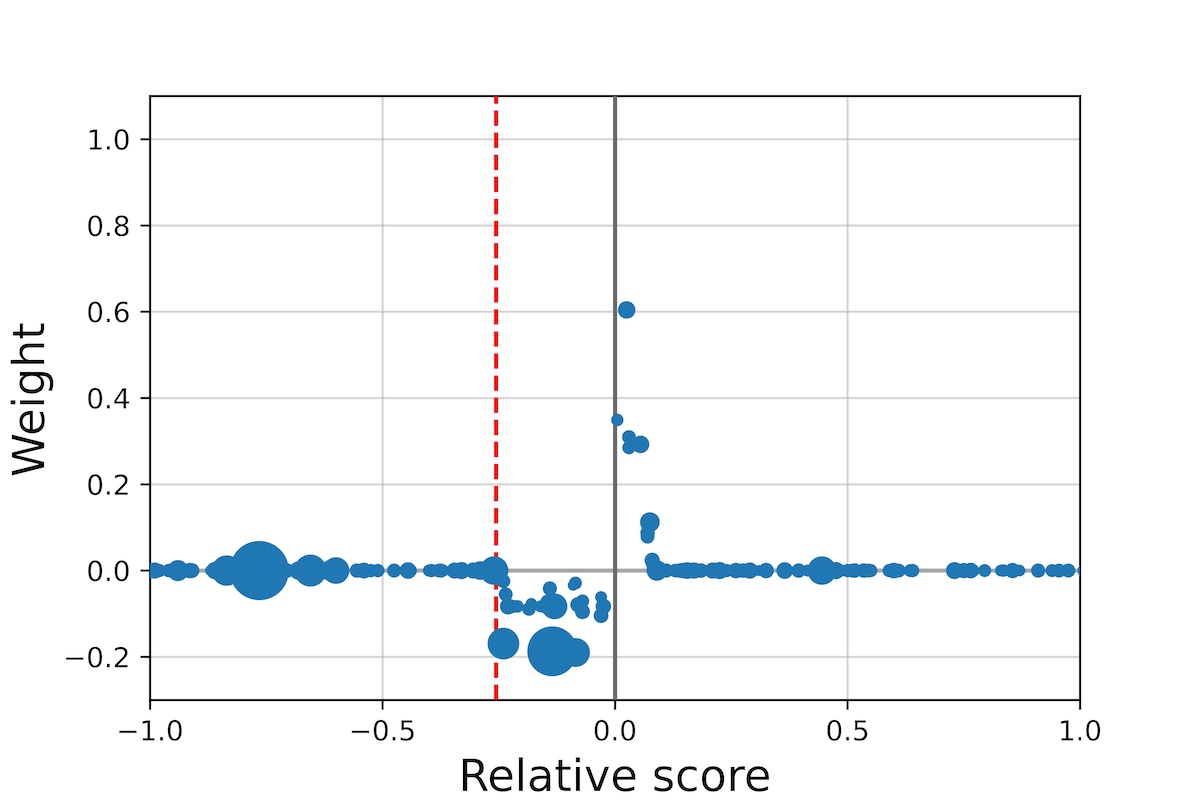}
		\caption[]%
		{$C=0.1$} 
		\label{subfig:mmr_C=01}
	\end{subfigure}
	\begin{subfigure}[b]{0.49\textwidth}   
		\centering 
		\includegraphics[width=\textwidth]{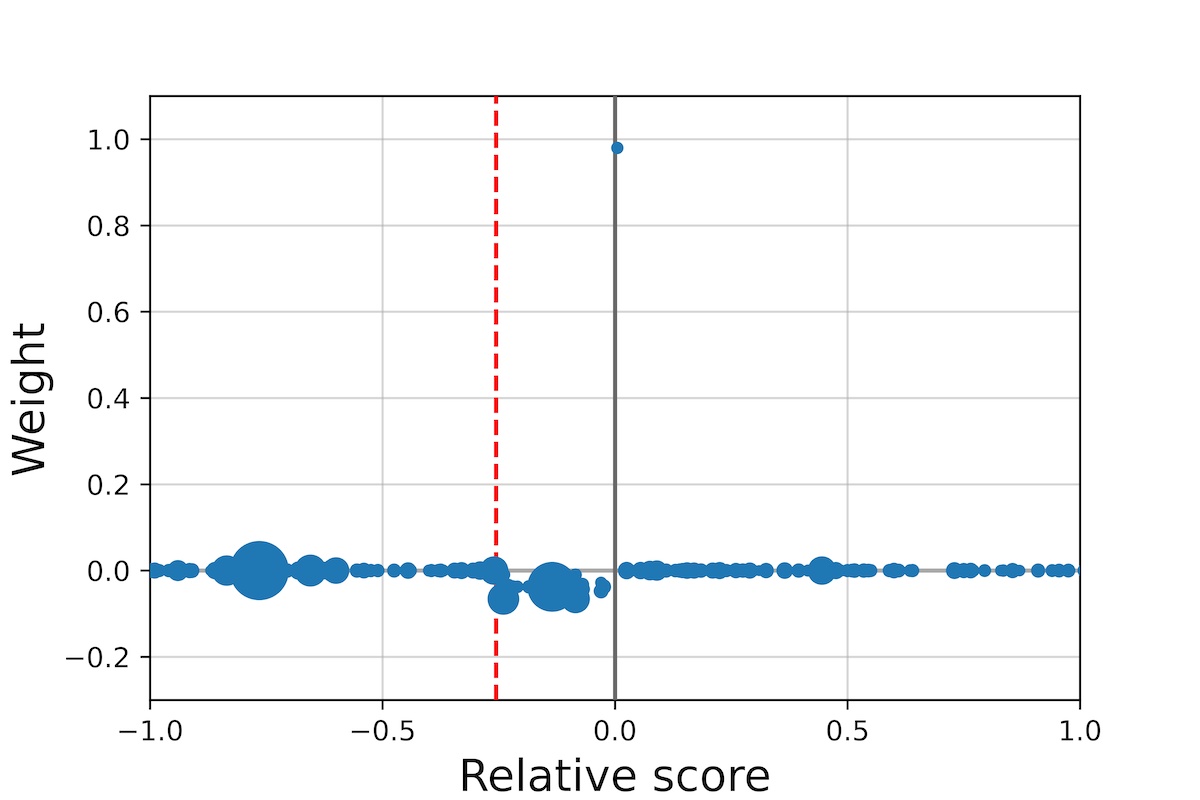}
		\caption[]%
		{$C=0.5$}
		\label{subfig:mmr_C=05}
	\end{subfigure}
	
	\caption*{\scriptsize{\it Notes}: This figure shows the weight $w_i$ attached to each village by the minimax regret rule of the form $\delta^*(\boldsymbol Y)=\mathbf{1}\{\sum_{i=1}^nw_iY_i\ge 0\}$.
		The weights are normalized so that $\sum_{i=1}^nw_i^2=1$.
		The horizontal axis indicates the relative score of each village.
		Each circle corresponds to each village.
		The size of circles is proportional to the inverse of the standard error of the enrollment rate $Y_i$.
		The vertical dashed line corresponds to the new cutoff $-0.256$.
		Panels (a) and (b) show the results when the Lipschitz constant $C$ is 0.1 and 0.5, respectively.}
	\label{fig:weight}
\end{figure}

If the minimax regret rule is nonrandomized, the rule is of the form $\delta^*(\boldsymbol Y)=\mathbf{1}\{\sum_{i=1}^nw_iY_i\ge 0\}$ for some weights $w_i$'s.
Panels (a) and (b) of Figure \ref{fig:weight} plot the weight $w_i$ attached to each village against the relative score $x_i$ for $C=0.1$ and $C=0.5$, respectively.
In the plots, the size of circles is proportional to the inverse of the standard error of the enrollment rate $Y_i$.
For both $C=0.1$ and $C=0.5$, a few treated units just above the original cutoff (the solid vertical line) receive a positive weight, the untreated units between the original cutoff and the new cutoff (the dashed vertical line) receive a negative weight, and no other units receive any weight.
When $C=0.1$, the weight tends to be larger for units with a smaller standard error.
When $C=0.5$, a positive weight is attached only to the treated unit closest to the original cutoff.
Also, the weights on the untreated units between the two cutoffs are almost identical.
This situation corresponds to the minimax regret rule of the form $\delta^*(\boldsymbol Y)=\mathbf{1}\left\{Y_{+,\min}-\frac{1}{\tilde n}\sum_{i:c_1\le x_i<c_0}Y_i\ge 0\right\}$ discussed in Section \ref{section:app-result}.

\subsubsection{Comparison with Alternative Rules}

I compare the minimax regret rule with several plug-in decision rules of the form $\delta(\boldsymbol{Y})=\mathbf{1}\{\hat L(\boldsymbol{Y})\ge 0\}$, where $\hat L(\boldsymbol{Y})$ is an estimator of the policy effect $L(f)$.
I consider the following three estimators of $L(f)$. (i) The minimax affine MSE estimator \citep{donoho1994}, described in Section \ref{section:donoho}, under the Lipschitz class ${\cal F}_{\rm Lip}(C)$. (ii) The minimax affine MSE estimator under the additional assumption of constant conditional treatment effects. In other words, I construct the estimator assuming that
	$
	{\cal F}=\{f\in{\cal F}_{\rm Lip}(C): f(x,1)-f(x,0)=f(\tilde x,1)-f(\tilde x,0) ~\text{for all}~x,\tilde x\}
	$.
	This estimation corresponds to first nonparametrically estimating the average treatment effect at the original cutoff and then extrapolating the effects on the units between the two cutoffs by the constant effects assumption.
(iii) The polynomial regression estimator \citep{Kazianga2013bright}.\footnote{\cite{Kazianga2013bright} estimate the treatment effect at the cutoff, not the effect on units away from the cutoff. They apply global polynomial regression RD estimators to child-level data. }
	Given the degree of polynomial $p$, I first estimate the model $f(x,d)=\alpha_0+\alpha_1 x+\cdots+\alpha_p x^p +\beta_0 d+\beta_1 d\cdot x +\cdots +\beta_pd\cdot x^p$ by weighted least squares regression using $1/\sigma^2(x_i,d_i)$ as the weight.
	I then estimate $L(f)$ by $\frac{1}{\tilde n}\sum_{i=1}^n\mathbf{1}\{-0.256\le x_i<0\}[\hat f(x_i,1)-\hat f(x_i,0)]$, where $\hat f$ is the estimated function.
	This estimator relies on the functional form of $f$ to extrapolate $f(x_i,1)$ for the untreated units.

\begin{figure}[!t]
	\centering
	\caption{Estimated Effects of the New Policy on the Enrollment Rate} 
	\begin{subfigure}[b]{0.49\textwidth}   
		\centering 
		\includegraphics[width=\textwidth]{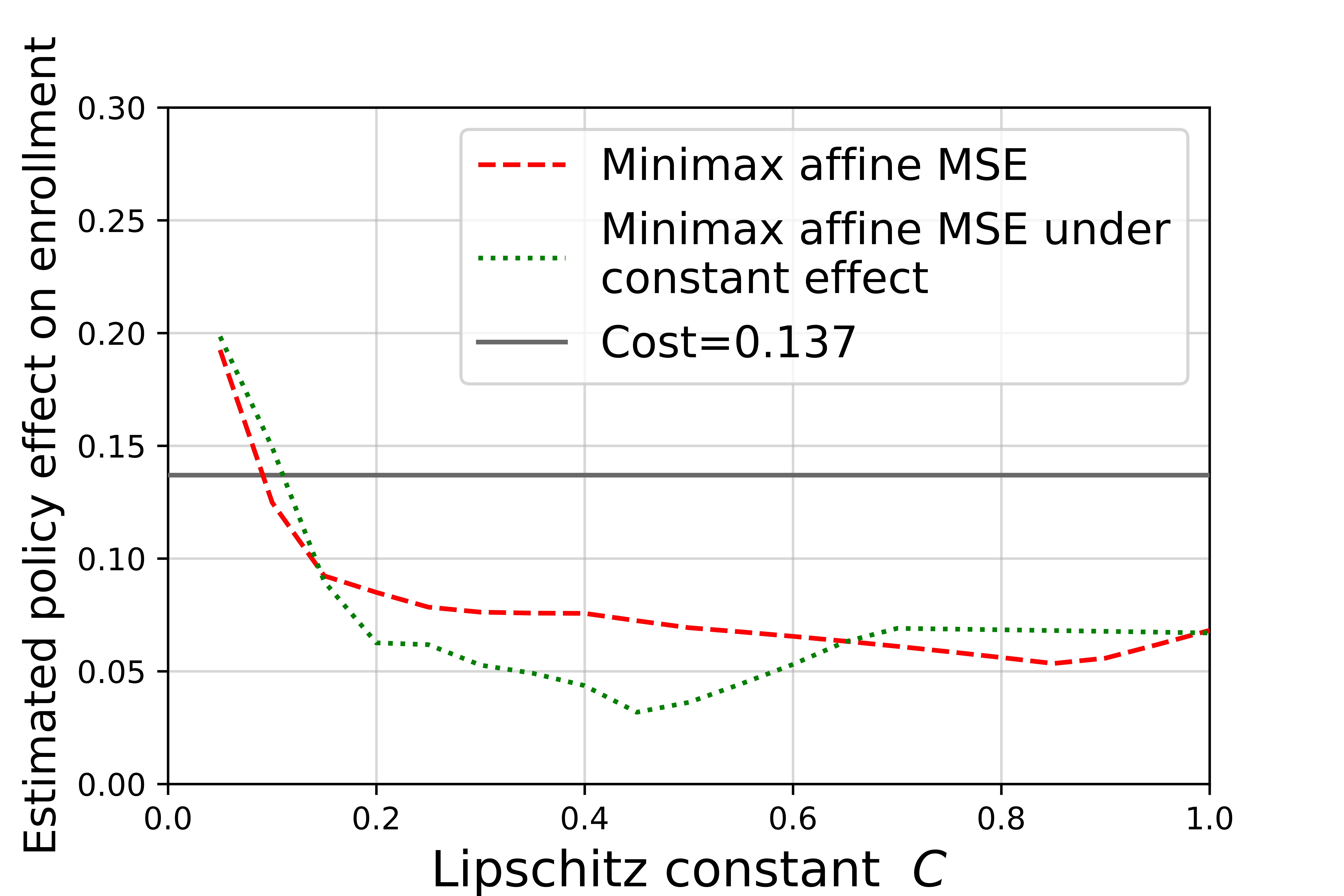}
		\vspace{-0.8em}
		\caption[]%
		{Minimax Affine MSE Estimators Under\\
			$~~~~~~~~~~~~~~~~~~~~~$ Lipschitz Class}
		\label{subfig:mse_est}
	\end{subfigure}
	\begin{subfigure}[b]{0.49\textwidth}   
		\centering 
		\includegraphics[width=\textwidth]{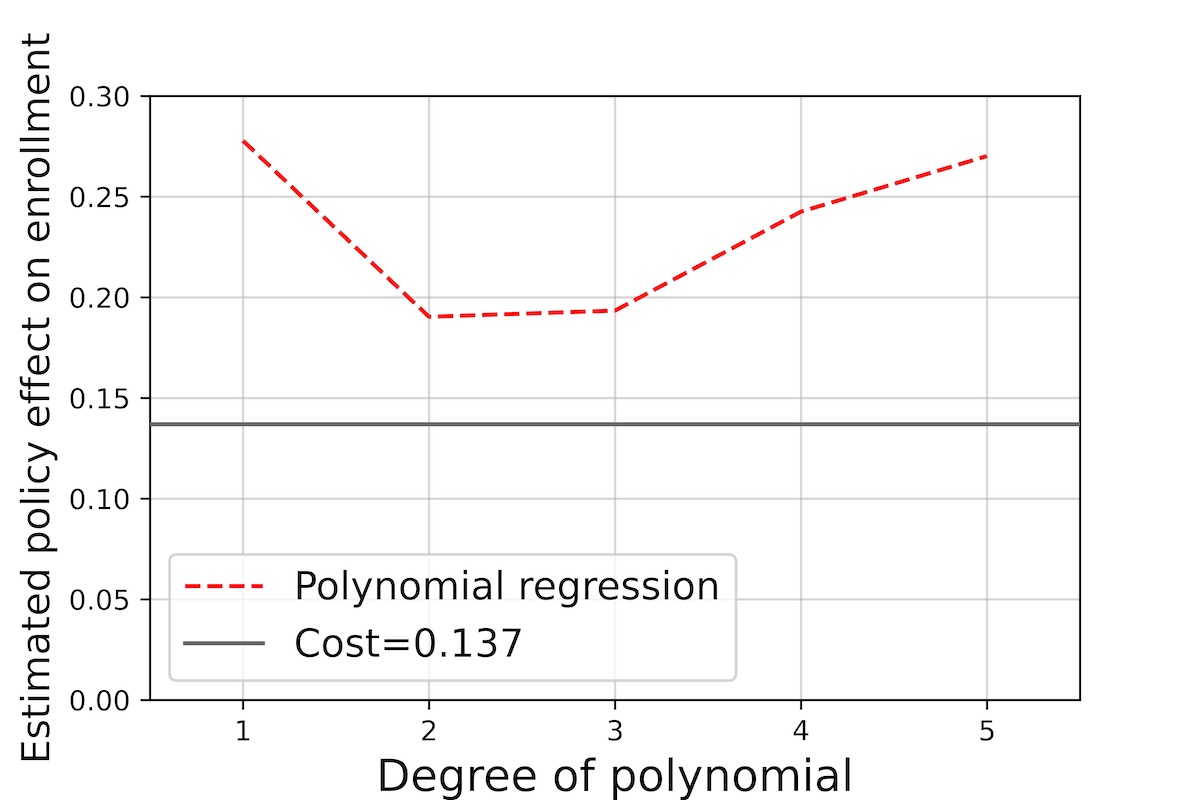}
		\vspace{-0.8em}
		\caption[]%
		{Polynomial Regression Estimators\\
			~} 
		\label{subfig:pol_est}
	\end{subfigure}
	\caption*{\scriptsize{\it Notes}: This figure shows the average effect of the new policy on the enrollment rate across the villages that would receive a school under the new policy.
		Panel (a) reports estimates from the minimax affine MSE estimators with and without the assumption of constant conditional treatment effects.
		I report results for the range $[0.05,0.1,...,0.95,1]$ of the Lipschitz constant $C$.
		Panel (b) reports estimates from the polynomial regression estimators of degrees 1 to 5.
		The horizontal line shows the cost of 0.137, which is my main specification of the policy cost.
	}
	\label{fig:est}
\end{figure}

Panel (a) of Figure \ref{fig:est} reports the estimated policy effects from the minimax affine MSE estimators with and without constant conditional treatment effects.
Overall, these two estimators exhibit a similar pattern.
While the estimated policy effects are larger than the policy cost when $C$ is close to zero, they are smaller than the policy cost when $C$ is moderate or large.
For $C\ge 0.15$, the resulting decisions about whether to choose the new policy are the same as the decision made by the minimax regret rule until $C$ reaches 0.6, where the minimax regret rule starts to randomize.
In contrast, the estimated policy effects from the polynomial regression estimators of degrees $1$ to $5$ exceed the policy cost, as reported in Panel (b) of Figure \ref{fig:est}.
The estimates appear to be close to the simple mean outcome difference between eligible and ineligible villages that can be computed from Table \ref{table:summary} in Appendix \ref{appendix:empirical_result}.
The resulting decisions differ from the decision made by the minimax regret rule.\footnote{The estimators presented here can be written as $\sum_{i=1}^nw_iY_i$ for some weights $w_i$'s. See Figure \ref{fig:weight_plug} in Appendix \ref{appendix:empirical_result} for the plots of these weights. While the minimax affine MSE estimators attach weights to units just above the original cutoff and to units between the two cutoffs, polynomial regression estimators even attach weights to units further from the cutoffs.}

The above decisions are computed from a particular realization of the sample.
To assess the ex ante performance of different decision rules, I compute the maximum regret of these rules when the true function class is ${\cal F}_{\rm Lip}(C)$.\footnote{I compute the maximum regret of the minimax regret rule using the formula in Theorem \ref{theorem:main}.
	For the other rules, I adapt the approach of \cite{ishihara2021meta} to numerically calculate the maximum regret in this setup.}
Panel (a) of Figure \ref{fig:max_reg} reports the result for the minimax regret rule and the plug-in rules based on the minimax affine MSE estimators with and without constant conditional treatment effects.\footnote{
The result for the plug-in rules based on polynomial regression estimators is omitted, since these rules turn out to have significantly larger maximum regret than the other rules.}
The maximum regret of the plug-in MSE rule with constant conditional treatment effects is much larger than that of the other two, especially when the Lipschitz constant $C$ is large.
The plug-in MSE rule without constant conditional treatment effects performs slightly worse than the minimax regret rule.
The ratio of the maximum regret between the two rules is maximized at $C=0.6$, where the minimax regret rule starts to randomize.
The maximized ratio is about 1.233.

\begin{figure}[!t]
	\centering
	\caption{Maximum Regret of the Minimax Regret Rule and Plug-in MSE Rules} 
	\begin{subfigure}[b]{0.49\textwidth}   
		\centering 
        \includegraphics[width=\textwidth]{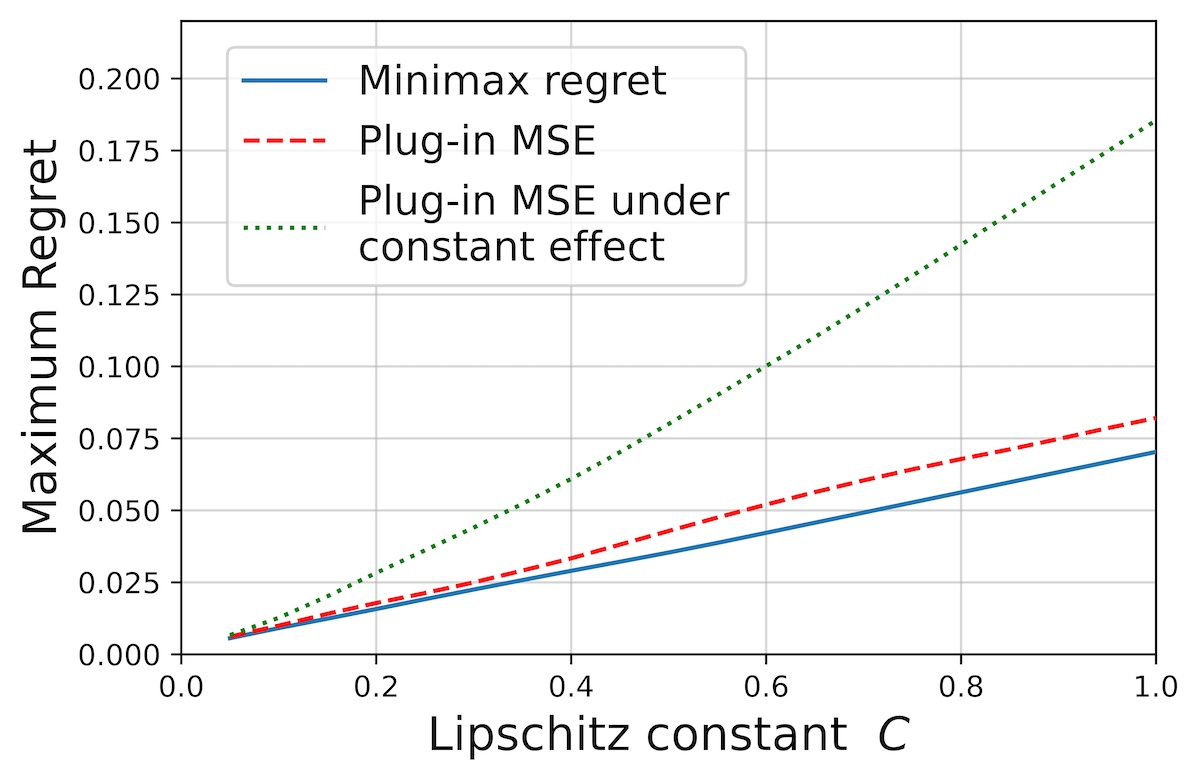}
		\vspace{-0.8em}
		\caption[]%
		{Correctly Specified Lipschitz Constant}
		\label{subfig:max_reg_correct}
	\end{subfigure}
	\begin{subfigure}[b]{0.49\textwidth}   
		\centering 
		\includegraphics[width=\textwidth]{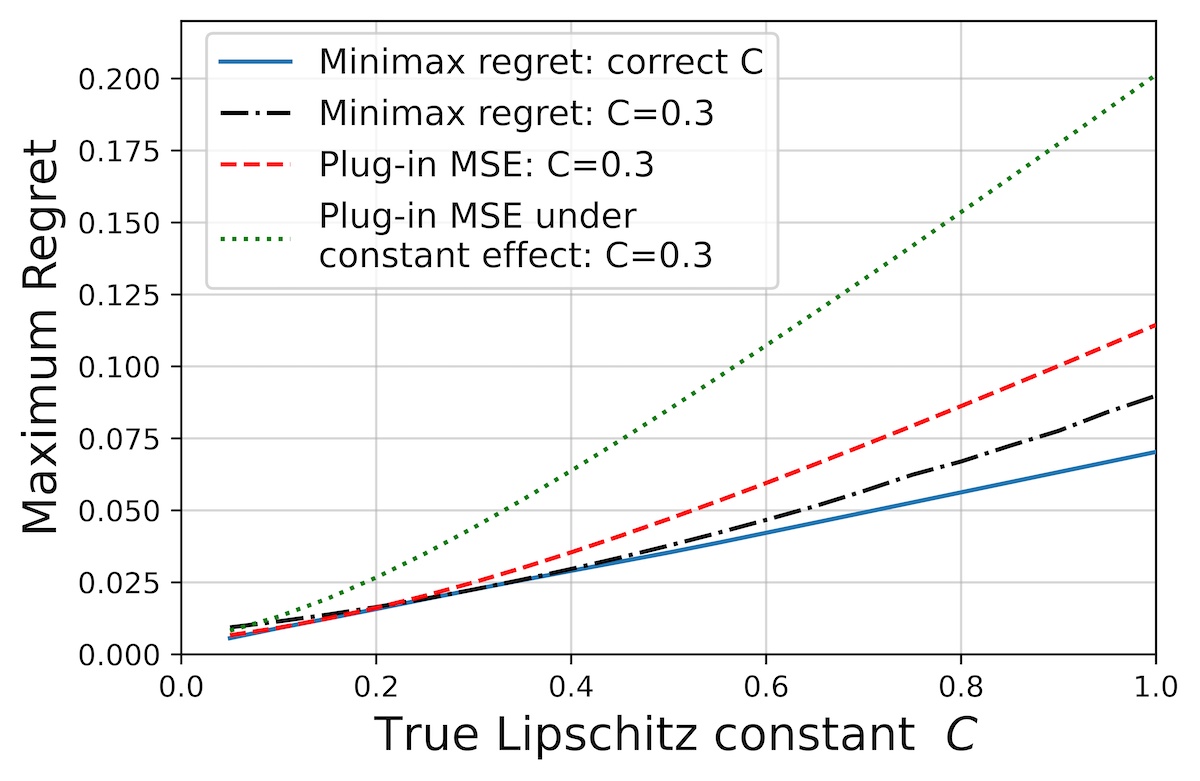}
		\vspace{-0.8em}
		\caption[]%
		{Misspecified Lipschitz Constant} 
		\label{subfig:max_reg_correct_misspecified}
	\end{subfigure}
	\caption*{\scriptsize{\it Notes}: This figure shows the maximum regret of the minimax regret rule and the plug-in rules based on the minimax affine MSE estimators with and without the assumption of constant conditional treatment effects.
    Panel (a) reports results for rules constructed using the true Lipschitz constant $C$.
    Panel (b) reports results for rules constructed assuming $C=0.3$, where the maximum regret is computed by setting the true $C$ to the value on the horizontal axis.
		The solid line in Panel (b) indicates the maximum regret that can be achieved if $C$ is correctly specified.
		In both panels, the maximum regret is normalized so that the unit is the same as that of the enrollment rate.
		I report results for the range $[0.05,0.1,...,0.95,1]$ of the Lipschitz constant $C$.}
	\label{fig:max_reg}
\end{figure}

\subsubsection{Sensitivity Analysis}\label{section:sensitivity}

I conduct several sensitivity analyses to assess how the results depend on the problem specification.
First, I examine the sensitivity of the decision from the minimax regret rule to the choice of the new eligibility cutoff.
Figure \ref{fig:other_cutoffs} in Appendix \ref{appendix:empirical_result} reports the results when the new policy builds schools in the top 10\% or 30\% of previously ineligible villages instead of the top 20\%.
As predicted by the result in Section \ref{section:app-result}, the minimax regret rule switches from a nonrandomized rule to a randomized rule at a smaller Lipschitz constant $C$ when the fraction of the target villages is larger.
When the fraction is 30\%, the rule is nonrandomized and suggests that the new policy is not cost-effective as long as $C$ is less than $0.4$.

Second, I examine the sensitivity of the decision to the policy cost.
I find that the minimax regret rule nonrandomly decides to maintain the status quo as long as the cost exceeds $0.10$, for the range of $C$ between its estimated lower bound of 0.149 and 0.6.

So far, I have constructed decision rules assuming that the Lipschitz constant $C$ is known, which is a crucial assumption in my theoretical analysis.
To assess the sensitivity of the performance to misspecification of $C$, I construct decision rules assuming $C=0.3$ and then compute their maximum regret when the true value of $C$ lies in $\{0.05,0.1,...,0.95,1\}$.
Panel (b) of Figure \ref{fig:max_reg} reports the result.
The solid line indicates the ``oracle'' maximum regret, which can be achieved if we correctly specify $C$.
The result shows that the plug-in MSE rule without constant conditional treatment effects performs slightly better than the minimax regret rule when the true $C$ is close to zero.
On the other hand, the minimax regret rule outperforms the plug-in MSE rule with nonnegligible differences for any value of the true $C$ greater than 0.3.
The result suggests that the minimax regret rule is more robust to misspecification of $C$ toward zero than the plug-in MSE rule.\footnote{The potential superiority of the minimax regret rule seems consistent with the theoretical results in the following way.
As shown in Section \ref{section:app-result}, when the true value of $C$ is large, the oracle minimax regret rule only uses the treated units just above the original cutoff and the untreated units between the original and new cutoffs (see Panel (b) of Figure \ref{fig:weight}).
If the specified $C$ is smaller than the true value, the resulting minimax regret rule is closer to the oracle rule than the plug-in MSE rule, since the minimax regret rule places more importance on the bias than the minimax affine MSE estimator, as discussed in Section \ref{section:donoho}.
Therefore, it is expected that the minimax regret rule performs better than the plug-in MSE rule under misspecification of $C$ toward zero.}


\section{Conclusion}\label{conclusion}

This paper derives an optimal decision rule for a large class of policy decision problems.
The framework introduced in this paper allows for infinite-dimensional parameters, various forms of parameter restrictions, and partial identification of social welfare.
I illustrate my approach through an application to the problem of eligibility cutoff choice in an RD setup.

Another promising application of this framework lies in policy adoption decisions using a difference-in-differences design.
Specifically, consider a group of units that have experienced a policy change and another group that has not. Suppose the policymaker needs to decide whether to implement the new policy for the latter group.
The average policy effect on this group may not be point identified if either the parallel trends assumption is violated or the policy effect varies between the two groups.
My framework can be applied to this problem by imposing a set of restrictions on the degree of the violation of parallel trends and the amount of heterogeneity in policy effects.

Future research may explore several theoretical directions.
First, one of the crucial assumptions in my approach is knowledge of the parameter space, such as the smoothness parameters of a function class.
It would be interesting to investigate the possibility of adaptation over a collection of parameter spaces---that is, achieving near-optimal worst-case regret simultaneously over multiple parameter spaces, as has been studied for estimation and inference problems.
Second, my approach only covers a binary choice problem. It would be challenging but both theoretically and practically important to extend the analysis to a multiple or continuous policy space.

\singlespacing
\bibliographystyle{myecca}
\bibliography{reference}

\begin{thebibliography}{11}
\providecommand{\natexlab}[1]{#1}

\bibitem[{Armstrong and Koles\'{a}r(2018)}]{armstrong2018optimal}
\textsc{Armstrong, T.~B.} and \textsc{Koles\'{a}r, M.} (2018). {Optimal
  Inference in a Class of Regression Models}. \textit{Econometrica},
  \textbf{86}~(2), 655--683.

\bibitem[{Beck(2017)}]{beck2017book}
\textsc{Beck, A.} (2017). \textit{First-Order Methods in Optimization}.
  MOS-SIAM Series on Optimization, Philadelphia, PA: Society for Industrial and
  Applied Mathematics (SIAM).

\bibitem[{Berger(1985)}]{Berger1985book}
\textsc{Berger, J.~O.} (1985). \textit{{{Statistical Decision Theory and
  Bayesian Analysis; 2nd ed.}}} Springer Series in Statistics, New York:
  Springer.

\bibitem[{Donoho(1994)}]{donoho1994}
\textsc{Donoho, D.~L.} (1994). {Statistical Estimation and Optimal Recovery}.
  \textit{The Annals of Statistics}, \textbf{22}~(1), 238--270.

\bibitem[{Fan(1993)}]{fan1993local}
\textsc{Fan, J.} (1993). {Local Linear Regression Smoothers and Their Minimax
  Efficiencies}. \textit{The Annals of Statistics}, \textbf{21}~(1), 196--216.

\bibitem[{Imbens and Wager(2019)}]{Imbens2019RDD}
\textsc{Imbens, G.} and \textsc{Wager, S.} (2019). {Optimized Regression
  Discontinuity Designs}. \textit{The Review of Economics and Statistics},
  \textbf{101}~(2), 264--278.

\bibitem[{Karlin and Rubin(1956)}]{karlin1956MLR}
\textsc{Karlin, S.} and \textsc{Rubin, H.} (1956). {The Theory of Decision
  Procedures for Distributions with Monotone Likelihood Ratio}. \textit{The
  Annals of Mathematical Statistics}, \textbf{27}~(2), 272--299.

\bibitem[{Low(1995)}]{low1995tradeoff}
\textsc{Low, M.~G.} (1995). {Bias-Variance Tradeoffs in Functional Estimation
  Problems}. \textit{The Annals of Statistics}, \textbf{23}~(3), 824--835.

\bibitem[{Montiel~Olea et~al.(2024)Montiel~Olea, Qiu and
  Stoye}]{olea2023partial}
\textsc{Montiel~Olea, J.~L.}, \textsc{Qiu, C.} and \textsc{Stoye, J.} (2024).
  {Decision Theory for Treatment Choice Problems with Partial Identification}.
  \textit{arXiv:2312.17623v2}.

\bibitem[{Rockafellar(1970)}]{rockafellar1970}
\textsc{Rockafellar, R.~T.} (1970). \textit{Convex Analysis}. Princeton
  University Press.

\bibitem[{Rudin(1976)}]{rudin1976}
\textsc{Rudin, W.} (1976). \textit{Principles of Mathematical Analysis; 3rd
  ed.} McGraw-Hill.

\end{thebibliography}


\begin{thebibliography}{61}
\providecommand{\natexlab}[1]{#1}

\bibitem[{Abadie and Imbens(2006)}]{abadie2006matching}
\textsc{Abadie, A.} and \textsc{Imbens, G.~W.} (2006). {Large Sample Properties
  of Matching Estimators for Average Treatment Effects}. \textit{Econometrica},
  \textbf{74}~(1), 235--267.

\bibitem[{Adjaho and Christensen(2023)}]{adjaho2023}
\textsc{Adjaho, C.} and \textsc{Christensen, T.} (2023). {Externally Valid
  Policy Choice}. \textit{arXiv:2205.05561v3}.

\bibitem[{Agrawal et~al.(2018)Agrawal, Verschueren, Diamond and
  Boyd}]{agrawal2018rewriting}
\textsc{Agrawal, A.}, \textsc{Verschueren, R.}, \textsc{Diamond, S.} and
  \textsc{Boyd, S.} (2018). {A Rewriting System for Convex Optimization
  Problems}. \textit{Journal of Control and Decision}, \textbf{5}~(1), 42--60.

\bibitem[{Angrist and Rokkanen(2015)}]{angrist2015rd}
\textsc{Angrist, J.~D.} and \textsc{Rokkanen, M.} (2015). {Wanna Get Away?
  Regression Discontinuity Estimation of Exam School Effects Away From the
  Cutoff}. \textit{Journal of the American Statistical Association},
  \textbf{110}~(512), 1331--1344.

\bibitem[{Armstrong and Koles\'{a}r(2018)}]{armstrong2018optimal}
\textsc{Armstrong, T.~B.} and \textsc{Koles\'{a}r, M.} (2018). {Optimal
  Inference in a Class of Regression Models}. \textit{Econometrica},
  \textbf{86}~(2), 655--683.

\bibitem[{Armstrong and Koles\'{a}r(2021)}]{Armstrong2021ATE}
\textsc{---} and \textsc{Koles\'{a}r, M.} (2021). {Finite-Sample Optimal
  Estimation and Inference on Average Treatment Effects Under
  Unconfoundedness}. \textit{Econometrica}, \textbf{89}~(3), 1141--1177.

\bibitem[{Athey and Wager(2021)}]{Athey2021policy}
\textsc{Athey, S.} and \textsc{Wager, S.} (2021). {Policy Learning With
  Observational Data}. \textit{Econometrica}, \textbf{89}~(1), 133--161.

\bibitem[{Beliakov(2006)}]{BELIAKOV2006lipschitz}
\textsc{Beliakov, G.} (2006). {Interpolation of Lipschitz Functions}.
  \textit{Journal of Computational and Applied Mathematics}, \textbf{196}~(1),
  20--44.

\bibitem[{Ben-Michael et~al.(2022)Ben-Michael, Greiner, Imai and
  Jiang}]{ben-michael2022safe}
\textsc{Ben-Michael, E.}, \textsc{Greiner, D.~J.}, \textsc{Imai, K.} and
  \textsc{Jiang, Z.} (2022). {Safe Policy Learning through Extrapolation:
  Application to Pre-trial Risk Assessment}. \textit{arXiv:2109.11679v3}.

\bibitem[{Bennett(2020)}]{Bennett2020rd}
\textsc{Bennett, M.} (2020). {How Far is Too Far? Estimation of an Interval for
  Generalization of a Regression Discontinuity Design Away from the Cutoff}.
  \textit{Working Paper}.

\bibitem[{Berger(1985)}]{Berger1985book}
\textsc{Berger, J.~O.} (1985). \textit{{{Statistical Decision Theory and
  Bayesian Analysis; 2nd ed.}}} Springer Series in Statistics, New York:
  Springer.

\bibitem[{Bertanha(2020)}]{Bertanha2020many}
\textsc{Bertanha, M.} (2020). {Regression Discontinuity Design with Many
  Thresholds}. \textit{Journal of Econometrics}, \textbf{218}~(1), 216--241.

\bibitem[{Bertanha and Imbens(2020)}]{Bertanha2020rd}
\textsc{---} and \textsc{Imbens, G.~W.} (2020). {External Validity in Fuzzy
  Regression Discontinuity Designs}. \textit{Journal of Business \& Economic
  Statistics}, \textbf{38}~(3), 593--612.

\bibitem[{Bhattacharya and Dupas(2012)}]{Bhattacharya2012budget}
\textsc{Bhattacharya, D.} and \textsc{Dupas, P.} (2012). {Inferring Welfare
  Maximizing Treatment Assignment Under Budget Constraints}. \textit{Journal of
  Econometrics}, \textbf{167}~(1), 168--196.

\bibitem[{Calonico et~al.(2018)Calonico, Cattaneo and
  Farrell}]{calonic2018bian}
\textsc{Calonico, S.}, \textsc{Cattaneo, M.~D.} and \textsc{Farrell, M.~H.}
  (2018). {On the Effect of Bias Estimation on Coverage Accuracy in
  Nonparametric Inference}. \textit{Journal of the American Statistical
  Association}, \textbf{113}~(522), 767--779.

\bibitem[{Cattaneo et~al.(2021)Cattaneo, Keele, Titiunik and
  Vazquez-Bare}]{Cattaneo2020multi}
\textsc{Cattaneo, M.~D.}, \textsc{Keele, L.}, \textsc{Titiunik, R.} and
  \textsc{Vazquez-Bare, G.} (2021). {Extrapolating Treatment Effects in
  Multi-Cutoff Regression Discontinuity Designs}. \textit{Journal of the
  American Statistical Association}, \textbf{116}~(536), 1941--1952.

\bibitem[{Chamberlain(2011)}]{chamberlain2011bayesian}
\textsc{Chamberlain, G.} (2011). {Bayesian Aspects of Treatment Choice}. In
  \textit{{The Oxford Handbook of Bayesian Econometrics}}, Oxford University
  Press.

\bibitem[{Christensen et~al.(2023)Christensen, Moon and
  Schorfheide}]{christensen2022discrete}
\textsc{Christensen, T.}, \textsc{Moon, H.~R.} and \textsc{Schorfheide, F.}
  (2023). {Optimal Discrete Decisions when Payoffs are Partially Identified}.
  \textit{arXiv:2204.11748v2}.

\bibitem[{D'Adamo(2022)}]{dadamo2022orthogonal}
\textsc{D'Adamo, R.} (2022). {Orthogonal Policy Learning Under Ambiguity}.
  \textit{arXiv:2111.10904v3}.

\bibitem[{de~Chaisemartin(2024)}]{Chaisemartin2021aet}
\textsc{de~Chaisemartin, C.} (2024). {Trading-off Bias and Variance in
  Stratified Experiments and in Matching Studies, Under a Boundedness Condition
  on the Magnitude of the Treatment Effect}. \textit{arXiv:2105.08766v6}.

\bibitem[{Dehejia(2005)}]{Dehejia2005decision}
\textsc{Dehejia, R.~H.} (2005). {Program Evaluation as a Decision Problem}.
  \textit{Journal of Econometrics}, \textbf{125}~(1), 141--173.

\bibitem[{Diamond and Boyd(2016)}]{diamond2016cvxpy}
\textsc{Diamond, S.} and \textsc{Boyd, S.} (2016). {{CVXPY}: {A}
  {P}ython-embedded Modeling Language for Convex Optimization}. \textit{Journal
  of Machine Learning Research}, \textbf{17}~(83), 1--5.

\bibitem[{Dong and Lewbel(2015)}]{dong2015rd}
\textsc{Dong, Y.} and \textsc{Lewbel, A.} (2015). {Identifying The Effect of
  Changing The Policy Threshold in Regression Discontinuity Models}.
  \textit{The Review of Economics and Statistics}, \textbf{97}~(5), 1081--1092.

\bibitem[{Donoho(1994)}]{donoho1994}
\textsc{Donoho, D.~L.} (1994). {Statistical Estimation and Optimal Recovery}.
  \textit{The Annals of Statistics}, \textbf{22}~(1), 238--270.

\bibitem[{Donoho and Liu(1991)}]{donoho1991}
\textsc{---} and \textsc{Liu, R.~C.} (1991). {Geometrizing Rates of
  Convergence, III}. \textit{The Annals of Statistics}, \textbf{19}~(2),
  668--701.

\bibitem[{Duflo(2001)}]{duflo2001school}
\textsc{Duflo, E.} (2001). {Schooling and Labor Market Consequences of School
  Construction in Indonesia: Evidence from an Unusual Policy Experiment}.
  \textit{American Economic Review}, \textbf{91}~(4), 795--813.

\bibitem[{Fan and Yao(1998)}]{fan1998variance}
\textsc{Fan, J.} and \textsc{Yao, Q.} (1998). {Efficient Estimation of
  Conditional Variance Functions in Stochastic Regression}.
  \textit{Biometrika}, \textbf{85}~(3), 645--660.

\bibitem[{Hirano and Porter(2009)}]{hirano2009asymptotics}
\textsc{Hirano, K.} and \textsc{Porter, J.~R.} (2009). {Asymptotics for
  Statistical Treatment Rules}. \textit{Econometrica}, \textbf{77}~(5),
  1683--1701.

\bibitem[{Imbens and Wager(2019)}]{Imbens2019RDD}
\textsc{Imbens, G.} and \textsc{Wager, S.} (2019). {Optimized Regression
  Discontinuity Designs}. \textit{The Review of Economics and Statistics},
  \textbf{101}~(2), 264--278.

\bibitem[{Ishihara and Kitagawa(2024)}]{ishihara2021meta}
\textsc{Ishihara, T.} and \textsc{Kitagawa, T.} (2024). {Evidence Aggregation
  for Treatment Choice}. \textit{arXiv:2108.06473v2}.

\bibitem[{Kallus and Zhou(2021)}]{Kallus2020confounding}
\textsc{Kallus, N.} and \textsc{Zhou, A.} (2021). {Minimax-Optimal Policy
  Learning Under Unobserved Confounding}. \textit{Management Science},
  \textbf{67}~(5), 2870--2890.

\bibitem[{Kasy(2018)}]{kasy2018taxation}
\textsc{Kasy, M.} (2018). {Optimal Taxation and Insurance Using Machine
  Learning --- Sufficient Statistics and Beyond}. \textit{Journal of Public
  Economics}, \textbf{167}, 205--219.

\bibitem[{Kazianga et~al.(2013)Kazianga, Levy, Linden and
  Sloan}]{Kazianga2013bright}
\textsc{Kazianga, H.}, \textsc{Levy, D.}, \textsc{Linden, L.~L.} and
  \textsc{Sloan, M.} (2013). {The Effects of ``Girl-Friendly'' Schools:
  Evidence from the BRIGHT School Construction Program in Burkina Faso}.
  \textit{American Economic Journal: Applied Economics}, \textbf{5}~(3),
  41--62.

\bibitem[{Kazianga et~al.(2019)Kazianga, Levy, Linden and
  Sloan}]{Kazianga2019data}
\textsc{---}, \textsc{---}, \textsc{---} and \textsc{---} (2019). {Replication
  Data for: The Effects of ``Girl-Friendly'' Schools: Evidence from the BRIGHT
  School Construction Program in Burkina Faso, TN: American Economic
  Association [publisher], 2013. Ann Arbor, MI: Inter-university Consortium for
  Political and Social Research [distributor], 2019-10-12.}

\bibitem[{Kido(2023)}]{Kido2023locally}
\textsc{Kido, D.} (2023). {Locally Asymptotically Minimax Statistical Treatment
  Rules Under Partial Identification}. \textit{arXiv:2311.08958}.

\bibitem[{Kitagawa et~al.(2023)Kitagawa, Lee and Qiu}]{kitagawa2023partial}
\textsc{Kitagawa, T.}, \textsc{Lee, S.} and \textsc{Qiu, C.} (2023). {Treatment
  Choice, Mean Square Regret and Partial Identification}. \textit{Japanese
  Economic Review}, \textbf{74}, 573--602.

\bibitem[{Kitagawa et~al.(2024)Kitagawa, Lee and Qiu}]{kitagawa2022nonlinear}
\textsc{---}, \textsc{---} and \textsc{---} (2024). {Treatment Choice with
  Nonlinear Regret}. \textit{arXiv:2205.08586v5}.

\bibitem[{Kitagawa and Tetenov(2018)}]{Kitagawa2018EWM}
\textsc{---} and \textsc{Tetenov, A.} (2018). {Who Should Be Treated? Empirical
  Welfare Maximization Methods for Treatment Choice}. \textit{Econometrica},
  \textbf{86}~(2), 591--616.

\bibitem[{Kitagawa and Tetenov(2021)}]{kitagawa2021equal}
\textsc{---} and \textsc{---} (2021). {Equality-Minded Treatment Choice}.
  \textit{Journal of Business \& Economic Statistics}, \textbf{39}~(2),
  561--574.

\bibitem[{Kwon and Kwon(2020)}]{kwon2020rd}
\textsc{Kwon, K.} and \textsc{Kwon, S.} (2020). {Inference in Regression
  Discontinuity Designs under Monotonicity}. \textit{arXiv:2011.14216}.

\bibitem[{Levy et~al.(2009)Levy, Sloan, Linden and Kazianga}]{levy2009bright}
\textsc{Levy, D.}, \textsc{Sloan, M.}, \textsc{Linden, L.~L.} and
  \textsc{Kazianga, H.} (2009). {Impact Evaluation of Burkina Faso's BRIGHT
  Program: Final Report}. Mathematica Policy Research, Washington, DC, USA.

\bibitem[{Low(1995)}]{low1995tradeoff}
\textsc{Low, M.~G.} (1995). {Bias-Variance Tradeoffs in Functional Estimation
  Problems}. \textit{The Annals of Statistics}, \textbf{23}~(3), 824--835.

\bibitem[{Manski(2000)}]{Manski2000ambiguity}
\textsc{Manski, C.~F.} (2000). {Identification Problems and Decisions Under
  Ambiguity: Empirical Analysis of Treatment Response and Normative Analysis of
  Treatment Choice}. \textit{Journal of Econometrics}, \textbf{95}~(2),
  415--442.

\bibitem[{Manski(2004)}]{Manski2004hetero}
\textsc{---} (2004). {Statistical Treatment Rules for Heterogeneous
  Populations}. \textit{Econometrica}, \textbf{72}~(4), 1221--1246.

\bibitem[{Manski(2007)}]{Manski2007missing}
\textsc{---} (2007). {Minimax-regret Treatment Choice with Missing Outcome
  Data}. \textit{Journal of Econometrics}, \textbf{139}~(1), 105--115.

\bibitem[{Manski(2009)}]{manski2009diversified}
\textsc{---} (2009). {The 2009 Lawrence R. Klein Lecture: Diversified Treatment
  Under Ambiguity}. \textit{International Economic Review}, \textbf{50}~(4),
  1013--1041.

\bibitem[{Manski(2010)}]{Manski2010vaccine}
\textsc{---} (2010). {Vaccination with Partial Knowledge of External
  Effectiveness}. \textit{Proceedings of the National Academy of Sciences},
  \textbf{107}~(9), 3953--3960.

\bibitem[{Manski(2011{\natexlab{a}})}]{manski2011are}
\textsc{---} (2011{\natexlab{a}}). {Choosing Treatment Policies Under
  Ambiguity}. \textit{Annual Review of Economics}, \textbf{3}~(1), 25--49.

\bibitem[{Manski(2011{\natexlab{b}})}]{manski2011partial}
\textsc{---} (2011{\natexlab{b}}). {Policy Choice with Partial Knowledge of
  Policy Effectiveness}. \textit{Journal of Experimental Criminology},
  \textbf{7}, 111--125.

\bibitem[{Manski(2021)}]{Manski2020decision}
\textsc{---} (2021). {Econometrics For Decision Making: Building Foundations
  Sketched By Haavelmo And Wald}. \textit{Econometrica}, \textbf{89}~(6),
  2827--2853.

\bibitem[{Mbakop and Tabord-Meehan(2021)}]{Mbakop2021penalized}
\textsc{Mbakop, E.} and \textsc{Tabord-Meehan, M.} (2021). {Model Selection for
  Treatment Choice: Penalized Welfare Maximization}. \textit{Econometrica},
  \textbf{89}~(2), 825--848.

\bibitem[{Mo et~al.(2021)Mo, Qi and Liu}]{Mo2020robust}
\textsc{Mo, W.}, \textsc{Qi, Z.} and \textsc{Liu, Y.} (2021). {Learning Optimal
  Distributionally Robust Individualized Treatment Rules}. \textit{Journal of
  the American Statistical Association}, \textbf{116}~(534), 659--674.

\bibitem[{Montiel~Olea et~al.(2024)Montiel~Olea, Qiu and
  Stoye}]{olea2023partial}
\textsc{Montiel~Olea, J.~L.}, \textsc{Qiu, C.} and \textsc{Stoye, J.} (2024).
  {Decision Theory for Treatment Choice Problems with Partial Identification}.
  \textit{arXiv:2312.17623v2}.

\bibitem[{Qian and Murphy(2011)}]{qian2011ind}
\textsc{Qian, M.} and \textsc{Murphy, S.~A.} (2011). {Performance Guarantees
  for Individualized Treatment Rules}. \textit{The Annals of Statistics},
  \textbf{39}~(2), 1180--1210.

\bibitem[{Rambachan and Roth(2023)}]{rambachan2023parallel}
\textsc{Rambachan, A.} and \textsc{Roth, J.} (2023). {A More Credible Approach
  to Parallel Trends}. \textit{The Review of Economic Studies}, \textbf{90},
  2555–2591.

\bibitem[{Rokkanen(2015)}]{rokkanen2015rd}
\textsc{Rokkanen, M.} (2015). {Exam Schools, Ability, and the Effects of
  Affirmative Action: Latent Factor Extrapolation in the Regression
  Discontinuity Design}. Discussion Paper 1415-03, Columbia University,
  Department of Economics, New York, NY.

\bibitem[{Russell(2020)}]{russell2020policy}
\textsc{Russell, T.~M.} (2020). {Policy Transforms and Learning Optimal
  Policies}. \textit{arXiv:2012.11046}.

\bibitem[{Stoye(2009)}]{Stoye2009minimax}
\textsc{Stoye, J.} (2009). {Minimax Regret Treatment Choice with Finite
  Samples}. \textit{Journal of Econometrics}, \textbf{151}~(1), 70--81.

\bibitem[{Stoye(2012)}]{Stoye2012minimax}
\textsc{---} (2012). {Minimax Regret Treatment Choice with Covariates or with
  Limited Validity of Experiments}. \textit{Journal of Econometrics},
  \textbf{166}~(1), 138--156.

\bibitem[{Tetenov(2012)}]{tetenov2012asymmetric}
\textsc{Tetenov, A.} (2012). {Statistical Treatment Choice Based on Asymmetric
  Minimax Regret Criteria}. \textit{Journal of Econometrics}, \textbf{166}~(1),
  157--165.

\bibitem[{Zhang et~al.(2024)Zhang, Ben-Michael and Imai}]{zhang2022rd}
\textsc{Zhang, Y.}, \textsc{Ben-Michael, E.} and \textsc{Imai, K.} (2024).
  {Safe Policy Learning under Regression Discontinuity Designs}.
  \textit{arXiv:2208.13323v4}.

\end{thebibliography}
\onehalfspacing

\clearpage


\pagenumbering{arabic}
\renewcommand*{\thepage}{S-\arabic{page}}
\renewcommand\thefootnote{\alph{footnote}}
\setcounter{footnote}{0}



\begin{center}
    \begin{Large}
		Online Appendix to\\
  
  ``Optimal Decision Rules Under Partial Identification''\\  
	\end{Large}
    ~\\
\begin{large}
 Kohei Yata
 \end{large}
\end{center}

Appendix \ref{section:proof} contains the proof and a discussion of Theorem \ref{theorem:main}.
Appendix \ref{appendix:lemma-proof} contains auxiliary lemmas and proofs of Lemmas \ref{lemma:one-dim0}, \ref{lemma:hardest-1d}, and \ref{theorem:minimax_nonrandom} and Theorems \ref{theorem:minimax-bias} and \ref{theorem:donoho}.
Appendix \ref{appendix:app-details} contains derivations and a computational procedure for Section \ref{section:app}.
Appendix \ref{appendix:asymptotics} presents 
asymptotic properties of a feasible decision rule in the case with unknown error distribution.
Appendix \ref{appendix:empirical_result} contains additional results for the empirical application in Section \ref{section:empirical}.


\appendix

\section{Technical Details on Theorem \ref{theorem:main}}\label{section:proof}
\renewcommand{\theequation}{A.\arabic{equation}}
\setcounter{equation}{0}

I first provide two characterizations of $\boldsymbol{w}^*$ in Appendix \ref{subsection:characterization} and then prove Theorem \ref{theorem:main} in Appendix \ref{subsection:proof}.
In Appendix \ref{appendix:discussion}, I discuss potential existence of other minimax regret rules.

As a preliminary,
let $\bar{\cal M}\subset\mathbb{R}^n$ denote the linear span of ${\cal M}=\{\boldsymbol{m}(\theta):\theta\in\Theta\}$; that is,
$\bar{\cal M}\coloneqq \left\{\sum_{k=1}^K\lambda_k\boldsymbol{\mu}_k:K\in\mathbb{N},\boldsymbol{\mu}_1,...,\boldsymbol{\mu}_K\in {\cal M},\lambda_1,...,\lambda_K\in\mathbb{R}\right\}$.
$\bar{\cal M}$ is a linear subspace of $\mathbb{R}^n$.
If ${\cal M}$ contains $n$ linearly independent vectors, $\bar{\cal M}=\mathbb{R}^n$.
On the other hand,
if the dimension of $\theta$ is lower than $n$, or if there is a linear restriction between the elements of $\theta$, then $\bar{\cal M}$ is a lower-dimensional linear subspace of $\mathbb{R}^n$.\footnote{For example, suppose that $\Theta=\{\theta=(\theta_1,\theta_2)'\in\mathbb{R}^2:\theta_2=a\theta_1\}$ for some constant $a\neq 0$ and $\boldsymbol{m}(\theta)=(\theta_1,\theta_2)'$.
In this case, ${\cal M}=\Theta$, and $\bar{\cal M}=\Theta$, which is a one-dimensional subspace of $\mathbb{R}^2$.}
I equip $\bar{\cal M}$ with the Euclidean metric, the Euclidean norm, the dot product, and the topology induced by the Euclidean metric.

\subsection{Characterizations of $\boldsymbol{w}^*$}\label{subsection:characterization}

I present two analytical expressions for $\boldsymbol{w^*}$.
    These expressions clarify that the rule can be applied to a general class of problems in which the upper bound $\bar I(\cdot)$ is not differentiable at $\boldsymbol{0}$, not only to a specific problem in Example \ref{example:intersection}.
To provide the first one, let $\left.\bar I\right|_{\bar{\cal M}}:\bar{\cal M}\rightarrow[-\infty,\infty]$ be the restriction of $\bar I:\mathbb{R}^n\rightarrow[-\infty,\infty]$ to $\bar{\cal M}$.
Define the {\it superdifferential} of $\left.\bar I\right|_{\bar{\cal M}}$ at $\boldsymbol{0}$ as
\begin{align*}
\left.\partial\bar I\right|_{\bar{\cal M}}(\boldsymbol{0})\coloneqq&~\{\boldsymbol{d}\in\bar{\cal M}:\left.\bar I\right|_{\bar{\cal M}}(\boldsymbol{\mu})\le \left.\bar I\right|_{\bar{\cal M}}(\boldsymbol{0})+\boldsymbol{d}'\boldsymbol{\mu} \text{ for all }\boldsymbol{\mu}\in \bar{\cal M}\}\\
=&~\{\boldsymbol{d}\in\bar{\cal M}:\bar I(\boldsymbol{\mu})\le \bar I(\boldsymbol{0})+\boldsymbol{d}'\boldsymbol{\mu} \text{ for all }\boldsymbol{\mu}\in \bar{\cal M}\}.
\end{align*}
Each element of the superdifferential is called a {\it supergradient}.
Under Assumption \ref{assumption:problem}, $\left.\bar I\right|_{\bar{\cal M}}$ is concave on $\bar{\cal M}$, and $\left.\partial\bar I\right|_{\bar{\cal M}}(\boldsymbol{0})$ is a nonempty, convex, closed, and bounded subset of $\bar{\cal M}$; see Lemma \ref{lemma:superdifferential} in Appendix \ref{appendix:lemma}.
If $\left.\bar I\right|_{\bar{\cal M}}$ is differentiable at $\boldsymbol{0}$, $\left.\partial\bar I\right|_{\bar{\cal M}}(\boldsymbol{0})$ is the singleton set containing the gradient of $\left.\bar I\right|_{\bar{\cal M}}$ at $\boldsymbol{0}$, denoted by $\nabla \left.\bar I\right|_{\bar{\cal M}}(\boldsymbol{0})$.
If $\left.\bar I\right|_{\bar{\cal M}}$ is not differentiable at $\boldsymbol{0}$, $\left.\partial\bar I\right|_{\bar{\cal M}}(\boldsymbol{0})$ contains multiple elements.
In Example \ref{example:intersection}, $\left.\partial\bar I\right|_{\bar{\cal M}}(\boldsymbol{0})=\{(0,1)'\}$ if $C_1>C_2$ and $\left.\partial\bar I\right|_{\bar{\cal M}}(\boldsymbol{0})=\{(\lambda,1-\lambda)':\lambda\in [0,1]\}$ if $C_1=C_2$.

As shown in Lemma \ref{lemma:superdifferential}, it turns out that if $\boldsymbol{0}\notin \left.\partial\bar I\right|_{\bar{\cal M}}(\boldsymbol{0})$, $\boldsymbol{w^*}$ is the unit vector in the same direction as the supergradient with the minimum Euclidean norm:
\begin{align}
\boldsymbol{w}^*= \frac{\boldsymbol d^*}{\|\boldsymbol d^*\|}, ~\text{ where } \boldsymbol d^*\in\arg\min_{\boldsymbol d\in \left.\partial\bar I\right|_{\bar{\cal M}}(\boldsymbol{0})}\|\boldsymbol d\|. \label{eq:wstar2}
\end{align}
Furthermore, the right derivative of $\omega(\cdot)$ at $0$ is given by
\begin{align}
    \omega'(0)=\|\boldsymbol d^*\|.\label{eq:omega-dstar}
\end{align}
Note that if $\left.\bar I\right|_{\bar{\cal M}}$ is differentiable at $\boldsymbol{0}$, $\boldsymbol d^*$ is simply the gradient $\nabla \left.\bar I\right|_{\bar{\cal M}}(\boldsymbol{0})$, so that $\boldsymbol{w}^*$ is the normalized gradient $\frac{\nabla \left.\bar I\right|_{\bar{\cal M}}(\boldsymbol{0})}{\|\nabla \left.\bar I\right|_{\bar{\cal M}}(\boldsymbol{0})\|}$ and $\omega'(0)=\|\nabla \left.\bar I\right|_{\bar{\cal M}}(\boldsymbol{0})\|$.

An alternative, related characterization of
$\boldsymbol{w}^*$ is the direction of {\it steepest} ascent.
Formally, as shown in Lemma \ref{lemma:superdifferential}, $\boldsymbol{w}^*$ is the direction in which the directional derivative of $\bar I(\cdot)$ at $\boldsymbol{0}$ is maximized among unit vectors:
\begin{align}
\boldsymbol{w}^*\in\arg\max_{\boldsymbol{w}\in\bar{\cal M}:\|\boldsymbol{w}\|\le 1} \bar I'(\boldsymbol{0};\boldsymbol{w}), \label{eq:wstar3}
\end{align}
where $\bar I'(\boldsymbol{0};\boldsymbol{w})\coloneqq \lim_{\epsilon\downarrow 0}\frac{\bar I(\epsilon\boldsymbol{w})-\bar I(\boldsymbol{0})}{\epsilon}$ is the directional derivative of $\bar I(\cdot)$ at $\boldsymbol{0}$ in direction $\boldsymbol{w}\in\bar{\cal M}$.


\subsection{Proof of Theorem \ref{theorem:main}}\label{subsection:proof}

As explained in Section \ref{section:proof-strategy}, it suffices to show that \eqref{eq:property} holds. Below, I consider the case in which $\epsilon^*>0$ and the case in which $\epsilon^*=0$ separately.

\paragraph{Case (i): $\epsilon^*>0$ \textnormal{(or, equivalently, $\sigma\omega'(0)>2\phi(0)\omega(0)$)}.}
Let $\boldsymbol{\mu}_{\epsilon^*}=\boldsymbol{m}(\theta_{\epsilon^*})$.
Since $(\boldsymbol{\mu}_{\epsilon^*})'\boldsymbol{Y}\sim {\cal N}((\boldsymbol{\mu}_{\epsilon^*})'\boldsymbol{m}(\theta), \sigma^2\|\boldsymbol{\mu}_{\epsilon^*}\|^2)$,
the maximum regret of $\delta^*$ over $\Theta$ is given by
\begin{align*}
	&\sup_{\theta\in\Theta}R(\delta^*,\theta)
	=\sup_{\theta\in\Theta}\left[ (L(\theta))^+\Phi\left(-\frac{(\boldsymbol{\mu}_{\epsilon^*})'\boldsymbol{m}(\theta)}{\sigma\|\boldsymbol{\mu}_{\epsilon^*}\|}\right)+(-L(\theta))^+\Phi\left(\frac{(\boldsymbol{\mu}_{\epsilon^*})'\boldsymbol{m}(\theta)}{\sigma\|\boldsymbol{\mu}_{\epsilon^*}\|}\right)\right]\\
	&=\sup_{\theta\in\Theta}L(\theta)\Phi\left(-\frac{(\boldsymbol{\mu}_{\epsilon^*})'\boldsymbol{m}(\theta)}{\sigma\|\boldsymbol{\mu}_{\epsilon^*}\|}\right)\\
    &=\sup_{\boldsymbol{\mu}\in {\cal M}}\sup_{\theta\in\Theta:\boldsymbol{m}(\theta)=\boldsymbol{\mu}}L(\theta)\Phi\left(-\frac{(\boldsymbol{\mu}_{\epsilon^*})'\boldsymbol{m}(\theta)}{\sigma\|\boldsymbol{\mu}_{\epsilon^*}\|}\right)=\sup_{\boldsymbol{\mu}\in {\cal M}}\bar I(\boldsymbol{\mu})\Phi\left(-\frac{(\boldsymbol{\mu}_{\epsilon^*})'\boldsymbol{\mu}}{\sigma\|\boldsymbol{\mu}_{\epsilon^*}\|}\right),
\end{align*}
where $x^+=\max\{x,0\}$, the second equality holds by the symmetry of the objective function ($R(\delta^*,\theta)=R(\delta^*,-\theta)$) and the centrosymmetry of $\Theta$, and the last by the definition of $\bar I(\boldsymbol{\mu})$.

Now, consider the regret of $\delta^*$ under $-\theta_{\epsilon^*}$ and $\theta_{\epsilon^*}$.
Note that since $\boldsymbol{m}(\theta_{\epsilon^*})=\boldsymbol{\mu}_{\epsilon^*}$,
$$
L(\theta_{\epsilon^*}) \le \sup_{\theta\in\Theta:\boldsymbol{m}(\theta)=\boldsymbol{\mu}_{\epsilon^*}}L(\theta)=\bar I(\boldsymbol{\mu}_{\epsilon^*}).
$$
On the other hand, since $\|\boldsymbol{\mu}_{\epsilon^*}\|\le\epsilon^*$,
$$
\sup_{\theta\in\Theta:\boldsymbol{m}(\theta)=\boldsymbol{\mu}_{\epsilon^*}}L(\theta)\le \sup_{\theta\in\Theta:\|\boldsymbol{m}(\theta)\|\le\epsilon^*}L(\theta)=L(\theta_{\epsilon^*}).
$$
Hence, $L(\theta_{\epsilon^*})=\bar I(\boldsymbol{\mu}_{\epsilon^*})$.
As a result, the regret of $\delta^*$ under $-\theta_{\epsilon^*}$ and $\theta_{\epsilon^*}$ is given by
$$
R(\delta^*,-\theta_{\epsilon^*})=R(\delta^*,\theta_{\epsilon^*})=L(\theta_{\epsilon^*})\Phi\left(-\frac{(\boldsymbol{\mu}_{\epsilon^*})'\boldsymbol{m}(\theta_{\epsilon^*})}{\sigma\|\boldsymbol{\mu}_{\epsilon^*}\|}\right)=\bar I(\boldsymbol{\mu}_{\epsilon^*})\left(-\frac{(\boldsymbol{\mu}_{\epsilon^*})'\boldsymbol{\mu}_{\epsilon^*}}{\sigma\|\boldsymbol{\mu}_{\epsilon^*}\|}\right).
$$
Therefore, to prove \eqref{eq:property}, it suffices to show that
\begin{align}
\bar I(\boldsymbol{\mu}_{\epsilon^*})\left(-\frac{(\boldsymbol{\mu}_{\epsilon^*})'\boldsymbol{\mu}_{\epsilon^*}}{\sigma\|\boldsymbol{\mu}_{\epsilon^*}\|}\right)=\sup_{\boldsymbol{\mu}\in {\cal M}}\bar I(\boldsymbol{\mu})\Phi\left(-\frac{(\boldsymbol{\mu}_{\epsilon^*})'\boldsymbol{\mu}}{\sigma\|\boldsymbol{\mu}_{\epsilon^*}\|}\right).\label{eq:worst-mu}
\end{align}


\begin{lemma}\label{theorem:minimax_nonrandom}
        Under the assumptions in Theorem \ref{theorem:main}, if $\epsilon^*>0$, then \eqref{eq:worst-mu} holds.
\end{lemma}

\begin{proof}
	See Appendix \ref{proof:theorem:minimax_nonrandom}.
\end{proof}

\paragraph{Case (ii): $\epsilon^*=0$ \textnormal{(or, equivalently, $\sigma\omega'(0)\le2\phi(0)\omega(0)$)}.}
First, consider the case in which $\omega'(0)=0$ and hence $\delta^*(\boldsymbol{Y})=1/2$. In this case, the regret of $\delta^*$ is
$R(\delta^*,\theta)=|L(\theta)|/2$.
On the other hand, the fact that $\omega'(0)=0$ implies that $\omega(\cdot)$ is constant, since $\omega(\cdot)$ is nondecreasing and concave.
Therefore, $|L(\theta)|$ is maximized at $\theta\in\{-\theta_{\epsilon^*},\theta_{\epsilon^*}\}$ over $\Theta$, so \eqref{eq:property} holds.

Next, consider the case in which $\omega'(0)>0$, and let $s^*=2\phi(0)\frac{\omega(0)}{\omega'(0)}\ge \sigma$.
Write $\delta^*$ as $\delta^*(\boldsymbol{Y})=\mathbb{P}\left((\boldsymbol{w}^*)'\boldsymbol{Y}+\xi\ge 0|\boldsymbol{Y}\right)$, where $\xi|\boldsymbol{Y}\sim {\cal N}(0,(s^*)^2-\sigma^2)$.
Since $(\boldsymbol{w}^*)'\boldsymbol{Y}+\xi\sim {\cal N}\left((\boldsymbol{w}^*)'\boldsymbol{m}(\theta),(s^*)^2\right)$, it follows from an argument similar to the one used for case (i) that the maximum regret of $\delta^*$ over $\Theta$ is given by
\begin{align*}
	&\sup_{\theta\in\Theta}R(\delta^*,\theta)=\sup_{\theta\in\Theta}\left[ (L(\theta))^+\Phi\left(-\frac{(\boldsymbol{w}^*)'\boldsymbol{m}(\theta)}{s^*}\right)+(-L(\theta))^+\Phi\left(\frac{(\boldsymbol{w}^*)'\boldsymbol{m}(\theta)}{s^*}\right)\right]\\
	&=\sup_{\theta\in\Theta}L(\theta)\Phi\left(-\frac{(\boldsymbol{w}^*)'\boldsymbol{m}(\theta)}{s^*}\right)
 =\sup_{\boldsymbol{\mu}\in{\cal M}}\bar I(\boldsymbol{\mu})\Phi\left(-\frac{(\boldsymbol{w}^*)'\boldsymbol{\mu}}{s^*}\right).
\end{align*}
Let ${\cal E}_{\boldsymbol{w}^*}=\{(\boldsymbol{w}^*)'\boldsymbol{\mu}:\boldsymbol{\mu}\in{\cal M}\}$.
By Lemma \ref{lemma:superdifferential}, there exists a unique solution $\boldsymbol d^*$ to $\min_{\boldsymbol d\in \left.\partial\bar I\right|_{\bar{\cal M}}(\boldsymbol{0})}\|\boldsymbol d\|$, $\boldsymbol{w}^*= \frac{\boldsymbol d^*}{\|\boldsymbol d^*\|}$, and $\omega'(0)=\|\boldsymbol{d}^*\|$.
The last expression of the above equation can be bounded as follows:
\begin{align*}
    &\sup_{\boldsymbol{\mu}\in{\cal M}}\bar I(\boldsymbol{\mu})\Phi\left(-\frac{(\boldsymbol{w}^*)'\boldsymbol{\mu}}{s^*}\right)\le \sup_{\boldsymbol{\mu}\in{\cal M}}(\bar I(\boldsymbol{0})+(\boldsymbol{d}^*)'\boldsymbol{\mu})\Phi\left(-\frac{(\boldsymbol{w}^*)'\boldsymbol{\mu}}{s^*}\right)\\
    &=\sup_{\boldsymbol{\mu}\in{\cal M}}(\omega(0)+\omega'(0)(\boldsymbol{w}^*)'\boldsymbol{\mu})\Phi\left(-\frac{(\boldsymbol{w}^*)'\boldsymbol{\mu}}{s^*}\right)=\sup_{\epsilon\in{\cal E}_{\boldsymbol{w}^*}}(\omega(0)+\omega'(0)\epsilon)\Phi\left(-\epsilon/s^*\right),
\end{align*}
where the first inequality holds since $\boldsymbol{d}^*\in \left.\partial\bar I\right|_{\bar{\cal M}}(\boldsymbol{0})$ and the first equality holds since $\omega(0)=\bar I(\boldsymbol{0})$ by definition.

Now, consider the regret of $\delta^*$ under $-\theta_{\epsilon^*}$ and $\theta_{\epsilon^*}$.
Since $\theta_{\epsilon^*}$ attains the modulus of continuity at $\epsilon^*=0$, $L(\theta_{\epsilon^*})=\omega(0)$ and $\boldsymbol{m}(\theta_{\epsilon^*})=\boldsymbol{0}$. Hence,
$$
R(\delta^*,-\theta_{\epsilon^*})=R(\delta^*,\theta_{\epsilon^*})=L(\theta_{\epsilon^*})\Phi\left(-\frac{(\boldsymbol{w}^*)'\boldsymbol{m}(\theta_{\epsilon^*})}{s^*}\right)=\omega(0)/2.
$$
Therefore, to prove \eqref{eq:property}, it suffices to show that
\begin{align*}
\omega(0)/2=\sup_{\epsilon\in{\cal E}_{\boldsymbol{w}^*}}(\omega(0)+\omega'(0)\epsilon)\Phi\left(-\epsilon/s^*\right).
\end{align*}
By the choice of $s^*$, it follows from Lemma \ref{lemma:unimodal2} in Appendix \ref{appendix:lemma} that the function $\epsilon\mapsto (\omega(0)+\omega'(0)\epsilon)\Phi\left(-\epsilon/s^*\right)$ is maximized at $0$ over $\mathbb{R}$.
Since $0\in {\cal E}_{\boldsymbol{w}^*}$, we have $\sup_{\epsilon\in{\cal E}_{\boldsymbol{w}^*}}(\omega(0)+\omega'(0)\epsilon)\Phi\left(-\epsilon/s^*\right)=\omega(0)/2$.

\subsection{Discussion}\label{appendix:discussion}

    \paragraph{Possibility of Exisitence of Nonrandomized Minimax Regret Rules.}
    Theorem \ref{theorem:main} does not exclude the possibility that the nonrandomized rule $\delta^*_{\rm NR}(\boldsymbol{Y})=\mathbf{1}\left\{(\boldsymbol{w}^*)'\boldsymbol{Y}\ge 0\right\}$ is minimax regret for the case in which $2\phi(0)\frac{\omega(0)}{\omega'(0)}> \sigma$.
    Since $\sup_{\theta\in\Theta}R(\delta^*_{\rm NR},\theta)=\sup_{\boldsymbol{\mu}\in{\cal M}}\bar I(\boldsymbol{\mu})\Phi\left(-(\boldsymbol{w}^*)'\boldsymbol{\mu}/\sigma\right)$, $\delta^*_{\rm NR}$ is minimax regret if and only if
    $$
    \omega(0)/2=\sup_{\boldsymbol{\mu}\in{\cal M}}\bar I(\boldsymbol{\mu})\Phi\left(-(\boldsymbol{w}^*)'\boldsymbol{\mu}/\sigma\right).
    $$
    This condition may or may not hold depending on the functional form of $\bar I(\boldsymbol{\mu})$.
    A simple example where this condition holds is as follows: $\boldsymbol{\mu}$ is scalar, $\boldsymbol{w}^*=1$, and $\bar I(\boldsymbol{\mu})=a_0\boldsymbol{\mu}\cdot \mathbf{1}\{\boldsymbol{\mu}<0\}+a_1\boldsymbol{\mu}\cdot \mathbf{1}\{\boldsymbol{\mu}\ge 0\}+b$ for some constants $a_0,a_1,b>0$ such that $2\phi(0)\frac{b}{a_0}\le \sigma$ and $2\phi(0)\frac{b}{a_1}\ge \sigma$.
    
    However, if $\bar I(\boldsymbol{\mu})$ is locally smooth in the direction $\boldsymbol{w}^*$ around $\boldsymbol{0}$, then $\delta^*_{\rm NR}$ is not minimax regret.
    Specifically, suppose that $\bar I'(\boldsymbol{0};-\boldsymbol{w}^*)=-\bar I'(\boldsymbol{0};\boldsymbol{w}^*)$.
    Let $f(\epsilon)=\bar I(\epsilon\boldsymbol{w}^*)\Phi\left(-(\boldsymbol{w}^*)'(\epsilon\boldsymbol{w}^*)/\sigma\right)=\bar I(\epsilon\boldsymbol{w}^*)\Phi(-\epsilon/\sigma)$ for $\epsilon$ in a neighborhood of zero.
   If $\bar I'(\boldsymbol{0};-\boldsymbol{w}^*)=-\bar I'(\boldsymbol{0};\boldsymbol{w}^*)$, $f$ is differentiable at $0$ with
    \begin{align*}
        f'(0)=\bar I'(\boldsymbol{0};\boldsymbol{w}^*)\Phi(0)-\bar I(\boldsymbol{0})\phi(0)/\sigma=\omega'(0)/2-\omega(0)\phi(0)/\sigma<0,
    \end{align*}
    where the second equality holds since $\bar I'(\boldsymbol{0};\boldsymbol{w}^*)=\omega'(0)$ by Lemma \ref{lemma:superdifferential} and the inequality holds since $2\phi(0)\frac{\omega(0)}{\omega'(0)}> \sigma$.
    As a result, there exists $\epsilon<0$ such that
    $f(\epsilon)>f(0)=\omega(0)/2$, and hence $\bar I(\epsilon\boldsymbol{w}^*)\Phi\left(-(\boldsymbol{w}^*)'(\epsilon\boldsymbol{w}^*)/\sigma\right)>\omega(0)/2$.
    Therefore, $\delta^*_{\rm NR}$ is not minimax regret.
    Note that the existence of $\epsilon<0$ such that $\bar I(\epsilon\boldsymbol{w}^*)\Phi\left(-(\boldsymbol{w}^*)'(\epsilon\boldsymbol{w}^*)/\sigma\right)>\omega(0)/2$ implies the existence of $\theta^*\in\Theta$ that satisfies 
    \eqref{eq:randomization-error} and \eqref{eq:randomization} in Section \ref{section:randomization}.

    The condition $\bar I'(\boldsymbol{0};-\boldsymbol{w}^*)=-\bar I'(\boldsymbol{0};\boldsymbol{w}^*)$ holds if $\left.\bar I\right|_{\bar{\cal M}}$ is differentiable at $\boldsymbol{0}$.
    This condition may hold, however, even if $\left.\bar I\right|_{\bar{\cal M}}$ is not differentiable at $\boldsymbol{0}$. For example, it holds in Example \ref{example:intersection} even for the case in which $C_1=C_2$.
    
    \paragraph{Possibility of Existence of Minimax Regret Rules Based on Other Weighted Sums.}
    It is the key that $\delta^*$ is based on a unit vector $\boldsymbol{w}^*$ that is in the direction of $\boldsymbol{d}^*$, the supergradient of $\left.\bar I\right|_{\bar{\cal M}}$ at $\boldsymbol{0}$ with the least norm.
    If $\left.\bar I\right|_{\bar{\cal M}}$ is not differentiable at $\boldsymbol{0}$, one could consider using another $\boldsymbol{d}\in\left.\partial\bar I\right|_{\bar{\cal M}}(\boldsymbol{0})$ instead of $\boldsymbol{d}^*$ to construct a rule $\delta_{\boldsymbol{w},s}(\boldsymbol{Y})=\mathbb{P}(\boldsymbol{w}'\boldsymbol{Y}+\xi\ge 0|\boldsymbol{Y})$, where $\boldsymbol{w}=\frac{\boldsymbol{d}}{\|\boldsymbol{d}\|}$ and $\xi|\boldsymbol{Y}\sim {\cal N}(0,s^2-\sigma^2)$ for some $s\ge \sigma$.
    Such a rule may or may not be minimax regret if $2\phi(0)\frac{\omega(0)}{\omega'(0)}\ge \sigma$. To see this, note that by an argument similar to the one used in Appendix \ref{subsection:proof}, $\delta_{\boldsymbol{w},s}$ is minimax regret if
    \begin{align*}
    \omega(0)/2=\sup_{\epsilon\in{\cal E}_{\boldsymbol{w}}}(\omega(0)+\|\boldsymbol{d}\|\epsilon)\Phi\left(-\epsilon/s\right),
    \end{align*}
    where ${\cal E}_{\boldsymbol{w}}=\{(\boldsymbol{w})'\boldsymbol{\mu}:\boldsymbol{\mu}\in{\cal M}\}$.
    This condition holds if and only if $s=2\phi(0)\frac{\omega(0)}{\|\boldsymbol{d}\|}$ by Lemma \ref{lemma:unimodal2} in Appendix \ref{appendix:lemma}.
    For problems with $2\phi(0)\frac{\omega(0)}{\|\boldsymbol{d}\|}\ge \sigma$, one can construct $\delta_{\boldsymbol{w},s}$ with such an $s$, and this rule is minimax regret.
    However, note that $2\phi(0)\frac{\omega(0)}{\|\boldsymbol{d}\|}<2\phi(0)\frac{\omega(0)}{\|\boldsymbol{d}^*\|}=2\phi(0)\frac{\omega(0)}{\omega'(0)}$, since $\boldsymbol{d}^*$ is the unique solution to $\min_{\boldsymbol d\in \left.\partial\bar I\right|_{\bar{\cal M}}(\boldsymbol{0})}\|\boldsymbol d\|$ and $\omega'(0)=\|\boldsymbol{d}^*\|$.
    Therefore, if $2\phi(0)\frac{\omega(0)}{\omega'(0)}\ge \sigma$, it is possible that $2\phi(0)\frac{\omega(0)}{\|\boldsymbol{d}\|}<\sigma$.
    For such problems, it is not possible to use $s=2\phi(0)\frac{\omega(0)}{\|\boldsymbol{d}\|}$ to construct $\delta_{\boldsymbol{w},s}$, and alternative rules with $s\ge \sigma$ are not necessarily minimax regret.

    This argument provides a justification for using $\boldsymbol{w}^*$: The resulting rule $\delta^*$ is minimax regret for any problems whenever $\epsilon^*=0$.
    At the same time, it highlights the possibility that there exist minimax regret rules based on $\boldsymbol{w}'\boldsymbol{Y}$ with $\boldsymbol{w}\neq \boldsymbol{w}^*$ for some problems in which the upper bound $\left.\bar I\right|_{\bar{\cal M}}$ is not differentiable at $\boldsymbol{0}$.\footnote{This contrasts with the point made by \citeappendix{olea2023partial} that, for a given minimax regret rule based on $\boldsymbol{w}'\boldsymbol{Y}$ (such as $\delta^*$), it is possible to construct other minimax regret rules based on the same $\boldsymbol{w}'\boldsymbol{Y}$.}

\section{Auxiliary Lemmas and Proofs of Main Results}\label{appendix:lemma-proof}
\renewcommand{\theequation}{B.\arabic{equation}}
\setcounter{equation}{0}

\subsection{Auxiliary Lemmas}\label{appendix:lemma}

\begin{lemma}\label{lemma:unimodal2}
	Let $g(t)=(at+b)\Phi\left(-t/c+d\right)$, where $a>0$, $b\ge 0$, $c>0$, and $d\in\mathbb{R}$.
    Then, there exists a unique solution $t^*$ to $c\frac{1-\Phi\left(t/c-d\right)}{\phi\left(t/c-d\right)}=t+\frac{b}{a}$, and $g(t)$ is strictly increasing on $(-\infty,t^*)$, strictly decreasing on $(t^*,\infty)$, and uniquely maximized at $t^*$ over $\mathbb{R}$.
    Moreover, $t^*<0$ if $c<2\phi(0)\frac{b}{a}$ and $d=0$, $t^*=0$ if $c=2\phi(0)\frac{b}{a}$ and $d=0$, and $t^*>0$ if $c>2\phi(0)\frac{b}{a}$ and $d=0$.
\end{lemma}
\begin{proof}
        By differentiating $g(t)$, we have
	$g'(t)=a\Phi\left(-t/c+d\right)-\frac{at+b}{c}\phi\left(-t/c+d\right)=\frac{a}{c}\left[c\frac{1-\Phi\left(t/c-d\right)}{\phi\left(t/c-d\right)}-t-\frac{b}{a}\right]\phi\left(t/c-d\right)$,
	where the second equality holds since $\Phi(x)=1-\Phi(-x)$ and $\phi(x)=\phi(-x)$.
	By the fact that the Mills ratio $\frac{1-\Phi(x)}{\phi(x)}$ of a standard normal random variable is strictly decreasing, $c\frac{1-\Phi\left(t/c-d\right)}{\phi\left(t/c-d\right)}$ is strictly decreasing in $t$.
	In addition, $c\frac{1-\Phi\left(t/c-d\right)}{\phi\left(t/c-d\right)}$ is continuous in $t$.
    Therefore, there exists a unique solution $t^*$ to $c\frac{1-\Phi\left(t/c-d\right)}{\phi\left(t/c-d\right)}=t+\frac{b}{a}$.
    Moreover, $t^*<0$ if $c\frac{1-\Phi\left(-d\right)}{\phi\left(-d\right)}<\frac{b}{a}$, $t^*=0$ if $c\frac{1-\Phi\left(-d\right)}{\phi\left(-d\right)}=\frac{b}{a}$, and $t^*>0$ if $c\frac{1-\Phi\left(-d\right)}{\phi\left(-d\right)}>\frac{b}{a}$.
    Also, $g'(t)>0$ for all $t\in(-\infty,t^*)$, $g'(t^*)=0$, and $g'(t)< 0$ for all $t\in(t^*,\infty)$.
    The statement then follows.
\end{proof}

\begin{lemma}\label{lemma:quasi-concave}
	Let $\psi(a,b)=a\Phi(-b)$.
	Then, $\psi(a,b)$ is strictly quasi-concave on $(0,\infty)\times \mathbb{R}$.
\end{lemma}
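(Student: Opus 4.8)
The plan is to show that $\psi$ is \emph{strictly log-concave} on the (convex) domain $(0,\infty)\times\mathbb{R}$, which immediately yields strict quasi-concavity. Since $a>0$ and $\Phi(-b)\in(0,1)$, we have $\psi(a,b)>0$ throughout, so $\log\psi$ is well defined and
$$
\log\psi(a,b)=\log a+\log\Phi(-b).
$$
The key structural observation is that this decomposes into a function of $a$ alone plus a function of $b$ alone, so the Hessian of $\log\psi$ is diagonal; it therefore suffices to show that each summand is strictly concave in its own variable. Granting this, the Hessian is negative definite everywhere, $\log\psi$ is strictly concave, and for any distinct points $(a,b),(a',b')$ and any $\lambda\in(0,1)$, strict concavity of $\log\psi$ together with the monotonicity of $\exp$ gives
$$
\psi\bigl(\lambda(a,b)+(1-\lambda)(a',b')\bigr)>\min\{\psi(a,b),\psi(a',b')\},
$$
which is precisely strict quasi-concavity.

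The term $\log a$ is strictly concave in $a$, since its second derivative is $-1/a^2<0$. For the term $v(b)\coloneqq\log\Phi(-b)$, I would differentiate directly. Using $\phi(-b)=\phi(b)$ and $\phi'(b)=-b\phi(b)$, one has $\tfrac{d}{db}\Phi(-b)=-\phi(b)$ and $\tfrac{d^2}{db^2}\Phi(-b)=b\phi(b)$, so the quotient rule yields
$$
v''(b)=\frac{\phi(b)\bigl[b\,\Phi(-b)-\phi(b)\bigr]}{\Phi(-b)^2}.
$$
Hence $v''(b)<0$ for all $b$ is equivalent to the inequality $b\,\Phi(-b)<\phi(b)$. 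For $b\le 0$ this is trivial, as the left-hand side is nonpositive while the right-hand side is positive. The substantive case is $b>0$, where this is exactly the standard Gaussian upper-tail bound; establishing it is the one genuinely nonroutine step of the argument.

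To keep the proof self-contained, I would verify the tail inequality for $b>0$ by a monotonicity argument. Set $G(b)=\phi(b)-b\,\Phi(-b)$. Then
$$
G'(b)=-b\phi(b)-\bigl[\Phi(-b)-b\phi(b)\bigr]=-\Phi(-b)<0,
$$
so $G$ is strictly decreasing on $(0,\infty)$; since $\lim_{b\to\infty}G(b)=0$, it follows that $G(b)>0$ for every $b\ge 0$, i.e.\ $b\,\Phi(-b)<\phi(b)$. This gives $v''<0$ everywhere, completing the strict concavity of $v$ and hence of $\log\psi$. (Alternatively, one may simply invoke the classical fact that the standard normal distribution function is strictly log-concave.) Combining the two strictly concave summands as in the first paragraph delivers strict quasi-concavity of $\psi$.
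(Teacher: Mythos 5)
Your proof is correct, and it takes a genuinely different route from the paper's. You establish the strictly stronger property of strict log-concavity: since $\log\psi(a,b)=\log a+\log\Phi(-b)$ is additively separable, the Hessian is diagonal, and each summand is strictly concave --- $\log a$ trivially, and $\log\Phi(-b)$ via the Gaussian tail inequality $b\,\Phi(-b)<\phi(b)$, which your monotonicity argument for $G(b)=\phi(b)-b\,\Phi(-b)$ (with $G'(b)=-\Phi(-b)<0$ and $G(b)\to 0$) settles cleanly; strict quasi-concavity then follows because strictly increasing transformations preserve it. The paper instead argues directly on segments: for two points $(a_0,b_0)\neq(a_1,b_1)$ it splits into cases according to the signs of $a_1-a_0$ and $b_1-b_0$, disposes of the monotone cases by inspection, and in the remaining diagonal case reparametrizes the segment so that $\psi$ restricted to it becomes, up to a positive constant, $g(t)=t\,\Phi\bigl(-b_0+a_0\tfrac{b_1-b_0}{a_1-a_0}-t\bigr)$, then invokes its unimodality result (Lemma \ref{lemma:unimodal}) to conclude that the minimum of $g$ over an interval is attained only at an endpoint. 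Your route is shorter, yields the stronger conclusion, and isolates the single substantive analytic fact (log-concavity of the normal c.d.f., equivalently the Mills-ratio bound), which you could alternatively cite as classical; the paper's route buys self-containedness within its own toolkit, since Lemma \ref{lemma:unimodal} is needed anyway for Lemmas \ref{lemma:bounded_normal} and \ref{proposition:condition}, so no new tail estimate has to be introduced and the same unimodality structure is reused throughout the appendix. One small point worth making explicit in your write-up: the limit $\lim_{b\to\infty}G(b)=0$ requires $b\,\Phi(-b)\to 0$, which should be justified independently of the tail bound being proved (e.g., via the Chernoff bound $\Phi(-b)\le e^{-b^2/2}$ for $b>0$) to avoid any appearance of circularity.
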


\begin{proof}
	Take any $a_0,a_1>0$ and $b_0,b_1\in\mathbb{R}$ such that $(a_0,b_0)\neq (a_1,b_1)$.
	I show that $\psi(a_0+\lambda(a_1-a_0),b_0+\lambda(b_1-b_0))>\min\{\psi(a_0,b_0),\psi(a_1,b_1)\}$ for all $\lambda\in(0,1)$.
	
	First, suppose that $a_0\le a_1$ and $b_0\ge b_1$.
	Since either $a_0<a_1$ or $b_0>b_1$ or both must hold, $\psi(a_0+\lambda(a_1-a_0),b_0+\lambda(b_1-b_0))=(a_0+\lambda(a_1-a_0))\Phi(-b_0-\lambda(b_1-b_0))$ is strictly increasing in $\lambda$.
	It then follows that $\psi(a_0+\lambda(a_1-a_0),b_0+\lambda(b_1-b_0))>\psi(a_0,b_0)$.
	Likewise, if $a_0\ge a_1$ and $b_0\le b_1$, then $\psi(a_0+\lambda(a_1-a_0),b_0+\lambda(b_1-b_0))>\psi(a_1,b_1)$.
	
	Now suppose that $a_0<a_1$ and $b_0<b_1$.
	Note that the set $\{(a_0,b_0)+\lambda(a_1-a_0,b_1-b_0):\lambda\in(0,1)\}$ is equivalent to
	$\left\{\left(0,b_0-a_0\frac{b_1-b_0}{a_1-a_0}\right)+t\left(\frac{a_1-a_0}{b_1-b_0},1\right):t\in\left(a_0\frac{b_1-b_0}{a_1-a_0},a_1\frac{b_1-b_0}{a_1-a_0}\right)\right\}$.
	We have
	$\psi\left(\left(0,b_0-a_0\frac{b_1-b_0}{a_1-a_0}\right)+t\left(\frac{a_1-a_0}{b_1-b_0},1\right)\right)=t\left(\frac{a_1-a_0}{b_1-b_0}\right)\Phi\left(-b_0+a_0\frac{b_1-b_0}{a_1-a_0}-t\right)=\left(\frac{a_1-a_0}{b_1-b_0}\right)g\left(t\right)$,
	where $g\left(t\right)=t\Phi\left(-b_0+a_0\frac{b_1-b_0}{a_1-a_0}-t\right)$.
	Lemma \ref{lemma:unimodal2} implies that the minimum of $g(t)$ over an interval $[\underline{t},\bar t]\subset \mathbb{R}_+$ is attained only at $\underline{t}$ or $\bar t$ or both.
	Hence, for all $t\in\left(a_0\frac{b_1-b_0}{a_1-a_0},a_1\frac{b_1-b_0}{a_1-a_0}\right)$,
	$g\left(t\right)>\min\left\{g\left(a_0\frac{b_1-b_0}{a_1-a_0}\right), g\left(a_1\frac{b_1-b_0}{a_1-a_0}\right)\right\}$.
	Thus, for all $t\in\left(a_0\frac{b_1-b_0}{a_1-a_0},a_1\frac{b_1-b_0}{a_1-a_0}\right)$,
	$\psi\left(\left(0,b_0-a_0\frac{b_1-b_0}{a_1-a_0}\right)+t\left(\frac{a_1-a_0}{b_1-b_0},1\right)\right)>\left(\frac{a_1-a_0}{b_1-b_0}\right)\min\left\{g\left(a_0\frac{b_1-b_0}{a_1-a_0}\right), g\left(a_1\frac{b_1-b_0}{a_1-a_0}\right)\right\}=\min\{\psi(a_0,b_0),\psi(a_1,b_1)\}$.
	Therefore, $\psi(a_0+\lambda(a_1-a_0),b_0+\lambda(b_1-b_0))>\min\{\psi(a_0,b_0),\psi(a_1,b_1)\}$ for all $\lambda\in(0,1)$.
	The same argument holds for the case in which $a_0>a_1$ and $b_0>b_1$.
\end{proof}

The following lemma derives a minimax regret rule for a class of univariate problems.

\begin{lemma}[Minimax Regret Rules for Univariate Problems]\label{lemma:bounded_normal}
	Consider a minimax regret problem in which $\Theta=[-\tau,\tau]$ for some $\tau> 0$; $\boldsymbol{m}(\theta)=\theta$; $L(\theta)=\theta$; and $\boldsymbol{\Sigma}=\sigma^2>0$.
	Then, the decision rule
	$
	\delta^*(Y)=\mathbf{1}\left\{Y\ge 0\right\}
	$
	is minimax regret.
	The minimax risk, denoted by ${\cal R}_{{\rm uni}}(\sigma;[-\tau,\tau])$, is given by
	$
	{\cal R}_{{\rm uni}}(\sigma;[-\tau,\tau])=R(\delta^*,-\theta^*)=R(\delta^*,\theta^*)=\theta^*\Phi(-\theta^*/\sigma)$,
        where $\theta^*=\min\{\tau,\tau^*\sigma\}$, and $\tau^*\in \arg\max_{t\ge 0}t\Phi(-t)$, which is unique ($\tau^*\approx 0.752$).
\end{lemma}

\begin{proof}
Let $r(\delta,\pi)=\int R(\delta,\theta)d\pi(\theta)$ be the Bayes risk of $\delta\in{\cal D}$ with respect to prior (probability distribution) $\pi$ on $\Theta$.
By Theorem 17 in Chapter 5 of \citeappendix{Berger1985book}, if $(\delta^*,\pi^*)$ satisfies
\begin{align}
	&\delta^*\in \arg\min_{\delta\in {\cal D}} r(\delta,\pi^*), \text{ and }
	R(\delta^*,\theta)\le r(\delta^*,\pi^*) \text{ for all }\theta\in \Theta,\label{cond:mmr}
\end{align}
then $\delta^*$ is a minimax regret rule. 
Below I construct $(\delta^*,\pi^*)$ that satisfies (\ref{cond:mmr}).

I first restrict the search space of decision rules to an essentially complete class.
\footnote{A class ${\cal C}$ of decision rules is essentially complete if, for any decision rule $\delta\notin{\cal C}$, there is a decision rule $\delta'\in{\cal C}$ such that $R(\delta,\theta)\ge R(\delta',\theta)$ for all $\theta\in\Theta$.}
Since $Y$ has monotone likelihood ratio and the loss function satisfies $l(1,\theta)-l(0,\theta)\ge 0$ if $\theta<0$ and $l(1,\theta)-l(0,\theta)\le 0$ if $\theta>0$, it follows from Theorem 5 in Chapter 8 of \citeappendix{Berger1985book} (which is originally from \citeappendix{karlin1956MLR}) that the class of monotone rules $\delta(Y)=0\cdot \mathbf{1}\{Y<t\}+\lambda\cdot \mathbf{1}\{Y=t\}+1\cdot \mathbf{1}\{Y>t\}$,
where $t\in\mathbb{R}$ and $\lambda\in[0,1]$,
is essentially complete.
Furthermore, since $\mathbb{P}_{\theta}(Y=t)=0$, a smaller class of threshold rules $\delta(Y)=\mathbf{1}\left\{Y\ge t\right\}$, $t\in\mathbb{R}$, is also essentially complete.
Let $\delta_{t}$ denote the threshold rule with threshold $t$.

I show that $(\delta_0,\pi^*)$ satisfies (\ref{cond:mmr}), where $\pi^*$ assigns probability $1/2$ to each of $-\theta^*$ and $\theta^*$.
Here, $\theta^*=\min\{\tau,\tau^*\sigma\}$, and $\tau^*\in \arg\max_{t\ge 0}t\Phi(-t)$, which is unique and positive by Lemma \ref{lemma:unimodal2}.
First, I verify the second condition of (\ref{cond:mmr}).
Since $Y\sim {\cal N}(\theta,\sigma^2)$,
$$
R(\delta_t,\theta)=\theta\Phi((t-\theta)/\sigma)\mathbf{1}\{\theta\ge 0\}+(-\theta)(1-\Phi((t-\theta)/\sigma))\mathbf{1}\{\theta< 0\}.
$$
In particular,
$R(\delta_0,\theta)=\theta\Phi(-\theta/\sigma)\mathbf{1}\{\theta\ge 0\}+(-\theta)(1-\Phi(-\theta/\sigma))\mathbf{1}\{\theta< 0\}=|\theta|\Phi(-|\theta|/\sigma)$.
By Lemma \ref{lemma:unimodal2}, $R(\delta_0,\theta)$ is maximized at $\theta\in\{-\theta^*,\theta^*\}$ over $\Theta$. Therefore, the second condition of (\ref{cond:mmr}) holds:
$r(\delta_0,\pi^*)=(R(\delta_0,-\theta^*)+R(\delta_0,\theta^*))/2\ge R(\delta_0,\theta)$ for all $\theta\in \Theta$.

Next, I verify the first condition of (\ref{cond:mmr}).
Since the class of threshold rules is essentially complete, $\delta_{0}\in \arg\min_{\delta\in {\cal D}} r(\delta,\pi^*)$ if $0\in\arg\min_{t\in \mathbb{R}}r(\delta_t,\pi^*)$.
Observe
\begin{align*}
	r(\delta_t,\pi^*)
    &=\left[\theta^*(1-\Phi((t+\theta^*)/\sigma))+\theta^*\Phi((t-\theta^*)/\sigma)\right]/2,\\
 \frac{\partial r(\delta_t,\pi^*)}{\partial t}
	&=\sigma^{-1}\phi((t+\theta^*)/\sigma)\left[-\theta^*+\theta^*\frac{\phi((t-\theta^*)/\sigma)}{\phi((t+\theta^*)/\sigma)}\right]/2.
\end{align*}
Since $\frac{\phi((t-\theta^*)/\sigma)}{\phi((t+\theta^*)/\sigma)}$ is nondecreasing in $t$ by the monotone likelihood ratio property and $-\theta^*+\theta^*\frac{\phi((t-\theta^*)/\sigma)}{\phi((t+\theta^*)/\sigma)}=0$ for $t=0$, it follows that $\frac{\partial r(\delta_t,\pi^*)}{\partial t}\ge 0$ if $t>0$, $\frac{\partial r(\delta_t,\pi^*)}{\partial t}=0$ if $t=0$, and $\frac{\partial r(\delta_t,\pi^*)}{\partial t}\le 0$ if $t<0$.
Therefore, $0\in\arg\min_{t\in \mathbb{R}}r(\delta_t,\pi^*)$.
Thus, I conclude that $\delta_{0}$ is minimax regret and that ${\cal R}_{{\rm uni}}(\sigma;[-\tau,\tau])=R(\delta_0,-\theta^*)=R(\delta_0,\theta^*)=\theta^*\Phi(-\theta^*/\sigma)$.
\end{proof}

In the following lemma, I use the notation introduced in Appendix \ref{section:proof}.
Recall that $\bar{\cal M}\subset\mathbb{R}^n$ is the linear span of ${\cal M}$, $\left.\bar I\right|_{\bar{\cal M}}:\bar{\cal M}\rightarrow[-\infty,\infty]$ is the restriction of $\bar I:\mathbb{R}^n\rightarrow[-\infty,\infty]$ to $\bar{\cal M}$, and $\left.\partial\bar I\right|_{\bar{\cal M}}(\boldsymbol{0})$ is the superdifferential of $\bar I_{\bar{\cal M}}$ at $\boldsymbol{0}$.
Let $\boldsymbol{d}^*$ denote an arbitrary element of $\arg\min_{\boldsymbol d\in \left.\partial\bar I\right|_{\bar{\cal M}}(\boldsymbol{0})}\|\boldsymbol d\|$ (if it exists).
For $\boldsymbol{w}\in\mathbb{R}^n$, define $\bar I'(\boldsymbol{0};\boldsymbol{w})\coloneqq 
\lim_{\epsilon\downarrow 0}\frac{\bar I(\epsilon\boldsymbol{w})-\bar I(\boldsymbol{0})}{\epsilon}$ (if it exists).
Let ${\rm int}_{\bar{\cal M}}({\cal M})$ denote the interior of ${\cal M}$ in $\bar{\cal M}$:
$$
{\rm int}_{\bar{\cal M}}({\cal M})\coloneqq \left\{\boldsymbol{\mu}\in {\cal M}: \{\tilde{\boldsymbol \mu}\in \bar{\cal M}:\|\tilde{\boldsymbol \mu}-\boldsymbol{\mu}\|<\epsilon\}\subset{\cal M} \text{ for some }\epsilon>0\right\}.
$$
An extended real-valued function $f:\mathbb{E}\rightarrow[-\infty,\infty]$ on a vector space $\mathbb{E}$ is said to be convex if its epigraph, $\{(x,y)\in\mathbb{E}\times \mathbb{R}:f(x)\le y\}$, is a convex subset of $\mathbb{E}\times \mathbb{R}$.
$f$ is said to be concave if $-f$ is convex.
Define the effective domain of $f$ as ${\rm dom}(f)\coloneqq \{x\in\mathbb{E}:f(x)<\infty\}$.
$f$ is called proper if it does not attain $-\infty$ and ${\rm dom}(f)$ is nonempty.
By the definitions of $\bar I$ and $\left.\bar I\right|_{\bar{\cal M}}$, ${\rm dom}(-\bar I)={\cal M}$ and ${\rm dom}(-\left.\bar I\right|_{\bar{\cal M}})={\cal M}$.

    
\begin{lemma}\label{lemma:superdifferential}
    Under Assumption \ref{assumption:problem}, the following holds.
\begin{enumerate}[label=(\roman*)]
    \item $\bar I(\boldsymbol{\mu})<\infty$ for all $\boldsymbol{\mu}\in {\cal M}$.
    \item $\bar I(\cdot)$ is concave on $\mathbb{R}^n$ and
    $\left.\bar I\right|_{\bar{\cal M}}(\cdot)$ is concave on $\bar{\cal M}$.
    \item\label{lemma:superdifferential:int} $\boldsymbol{0}\in {\rm int}_{\bar{\cal M}}({\cal M})$.
    \item $\left.\partial\bar I\right|_{\bar{\cal M}}(\boldsymbol{0})$ is a nonempty, convex, closed, and bounded subset of $\bar{\cal M}$.
    \item\label{lemma:superdifferential:unique-dstar} $\arg\min_{\boldsymbol d\in \left.\partial\bar I\right|_{\bar{\cal M}}(\boldsymbol{0})}\|\boldsymbol d\|$ is singleton; that is, $\boldsymbol{d}^*$ exists and is unique.
    \item\label{lemma:superdifferential:dir} For every $\boldsymbol{w}\in\bar{\cal M}$, $\bar I'(\boldsymbol{0};\boldsymbol{w})$ exists and $\bar I'(\boldsymbol{0};\boldsymbol{w})=\min_{\boldsymbol{d}\in\left.\partial\bar I\right|_{\bar{\cal M}}(\boldsymbol{0})}\boldsymbol{d}'\boldsymbol{w}$.
    \item\label{lemma:superdifferential:dir-dstar} If $\|\boldsymbol{d^*}\|>0$, then $\bar I'(\boldsymbol{0};\boldsymbol{d^*}/\|\boldsymbol{d^*}\|)=\|\boldsymbol{d^*}\|$.
    \item\label{lemma:superdifferential:omega-dif} $\omega(\epsilon)<\infty$ for all $\epsilon\ge 0$.
    Furthermore, $\omega(\cdot)$ is right differentiable at $0$ with $\omega'(0)=\|\boldsymbol{d}^*\|$.
    \item $\omega(\cdot)$ is concave and continuous on $[0,\infty)$.
    \item\label{lemma:superdifferential:mu} 
    There exists a sequence $\{\boldsymbol{\mu}_\epsilon\}_{\epsilon\in (0,\bar\epsilon)}$ in ${\cal M}$ with $\bar\epsilon>0$ such that $\boldsymbol \mu_\epsilon\in \arg\max_{\boldsymbol{\mu}\in{\cal M}:\|\boldsymbol{\mu}\|\le\epsilon}\bar I(\boldsymbol{\mu})$ for all $\epsilon\in (0,\bar\epsilon)$.
    Furthermore, if $\|\boldsymbol{d^*}\|>0$, then $\lim_{\epsilon\downarrow 0}\frac{\boldsymbol \mu_\epsilon}{\epsilon}=\frac{\boldsymbol{d^*}}{\|\boldsymbol{d^*}\|}$ for any such sequence.
    \item If $\|\boldsymbol{d^*}\|>0$, then $\frac{\boldsymbol{d^*}}{\|\boldsymbol{d^*}\|}\in\arg\max_{\boldsymbol{w}\in \bar{\cal M}:\|\boldsymbol{w}\|\le 1} \bar I'(\boldsymbol{0};\boldsymbol{w})$.
\end{enumerate}
\end{lemma}
\begin{proof}
\begin{enumerate}[label=(\roman*), wide = 0pt]
\item Under Assumption \ref{assumption:problem}, $\bar I(\boldsymbol{0})<\infty$.
Suppose to the contrary that $\bar I(\boldsymbol{\mu})=\infty$ for some nonzero $\boldsymbol{\mu}\in {\cal M}$.
    By the centrosymmetry of ${\cal M}$, $-\boldsymbol \mu\in {\cal M}$,
    and pick any $\theta_0\in\Theta$ such that $\boldsymbol{m}(\theta_0)=-\boldsymbol \mu$.
    Since $\bar I(\boldsymbol{\mu})=\infty$ and $\bar I(\boldsymbol{0})<\infty$, we can take $\theta_1\in\Theta$ such that $\boldsymbol{m}(\theta_1)=\boldsymbol \mu$ and that $L((\theta_0+\theta_1)/2)=(L(\theta_0)+L(\theta_1))/2>\bar I(\boldsymbol{0})$.
    This contradicts the definition of $\bar I(\boldsymbol{0})$, since $(\theta_0+\theta_1)/2\in \Theta$ by the convexity of $\Theta$ and $\boldsymbol{m}((\theta_0+\theta_1)/2)=\boldsymbol{0}$.

\item Note first that $-\bar I$ is a proper function from $\mathbb{R}^n$ to $(-\infty,\infty]$ and $-\left.\bar I\right|_{\bar{\cal M}}$ is a proper function from $\bar{\cal M}$ to $(-\infty,\infty]$, since their effective domain, ${\cal M}$, is nonempty and they do not attain $-\infty$:
$\bar I(\boldsymbol{\mu})=-\infty$ for any $\boldsymbol{\mu}\in\mathbb{R}^n\setminus {\cal M}$, $\left.\bar I\right|_{\bar{\cal M}}(\boldsymbol{\mu})=-\infty$ for any $\boldsymbol{\mu}\in\bar{\cal M}\setminus {\cal M}$, and $\bar I(\boldsymbol{\mu})=\left.\bar I\right|_{\bar{\cal M}}(\boldsymbol{\mu})\in (-\infty,\infty)$ for all $\boldsymbol{\mu}\in {\cal M}$.
Moreover, ${\cal M}$ is convex.
Therefore, to show the concavity of $\bar I$ and $\left.\bar I\right|_{\bar{\cal M}}$, it suffices to show the concavity of $\bar I$ on ${\cal M}$ (and therefore the concavity of $\left.\bar I\right|_{\bar{\cal M}}$ on ${\cal M}$) in the usual sense of concavity of a real-valued function over a convex set \citepappendix[p. 21]{beck2017book}.
    Pick any $\boldsymbol{\mu}_0,\boldsymbol{\mu}_1\in{\cal M}$.
     Let $\{\theta_{0,k}\}_{k=1}^\infty$ and $\{\theta_{1,k}\}_{k=1}^\infty$ be sequences in $\Theta$ such that $\boldsymbol m(\theta_{0,k})=\boldsymbol{\mu}_0$ and $\boldsymbol m(\theta_{1,k})=\boldsymbol{\mu}_1$ for all $k$ and that $\lim_{k\rightarrow \infty} L(\theta_{0,k})=\bar I(\boldsymbol{\mu}_0)$ and $\lim_{k\rightarrow \infty} L(\theta_{1,k})=\bar I(\boldsymbol{\mu}_1)$.
	Then, for each $\lambda\in [0,1]$,
	$\lambda\theta_{0,k}+(1-\lambda)\theta_{1,k}\in\Theta$ by the convexity of $\Theta$, and
	$\boldsymbol m(\lambda\theta_{0,k}+(1-\lambda)\theta_{1,k})=\lambda\boldsymbol{\mu}_0+(1-\lambda)\boldsymbol{\mu}_1$ so that
	$\bar I(\lambda\boldsymbol{\mu}_0+(1-\lambda)\boldsymbol{\mu}_1)\ge L(\lambda\theta_{0,k}+(1-\lambda)\theta_{1,k})$
	by the definition of $\bar I(\cdot)$.
	Taking the limit of the right-hand side as $k\rightarrow\infty$ gives
	$\bar I(\lambda\boldsymbol{\mu}_0+(1-\lambda)\boldsymbol{\mu}_1)\ge \lambda \bar I(\boldsymbol{\mu}_0)+(1-\lambda)\bar I(\boldsymbol{\mu}_1)$.

\item Note first that since $\boldsymbol{0}\in {\cal M}$, the linear span of ${\cal M}$ is the affine hull of ${\cal M}$ in $\mathbb{R}^n$, defined as ${\rm aff}({\cal M})\coloneqq \left\{\sum_{k=1}^K\lambda_k\boldsymbol{\mu}_k:K\in\mathbb{N},\boldsymbol{\mu}_1,...,\boldsymbol{\mu}_K\in {\cal M},\lambda_1,...,\lambda_K\in\mathbb{R},\sum_{k=1}^K\lambda_k=1\right\}$.
It follows that ${\rm int}_{\bar{\cal M}}({\cal M})$ equals the relative interior of ${\cal M}$ in $\mathbb{R}^n$, defined as:
$$
{\rm ri}({\cal M})\coloneqq \left\{\boldsymbol{\mu}\in {\cal M}: \{\tilde{\boldsymbol \mu}\in {\rm aff}({\cal M}):\|\tilde{\boldsymbol \mu}-\boldsymbol{\mu}\|<\epsilon\}\subset{\cal M} \text{ for some }\epsilon>0\right\}.
$$
Therefore, it suffices to show that $\boldsymbol{0}\in {\rm ri}({\cal M})$.
Since ${\cal M}$ is a nonempty convex subset of $\mathbb{R}^n$, ${\rm ri}({\cal M})$ is nonempty and convex \citepappendix[Theorem 6.2]{rockafellar1970}.
Pick any $\boldsymbol{\mu}\in {\rm ri}({\cal M})$.
I show that $-\boldsymbol{\mu}\in {\rm ri}({\cal M})$, which implies that $\boldsymbol{0}=\boldsymbol{\mu}/2+(-\boldsymbol{\mu})/2\in {\rm ri}({\cal M})$ by the convexity of ${\rm ri}({\cal M})$.
Take $\epsilon>0$ such that $S\coloneqq\{\tilde{\boldsymbol \mu}\in {\rm aff}({\cal M}):\|\tilde{\boldsymbol \mu}-\boldsymbol{\mu}\|<\epsilon\}\subset{\cal M}$.
Pick any $\tilde{\boldsymbol \mu}\in {\rm aff}({\cal M})$ such that $\|\tilde{\boldsymbol \mu}-(-\boldsymbol{\mu})\|<\epsilon$.
Since ${\rm aff}({\cal M})=\bar{\cal M}$ is a vector space, $-\tilde{\boldsymbol \mu}\in {\rm aff}({\cal M})$.
Also, $\|-\tilde{\boldsymbol \mu}-\boldsymbol{\mu}\|=\|\tilde{\boldsymbol \mu}-(-\boldsymbol{\mu})\|<\epsilon$. Therefore, $-\tilde{\boldsymbol \mu}\in S\subset{\cal M}$, and by the centrosymmetry of ${\cal M}$, $\tilde{\boldsymbol \mu}\in {\cal M}$.
This implies that $\{\tilde{\boldsymbol \mu}\in {\rm aff}({\cal M}):\|\tilde{\boldsymbol \mu}-(-\boldsymbol{\mu})\|<\epsilon\}\subset{\cal M}$, and hence $-\boldsymbol{\mu}\in{\rm ri}({\cal M})$.

\item It is straightforward to see that $\left.\partial\bar I\right|_{\bar{\cal M}}(\boldsymbol{0})$ is the set of negative subgradients of $-\left.\bar I\right|_{\bar{\cal M}}$ at $\boldsymbol{0}\in {\rm int}_{\bar{\cal M}}({\cal M})$.
Since $-\left.\bar I\right|_{\bar{\cal M}}$ is a proper convex function from $\bar{\cal M}$ to $(-\infty,\infty]$ with the effective domain ${\cal M}$,
the statement holds by Theorems 3.9 and 3.14 of \citeappendix{beck2017book}.

\item Since $\left.\partial\bar I\right|_{\bar{\cal M}}(\boldsymbol{0})$ is a closed and bounded subset of $\bar{\cal M}$ and $\bar{\cal M}$ is a linear subspace of $\mathbb{R}^n$, $\left.\partial\bar I\right|_{\bar{\cal M}}(\boldsymbol{0})$ is compact by the Heine-Borel theorem.
Since $\|\cdot\|:\bar{\cal M}\rightarrow\mathbb{R}$ is continuous and $\left.\partial\bar I\right|_{\bar{\cal M}}(\boldsymbol{0})$ is nonempty and compact, it follows by the extreme value theorem that $\arg\min_{\boldsymbol d\in \left.\partial\bar I\right|_{\bar{\cal M}}(\boldsymbol{0})}\|\boldsymbol d\|$ is nonempty. Moreover, $\arg\min_{\boldsymbol d\in \partial \bar I(\boldsymbol{0})}\|\boldsymbol d\|$ is singleton, since $\|\cdot\|$ is strictly convex and $\left.\partial\bar I\right|_{\bar{\cal M}}(\boldsymbol{0})$ is convex.

\item
Pick any $\boldsymbol{w}\in\bar{\cal M}$.
Define the directional derivative of $-\left.\bar I\right|_{\bar{\cal M}}$ at $\boldsymbol{0}$ in the direction $\boldsymbol{w}$ as $-\left.\bar I\right|_{\bar{\cal M}}'(\boldsymbol{0};\boldsymbol{w})\coloneqq\lim_{\epsilon\downarrow 0}\frac{-\left.\bar I\right|_{\bar{\cal M}}(\epsilon\boldsymbol{w})-\left(-\left.\bar I\right|_{\bar{\cal M}}(\boldsymbol{0})\right)}{\epsilon}$.
Since $-\left.\bar I\right|_{\bar{\cal M}}$ is a proper convex function from $\bar{\cal M}$ to $(-\infty,\infty]$ with the effective domain ${\cal M}$ and $\boldsymbol{0}\in {\rm int}_{\bar{\cal M}}({\cal M})$, it follows from Theorem 3.21 of \citeappendix{beck2017book} that $-\left.\bar I\right|_{\bar{\cal M}}'(\boldsymbol{0};\boldsymbol{w})$ exists.
Also, Theorem 3.26 of \citeappendix{beck2017book} implies that
$-\left.\bar I\right|_{\bar{\cal M}}'(\boldsymbol{0};\boldsymbol{w})=\max_{\boldsymbol{d}\in -\left.\partial\bar I\right|_{\bar{\cal M}}(\boldsymbol{0})}\boldsymbol{d}'\boldsymbol{w}=-\min_{\boldsymbol{d}\in \left.\partial\bar I\right|_{\bar{\cal M}}(\boldsymbol{0})}\boldsymbol{d}'\boldsymbol{w}$,
where $-\left.\partial\bar I\right|_{\bar{\cal M}}(\boldsymbol{0})=\{-\boldsymbol{d}\in\bar{\cal M}: \boldsymbol{d}\in \left.\partial\bar I\right|_{\bar{\cal M}}(\boldsymbol{0})\}$, which is equal to the set of subgradients of $-\left.\bar I\right|_{\bar{\cal M}}$ at $\boldsymbol{0}$.
Since $-\left.\bar I\right|_{\bar{\cal M}}'(\boldsymbol{0};\boldsymbol{w})=-\lim_{\epsilon\downarrow 0}\frac{\left.\bar I\right|_{\bar{\cal M}}(\epsilon\boldsymbol{w})-\left.\bar I\right|_{\bar{\cal M}}(\boldsymbol{0})}{\epsilon}=-\lim_{\epsilon\downarrow 0}\frac{\bar I(\epsilon\boldsymbol{w})-\bar I(\boldsymbol{0})}{\epsilon}=-\bar I'(\boldsymbol{0};\boldsymbol{w})$, the statement follows.

\item 
By part \ref{lemma:superdifferential:dir}, $\bar I'(\boldsymbol{0};\boldsymbol{d}^*/\|\boldsymbol{d}^*\|)=\min_{\boldsymbol{d}\in \left.\partial\bar I\right|_{\bar{\cal M}}(\boldsymbol{0})}\boldsymbol{d}'\boldsymbol{d}^*/\|\boldsymbol{d}^*\|$.
Thus, to prove the statement, it suffices to show that $\min_{\boldsymbol{d}\in \left.\partial\bar I\right|_{\bar{\cal M}}(\boldsymbol{0})}\boldsymbol{d}'\boldsymbol{d}^*=\|\boldsymbol{d}^*\|^2$.
Suppose to the contrary that there exists $\tilde{\boldsymbol{d}}\in \left.\partial\bar I\right|_{\bar{\cal M}}(\boldsymbol{0})$ such that $(\tilde{\boldsymbol{d}})'\boldsymbol{d}^*<\|\boldsymbol{d}^*\|^2$.
For $\lambda\in \mathbb{R}$, let $\boldsymbol{d}_\lambda=\lambda \tilde{\boldsymbol{d}}+(1-\lambda)\boldsymbol{d}^*$ and $g(\lambda)=\|\boldsymbol{d}_\lambda\|^2=\lambda^2\|\tilde{\boldsymbol{d}}\|^2+(1-\lambda)^2\|\boldsymbol{d}^*\|^2+2\lambda(1-\lambda)(\tilde{\boldsymbol{d}})'\boldsymbol{d}^*$.
Since $g'(0)=-2\|\boldsymbol{d}^*\|^2+2(\tilde{\boldsymbol{d}})'\boldsymbol{d}^*<0$, there exists $\lambda\in (0,1)$ such that $g(\lambda)<g(0)$, so that $\|\boldsymbol{d}_\lambda\|<\|\boldsymbol{d}^*\|$.
However, by the convexity of $\left.\partial\bar I\right|_{\bar{\cal M}}(\boldsymbol{0})$, $\boldsymbol{d}_\lambda\in \left.\partial\bar I\right|_{\bar{\cal M}}(\boldsymbol{0})$, which contradicts the result from part \ref{lemma:superdifferential:unique-dstar} that $\boldsymbol{d}^*$ is a solution to $\min_{\boldsymbol d\in \left.\partial\bar I\right|_{\bar{\cal M}}(\boldsymbol{0})}\|\boldsymbol d\|$.

\item 
For any $\epsilon>0$ and $\boldsymbol{\mu}\in{\cal M}$ with $\|\boldsymbol{\mu}\|\le \epsilon$, $(\boldsymbol d^*)'\boldsymbol{\mu}\le \|\boldsymbol d^*\|\|\boldsymbol{\mu}\|\le\|\boldsymbol d^*\|\epsilon$ by the Cauchy-Schwarz inequality. Since $\boldsymbol{d}^*\in\left.\partial\bar I\right|_{\bar{\cal M}}(\boldsymbol{0})$,
$\bar I(\boldsymbol{\mu})\le \bar I(\boldsymbol{0})+(\boldsymbol{d}^*)'\boldsymbol{\mu}\le \bar I(\boldsymbol{0})+\|\boldsymbol{d}^*\|\epsilon=\omega(0)+\|\boldsymbol{d}^*\|\epsilon$.
Thus we obtain
\begin{align}
    \omega(\epsilon)=\sup_{\boldsymbol{\mu}\in{\cal M}:\|\boldsymbol{\mu}\|\le\epsilon}\bar I(\boldsymbol{\mu})\le \omega(0)+\|\boldsymbol{d}^*\|\epsilon<\infty.\label{eq:omega-ineq}
\end{align}
If $\|\boldsymbol{d}^*\|=0$, $\omega(\epsilon)=\omega(0)$ for any $\epsilon>0$, since $\omega(\cdot)$ is nondecreasing. Hence, $\omega'(0)=0$.

Consider the case in which $\|\boldsymbol{d}^*\|>0$.
Let $\tilde{\boldsymbol{w}}^*=\boldsymbol{d}^*/\|\boldsymbol{d}^*\|\in\bar {\cal M}$.
For any sufficiently small $\epsilon>0$, $\epsilon\tilde{\boldsymbol{w}}^*\in {\cal M}$ since $\boldsymbol{0}\in {\rm int}_{\bar{\cal M}}({\cal M})$, and we have
$$
\bar I(\epsilon \tilde{\boldsymbol{w}}^*)\le \sup_{\boldsymbol{\mu}\in{\cal M}:\|\boldsymbol{\mu}\|\le\epsilon\|\tilde{\boldsymbol{w}}^*\|}\bar I(\boldsymbol{\mu})=\omega(\epsilon\|\tilde{\boldsymbol{w}}^*\|)=\omega(\epsilon).
$$
We therefore obtain that for any sufficiently small $\epsilon>0$,
$$
\frac{\bar I(\epsilon \tilde{\boldsymbol{w}}^*)-\bar I(\boldsymbol{0})}{\epsilon}\le \frac{\omega(\epsilon)-\omega(0)}{\epsilon}\le \|\boldsymbol{d}^*\|,
$$
where the second inequality follows from \eqref{eq:omega-ineq}.
By part \ref{lemma:superdifferential:dir-dstar}, the left-hand side converges to $\|\boldsymbol{d}^*\|$ as $\epsilon\downarrow 0$.
Consequently, $\omega(\cdot)$ is right differentiable at $0$ with $\omega'(0)=\|\boldsymbol{d}^*\|$.

\item 
    To show the concavity, pick any $\epsilon_0,\epsilon_1\ge 0$.
     Let $\{\theta_{0,k}\}_{k=1}^\infty$ and $\{\theta_{1,k}\}_{k=1}^\infty$ be sequences in $\Theta$ such that $\|\boldsymbol m(\theta_{0,k})\|\le \epsilon_0$ and $\|\boldsymbol m(\theta_{1,k})\|\le \epsilon_1$ for all $k$ and that $\lim_{k\rightarrow \infty} L(\theta_{0,k})=\omega(\epsilon_0)$ and $\lim_{k\rightarrow \infty} L(\theta_{1,k})=\omega(\epsilon_1)$.
	Then, for each $\lambda\in [0,1]$,
	$\lambda\theta_{0,k}+(1-\lambda)\theta_{1,k}\in\Theta$ by the convexity of $\Theta$, and
	$\|\boldsymbol m(\lambda\theta_{0,k}+(1-\lambda)\theta_{1,k})\|\le \|\lambda \boldsymbol m(\theta_{0,k})\|+\|(1-\lambda)\boldsymbol m(\theta_{0,k})\|\le\lambda\epsilon_0+(1-\lambda)\epsilon_1$ so that
	$\omega(\lambda\epsilon_0+(1-\lambda)\epsilon_1)\ge L(\lambda\theta_{0,k}+(1-\lambda)\theta_{1,k})$
	by the definition of $\omega(\cdot)$.
	Taking the limit of the right-hand side as $k\rightarrow\infty$ gives
	$\omega(\lambda\epsilon_0+(1-\lambda)\epsilon_1)\ge \lambda \omega(\epsilon_0)+(1-\lambda)\omega(\epsilon_1)$.

The continuity of $\omega(\cdot)$ follows from part \ref{lemma:superdifferential:omega-dif} and concavity of $\omega(\cdot)$.

\item
Since $\boldsymbol{0}\in {\rm int}_{\bar{\cal M}}({\cal M})$, we can take $\bar\epsilon>0$ such that $\{\boldsymbol{\mu}\in\bar{\cal M}:\|\boldsymbol{\mu}\|<\bar\epsilon\}\subset{\cal M}$.
First, I show that $\arg\max_{\boldsymbol{\mu}\in{\cal M}:\|\boldsymbol{\mu}\|\le\epsilon}\bar I(\boldsymbol{\mu})$ is nonempty for any $\epsilon\in (0,\bar\epsilon)$.
Pick any $\epsilon\in (0,\bar\epsilon)$.
Let $B_\epsilon=\{\boldsymbol{\mu}\in\bar{\cal M}:\|\boldsymbol{\mu}\|\le \epsilon\}=\{\boldsymbol{\mu}\in{\cal M}:\|\boldsymbol{\mu}\|\le \epsilon\}$.
Since $-\left.\bar I\right|_{\bar{\cal M}}$ is a  convex function from $\bar{\cal M}$ to $(-\infty,\infty]$ with the effective domain ${\cal M}$ and $B_\epsilon\subset {\rm int}_{\bar{\cal M}}({\cal M})$, Theorem 2.21 of \citeappendix{beck2017book} implies that $-\left.\bar I\right|_{\bar{\cal M}}$ is continuous on $B_\epsilon$.
Since $B_\epsilon$ is nonempty and compact, it follows by the extreme value theorem that $\left.\bar I\right|_{\bar{\cal M}}$ attains a maximum over $B_\epsilon$.
Since $\left.\bar I\right|_{\bar{\cal M}}(\boldsymbol{\mu})=\bar I(\boldsymbol{\mu})$ for all $\boldsymbol{\mu}\in B_\epsilon$, $\arg\max_{\boldsymbol{\mu}\in{\cal M}:\|\boldsymbol{\mu}\|\le\epsilon}\bar I(\boldsymbol{\mu})$ is nonempty.

Now, suppose $\|\boldsymbol{d^*}\|>0$ and pick any sequence $\{\boldsymbol{\mu}_\epsilon\}_{\epsilon\in (0,\bar\epsilon)}$ such that $\boldsymbol \mu_\epsilon\in \arg\max_{\boldsymbol{\mu}\in{\cal M}:\|\boldsymbol{\mu}\|\le\epsilon}\bar I(\boldsymbol{\mu})$ for all $\epsilon\in (0,\bar\epsilon)$.
I show that $\frac{\boldsymbol{d^*}}{\|\boldsymbol{d^*}\|}=\lim_{\epsilon\downarrow 0}\frac{\boldsymbol \mu_\epsilon}{\epsilon}$.
Let $\boldsymbol{w}_\epsilon=\frac{\boldsymbol{\mu}_\epsilon}{\epsilon}$.
I first show that $\lim_{\epsilon\downarrow 0}(\boldsymbol{d}^*)'\boldsymbol{w}_\epsilon=\|\boldsymbol{d}^*\|$.
We have
$$
\bar I(\boldsymbol{\mu}_\epsilon)\le \bar I(\boldsymbol{0})+(\boldsymbol{d}^*)'\boldsymbol{\mu}_\epsilon\le \bar I(\boldsymbol{0})+\|\boldsymbol{d}^*\|\epsilon,~~~\epsilon\in (0,\bar\epsilon),
$$
where the first inequality holds since $\boldsymbol{d}^*\in\left.\partial\bar I\right|_{\bar{\cal M}}(\boldsymbol{0})$ and the second by the Cauchy-Schwarz inequality.
As a result,
$$
\frac{\omega(\epsilon)-\omega(0)}{\epsilon}=\frac{\bar I(\boldsymbol{\mu}_\epsilon)-\bar I(\boldsymbol{0})}{\epsilon}\le (\boldsymbol{d}^*)'\boldsymbol{w}_\epsilon\le \|\boldsymbol{d}^*\|,~~~\epsilon\in (0,\bar\epsilon).
$$
Since $\lim_{\epsilon\downarrow 0}\frac{\omega(\epsilon)-\omega(0)}{\epsilon}= \|\boldsymbol{d}^*\|$ by part \ref{lemma:superdifferential:omega-dif}, it follows that $\lim_{\epsilon\downarrow 0}(\boldsymbol{d}^*)'\boldsymbol{w}_\epsilon=\|\boldsymbol{d}^*\|$.

Next, let $\tilde{\boldsymbol{w}}^*=\frac{\boldsymbol{d}^*}{\|\boldsymbol{d}^*\|}$ and $B_1=\{\boldsymbol{w}\in\bar{\cal M}:\|\boldsymbol{w}\|\le 1\}$.
By the Cauchy-Schwarz inequality,
$(\boldsymbol{d}^*)'\boldsymbol{w}\le \|\boldsymbol{d}^*\|\|\boldsymbol{w}\|\le \|\boldsymbol{d}^*\|$ for all $\boldsymbol{w}\in B_1$, and both inequalities hold with equality simultaneously only if $\boldsymbol{w}=\tilde{\boldsymbol{w}}^*$.
Therefore, $(\boldsymbol{d}^*)'\boldsymbol{w}< \|\boldsymbol{d}^*\|$ for all $\boldsymbol{w}\in B_1$ such that $\boldsymbol{w}\neq\tilde{\boldsymbol{w}}^*$.

Now, for each $\delta>0$, let
$B_\delta(\tilde{\boldsymbol{w}}^*)=\{\boldsymbol{w}\in B_1:\|\boldsymbol{w}-\tilde{\boldsymbol{w}}^*\|< \delta\}$.
Then it holds that $\sup_{\boldsymbol{w}\in B_1\setminus B_\delta(\tilde{\boldsymbol{w}}^*)}(\boldsymbol{d}^*)'\boldsymbol{w}<\|\boldsymbol{d}^*\|$ for every $\delta>0$ for the following reason.
If $\delta>0$ is such that $B_1\setminus B_\delta(\tilde{\boldsymbol{w}}^*)$ is empty, then $\sup_{\boldsymbol{w}\in B_1\setminus B_\delta(\tilde{\boldsymbol{w}}^*)}(\boldsymbol{d}^*)'\boldsymbol{w}=-\infty<\|\boldsymbol{d}^*\|$.
If $B_1\setminus B_\delta(\tilde{\boldsymbol{w}}^*)$ is nonempty, then $\sup_{\boldsymbol{w}\in B_1\setminus B_\delta(\tilde{\boldsymbol{w}}^*)}(\boldsymbol{d}^*)'\boldsymbol{w}=(\boldsymbol{d}^*)'\tilde{\boldsymbol{w}}$ for some $\tilde{\boldsymbol{w}}\in B_1\setminus B_\delta(\tilde{\boldsymbol{w}}^*)$ by the extreme value theorem, since the mapping $\boldsymbol{w}\mapsto(\boldsymbol{d}^*)'\boldsymbol{w}$ is continuous and $B_1\setminus B_\delta(\tilde{\boldsymbol{w}}^*)$ is compact.
This implies that $\sup_{\boldsymbol{w}\in B_1\setminus B_\delta(\tilde{\boldsymbol{w}}^*)}(\boldsymbol{d}^*)'\boldsymbol{w}<\|\boldsymbol{d}^*\|$,
since $\tilde{\boldsymbol{w}}\in B_1\setminus B_\delta(\tilde{\boldsymbol{w}}^*)$ and $(\boldsymbol{d}^*)'\boldsymbol{w}< \|\boldsymbol{d}^*\|$ for all $\boldsymbol{w}\in B_1$ such that $\boldsymbol{w}\neq\tilde{\boldsymbol{w}}^*$. 

Finally, I prove that $\lim_{\epsilon\downarrow 0}\boldsymbol{w}_\epsilon=\tilde{\boldsymbol{w}}^*$.
Pick any $\delta>0$.
Since $\lim_{\epsilon\downarrow 0}(\boldsymbol{d}^*)'\boldsymbol{w}_\epsilon=\|\boldsymbol{d}^*\|$ and $\sup_{\boldsymbol{w}\in B_1\setminus B_\delta(\tilde{\boldsymbol{w}}^*)}(\boldsymbol{d}^*)'\boldsymbol{w}<\|\boldsymbol{d}^*\|$,
there exists $\tilde\epsilon\in (0,\bar\epsilon)$ such that
$\sup_{\boldsymbol{w}\in B_1\setminus B_\delta(\tilde{\boldsymbol{w}}^*)}(\boldsymbol{d}^*)'\boldsymbol{w}<(\boldsymbol{d}^*)'\boldsymbol{w}_\epsilon$ for all $\epsilon\in (0,\tilde\epsilon)$.
This implies that for all $\epsilon\in (0,\tilde\epsilon)$, $\boldsymbol{w}_\epsilon\notin (B_1\setminus B_\delta(\tilde{\boldsymbol{w}}^*))$.
Since
$\boldsymbol{w}_\epsilon=\frac{\boldsymbol{\mu}_\epsilon}{\epsilon}\in B_1$, 
$\boldsymbol{w}_\epsilon\in B_\delta(\tilde{\boldsymbol{w}}^*)$ for all $\epsilon\in (0,\tilde\epsilon)$.
Since $\delta>0$ is arbitrary, $\lim_{\epsilon\downarrow 0}\boldsymbol{w}_\epsilon=\tilde{\boldsymbol{w}}^*$.

\item By part \ref{lemma:superdifferential:dir},
$$
\sup_{\boldsymbol{w}\in\bar{\cal M}:\|\boldsymbol{w}\|\le 1}\bar I'(\boldsymbol{0};\boldsymbol{w})=\sup_{\boldsymbol{w}\in\bar{\cal M}:\|\boldsymbol{w}\|\le 1}\min_{\boldsymbol{d}\in\left.\partial\bar I\right|_{\bar{\cal M}}(\boldsymbol{0})}\boldsymbol{d}'\boldsymbol{w}.
$$
Since both $\{\boldsymbol{w}\in\bar{\cal M}:\|\boldsymbol{w}\|\le 1\}$ and $\left.\partial\bar I\right|_{\bar{\cal M}}(\boldsymbol{0})$ are compact and convex and $\boldsymbol{d}'\boldsymbol{w}$ is a bilinear function of $(\boldsymbol{d},\boldsymbol{w})$, it follows from Sion's minimax theorem that
$$
\sup_{\boldsymbol{w}\in\bar{\cal M}:\|\boldsymbol{w}\|\le 1}\min_{\boldsymbol{d}\in\left.\partial\bar I\right|_{\bar{\cal M}}(\boldsymbol{0})}\boldsymbol{d}'\boldsymbol{w}=\min_{\boldsymbol{d}\in\left.\partial\bar I\right|_{\bar{\cal M}}(\boldsymbol{0})}\sup_{\boldsymbol{w}\in\bar{\cal M}:\|\boldsymbol{w}\|\le 1}\boldsymbol{d}'\boldsymbol{w}.
$$
As $\|\boldsymbol{d}^*\|>0$ by assumption and $\boldsymbol{d}^*\in\arg\min_{\boldsymbol{d}\in\left.\partial\bar I\right|_{\bar{\cal M}}(\boldsymbol{0})}\|\boldsymbol{d}\|$, $\|\boldsymbol{d}\|>0$ for any $\boldsymbol{d}\in\left.\partial\bar I\right|_{\bar{\cal M}}(\boldsymbol{0})$.
It is straightforward to see that $\sup_{\boldsymbol{w}\in\bar{\cal M}:\|\boldsymbol{w}\|\le 1}\boldsymbol{d}'\boldsymbol{w}=\|\boldsymbol{d}\|$ for any $\boldsymbol{d}\in\partial \bar I(\boldsymbol{0})$: by the Cauchy-Schwarz inequality, $\boldsymbol{d}'\boldsymbol{w}\le\|\boldsymbol{d}\|\|\boldsymbol{w}\|\le\|\boldsymbol{d}\|$ for any $\boldsymbol{w}\in\bar {\cal M}$ such that $\|\boldsymbol{w}\|\le 1$, but the inequalities hold with equality when $\boldsymbol{w}=\boldsymbol{d}/\|\boldsymbol{d}\|$.
Therefore,
$$
\min_{\boldsymbol{d}\in\left.\partial\bar I\right|_{\bar{\cal M}}(\boldsymbol{0})}\sup_{\boldsymbol{w}\in\bar{\cal M}:\|\boldsymbol{w}\|\le 1}\boldsymbol{d}'\boldsymbol{w}=\min_{\boldsymbol{d}\in\left.\partial\bar I\right|_{\bar{\cal M}}(\boldsymbol{0})}\|\boldsymbol{d}\|=\|\boldsymbol{d}^*\|,
$$
so $\sup_{\boldsymbol{w}\in\bar{\cal M}:\|\boldsymbol{w}\|\le 1}\bar I'(\boldsymbol{0};\boldsymbol{w})=\|\boldsymbol{d^*}\|$.
On the other hand, by part \ref{lemma:superdifferential:dir-dstar}, $\bar I'(\boldsymbol{0};\boldsymbol{d^*}/\|\boldsymbol{d^*}\|)=\|\boldsymbol{d^*}\|$.
Consequently, $\frac{\boldsymbol{d^*}}{\|\boldsymbol{d^*}\|}\in\arg_{\boldsymbol{w}\in\bar{\cal M}:\|\boldsymbol{w}\|\le 1}\bar I'(\boldsymbol{0};\boldsymbol{w})$.
\qedhere
\end{enumerate}
\end{proof}

The two lemmas below immediately follow from Lemma D.1 in Supplemental Appendix D of \citeappendix{armstrong2018optimal} and from Lemma A.1 of \citeappendix{armstrong2018optimal} (which is originally from Lemma 4 of \citeappendix{donoho1994}), respectively, in the case in which ${\cal F}={\cal G}$ and ${\cal F}$ is centrosymmetric in their notation, and hence the proof is omitted.

\begin{lemma}\label{lemma:dif_omega}
	Suppose that $L$ and $\boldsymbol{m}$ are linear; $\Theta$ is convex and centrosymmetric; that $\theta_\epsilon$ attains the modulus of continuity at $\epsilon>0$ with $\|\boldsymbol m(\theta_\epsilon)\|=\epsilon$; and that there exists $\iota\in\Theta$ such that $L(\iota)=1$ and $\theta_\epsilon+c\iota\in\Theta$ for all $c$ in a neighborhood of zero.
	Then, $\omega(\cdot)$ is differentiable at $\epsilon$ with $\omega'(\epsilon)=\frac{\epsilon}{\boldsymbol m(\iota)'\boldsymbol m(\theta_\epsilon)}$.
\end{lemma}


\begin{lemma}\label{lemma:max_bias}
    Suppose that $L$ and $\boldsymbol{m}$ are linear; that $\Theta$ is convex and centrosymmetric; that $\theta_\epsilon$ attains the modulus of continuity at $\epsilon>0$ with $\|\boldsymbol m(\theta_\epsilon)\|=\epsilon$; and that $\omega(\cdot)$ is differentiable at $\epsilon$.
    Then, the estimator for $L(\theta)$ given by $\hat L_\epsilon(\boldsymbol{Y})=\omega'(\epsilon)\frac{\boldsymbol{m}(\theta_\epsilon)'}{\|\boldsymbol{m}(\theta_\epsilon)\|}\boldsymbol{Y}$ achieves the maximum bias over $\Theta$ at $-\theta_\epsilon$ and the minimum bias over $\Theta$ at $\theta_\epsilon$. That is,
    $$
    \sup_{\theta\in\Theta}\mathbb{E}_\theta[\hat L_\epsilon(\boldsymbol{Y})-L(\theta)]=\mathbb{E}_{-\theta_\epsilon}[\hat L_\epsilon(\boldsymbol{Y})-L(-\theta_\epsilon)] \text{ and } \inf_{\theta\in\Theta}\mathbb{E}_\theta[\hat L_\epsilon(\boldsymbol{Y})-L(\theta)]=\mathbb{E}_{\theta_\epsilon}[\hat L_\epsilon(\boldsymbol{Y})-L(\theta_\epsilon)].
    $$
\end{lemma}

\subsection{Proof of Lemma \ref{lemma:one-dim0}}\label{proof:lemma:one-dim}

Note first that if $L(\bar\theta)=0$, then $L(\theta)=0$ for all $\theta\in[-\bar\theta,\bar\theta]$ by the linearity of $L$, and therefore any decision rule is minimax regret with maximum regret equal to zero.
The result for the case in which $L(\bar\theta)> 0$ and $\boldsymbol{m}(\bar\theta)= \boldsymbol{0}$ is proven after Lemma \ref{lemma:one-dim0} in Section \ref{section:one-dim}.

Consider the case in which $L(\bar\theta)> 0$ and $\boldsymbol{m}(\bar\theta)\neq \boldsymbol{0}$.
First, I show that $T(\boldsymbol{Y})=\frac{\boldsymbol{m}(\bar \theta)'\boldsymbol{Y}}{\|\boldsymbol{m}(\bar \theta)\|^2}$ is a sufficient statistic for $\lambda\in [-1,1]$, where $\lambda$ represents the location of $\theta$ within $[-\bar\theta,\bar\theta]=\{\theta\in \mathbb{V}:\theta=\lambda\bar\theta,\lambda\in [-1,1]\}$.
Let
$$h(\boldsymbol{y})=\frac{1}{\sqrt{(2\pi)^n}\sigma^n}\exp\left(-\frac{1}{2\sigma^2}\|\boldsymbol y\|^2\right),~~~g(t,\lambda)=\exp\left(-\frac{1}{2\sigma^2}\left(-2\lambda t+\lambda^2\right)\|\boldsymbol{m}(\bar \theta)\|^2\right).
$$
For $\lambda\in [-1, 1]$,
$\boldsymbol{Y}\sim {\cal N}(\lambda \boldsymbol{m}(\bar \theta),\sigma^2 \boldsymbol I_n)$ under $\lambda\bar \theta$.
It follows that the probability density of $\boldsymbol{Y}$ under $\lambda$ is given by
\begin{align*}
&f(\boldsymbol{y};\lambda)=\frac{1}{\sqrt{(2\pi)^n}\sigma^n}\exp\left(-\frac{1}{2\sigma^2}\|\boldsymbol y-\lambda \boldsymbol{m}(\bar \theta)\|^2\right)\\
&=\frac{1}{\sqrt{(2\pi)^n}\sigma^n}\exp\left(-\frac{1}{2\sigma^2}(\|\boldsymbol y\|^2-2\lambda \boldsymbol{m}(\bar \theta)'\boldsymbol{y}+\lambda^2\|\boldsymbol{m}(\bar \theta)\|^2)\right)=h(\boldsymbol{y})g(T(\boldsymbol{y}),\lambda).
\end{align*}
By the factorization theorem, $T(\boldsymbol{Y})$ is a sufficient statistic for $\lambda$.

By Theorem 1 in Chapter 1 of \citeappendix{Berger1985book}, the class of decision rules that only depend on $T(\boldsymbol{Y})$ is essentially complete.
With this restricted class of rules, the minimax regret problem for $[-\bar \theta,\bar \theta]$ is equivalent to the univariate problem in which we observe a univariate signal $T\sim {\cal N}\left(\lambda,\frac{\sigma^2}{\|\boldsymbol m(\bar\theta)\|^2}\right)$ and the welfare contrast is $\lambda L(\bar\theta)$ for $\lambda\in [-1,1]$.
This univariate problem has the same set of minimax regret rules as the univariate problem in which $T\sim {\cal N}\left(\lambda,\frac{\sigma^2}{\|\boldsymbol m(\bar\theta)\|^2}\right)$ and the welfare contrast is $\lambda$ for $\lambda\in [-1,1]$, since the constant multiplier $L(\bar\theta)$ in the welfare contrast only scales up the regret of a decision rule in the latter problem by $L(\bar\theta)$ for all $\lambda\in [-1,1]$.
The minimax risk for the former problem is that for the latter problem multiplied by $L(\bar\theta)$.
Therefore, Lemma \ref{lemma:bounded_normal} implies that $\delta(T)=\mathbf{1}\left\{T\ge 0\right\}$ is minimax regret for the former univariate problem, and its minimax risk is given by
\begin{align}
L(\bar\theta){\cal R}_{\rm uni}\left(\frac{\sigma}{\|\boldsymbol m(\bar\theta)\|};[-1,1]\right)=\begin{cases}
L(\bar\theta)\Phi\left(-\frac{\|\boldsymbol m(\bar\theta)\|}{\sigma}\right) ~~ &\text{ if } 1\le \tau^*\frac{\sigma}{\|\boldsymbol m(\bar\theta)\|},\\
\tau^*\sigma \frac{L(\bar\theta)}{\|\boldsymbol m(\bar\theta)\|}\Phi\left(-\tau^*\right) &\text{ if } 1> \tau^*\frac{\sigma}{\|\boldsymbol m(\bar\theta)\|}.
\label{eq:1d-mmrisk}
\end{cases}
\end{align}
Consequently, $\delta^*(\boldsymbol Y)=\mathbf{1}\{T(\boldsymbol{Y})\ge 0\}=\mathbf{1}\left\{\boldsymbol{m}(\bar \theta)'\boldsymbol{Y}\ge 0\right\}$ is minimax regret for the original one-dimensional subproblem $[-\bar \theta,\bar \theta]$.
The minimax risk for this problem is given by \eqref{eq:1d-mmrisk}.
\qed

\subsection{Proof of Lemma \ref{lemma:hardest-1d}}\label{proof:lemma:hardest-1d}


\paragraph{Proof of Statement \ref{lemma:hardest-1d:risk}.}
Let $g(\epsilon)=\omega(\epsilon)\Phi(-\epsilon/\sigma)$.
By Lemma \ref{lemma:mod}, $\omega(\cdot)$ is continuous, and so is $g(\cdot)$. From the extreme value theorem, $g(\cdot)$ attains a maximum over $[0,\tau^*\sigma]$.
Let $\epsilon^*\in\arg\max_{\epsilon\in[0,\tau^*\sigma]}g(\epsilon)$ and ${\cal E}=\{\|\boldsymbol{m}(\theta)\|:\theta\in\Theta\}$.
By Lemma \ref{lemma:one-dim0},
\begin{align*}
&
\sup_{\bar\theta\in\Theta:L(\bar\theta)\ge 0}{\cal R}([-\bar \theta,\bar \theta])=\sup_{\epsilon\in {\cal E}}\sup_{\bar\theta\in\Theta:\|\boldsymbol m(\bar\theta)\|=\epsilon,L(\bar\theta)\ge 0}{\cal R}([-\bar \theta,\bar \theta]) \\
&=\sup\Bigg\{\sup_{\epsilon\in {\cal E}\cap [0, \tau^*\sigma]}\sup_{\bar\theta\in\Theta:\|\boldsymbol m(\bar\theta)\|=\epsilon,L(\bar\theta)\ge 0}L(\bar\theta)\Phi\left(-\frac{\|\boldsymbol m(\bar\theta)\|}{\sigma}\right),\\
&~~~~~~~~~~~~\sup_{\epsilon\in {\cal E}\cap (\tau^*\sigma,\infty)}\sup_{\bar\theta\in\Theta:\|\boldsymbol m(\bar\theta)\|=\epsilon,L(\bar\theta)\ge 0} \dfrac{\tau^*\sigma L(\bar\theta)}{\|\boldsymbol m(\bar\theta)\|}\Phi\left(-\tau^*\right)\Bigg\}\\
&\le\sup\left\{\sup_{\epsilon\in {\cal E}\cap [0, \tau^*\sigma]}\omega(\epsilon)\Phi(-\epsilon/\sigma),\sup_{\epsilon\in {\cal E}\cap (\tau^*\sigma,\infty)}\frac{\tau^*\sigma\omega(\epsilon)}{\epsilon}\Phi(-\tau^*)\right\}\\
&\le\sup\left\{\sup_{\epsilon\in [0, \tau^*\sigma]}\omega(\epsilon)\Phi(-\epsilon/\sigma),\sup_{\epsilon>\tau^*\sigma}\frac{\tau^*\sigma\omega(\epsilon)}{\epsilon}\Phi(-\tau^*)\right\},
\end{align*}
where the second last inequality holds by the definition of $\omega(\epsilon)$.
By the concavity of $\omega(\epsilon)$, $\frac{\omega(\epsilon)}{\epsilon}$ is continuous and nonincreasing on $(0,\infty)$, and hence $\sup_{\epsilon>\tau^*\sigma}\frac{\tau^*\sigma\omega(\epsilon)}{\epsilon}\Phi(-\tau^*)=\omega(\tau^*\sigma)\Phi(-\tau^*)$.
Therefore,
$\sup_{\bar\theta\in\Theta:L(\bar\theta)\ge 0}{\cal R}([-\bar \theta,\bar \theta])\le\sup_{\epsilon\in [0, \tau^*\sigma]}\omega(\epsilon)\Phi(-\epsilon/\sigma)=\omega(\epsilon^*)\Phi(-\epsilon^*/\sigma)$.
To show the other direction of the inequality,
let $\{\theta_{\epsilon^*,k}\}_{k=1}^\infty$ be a sequence in $\Theta$ such that $L(\theta_{\epsilon^*,k})\ge 0$ and $\|\boldsymbol m(\theta_{\epsilon^*,k})\|\le \epsilon^*$ for all $k$ and that $\lim_{k\rightarrow \infty} L(\theta_{\epsilon^*,k})=\omega(\epsilon^*)$.
By definition and Lemma \ref{lemma:one-dim0}, for all $k$,
\begin{align*}
    &\sup_{\bar\theta\in\Theta:L(\bar\theta)\ge 0}{\cal R}([-\bar \theta,\bar \theta])\ge {\cal R}([-\theta_{\epsilon^*,k},\theta_{\epsilon^*,k}])=L(\theta_{\epsilon^*,k})\Phi\left(-\|\boldsymbol m(\theta_{\epsilon^*,k})\|/\sigma\right)\ge L(\theta_{\epsilon^*,k})\Phi\left(-\epsilon^*/\sigma\right).
\end{align*}
Taking the limit as $k\rightarrow\infty$ gives $\sup_{\bar\theta\in\Theta:L(\bar\theta)\ge 0}{\cal R}([-\bar \theta,\bar \theta])\ge \omega(\epsilon^*)\Phi(-\epsilon^*/\sigma)$.
\qed

\paragraph{Proof of Statement \ref{lemma:hardest-1d:epsilon}.}
Let $g(\epsilon)=\omega(\epsilon)\Phi(-\epsilon/\sigma)$.
As shown in the proof of Statement \ref{lemma:hardest-1d:risk},
$g(\cdot)$ is continuous and attains a maximum over $[0,\tau^*\sigma]$.
I show that $g(\cdot)$ is strictly quasi-concave on $[0,\infty)$, which implies that $g(\cdot)$ has a unique maximum over $[0,\tau^*\sigma]$.
Pick any $\epsilon_0,\epsilon_1$ such that $0\le \epsilon_0<\epsilon_1$.
I show that $g((1-\lambda)\epsilon_0+\lambda\epsilon_1)>\min\{g(\epsilon_0),g(\epsilon_1)\}$ for all $\lambda\in(0,1)$.
Consider two cases: (i) $\omega(\epsilon_0)>0$; and (ii) $\omega(\epsilon_0)=0$.
For case (i), let $\psi(a,b)=a\Phi(-b)$ for $a>0$ and $b\in\mathbb{R}$.
By the concavity of $\omega(\cdot)$, $g((1-\lambda)\epsilon_0+\lambda\epsilon_1)\ge ((1-\lambda)\omega(\epsilon_0)+\lambda\omega(\epsilon_1))\Phi(-(1-\lambda)\epsilon_0/\sigma-\lambda\epsilon_1/\sigma)=\psi((1-\lambda)\omega(\epsilon_0)+\lambda\omega(\epsilon_1),(1-\lambda)\epsilon_0/\sigma+\lambda\epsilon_1/\sigma)$.
Furthermore, Lemma \ref{lemma:quasi-concave} implies that
$\psi((1-\lambda)\omega(\epsilon_0)+\lambda\omega(\epsilon_1),(1-\lambda)\epsilon_0/\sigma+\lambda\epsilon_1/\sigma)>\min\{\psi(\omega(\epsilon_0),\epsilon_0/\sigma),\psi(\omega(\epsilon_1),\epsilon_1/\sigma)\}=\min\{g(\epsilon_0),g(\epsilon_1)\}$ for all $\lambda\in(0,1)$. For case (ii), we must have $\epsilon_0=0$ and $\omega(\epsilon)>0$ for all $\epsilon>0$: if $\omega(\epsilon')=0$ for some $\epsilon'>0$, then $\omega(\epsilon)=0$ for all $\epsilon\ge 0$ since $\omega(\cdot)$ is nonnegative, nondecreasing, and concave, but this contradicts the assumption that $L(\theta)>0$ for some $\theta\in\Theta$ in Assumption \ref{assumption:problem}.
As a result, $g((1-\lambda)\epsilon_0+\lambda\epsilon_1)>0=g(\epsilon_0)$ for all $\lambda\in (0,1)$.

Next, I show that $\epsilon^*>0$ if and only if $\sigma\omega'(0)>2\phi(0)\omega(0)$.
The ``if'' part immediately follows, since the right derivative of $g(\cdot)$ at $0$ is positive if $\sigma\omega'(0)>2\phi(0)\omega(0)$.
To show the ``only if'' part, suppose $\sigma\omega'(0)\le2\phi(0)\omega(0)$.
If $\omega'(0)=0$, then $\omega(\cdot)$ is constant, since $\omega(\cdot)$ is nondecreasing and concave. It follows that $g(\cdot)$ attains its unique maximum over $[0,\tau^*\sigma]$ at $0$.
If $\omega'(0)>0$, then define $f(\epsilon)=(\omega(0)+\omega'(0)\epsilon)\Phi(-\epsilon/\sigma)$ for $\epsilon\in\mathbb{R}$.
Since $\omega(\cdot)$ is concave,
$f(\epsilon)=(\omega(0)+\omega'(0)\epsilon)\Phi(-\epsilon/\sigma)\ge \omega(\epsilon)\Phi(-\epsilon/\sigma)=g(\epsilon)$
for all $\epsilon\ge 0$.
The inequality holds with equality when $\epsilon=0$, so to show that $0\in\arg\max_{\epsilon\in [0,\tau^*\sigma]}g(\epsilon)$, it suffices to show that $0\in\arg\max_{\epsilon\in [0,\tau^*\sigma]}f(\epsilon)$.
Since $\sigma\le2\phi(0)\frac{\omega(0)}{\omega'(0)}$, it follows from Lemma \ref{lemma:unimodal2} that $f(\epsilon)$ is decreasing on $[c,\infty)$ for some $c\le 0$.
Therefore, $0\in\arg\max_{\epsilon\in [0,\tau^*\sigma]}f(\epsilon)$.
\qed

\paragraph{Proof of Statement \ref{lemma:hardest-1d:sol}.}
Pick any $\theta_{\epsilon^*}\in\Theta$ that attains the modulus of continuity at $\epsilon^*$.
By the definition of $\bar I(\cdot)$ in Section \ref{section:mmr},
it is straightforward to see that 
$\boldsymbol{m}(\theta_{\epsilon^*})\in \arg\max_{\boldsymbol{\mu}\in{\cal M}:\|\boldsymbol{\mu}\|\le \epsilon^*}\bar I(\boldsymbol{\mu})$.
Below, I first show that $\|\boldsymbol{\mu}_{\epsilon^*}\|=\epsilon^*$ for any $\boldsymbol{\mu}_{\epsilon^*}\in\arg\max_{\boldsymbol{\mu}\in{\cal M}:\|\boldsymbol{\mu}\|\le \epsilon^*}\bar I(\boldsymbol{\mu})$, which implies that $\|\boldsymbol{m}(\theta_{\epsilon^*})\|=\epsilon^*$ and that
${\cal R}([-\theta_{\epsilon^*},\theta_{\epsilon^*}])=L(\theta_{\epsilon^*})\Phi\left(-\|\boldsymbol m(\theta_{\epsilon^*})\|/\sigma\right)=\omega(\epsilon^*)\Phi\left(-\epsilon^*/\sigma\right)=\sup_{\bar\theta\in\Theta:L(\bar\theta)\ge 0}{\cal R}([-\bar \theta,\bar \theta])$,
where the first and last equalities hold by Lemma \ref{lemma:one-dim0} and Statement \ref{lemma:hardest-1d:risk}, respectively.

Pick any $\boldsymbol{\mu}_{\epsilon^*}\in\arg\max_{\boldsymbol{\mu}\in{\cal M}:\|\boldsymbol{\mu}\|\le \epsilon^*}\bar I(\boldsymbol{\mu})$.
Then there exists a sequence $\{\theta_k\}_{k=1}^\infty$ in $\Theta$ such that $\boldsymbol{m}(\theta_k)=\boldsymbol{\mu}_{\epsilon^*}$ for all $k$ and that $\lim_{k\rightarrow\infty}L(\theta_k)=\bar I(\boldsymbol{\mu}_{\epsilon^*})=\omega(\epsilon^*)$.
By definition and Lemma \ref{lemma:one-dim0},
$$
\sup_{\bar\theta\in\Theta:L(\bar\theta)\ge 0}{\cal R}([-\bar \theta,\bar \theta])\ge {\cal R}([-\theta_k,\theta_k])=L(\theta_k)\Phi\left(-\|\boldsymbol{m}(\theta_k)\|/\sigma\right)=L(\theta_k)\Phi\left(-\|\boldsymbol{\mu}_{\epsilon^*}\|/\sigma\right).
$$
Taking the limit as $k\rightarrow\infty$ gives
$\sup_{\bar\theta\in\Theta:L(\bar\theta)\ge 0}{\cal R}([-\bar \theta,\bar \theta])\ge \omega(\epsilon^*)\Phi\left(-\|\boldsymbol{\mu}_{\epsilon^*}\|/\sigma\right)\ge \omega(\epsilon^*)\Phi\left(-\epsilon^*/\sigma\right)$.
Since $\sup_{\bar\theta\in\Theta:L(\bar\theta)\ge 0}{\cal R}([-\bar \theta,\bar \theta])= \omega(\epsilon^*)\Phi(-\epsilon^*/\sigma)$ by Statement \ref{lemma:hardest-1d:risk}, the inequalities hold with equality, so that $\omega(\epsilon^*)\Phi\left(-\|\boldsymbol{\mu}_{\epsilon^*}\|/\sigma\right)= \omega(\epsilon^*)\Phi\left(-\epsilon^*/\sigma\right)$.
Note that $\omega(\epsilon^*)>0$, since if
$\omega(\epsilon^*)=0$, then
$\omega(\epsilon^*)\Phi\left(-\epsilon^*/\sigma\right)=0\le \omega(\epsilon)\Phi\left(-\epsilon/\sigma\right)$ for all $\epsilon\ge 0$, which contradicts the fact that $\epsilon^*$ is a unique maximizer of $\omega(\epsilon)\Phi\left(-\epsilon/\sigma\right)$ over $[0,\tau^*\sigma]$.
It follows that $\Phi\left(-\|\boldsymbol{\mu}_{\epsilon^*}\|/\sigma\right)= \Phi\left(-\epsilon^*/\sigma\right)$ and hence
$\|\boldsymbol{\mu}_{\epsilon^*}\|=\epsilon^*$.

	Next, I show that $\arg\max_{\boldsymbol{\mu}\in{\cal M}:\|\boldsymbol{\mu}\|\le \epsilon^*}\bar I(\boldsymbol{\mu})$ is singleton, which implies that $\boldsymbol{m}(\theta_{\epsilon^*})$ does not depend on the choice of $\theta_{\epsilon^*}$ among potentially multiple $\theta$'s that attain the modulus of continuity at $\epsilon^*$.
    Pick any $\boldsymbol{\mu}_{\epsilon^*}, \tilde{\boldsymbol{\mu}}_{\epsilon^*}\in\arg\max_{\boldsymbol{\mu}\in{\cal M}:\|\boldsymbol{\mu}\|\le \epsilon^*}\bar I(\boldsymbol{\mu})$, and
	suppose that $\boldsymbol{\mu}_{\epsilon^*}\neq \tilde{\boldsymbol{\mu}}_{\epsilon^*}$. Let $\boldsymbol{\mu}_\lambda=\lambda \boldsymbol{\mu}_{\epsilon^*}+(1-\lambda)\tilde{\boldsymbol{\mu}}_{\epsilon^*}\in {\cal M}$ for some $\lambda\in(0,1)$.
    Then $\boldsymbol{\mu}_\lambda$ is an interior point of $\{\boldsymbol{\mu}\in\mathbb{R}^n:\|\boldsymbol{\mu}\|\le \epsilon^*\}$, since $\boldsymbol{\mu}_{\epsilon^*}, \tilde{\boldsymbol{\mu}}_{\epsilon^*}\in \{\boldsymbol{\mu}\in\mathbb{R}^n:\|\boldsymbol{\mu}\|\le \epsilon^*\}$ and $\{\boldsymbol{\mu}\in\mathbb{R}^n:\|\boldsymbol{\mu}\|\le \epsilon^*\}$ is strictly convex. It follows that $\|\boldsymbol{\mu}_\lambda\|<\epsilon^*$.
	Furthermore, by the concavity of $\bar I(\cdot)$ shown in Lemma \ref{lemma:superdifferential}, $\bar I(\boldsymbol{\mu}_\lambda)\ge \lambda \bar I(\boldsymbol{\mu}_{\epsilon^*})+(1-\lambda)\bar I(\tilde{\boldsymbol{\mu}}_{\epsilon^*})=\max_{\boldsymbol{\mu}\in{\cal M}:\|\boldsymbol{\mu}\|\le \epsilon^*}\bar I(\boldsymbol{\mu})$.
	Thus, $\boldsymbol{\mu}_\lambda\in\arg\max_{\boldsymbol{\mu}\in{\cal M}:\|\boldsymbol{\mu}\|\le \epsilon^*}\bar I(\boldsymbol{\mu})$, but then it must be the case that $\|\boldsymbol{\mu}_\lambda\|=\epsilon^*$, which is a contradiction.
\qed

\subsection{Proof of Lemma \ref{theorem:minimax_nonrandom}}\label{proof:theorem:minimax_nonrandom}

Note first that since $\bar I(\boldsymbol{\mu}_{\epsilon^*})=\omega(\epsilon^*)\ge 0$, $\bar I(\boldsymbol{\mu}_{\epsilon^*})\Phi\left(-\frac{(\boldsymbol{\mu}_{\epsilon^*})'(\boldsymbol{\mu}_{\epsilon^*})}{\sigma\|\boldsymbol{\mu}_{\epsilon^*}\|}\right)\ge \bar I(\boldsymbol{\mu})\Phi\left(-\frac{(\boldsymbol{\mu}_{\epsilon^*})'\boldsymbol{\mu}}{\sigma\|\boldsymbol{\mu}_{\epsilon^*}\|}\right)$ for any $\boldsymbol{\mu}\in {\cal M}$ such that $\bar I(\boldsymbol{\mu})\le 0$.
It then suffices to show that $\bar I(\boldsymbol{\mu}_{\epsilon^*})\Phi\left(-\frac{(\boldsymbol{\mu}_{\epsilon^*})'(\boldsymbol{\mu}_{\epsilon^*})}{\sigma\|\boldsymbol{\mu}_{\epsilon^*}\|}\right)\ge \bar I(\boldsymbol{\mu})\Phi\left(-\frac{(\boldsymbol{\mu}_{\epsilon^*})'\boldsymbol{\mu}}{\sigma\|\boldsymbol{\mu}_{\epsilon^*}\|}\right)$ for any $\boldsymbol{\mu}\in \tilde{\cal{M}}_1$, where
$\tilde{\cal{M}}_1=\{\boldsymbol{\mu}\in {\cal M}:\bar I(\boldsymbol{\mu})> 0\}$.
Let
$\tilde{\cal{M}}_2=\{\boldsymbol{\mu}\in {\cal M}:\bar I(\boldsymbol{\mu})> 0,(\boldsymbol{\mu}_{\epsilon^*})'\boldsymbol{\mu}>0\}$.
$\tilde{\cal M}_1$ and $\tilde{\cal M}_2$ are nonempty, since $\boldsymbol{\mu}_{\epsilon^*}\in \tilde{\cal M}_2$: in particular, $\bar I(\boldsymbol{\mu}_{\epsilon^*})=\omega(\epsilon^*)>0$, since if
$\omega(\epsilon^*)=0$, then
$\omega(\epsilon^*)\Phi\left(-\epsilon^*/\sigma\right)=0\le \omega(\epsilon)\Phi\left(-\epsilon/\sigma\right)$ for all $\epsilon\ge 0$, which contradicts the fact that $\epsilon^*$ is a unique maximizer of $\omega(\epsilon)\Phi\left(-\epsilon/\sigma\right)$ over $[0,\tau^*\sigma]$.
Also, $\tilde{\cal M}_1$ and $\tilde{\cal M}_2$ are convex, since the functions $\bar I(\boldsymbol{\mu})$ and $\boldsymbol{\mu}\mapsto (\boldsymbol{\mu}_{\epsilon^*})'\boldsymbol{\mu}$ are concave on a convex set ${\cal M}$; see Lemma \ref{lemma:superdifferential} for the concavity of $\bar I(\cdot)$.
Define functions $f:\tilde{\cal M}_1\rightarrow \mathbb{R}$ and $g:\tilde{\cal M}_1\times \tilde{\cal M}_2\rightarrow \mathbb{R}$ as
\begin{align*}
    f(\boldsymbol{\mu})&=\bar I(\boldsymbol{\mu})\Phi\left(-\frac{\|\boldsymbol{\mu}\|}{\sigma}\right),~~
    g(\boldsymbol{\mu},\tilde{\boldsymbol{\mu}})=\bar I(\boldsymbol{\mu})\Phi\left(-\frac{\tilde{\boldsymbol{\mu}}'\boldsymbol{\mu}}{\sigma\|\tilde{\boldsymbol{\mu}}\|}\right).
\end{align*}
As $\boldsymbol{0}\notin \tilde{\cal M}_2$, $g(\boldsymbol{\mu},\tilde{\boldsymbol{\mu}})$ is well defined for every $(\boldsymbol{\mu},\tilde{\boldsymbol{\mu}})\in\tilde{\cal M}_1\times \tilde{\cal M}_2$.

Below, I show that
$g(\boldsymbol{\mu}_{\epsilon^*},\boldsymbol{\mu}_{\epsilon^*})\ge g(\boldsymbol{\mu},\boldsymbol{\mu}_{\epsilon^*})$ for all $\boldsymbol{\mu}\in \tilde{\cal M}_1$.
The proof consists of four steps.

\begin{step}\label{step:g-min}
    $g(\boldsymbol{\mu},\tilde{\boldsymbol{\mu}})\ge g(\boldsymbol{\mu},\boldsymbol{\mu})$ for all $(\boldsymbol{\mu},\tilde{\boldsymbol{\mu}})\in\tilde{\cal M}_2\times \tilde{\cal M}_2$.
\end{step}
\begin{proof}
    Pick any $\boldsymbol{\mu}\in\tilde{\cal M}_2$. I show that $\boldsymbol{\mu}\in\arg\min_{\tilde{\boldsymbol{\mu}}\in\tilde{\cal M}_2}g(\boldsymbol{\mu},\tilde{\boldsymbol{\mu}})$.
    Since $\bar I(\boldsymbol{\mu})> 0$, $\arg\min_{\tilde{\boldsymbol{\mu}}\in\tilde{\cal M}_2}g(\boldsymbol{\mu},\tilde{\boldsymbol{\mu}})=\argmax_{\tilde{\boldsymbol{\mu}}\in\tilde{\cal M}_2}\left(\frac{\tilde{\boldsymbol{\mu}}}{\|\tilde{\boldsymbol{\mu}}\|}\right)'\boldsymbol{\mu}$, so it suffices to show that $\boldsymbol \mu \in \argmax_{\tilde{\boldsymbol{\mu}}\in\tilde{\cal M}_2}\left(\frac{\tilde{\boldsymbol{\mu}}}{\|\tilde{\boldsymbol{\mu}}\|}\right)'\boldsymbol{\mu}$.
    It is straightforward to see that $\boldsymbol \mu \in \argmax_{\tilde{\boldsymbol{\mu}}\in\mathbb{R}^n:\|\tilde{\boldsymbol{\mu}}\|>0}\left(\frac{\tilde{\boldsymbol{\mu}}}{\|\tilde{\boldsymbol{\mu}}\|}\right)'\boldsymbol{\mu}$: by the Cauchy-Schwarz inequality, $\left(\frac{\tilde{\boldsymbol{\mu}}}{\|\tilde{\boldsymbol{\mu}}\|}\right)'\boldsymbol{\mu}\le \frac{\|\tilde{\boldsymbol{\mu}}\|}{\|\tilde{\boldsymbol{\mu}}\|}\|\boldsymbol{\mu}\|=\|\boldsymbol{\mu}\|$ for any nonzero $\tilde{\boldsymbol{\mu}}\in\mathbb{R}^n$, but the inequality holds with equality when $\tilde{\boldsymbol \mu}=\boldsymbol \mu$.
    Since $\boldsymbol{\mu}\in\tilde{\cal M}_2\subset \{\tilde{\boldsymbol{\mu}}\in\mathbb{R}^n:\|\tilde{\boldsymbol{\mu}}\|>0\}$, it follows that
    $\boldsymbol \mu \in \argmax_{\tilde{\boldsymbol{\mu}}\in\tilde{\cal M}_2}\left(\frac{\tilde{\boldsymbol{\mu}}}{\|\tilde{\boldsymbol{\mu}}\|}\right)'\boldsymbol{\mu}$.
\end{proof}

\begin{step}\label{step:f-max}
$f(\boldsymbol{\mu}_{\epsilon^*})\ge f(\boldsymbol{\mu})$ for all $\boldsymbol{\mu}\in\tilde{\cal M}_1$.
\end{step}
\begin{proof}
    Pick any $\boldsymbol{\mu}\in \tilde{\cal M}_1$.
    Since $f(\boldsymbol{\mu}_{\epsilon^*})=\omega(\epsilon^*)\Phi(-\epsilon^*/\sigma)$, I show that $f(\boldsymbol{\mu})\le\omega(\epsilon^*)\Phi(-\epsilon^*/\sigma)$.
    First, if $\|\boldsymbol{\mu}\|\le \tau^*\sigma$, then $f(\boldsymbol{\mu})\le \omega(\|\boldsymbol{\mu}\|)\Phi(-\|\boldsymbol{\mu}\|/\sigma)\le\omega(\epsilon^*)\Phi(-\epsilon^*/\sigma)$, where the first inequality holds by the definitions of $\bar I(\cdot)$ and $\omega(\cdot)$ and the second inequality by the fact that $\epsilon^*$ is a solution to $\max_{\epsilon\in [0,\tau^*\sigma]}\omega(\epsilon)\Phi(-\epsilon/\sigma)$.

    Next, suppose that $\|\boldsymbol{\mu}\|>\tau^*\sigma$.
    Since $\boldsymbol{\mu}\in \tilde{\cal M}_1$, there exists $\bar\theta$ in $\Theta$ such that $\boldsymbol{m}(\bar\theta)=\boldsymbol{\mu}$ and that $L(\bar\theta)>0$.
    Let $\delta(\boldsymbol{Y})=\mathbf{1}\{\boldsymbol{\mu}'\boldsymbol{Y}\ge 0\}$, whose regret under $\bar\theta$ is given by
    $R(\delta,\bar\theta)=L(\bar\theta)\Phi(-\|\boldsymbol{\mu}\|/\sigma)$.
    By Lemma \ref{lemma:one-dim0}, $\delta$ is minimax regret for the one-dimensional subproblem $[-\bar\theta,\bar\theta]$, so that
    $L(\bar\theta)\Phi(-\|\boldsymbol{\mu}\|/\sigma)=R(\delta,\bar\theta)\le \sup_{\theta\in [-\bar\theta,\bar\theta]}R(\delta,\theta)=\tau^*\sigma L(\bar\theta)\Phi(-\tau^*)/\|\boldsymbol{\mu}\|$.
    This inequality implies that $\Phi(-\|\boldsymbol{\mu}\|/\sigma)\le \tau^*\sigma\Phi(-\tau^*)/\|\boldsymbol{\mu}\|$, since $L(\bar\theta)>0$.
    Therefore,
    \begin{align*}
        &f(\boldsymbol{\mu})=\bar I(\boldsymbol{\mu})\Phi(-\|\boldsymbol{\mu}\|/\sigma)\le\tau^*\sigma\frac{\bar I(\boldsymbol{\mu})}{\|\boldsymbol{\mu}\|}\Phi(-\tau^*)\le \tau^*\sigma\frac{\omega(\|\boldsymbol{\mu}\|)}{\|\boldsymbol{\mu}\|}\Phi(-\tau^*)\\
        &\le\omega(\tau^*\sigma)\Phi(-\tau^*)\le \omega(\epsilon^*)\Phi(-\epsilon^*/\sigma),
    \end{align*}
    where the second last inequality follows since $\frac{\omega(\epsilon)}{\epsilon}$ is nonincreasing on $(0,\infty)$ by the concavity of $\omega(\epsilon)$, and the last inequality by the definition of $\epsilon^*$.
\end{proof}

\begin{step}\label{step:g-quasi}
    $g(\cdot,\tilde{\boldsymbol{\mu}})$ is strictly quasi-concave on $\tilde{\cal M}_1$ for every $\tilde{\boldsymbol{\mu}}\in \tilde{\cal M}_2$.
\end{step}
\begin{proof}
Pick any $\tilde{\boldsymbol{\mu}}\in \tilde{\cal M}_2$, $\boldsymbol{\mu}_0,\boldsymbol{\mu}_1\in \tilde{\cal M}_1$, and $\lambda\in (0,1)$.
Let $\boldsymbol{\mu}_\lambda=(1-\lambda)\boldsymbol{\mu}_0+\lambda \boldsymbol{\mu}_1$.
By the concavity of $\bar I(\cdot)$, we have
$g(\boldsymbol{\mu}_\lambda,\tilde{\boldsymbol{\mu}})\ge \left((1-\lambda)\bar I(\boldsymbol{\mu}_0)+\lambda \bar I(\boldsymbol{\mu}_1)\right)\Phi\left(-\left((1-\lambda)\frac{\tilde{\boldsymbol{\mu}}'\boldsymbol{\mu}_0}{\sigma\|\tilde{\boldsymbol{\mu}}\|}+\lambda\frac{\tilde{\boldsymbol{\mu}}'\boldsymbol{\mu}_1}{\sigma\|\tilde{\boldsymbol{\mu}}\|}\right)\right)$.
By Lemma \ref{lemma:quasi-concave}, the function $(a,b)\mapsto a\Phi(-b)$ is strictly quasi-concave on $(0,\infty)\times \mathbb{R}$.
It follows that
$g(\boldsymbol{\mu}_\lambda,\tilde{\boldsymbol{\mu}})> \min\left\{\bar I(\boldsymbol{\mu}_0)\Phi\left(-\frac{\tilde{\boldsymbol{\mu}}'\boldsymbol{\mu}_0}{\sigma\|\tilde{\boldsymbol{\mu}}\|}\right),\bar I(\boldsymbol{\mu}_1)\Phi\left(-\frac{\tilde{\boldsymbol{\mu}}'\boldsymbol{\mu}_1}{\sigma\|\tilde{\boldsymbol{\mu}}\|}\right)\right\}=\min\{g(\boldsymbol{\mu}_0,\tilde{\boldsymbol{\mu}}),g(\boldsymbol{\mu}_1,\tilde{\boldsymbol{\mu}})\}$.
\end{proof}

\begin{step}
    $g(\boldsymbol{\mu}_{\epsilon^*},\boldsymbol{\mu}_{\epsilon^*})\ge g(\boldsymbol{\mu},\boldsymbol{\mu}_{\epsilon^*})$ for all $\boldsymbol{\mu}\in \tilde{\cal M}_1$.
\end{step}
\begin{proof}
    Suppose to the contrary that $g(\boldsymbol{\mu}^*,\boldsymbol{\mu}_{\epsilon^*})>g(\boldsymbol{\mu}_{\epsilon^*},\boldsymbol{\mu}_{\epsilon^*})$ for some $\boldsymbol{\mu}^*\in \tilde{\cal M}_1$.
    Let $\boldsymbol{\mu}_\lambda=(1-\lambda)\boldsymbol{\mu}_{\epsilon^*}+\lambda\boldsymbol{\mu}^*$ for $\lambda\in (0,1)$.
    By the convexity of $\tilde{\cal M}_1$ and the continuity of the function $\boldsymbol{\mu}\mapsto (\boldsymbol{\mu}_{\epsilon^*})'\boldsymbol{\mu}$ at $\boldsymbol{\mu}=\boldsymbol{\mu}_{\epsilon^*}$, $\boldsymbol{\mu}_\lambda\in \tilde {\cal M}_2$ for any $\lambda\in (0,1)$ sufficiently close to $0$.
    Also, by the continuity of the function $\tilde{\boldsymbol{\mu}}\mapsto g(\boldsymbol{\mu}^*,\tilde{\boldsymbol{\mu}})$ at $\tilde{\boldsymbol{\mu}}=\boldsymbol{\mu}_{\epsilon^*}$ and since $g(\boldsymbol{\mu}^*,\boldsymbol{\mu}_{\epsilon^*})>g(\boldsymbol{\mu}_{\epsilon^*},\boldsymbol{\mu}_{\epsilon^*})$,
    there exists $\lambda\in (0,1)$ such that $\boldsymbol{\mu}_\lambda\in \tilde {\cal M}_2$ and $g(\boldsymbol{\mu}^*,\boldsymbol{\mu}_\lambda)> g(\boldsymbol{\mu}_{\epsilon^*},\boldsymbol{\mu}_{\epsilon^*})$.
    Pick such a $\lambda\in (0,1)$.
    By Step \ref{step:g-min}, $g(\boldsymbol{\mu}_{\epsilon^*},\boldsymbol{\mu}_\lambda)\ge g(\boldsymbol{\mu}_{\epsilon^*},\boldsymbol{\mu}_{\epsilon^*})$, and therefore $\min\{g(\boldsymbol{\mu}^*,\boldsymbol{\mu}_\lambda),g(\boldsymbol{\mu}_{\epsilon^*},\boldsymbol{\mu}_\lambda)\}\ge g(\boldsymbol{\mu}_{\epsilon^*},\boldsymbol{\mu}_{\epsilon^*})$.
    Furthermore, by Step \ref{step:f-max}, we have $g(\boldsymbol{\mu}_{\epsilon^*},\boldsymbol{\mu}_{\epsilon^*})=f(\boldsymbol{\mu}_{\epsilon^*})\ge f(\boldsymbol{\mu}_\lambda) =g(\boldsymbol{\mu}_\lambda,\boldsymbol{\mu}_\lambda)$.
    It follows that $\min\{g(\boldsymbol{\mu}^*,\boldsymbol{\mu}_\lambda),g(\boldsymbol{\mu}_{\epsilon^*},\boldsymbol{\mu}_\lambda)\}\ge g(\boldsymbol{\mu}_\lambda,\boldsymbol{\mu}_\lambda)$, which is a contradiction to the strict quasi-concavity of $g(\cdot,\boldsymbol{\mu}_\lambda)$ shown by Step \ref{step:g-quasi}.
\end{proof}

\subsection{Proof of Theorem \ref{theorem:minimax-bias}}\label{proof:low}

    Pick any $\epsilon_0\ge 0$.
    Then, $\epsilon_0\in\arg\max_{\epsilon\ge 0}(\omega(\epsilon)-\epsilon\sqrt{V_{\epsilon_0}}/\sigma)$, where $V_{\epsilon_0}=(\sigma\omega'(\epsilon_0))^2$, since the function $\epsilon\mapsto\omega(\epsilon)-\epsilon\sqrt{V_{\epsilon_0}}/\sigma$ is concave on $[0,\infty)$ and its derivative at $\epsilon_0$, $\omega'(\epsilon_0)-\sqrt{V_{\epsilon_0}}/\sigma$, is zero.
    The statements in Theorem \ref{theorem:minimax-bias} then follow from Theorem \ref{theorem:low} with $V=V_{\epsilon_0}$.
    \qed

\begin{theorem}[Optimal Bias-Variance Tradeoff] \label{theorem:low}
    Suppose that Assumption \ref{assumption:problem} holds and that $\omega'(0)>0$. Then, the following holds.
    \begin{enumerate}[label=(\roman*)]
        \item\label{theorem:low-bound} Suppose that $\omega(\cdot)$ is differentiable on $(0,\infty)$.
        Let $V\ge 0$ and suppose that there exists $\epsilon_V\in\arg\max_{\epsilon\ge 0}(\omega(\epsilon)-\epsilon\sqrt{V}/\sigma)$ and that $\theta_{\epsilon_V}\in\Theta$ attains the modulus of continuity at $\epsilon_V$ with $\|\boldsymbol m(\theta_{\epsilon_V})\|=\epsilon_V$.
        Then, $\hat L_{\epsilon_V}$ has constant variance of $V$ or less:
        \begin{align*}
            \Var(\hat L_{\epsilon_V},\theta)=\begin{cases}
            (\sigma\omega'(0))^2 \le V & ~~\text{if } \epsilon_V=0,\\
            (\sigma\omega'(\epsilon_V))^2=V & ~~\text{if } \epsilon_V>0.
        \end{cases}
        \end{align*}
        Furthermore, $\hat L_{\epsilon_V}$ minimizes the maximum squared bias among all estimators with the maximum variance less than or equal to $V$:
        $$\sup_{\theta\in\Theta}{\rm Bias}(\hat L_{\epsilon_V},\theta)^2=\inf_{\tilde L\in {\cal C}(V)}\sup_{\theta\in\Theta}{\rm Bias}(\tilde L,\theta)^2=(\omega(\epsilon_V)-\epsilon_V\omega'(\epsilon_V))^2.$$
        \item\label{theorem:low-unbound} $\hat L_{0}$ minimizes the maximum squared bias among all estimators:
        $$\sup_{\theta\in\Theta}{\rm Bias}(\hat L_{0},\theta)^2=\inf_{\tilde L\in {\cal C}}\sup_{\theta\in\Theta}{\rm Bias}(\tilde L,\theta)^2=\omega(0)^2.$$
    \end{enumerate}
\end{theorem}

\paragraph{Proof of Statement \ref{theorem:low-bound}.}
Clearly, $\hat L_{\epsilon_V}$ has constant variance of $(\sigma\omega'(\epsilon_V))^2$.
Consider the case in which $\epsilon_V=0$.
Since $0\in\arg\max_{\epsilon\ge 0}(\omega(\epsilon)-\epsilon\sqrt{V}/\sigma)$, the right derivative of $\omega(\epsilon)-\epsilon\sqrt{V}/\sigma$ at $0$ must be nonpositive, which implies that $(\sigma\omega'(0))^2\le V$.
By Statement \ref{theorem:low-unbound}, which is proven below, $\sup_{\theta\in\Theta}{\rm Bias}(\hat L_{0},\theta)^2=\inf_{\tilde L\in {\cal C}(V)}\sup_{\theta\in\Theta}{\rm Bias}(\tilde L,\theta)^2=\omega(0)^2$.

Next, consider the case in which $\epsilon_V>0$.
Since $\epsilon_V\in\arg\max_{\epsilon\ge 0}(\omega(\epsilon_V)-\epsilon\sqrt{V}/\sigma)$, the derivative of $\omega(\epsilon)-\epsilon\sqrt{V}/\sigma$ at $0$ must be zero, which implies that $(\sigma\omega'(\epsilon_V))^2=V$.
By Lemma \ref{lemma:max_bias}, $\hat L_{\epsilon_V}$ achieves the maximum squared bias over $\Theta$ at $-\theta_{\epsilon_V}$ and $\theta_{\epsilon_V}$, so we obtain
\begin{align*}
\sup_{\theta\in\Theta}{\rm Bias}(\hat L_{\epsilon_V},\theta)^2={\rm Bias}(\hat L_{\epsilon_V},-\theta_{\epsilon_V})^2={\rm Bias}(\hat L_{\epsilon_V},\theta_{\epsilon_V})^2=\left(\omega(\epsilon_V)-\omega'(\epsilon_V)\epsilon_V\right)^2.
\end{align*}
Hence, it suffices to show that $\inf_{\tilde L\in {\cal C}(V)}\sup_{\theta\in\Theta}{\rm Bias}(\tilde L,\theta)^2\ge \left(\omega(\epsilon_V)-\omega'(\epsilon_V)\epsilon_V\right)^2$.
Since the one-dimensional subfamily $[-\theta_{\epsilon_V},\theta_{\epsilon_V}]$ is a subset of $\Theta$,
$$
\inf_{\tilde L\in {\cal C}(V)}\sup_{\theta\in\Theta}{\rm Bias}(\tilde L,\theta)^2\ge \inf_{\tilde L\in {\cal C}(V)}\sup_{\theta\in[-\theta_{\epsilon_V},\theta_{\epsilon_V}]}{\rm Bias}(\tilde L,\theta)^2.
$$
I solve the one-dimensional subproblem on the right-hand side and show that the right-hand side is equal to $\left(\omega(\epsilon_V)-\omega'(\epsilon_V)\epsilon_V\right)^2$.
Let $T(\boldsymbol{Y})=\frac{L(\theta_{\epsilon_V})}{\|\boldsymbol{m}(\theta_{\epsilon_V})\|^2}\boldsymbol{m}(\theta_{\epsilon_V})'\boldsymbol{Y}$.
Note that $L(\theta_{\epsilon_V})=\omega(\epsilon_V)>0$: if $\omega(\epsilon_V)=0$, then $\omega(\epsilon)=0$ for all $\epsilon\ge 0$ since $\omega(\cdot)$ is nonnegative, nondecreasing, and concave, but this contradicts the assumption that $L(\theta)>0$ for some $\theta\in\Theta$ in Assumption \ref{assumption:problem}.
Following the argument used in the proof of Lemma \ref{lemma:one-dim0} in Appendix \ref{proof:lemma:one-dim}, $T(\boldsymbol{Y})$ is shown to be a sufficient statistic for $\theta$ within the subfamily $[-\theta_{\epsilon_V},\theta_{\epsilon_V}]$.
Let ${\cal C}_T$ denote the class of estimators that depend on $\boldsymbol{Y}$ only through $T(\boldsymbol{Y})$.
By a Rao-Blackwell sufficiency argument, one can restrict attention to ${\cal C}_T$. Specifically, for any estimator $\tilde L\in {\cal C}$, let $\tilde L_T(\boldsymbol{Y})=\mathbb{E}[\tilde L(\boldsymbol{Y})|T(\boldsymbol{Y})]$, where the conditional expectation does not depend on $\theta$ since $T(\boldsymbol{Y})$ is sufficient.
Then, it is clear that ${\rm Bias}(\tilde L_T,\theta)={\rm Bias}(\tilde L,\theta)$ and
$\Var(\tilde L_T,\theta)\le \Var(\tilde L,\theta)$ for all $\theta\in [-\theta_{\epsilon_V},\theta_{\epsilon_V}]$.
Therefore,
$$\inf_{\tilde L\in {\cal C}(V)}\sup_{\theta\in[-\theta_{\epsilon_V},\theta_{\epsilon_V}]}{\rm Bias}(\tilde L,\theta)^2=\inf_{\tilde L\in {\cal C}(V)\cap {\cal C}_T}\sup_{\theta\in[-\theta_{\epsilon_V},\theta_{\epsilon_V}]}{\rm Bias}(\tilde L,\theta)^2.
$$
Now, observe that $T(\boldsymbol{Y})\sim {\cal N}\left(L(\theta),\frac{\omega(\epsilon_V)^2\sigma^2}{\epsilon_V^2}\right)$ under $\theta\in [-\theta_{\epsilon_V},\theta_{\epsilon_V}]$.
The minimax problem on the right-hand side is equivalent to the problem of minimizing the maximum squared bias of estimating the mean $L(\theta)\in [-L(\theta_{\epsilon_V}),L(\theta_{\epsilon_V})]$ of the univariate normal sample $T\sim {\cal N}\left(L(\theta),\frac{\omega(\epsilon_V)^2\sigma^2}{\epsilon_V^2}\right)$ subject to a bound on maximum variance.
It follows from Theorem 1 of \citeappendix{low1995tradeoff} that
\begin{align*}
&\inf_{\tilde L\in {\cal C}(V)\cap {\cal C}_T}\sup_{\theta\in[-\theta_{\epsilon_V},\theta_{\epsilon_V}]}{\rm Bias}(\tilde L,\theta)^2=\left(1-\min\left\{\frac{\sqrt{V}}{\omega(\epsilon_V)\sigma/\epsilon_V},1\right\}\right)^2L(\theta_{\epsilon_V})^2\\
&=\left(1-\min\left\{\frac{\omega'(\epsilon_V)\epsilon_V}{\omega(\epsilon_V)},1\right\}\right)^2\omega(\epsilon_V)^2=\left(\omega(\epsilon_V)-\omega'(\epsilon_V)\epsilon_V\right)^2,
\end{align*}
where the last equality holds since $\omega(\epsilon_V)\ge \omega'(\epsilon_V)\epsilon_V$ by the concavity of $\omega(\cdot)$.
\qed

\paragraph{Proof of Statement \ref{theorem:low-unbound}.}
I use the notation introduced in Appendix \ref{section:proof}.
By Lemma \ref{lemma:superdifferential}, there exists a unique solution $\boldsymbol d^*$ to $\min_{\boldsymbol d\in \left.\partial\bar I\right|_{\bar{\cal M}}(\boldsymbol{0})}\|\boldsymbol d\|$, $\boldsymbol{w}^*= \frac{\boldsymbol d^*}{\|\boldsymbol d^*\|}$, and $\omega'(0)=\|\boldsymbol{d}^*\|$.
For any $\theta\in\Theta$,
\begin{align*}
    &L(\theta)\le \bar I(\boldsymbol{m}(\theta))\le \bar I(\boldsymbol{0})+(\boldsymbol{d}^*)'\boldsymbol{m}(\theta)=\omega(0)+\omega'(0)(\boldsymbol{w}^*)'\boldsymbol{m}(\theta),
\end{align*}
where the first inequality holds by the definition of $\bar I(\cdot)$, the second inequality holds since $\boldsymbol{d}^*\in \left.\partial\bar I\right|_{\bar{\cal M}}(\boldsymbol{0})$, and the equality holds since $\omega(0)=\bar I(\boldsymbol{0})$ by definition.
Therefore,
\begin{align*}
    &\inf_{\theta\in\Theta}{\rm Bias}(\hat L_0,\theta)=\inf_{\theta\in\Theta}\left(\omega'(0)(\boldsymbol{w}^*)'\boldsymbol{m}(\theta)-L(\theta)\right)\ge -\omega(0).
\end{align*}
Since ${\rm Bias}(\hat L_0,\theta)$ is linear in $\theta$ and $\Theta$ is centrosymmetric, it follows that $\sup_{\theta\in\Theta}{\rm Bias}(\hat L_0,\theta)\le\omega(0)$.
Hence, $\sup_{\theta\in\Theta}{\rm Bias}(\hat L_0,\theta)^2\le\omega(0)^2$, which implies $\inf_{\tilde L\in {\cal C}}\sup_{\theta\in\Theta}{\rm Bias}(\tilde L,\theta)^2\le \omega(0)^2$.

In the following, I show that $\inf_{\tilde L\in {\cal C}}\sup_{\theta\in\Theta}{\rm Bias}(\tilde L,\theta)^2\ge \omega(0)^2$, which proves Statement \ref{theorem:low-unbound}.
Let $\{\theta_k\}_{k=1}^\infty$ be a sequence in $\Theta$ such that $L(\theta_k)\ge 0$ and $\boldsymbol m(\theta_k)=\boldsymbol{0}$ for all $k$ and that $\lim_{k\rightarrow \infty} L(\theta_k)=\omega(0)$.
For any estimator $\tilde L\in {\cal C}$,
${\rm Bias}(\tilde L,\theta_k)=\mathbb{E}[\tilde L(\boldsymbol{Y})]-L(\theta_k)$ and ${\rm Bias}(\tilde L,-\theta_k)=\mathbb{E}[\tilde L(\boldsymbol{Y})]+L(\theta_k)$, where $\boldsymbol{Y}\sim {\cal N}(\boldsymbol{0},\sigma^2\boldsymbol{I}_n)$.
Therefore, for any $\tilde L\in {\cal C}$,
$$
\sup_{\theta\in \{-\theta_k,\theta_k\}}{\rm Bias}(\tilde L,\theta)^2={\rm Bias}(\tilde L,-\theta_k)^2\mathbf{1}\{\mathbb{E}[\tilde L(\boldsymbol{Y})]\ge 0\}+{\rm Bias}(\tilde L,\theta_k)^2\mathbf{1}\{\mathbb{E}[\tilde L(\boldsymbol{Y})]< 0\}\ge L(\theta_k)^2.
$$
Since $\{-\theta_k,\theta_k\}\subset\Theta$, it follows that $$
\inf_{\tilde L\in {\cal C}}\sup_{\theta\in\Theta}{\rm Bias}(\tilde L,\theta)^2\ge \inf_{\tilde L\in {\cal C}}\sup_{\theta\in\{-\theta_k,\theta_k\}}{\rm Bias}(\tilde L,\theta)^2\ge L(\theta_k)^2.
$$
Taking the limit as $k\rightarrow\infty$ gives $\inf_{\tilde L\in {\cal C}}\sup_{\theta\in\Theta}{\rm Bias}(\tilde L,\theta)^2\ge \omega(0)^2$.
\qed

\subsection{Proof of Theorem \ref{theorem:donoho}}\label{proof:donoho}

For notational brevity, I use $\tilde\epsilon$ to denote $\epsilon_{\rm MSE}$.

\paragraph{Proof of Statement \ref{theorem:donoho-original}.}
The statement follows from the arguments in \cite{donoho1994}, but I provide the proof for completeness.
I first show that $\tilde\epsilon>0$.
Note that $\omega(\epsilon)>0$ for all $\epsilon>0$: if $\omega(\epsilon)=0$ for some $\epsilon>0$, then $\omega(\epsilon')=0$ for all $\epsilon'\ge 0$ since $\omega(\cdot)$ is nonnegative, nondecreasing, and concave, but this contradicts the assumption that $L(\theta)>0$ for some $\theta\in\Theta$ in Assumption \ref{assumption:problem}.
Furthermore, $\lim_{\epsilon\downarrow 0}\omega'(\epsilon)=\omega'(0)$, which follows from the fact that $\omega'(\cdot)$ is nonincreasing by the concavity of $\omega(\cdot)$ and the intermediate value property of derivatives (see, e.g., Theorem 5.12 of \citeappendix{rudin1976}).
Let $g(\epsilon)=\frac{\sigma^2}{\epsilon^2+\sigma^2}\omega(\epsilon)^2$.
By differentiating $g(\epsilon)$, we have
$g'(0)=2\omega(0)\omega'(0)\ge 0$ and $g'(\epsilon)=2\frac{\sigma^2\omega(\epsilon)^2}{\epsilon^2+\sigma^2}\left[\frac{\omega'(\epsilon)}{\omega(\epsilon)}-\frac{\epsilon}{\epsilon^2+\sigma^2}\right]$ for all $\epsilon>0$.
Since $\lim_{\epsilon\downarrow 0}\omega(\epsilon)=\omega(0)\ge 0$ and $\lim_{\epsilon\downarrow 0}\omega'(\epsilon)=\omega'(0)>0$, $\frac{\omega'(\epsilon)}{\omega(\epsilon)}-\frac{\epsilon}{\epsilon^2+\sigma^2}$ either converges to $\frac{\omega'(0)}{\omega(0)}>0$ or diverges to $\infty$.
Therefore, $g'(\epsilon)>0$ for all sufficiently small $\epsilon>0$, so that $g(\epsilon)>g(0)$ for all sufficiently small $\epsilon>0$.
Since $\tilde\epsilon\in \arg\max_{\epsilon\ge 0}g(\epsilon)$, $\tilde\epsilon>0$.

Moreover, it must be the case that $g'(\tilde\epsilon)=0$, which implies that $\frac{\omega'(\tilde \epsilon)}{\omega(\tilde\epsilon)}-\frac{\tilde\epsilon}{\tilde\epsilon^2+\sigma^2}=0$, since $\omega(\tilde\epsilon)>0$.
Hence, $\hat L_{\tilde\epsilon}(\boldsymbol{Y})=\omega'(\tilde\epsilon)\frac{\boldsymbol{m}(\theta_{\tilde\epsilon})'}{\|\boldsymbol{m}(\theta_{\tilde\epsilon})\|}\boldsymbol{Y}=\frac{\omega(\tilde\epsilon)}{\tilde\epsilon^2+\sigma^2}\boldsymbol{m}(\theta_{\tilde\epsilon})'\boldsymbol{Y}$.
To show that $\hat L_{\tilde\epsilon}$ is minimax affine MSE, note first that by Lemma \ref{lemma:max_bias}, $\hat L_{\tilde\epsilon}$ achieves the maximum squared bias at $-\theta_{\tilde\epsilon}$ and $\theta_{\tilde\epsilon}$. Since $\hat L_{\tilde\epsilon}$ has constant variance over $\Theta$, $\hat L_{\tilde\epsilon}$ achieves maximum MSE over $\Theta$ at $-\theta_{\tilde\epsilon}$ and $\theta_{\tilde\epsilon}$.
Hence,
$\sup_{\theta\in\Theta}\mathbb{E}_\theta[(\hat L_{\tilde\epsilon}(\boldsymbol{Y})-L(\theta))^2]=\sup_{\theta\in[-\theta_{\tilde\epsilon},\theta_{\tilde\epsilon}]}\mathbb{E}_\theta[(\hat L_{\tilde\epsilon}(\boldsymbol{Y})-L(\theta))^2]$, which implies $\inf_{\tilde L\in{\cal C}_{\rm affine}}\sup_{\theta\in\Theta}\mathbb{E}_\theta[(\tilde L(\boldsymbol{Y})-L(\theta))^2]\le \sup_{\theta\in[-\theta_{\tilde\epsilon},\theta_{\tilde\epsilon}]}\mathbb{E}_\theta[(\hat L_{\tilde\epsilon}(\boldsymbol{Y})-L(\theta))^2]$.
Furthermore, Lemma 1 of \cite{donoho1994} and its proof show that $\hat L_{\tilde\epsilon}(\boldsymbol{Y})=\frac{\omega(\tilde\epsilon)}{\tilde\epsilon^2+\sigma^2}\boldsymbol{m}(\theta_{\tilde\epsilon})'\boldsymbol{Y}$ is minimax affine MSE for the one-dimensional subfamily $[-\theta_{\tilde\epsilon},\theta_{\tilde\epsilon}]$. Specifically, we have
$\sup_{\theta\in[-\theta_{\tilde\epsilon},\theta_{\tilde\epsilon}]}\mathbb{E}_\theta[(\hat L_{\tilde\epsilon}(\boldsymbol{Y})-L(\theta))^2]=\inf_{\tilde L\in{\cal C}_{\rm affine}}\sup_{\theta\in[-\theta_{\tilde\epsilon},\theta_{\tilde\epsilon}]}\mathbb{E}_\theta[(\tilde L(\boldsymbol{Y})-L(\theta))^2]$,
which implies $\sup_{\theta\in[-\theta_{\tilde\epsilon},\theta_{\tilde\epsilon}]}\mathbb{E}_\theta[(\hat L_{\tilde\epsilon}(\boldsymbol{Y})-L(\theta))^2]\le \inf_{\tilde L\in{\cal C}_{\rm affine}}\sup_{\theta\in\Theta}\mathbb{E}_\theta[(\tilde L(\boldsymbol{Y})-L(\theta))^2]$.
Consequently,
$\inf_{\tilde L\in{\cal C}_{\rm affine}}\sup_{\theta\in\Theta}\mathbb{E}_\theta[(\tilde L(\boldsymbol{Y})-L(\theta))^2]=\sup_{\theta\in[-\theta_{\tilde\epsilon},\theta_{\tilde\epsilon}]}\mathbb{E}_\theta[(\hat L_{\tilde\epsilon}(\boldsymbol{Y})-L(\theta))^2]=\sup_{\theta\in\Theta}\mathbb{E}_\theta[(\hat L_{\tilde\epsilon}(\boldsymbol{Y})-L(\theta))^2]$.
\qed

\paragraph{Proof of Statement \ref{theorem:donoho-comparison}}
The statement immediately holds when $\epsilon^*=0$, so let $\epsilon^*>0$.
As shown in the proof of Statement \ref{theorem:donoho-original}, we have $\frac{\tilde\epsilon}{\tilde\epsilon^2+\sigma^2}=\frac{\omega'(\tilde \epsilon)}{\omega(\tilde\epsilon)}$.
Therefore, to show that $\epsilon^*<\tilde\epsilon$, it suffices to show that $\frac{\epsilon}{\epsilon^2+\sigma^2}\neq \frac{\omega'(\epsilon)}{\omega(\epsilon)}$ for all $\epsilon \in (0,\epsilon^{*}]$.
To prove this, I first show that $\frac{\epsilon}{\epsilon^2+\sigma^2}<\frac{\phi(\epsilon/\sigma)}{\sigma(1-\Phi(\epsilon/\sigma))}$ or, equivalently,
$\frac{\epsilon^2+\sigma^2}{\epsilon}>\sigma\frac{1-\Phi(\epsilon/\sigma)}{\phi(\epsilon/\sigma)}$ for all $\epsilon\in (0,\epsilon^*]$.
Let $g(\epsilon)=\frac{\epsilon^2+\sigma^2}{\epsilon}$. We have $g'(\epsilon)=1-\frac{\sigma^2}{\epsilon^2}$, which is strictly increasing in $\epsilon>0$. Therefore, $g(\epsilon)$ is minimized over $(0,\infty)$ at $\epsilon=\sigma$, at which $g'(\epsilon)=0$.
For any $\epsilon>0$,
$g(\epsilon)\ge g(\sigma)=2\sigma>\sigma\frac{1-\Phi(0)}{\phi(0)}>\sigma\frac{1-\Phi(\epsilon/\sigma)}{\phi(\epsilon/\sigma)}$, where the second last inequality holds since $\frac{1-\Phi(0)}{\phi(0)}\approx 1.253$, and the last holds since $\frac{1-\Phi(\epsilon/\sigma)}{\phi(\epsilon/\sigma)}$ is strictly decreasing in $\epsilon$ by the fact that the Mills ratio $\frac{1-\Phi(x)}{\phi(x)}$ of a standard normal random variable is strictly decreasing.

Next, I show that $\frac{\phi(\epsilon/\sigma)}{\sigma(1-\Phi(\epsilon/\sigma))}\le\frac{\omega'(\epsilon)}{\omega(\epsilon)}$ for all $\epsilon\in (0,\epsilon^*]$.
Let $g(\epsilon)=\omega(\epsilon)\Phi(-\epsilon/\sigma)$.
For $\epsilon> 0$,
	$g'(\epsilon)=\omega'(\epsilon)\Phi\left(-\epsilon/\sigma\right)-\frac{\omega(\epsilon)}{\sigma}\phi\left(-\epsilon/\sigma\right)=\left[\sigma\omega'(\epsilon)\frac{1-\Phi\left(\epsilon/\sigma\right)}{\phi\left(\epsilon/\sigma\right)}-\omega(\epsilon)\right]\frac{\phi\left(\epsilon/\sigma\right)}{\sigma}$.
    Let $h(\epsilon)=\sigma\omega'(\epsilon)\frac{1-\Phi\left(\epsilon/\sigma\right)}{\phi\left(\epsilon/\sigma\right)}-\omega(\epsilon)$.
    Since $\epsilon^*>0$ maximizes $g(\epsilon)$ over $\epsilon\in [0,\epsilon^*]$, we must have $g'(\epsilon^*)\ge 0$ or, equivalently, $h(\epsilon^*)\ge 0$.
    Note that $h(\cdot)$ is nonincreasing, since $\omega'(\cdot)$ is nonincreasing by the concavity of $\omega(\cdot)$, $\frac{1-\Phi(\epsilon/\sigma)}{\phi(\epsilon/\sigma)}$ is strictly decreasing in $\epsilon$, and $\omega(\cdot)$ is nondecreasing.
    Therefore, $h(\epsilon)\ge 0$ for all $\epsilon\in (0,\epsilon^*]$, which implies that $\frac{\phi(\epsilon/\sigma)}{\sigma(1-\Phi(\epsilon/\sigma))}\le\frac{\omega'(\epsilon)}{\omega(\epsilon)}$ for all $\epsilon\in (0,\epsilon^*]$.

    I have shown that $\frac{\epsilon}{\epsilon^2+\sigma^2}<\frac{\phi(\epsilon/\sigma)}{\sigma(1-\Phi(\epsilon/\sigma))}$ and $\frac{\phi(\epsilon/\sigma)}{\sigma(1-\Phi(\epsilon/\sigma))}\le\frac{\omega'(\epsilon)}{\omega(\epsilon)}$ for all $\epsilon\in (0,\epsilon^*]$.
    Therefore, $\frac{\epsilon}{\epsilon^2+\sigma^2}\neq \frac{\omega'(\epsilon)}{\omega(\epsilon)}$ for all $\epsilon \in (0,\epsilon^{*}]$.
    \qed

\section{Additional Details for Section \ref{section:app}}
\renewcommand{\theequation}{C.\arabic{equation}}\label{appendix:app-details}
\setcounter{equation}{0}

\subsection{Derivation of $\omega(0)$, $\omega'(0)$, and $\boldsymbol{w}^*$}\label{appendix:derivation-wstar}

The following proposition can be used to derive $\omega(0)$, $\omega'(0)$, and $\boldsymbol{w}^*$.
To derive $\boldsymbol{w}^*$, let $\boldsymbol{\mu}_\epsilon=(f_\epsilon(x_1,d_1)/\sigma(x_1,d_1),...,f_\epsilon(x_n,d_n)/\sigma(x_n,d_n))'$ for $\epsilon\in [0,\bar\epsilon]$, where $(f_\epsilon(x_i,d_i))_{i=1,...,n}$ and $\bar\epsilon$ are given by the proposition.
Then,
\begin{align*}
\boldsymbol{w^*}=\lim_{\epsilon\downarrow 0}\frac{\boldsymbol \mu_\epsilon}{\epsilon}=\begin{cases}
0~~&\text{ if } x_i<c_1 \text{ or } x_i>x_{+,{\rm min}},\\
-\sigma(x_i,0)/(\tilde n\bar\sigma)~~&\text{ if } c_1\le x_i<c_0,\\
\sigma(x_{+,{\rm min}},1)/\bar\sigma~~&\text{ if } x_i=x_{+,{\rm min}}.
\end{cases}
\end{align*}
\begin{proposition}[Solution to Modulus Problem for Eligibility Cutoff Choice]\label{proposition:lip_mod}
	Suppose that $d_i=\mathbf{1}\{x_i\ge c_0\}$ for all $i=1,...,n$ and that $x_i\neq x_j$ for any $i\neq j$, $i,j=1,...,n$.
	Then, there exists $\bar\epsilon>0$ such that for any $\epsilon\in [0,\bar\epsilon]$, one solution to (\ref{eq:mod_lip_finite}) is given by
	\begin{align*}
		f_\epsilon(x_i,0)&=\begin{cases}
			0~~&\text{ if } x_i< c_1 \text{ or } x_i\ge c_0,\\
			-\sigma^2(x_i,0)\epsilon/(\tilde n\bar\sigma)~~&\text{ if } c_1\le x_i<c_0,
		\end{cases}\\
		f_\epsilon(x_i,1)&=\begin{cases}
			0~~&\text{ if } x_i> x_{+,{\rm min}},\\
			C(x_{+,{\rm min}} - x_i)+\sigma^2(x_{+,{\rm min}},1)\epsilon/\bar\sigma~~&\text{ if } x_i\le x_{+,{\rm min}},
		\end{cases}
	\end{align*}
	and the modulus of continuity is given by
	$\omega(\epsilon)=C\frac{1}{\tilde n}\sum_{i:c_1\le x_i<c_0}[x_{+,{\rm min}} - x_i]+\bar\sigma\epsilon$.
\end{proposition}
\begin{proof}
    Let $\bar\epsilon=(C/\max_i \sigma(x_i,d_i))\min_{i,j:i\neq j}|x_i-x_j|>0$.
Fix $\epsilon\in [0,\bar\epsilon]$ and consider the following problem:
\begin{align}
	&\max_{(f(x_i,0),f(x_i,1))_{i=1,...,n}\in\mathbb{R}^{2n}} \frac{1}{\tilde n}\sum_{i:c_1\le x_i<c_0}[f(x_i,1)-f(x_i,0)]\label{eq:mod_lip_finite_relax}\\
	s.t. & ~~\sum_{i:c_1\le x_i<c_0}\frac{f(x_i,0)^2}{\sigma^2(x_i,0)}+\frac{f(x_{+,{\rm min}},1)^2}{\sigma^2(x_{+,{\rm min}},1)}\le \epsilon^2,\nonumber\\
    &~~f(x_i,1)-f(x_{+,\rm min},1)\le C|x_i-x_{+,\rm min}|, ~~i\in\{j: c_1\le x_j< c_0\}\nonumber.
\end{align}
I first provide a solution to (\ref{eq:mod_lip_finite_relax}) and then show that the solution also solves the problem (\ref{eq:mod_lip_finite}).
Observe that, given a value of $f(x_{+,{\rm min}},1)$, the objective is maximized only when $f(x_i,1)=C(x_{+,{\rm min}}-x_i)+f(x_{+,{\rm min}},1)$ for any $i$ with $c_1\le x_i <c_0$ under the constraints of (\ref{eq:mod_lip_finite_relax}). Also, none of the objective and constraints depends on $f(x_i,0)$ with $x_i<c_1$ or $x_i\ge c_0$ or on $f(x_i,1)$ with $x_i<c_1$ or $x_i>x_{+,{\rm min}}$.
Thus, a solution to (\ref{eq:mod_lip_finite_relax}) is given by $(f_\epsilon(x_i,0),f_\epsilon(x_i,1))_{i=1,...,n}$, where $f_\epsilon(x_i,0)=0$ for any $i$ with $x_i<c_1$ or $x_i\ge c_0$, $f_\epsilon(x_i,1)=C(x_{+,{\rm min}}-x_i)+f(x_{+,{\rm min}},1)$ for any $i$ with $x_i <c_0$, $f_\epsilon(x_i,1)=0$ for any $i$ with $x_i>x_{+,{\rm min}}$, and $((f_\epsilon(x_i,0))_{i:c_1\le x_i<c_0},f_\epsilon(x_{+,{\rm min}},1))$ solves
\begin{align*}
	&\max_{\boldsymbol f \in\mathbb{R}^{\tilde n+1}} \frac{C}{\tilde n}\sum_{i:c_1\le x_i<c_0}(x_{+,{\rm min}}-x_i)+f(x_{+,{\rm min}},1)-\frac{1}{\tilde n}\sum_{i:c_1\le x_i<c_0}f(x_i,0) \\
	s.t. &~~\sum_{i:c_1\le x_i<c_0}\frac{f(x_i,0)^2}{\sigma^2(x_i,0)}+\frac{f(x_{+,{\rm min}},1)^2}{\sigma^2(x_{+,{\rm min}},1)}\le \epsilon^2,
\end{align*}
where $\boldsymbol f$ denotes $((f(x_i,0))_{i:c_1\le x_i<c_0},f(x_{+,{\rm min}},1))$. 
This is a convex optimization problem that maximizes a weighted sum of $\tilde n+1$ unknowns under the constraint on the upper bound on a weighted Euclidean norm of the unknowns.
Simple calculations show that the solution is given by $f_\epsilon(x_i,0)=-\sigma^2(x_i,0)\epsilon/(\tilde n\bar\sigma)$ for any $i$ with $c_1\le x_i<c_0$ and $f_\epsilon(x_{+,{\rm min}},1)=\sigma^2(x_{+,{\rm min}},1)\epsilon/\bar\sigma$.
Note that  $(f_\epsilon(x_i,0),f_\epsilon(x_i,1))_{i=1,...,n}$ equals the expression in the statement of Proposition \ref{proposition:lip_mod}.

The constraints of (\ref{eq:mod_lip_finite_relax}) are weaker than those of (\ref{eq:mod_lip_finite}).
To show that $(f_\epsilon(x_i,0),f_\epsilon(x_i,1))_{i=1,...,n}$ also solves (\ref{eq:mod_lip_finite}), it suffices to show that $(f_\epsilon(x_i,0),f_\epsilon(x_i,1))_{i=1,...,n}$ satisfies the constraints of (\ref{eq:mod_lip_finite}).
Clearly, the norm constraint is satisfied.
To see that the Lipschitz constraint is satisfied for $d=0$, 
observe that for any $i,j$ with $i\neq j$, 
\begin{align*}
	&|f_\epsilon(x_i,0)-f_\epsilon(x_j,0)|^2=f_\epsilon(x_i,0)^2+f_\epsilon(x_j,0)^2-2f_\epsilon(x_i,0)f_\epsilon(x_j,0)\\
    &\le f_\epsilon(x_i,0)^2+f_\epsilon(x_j,0)^2\le (\max_{k} \sigma(x_k,d_k))^2\bar \epsilon^2=C^2\min_{k,l:k\neq l}|x_k-x_l|^2\le C^2|x_i-x_j|^2,
\end{align*}
where the first inequality holds since $f_\epsilon(x_i,0)\le 0$ for all $i$, the second inequality follows from the norm constraint of (\ref{eq:mod_lip_finite_relax}) and the fact that $f_\epsilon(x_i,0)=0$ for all $i$ with $x_i<c_1$ or $x_i\ge c_0$, and the equality in the second line from the definition of $\bar\epsilon$.
For $d=1$, by construction,
$|f_\epsilon(x_i,1)-f_\epsilon(x_j,1)|=C|x_i-x_j|$ for any $i,j$ with $x_i,x_j\le x_{+,{\rm min}}$, and $|f_\epsilon(x_i,1)-f_\epsilon(x_j,1)|=0$ for any $i,j$ with $x_i,x_j >x_{+,{\rm min}}$.
Note that by the norm constraint of (\ref{eq:mod_lip_finite_relax}),
$f_\epsilon(x_{+,{\rm min}},1)^2\le (\max_{k} \sigma(x_k,d_k))^2\bar \epsilon^2=C^2\min_{k,l:k\neq l}|x_k-x_l|^2$.
It follows that, for any $i,j$ with $x_i\le x_{+,{\rm min}}$ and $x_j>x_{+,{\rm min}}$,
$|f_\epsilon(x_i,1)-f_\epsilon(x_j,1)|= C(x_{+,{\rm min}} - x_i)+f_\epsilon(x_{+,{\rm min}},1)\le C(x_{+,{\rm min}}-x_i)+C\min_{k,l:k\neq l}|x_k-x_l|\le C(x_{+,{\rm min}}-x_i)+C(x_j-x_{+,{\rm min}})= C(x_j-x_i)$.

Finally, the expression for $\omega(\epsilon)$ follows from a simple calculation.
\end{proof}

\subsection{Procedure for Computing $\epsilon^*\in\arg\max_{\epsilon\in[0,\tau^*]}\omega(\epsilon)\Phi(-\epsilon)$}\label{appendix:computation}

Here, I provide a procedure for computing $\epsilon^*\in\arg\max_{\epsilon\in[0,\tau^*]}\omega(\epsilon)\Phi(-\epsilon)$ for the case in which $2\phi(0)\omega(0)<\omega'(0)$.
The procedure is based on the first-order condition.
By differentiating $\omega(\epsilon)\Phi(-\epsilon)$, we have
$\omega'(\epsilon)\Phi\left(-\epsilon\right)-\omega(\epsilon)\phi\left(-\epsilon\right)=\left[\frac{1-\Phi\left(\epsilon\right)}{\phi\left(\epsilon\right)}-\frac{\omega(\epsilon)}{\omega'(\epsilon)}\right]\omega'(\epsilon)\phi\left(\epsilon\right)$.
$\frac{1-\Phi(\epsilon)}{\phi(\epsilon)}$ is the Mills ratio of a standard normal random variable, which is strictly decreasing in $\epsilon$.
Since $\omega(\epsilon)$ is nonnegative, nondecreasing, and concave, $\frac{\omega(\epsilon)}{\omega'(\epsilon)}$ is nondecreasing in $\epsilon$.
Therefore, $\frac{1-\Phi\left(\epsilon\right)}{\phi\left(\epsilon\right)}-\frac{\omega(\epsilon)}{\omega'(\epsilon)}$ is strictly decreasing in $\epsilon$.
This suggests the following procedure to compute $\epsilon^*$.
\begin{enumerate}
	\item If $\frac{1-\Phi\left(\tau^*\right)}{\phi\left(\tau^*\right)}-\frac{\omega(\tau^*)}{\omega'(\tau^*)}\ge 0$, $\epsilon^*=\tau^*$.
	\item If not, use the bisection method to find $\epsilon^*\in[0,\tau^*]$ that solves $\frac{1-\Phi\left(\epsilon\right)}{\phi\left(\epsilon\right)}-\frac{\omega(\epsilon)}{\omega'(\epsilon)}=0$.
\end{enumerate}
For each $\epsilon>0$, once we solve the problem (\ref{eq:mod_lip_finite}) to compute $\omega(\epsilon)$ and $(f_\epsilon(x_i,0),f_\epsilon(x_i,1))_{i=1,...,n}$, we can compute $\omega'(\epsilon)$ using the closed-form expression
$\omega'(\epsilon)=
\frac{\epsilon}{\sum_{i=1}^n d_i f_\epsilon(x_i,d_i)/\sigma^2(x_i,d_i)}$,
which is obtained by an application of Lemma \ref{lemma:dif_omega} with $\iota(x,d)=d$ for all $x\in\mathbb{R}$.

\section{Asymptotic Optimality When the Error Distribution is Unknown}\label{appendix:asymptotics}

\renewcommand{\theequation}{D.\arabic{equation}}
\setcounter{equation}{0}

In this section, I propose a feasible version of the minimax regret decision rule when the error distribution is unknown and establish its asymptotic optimality, closely following the asymptotic framework considered by \citeappendix{armstrong2018optimal}.

Suppose that the sample $\boldsymbol Y_n=(Y_1,...,Y_n)'$ is generated by the following model:
$$
\boldsymbol Y_n =\boldsymbol m_n(\theta)+\boldsymbol U_n,~~~\theta\in \Theta, ~\boldsymbol U_n\sim Q\in {\cal Q}_n,
$$
where $\boldsymbol U_n=(U_1,...,U_n)'$ and ${\cal Q}_n$ is the set of possible distributions of $\boldsymbol U_n$.
As in the main text, $\Theta$ is a known subset of a vector space $\mathbb{V}$, $\boldsymbol m_n: \mathbb{V}\rightarrow \mathbb{R}^n$ is a known linear function, and consider a binary policy choice problem in which $L(\theta)$ represents the welfare contrast, where $L: \mathbb{V}\rightarrow \mathbb{R}$ is a known linear function.
For notational simplicity, I hold $\Theta$ and $L$ fixed over $n$ and assume that the variance of $\boldsymbol{U}_n$ is $\sigma^2\boldsymbol{I}_n$ under any $Q\in{\cal Q}_n$ for some fixed, unknown $\sigma>0$.
It is straightforward to extend the result to the case in which $\Theta$ and $L$ may change as $n$ increases and the variance of $\boldsymbol{U}_n$ may vary across $Q\in{\cal Q}_n$.

Regret of decision rule $\delta_n:\mathbb{R}^n\rightarrow [0,1]$ now depends on $\theta$ and $Q$:
	$$
	R_n(\delta_n,\theta,Q)\coloneqq\begin{cases}
	L(\theta)(1-\mathbb{E}_{\theta,Q}[\delta_n(\boldsymbol{Y}_n)])~~&\text{ if } L(\theta)\ge 0,\\
	-L(\theta)\mathbb{E}_{\theta,Q}[\delta_n(\boldsymbol{Y}_n)]~~&\text{ if } L(\theta)< 0,
	\end{cases}
	$$
	where $\mathbb{E}_{\theta,Q}$ denotes the expectation under $(\theta,Q)$.
	Let ${\cal R}_n(\Theta,{\cal Q}_n)$ denote the minimax risk:
	$$
	{\cal R}_n(\Theta,{\cal Q}_n)\coloneqq\inf_{\delta_n\in{\cal D}_n}\sup_{\theta\in\Theta, Q\in{\cal Q}_n}R_n(\delta_n,\theta,Q),
	$$
    where ${\cal D}_n$ is the set of all decision rules.
I say that a sequence of decision rules $\delta_n$ is {\it asymptotically minimax regret} if
	$$
	\lim_{n\rightarrow \infty}\left(\sup_{\theta\in\Theta, Q\in{\cal Q}_n}R_n(\delta_n,\theta,Q)-{\cal R}_n(\Theta,{\cal Q}_n)\right)=0.
	$$

I consider a feasible version of the decision rule in Theorem \ref{theorem:main}.
Let ${\cal M}_n\coloneqq\{\boldsymbol{m}_n(\theta):\theta\in\Theta\}$, $I_n(\boldsymbol{\mu})\coloneqq\{L(\theta):\boldsymbol{m}_n(\theta)=\boldsymbol{\mu},\theta\in\Theta\}$ for $\boldsymbol{\mu}\in\mathbb{R}^n$, and $\omega_n(\epsilon)\coloneqq \sup\{L(\theta): \|\boldsymbol{m}_n(\theta)\|\le \epsilon,\theta\in\Theta\}$ for $\epsilon\ge 0$.
Also, let $\epsilon^*_n\in\arg\max_{\epsilon\in [0,\tau^*\sigma]}\omega_n(\epsilon)\Phi(-\epsilon/\sigma)$
and $\boldsymbol w^*_n=\lim_{\epsilon\downarrow 0}\frac{\boldsymbol{\mu}_{n,\epsilon}}{\epsilon}$, where $\boldsymbol{\mu}_{n,\epsilon}\in \arg\max_{\boldsymbol{\mu}\in{\cal M}_n:\|\boldsymbol{\mu}\|\le \epsilon}\sup I_n(\boldsymbol{\mu})$.
Suppose we have some estimator $\hat\sigma_n\coloneqq\hat\sigma_n(\boldsymbol Y_n)$ of $\sigma$, which is a function of the sample $\boldsymbol Y_n$.
Let $\hat \epsilon_n\in\arg\max_{\epsilon\in [0,\tau^*\hat\sigma_n]}\omega_n(\epsilon)\Phi(-\epsilon/\hat\sigma_n)$ be an estimator for $\epsilon_n^*$.
Assuming that there exists $\theta_{n,\hat \epsilon_n}$ such that $L(\theta_{n,\hat \epsilon_n})=\omega(\hat \epsilon_n)$ and $\|\boldsymbol{m}_n(\theta_{n,\hat \epsilon_n})\|\le \hat \epsilon_n$,
consider the minimax regret decision rule under normal errors in Theorem \ref{theorem:main}, where $\sigma$ is replaced with $\hat\sigma_n$:
 	\begin{align*}
	\hat\delta_n(\boldsymbol Y_n)=\begin{cases}\mathbf{1}\left\{\boldsymbol{m}_n(\theta_{n,\hat \epsilon_n})'\boldsymbol{Y}_n\ge 0\right\} & ~~\text{if } \omega_n'(0)>0, 2\phi(0)\frac{\omega_n(0)}{\omega_n'(0)} < \hat\sigma_n,\\
	\mathbf{1}\left\{(\boldsymbol{w}_n^*)'\boldsymbol{Y}_n\ge 0\right\} & ~~\text{if } \omega_n'(0)>0, 2\phi(0)\frac{\omega_n(0)}{\omega_n'(0)} = \hat\sigma_n,\\
	\Phi\left(\dfrac{(\boldsymbol{w}_n^*)'\boldsymbol{Y}_n}{((2\phi(0)\omega_n(0)/\omega_n'(0))^2-\hat\sigma_n^2)^{1/2}}\right)& ~~\text{if } \omega_n'(0)>0, 2\phi(0)\frac{\omega_n(0)}{\omega_n'(0)} > \hat\sigma_n,\\
        1/2 & ~~\text{if } \omega_n'(0)=0.
	\end{cases}
	\end{align*}


I show that $\hat\delta_n$ is asymptotically minimax regret under suitable conditions.
For a sequence of random variables $W_n$, I use the notation $W_n\underset{\Theta,{\cal Q}_n}{\stackrel{p}{\rightarrow}}w$, where $w$ is a constant, to denote the convergence in probability uniformly over $\theta\in \Theta$ and $Q\in{\cal Q}_n$---that is, for every $c>0$,
$\lim_{n\rightarrow\infty}\sup_{\theta\in\Theta,Q\in{\cal Q}_n}\mathbb{P}_{\theta,Q}(|W_n-w|>c)=0$.
I use the notation $W_n\underset{\Theta,{\cal Q}_n}{\stackrel{d}{\rightarrow}} {\cal L}$, where ${\cal L}$ is a probability law, to denote the convergence in distribution uniformly over $\theta\in \Theta$ and $Q\in{\cal Q}_n$---that is,
$\lim_{n\rightarrow \infty}\sup_{\theta\in\Theta,Q\in{\cal Q}_n}\sup_{w\in\mathbb{R}}|\mathbb{P}_{\theta,Q}(W_n\le w)-F_{\cal L}(w)|=0$,
where $F_{\cal L}$ is the cumulative distribution function of the law ${\cal L}$, which is assumed to be continuous.

\begin{assumption}[Conditions for Asymptotic Optimality]\label{assumption:asymptotics}
~
\begin{enumerate}[label=(\roman*)]
        \item $\omega_n'(0)>0$ for any sufficiently large $n$, and $s^*\coloneqq\lim_{n\rightarrow\infty}2\phi(0)\frac{\omega_n(0)}{\omega'_n(0)}$ exists.
        \item  For any sufficiently large $n$, there exists $\theta_{n,\epsilon_n^*}$ that attains the modulus of continuity at $\epsilon_n^*$.
        \item $\hat\sigma_n\underset{\Theta,{\cal Q}_n}{\stackrel{p}{\rightarrow}}\sigma$.
        \item For any sufficiently large $n$, the following holds with probability one under any $(\theta,Q)\in\Theta\times {\cal Q}_n$: there exists $\theta_{n,\hat \epsilon_n}$ that attains the modulus of continuity at $\hat \epsilon_n$.
		\item\label{assumption:uniform}
        Either of the following holds.
        \begin{enumerate}[label=(\alph*)]
            \item\label{assumption:asymptotics-det} $\sigma > s^*$ and $\frac{\boldsymbol m_n(\theta_{n,\epsilon^*_n})'\boldsymbol U_n+(\boldsymbol m_n(\theta_{n,\hat \epsilon_n})-\boldsymbol m_n(\theta_{n,\epsilon^*_n}))\boldsymbol Y_n}{\sigma\epsilon^*_n}\underset{\Theta,{\cal Q}_n}{\stackrel{d}{\rightarrow}} {\cal N}(0,1)$.
            \item\label{assumption:asymptotics-ran} $\sigma < s^*$ and $\frac{(\boldsymbol{w}^*_n)'\boldsymbol{U}_n +((s^*_n)^2-\hat\sigma^2_n)^{1/2}\xi}{s^*_n} \underset{\Theta,{\cal Q}_n}{\stackrel{d}{\rightarrow}} {\cal N}(0,1)$, where $s^*_n\coloneqq 2\phi(0)\frac{\omega_n(0)}{\omega'_n(0)}$ and $\xi|\boldsymbol Y_n\sim {\cal N}(0,1)$ under any $(\theta,Q)\in\Theta\times {\cal Q}_n$ for all $n$.
        \end{enumerate}
		\item \label{assumption:asymptotics-normal} For any sufficiently large $n$, there exists $Q\in {\cal Q}_n$ such that $\boldsymbol U_n\sim {\cal N}(\boldsymbol 0, \sigma^2 \boldsymbol I_n)$ under $Q$.
        \item $\sup_{\theta\in\Theta}|L(\theta)|<\infty$.
	\end{enumerate}
\end{assumption}

The key conditions are Assumption \ref{assumption:asymptotics}\ref{assumption:uniform} and \ref{assumption:asymptotics-normal}.
The former assumes the uniform asymptotic normality of the weighted sum of the errors $\boldsymbol U_n$ (plus an error due to the estimation of $\sigma$ when $\hat\delta_n$ is nonrandomized and a mean-zero noise $((s^*_n)^2-\hat\sigma^2_n)^{1/2}\xi$ when $\hat\delta_n$ is randomized).
This assumption is used to show that the maximum regret of $\hat\delta_n$ over $\theta\in\Theta$ and $Q\in{\cal Q}_n$ converges to the minimax risk under normal errors in Theorem \ref{theorem:main}.
The uniform asymptotic normality can be established, for example, under conditions on the weight vector $\frac{\boldsymbol m_n(\theta_{n,\epsilon^*_n})}{\sigma \epsilon^*_n}$ or $\frac{\boldsymbol{w}^*_n}{s^*_n}$ \`a la the Lindeberg condition for the central limit theorem.
In the context of uniform asymptotic valid inference on RD parameters, such conditions are directly imposed \citepappendix{Imbens2019RDD} or verified under low-level conditions \citepappendix{armstrong2018optimal}.
Assumption \ref{assumption:asymptotics}\ref{assumption:asymptotics-normal} states that the class of possible distributions contains a normal distribution.
Under this assumption, the minimax risk under normal errors is a lower bound on the minimax risk over the class of possible distributions.
Since the maximum regret of $\hat\delta_n$ is shown to converge to the lower bound, $\hat\delta_n$ is asymptotically minimax regret.
Similar assumptions are often used in the literature on minimax estimation and inference in nonparametric regression models \citepappendix{fan1993local,armstrong2018optimal}.


\begin{theorem}[Asymptotic Optimality]\label{theorem:asymptotics}
	Suppose that $(L,\boldsymbol{m}_n,\Theta)$ satisfies Assumption \ref{assumption:problem} for any sufficiently large $n$ and that Assumption \ref{assumption:asymptotics} holds.
	Then, $\hat\delta_n$ is asymptotically minimax regret.
\end{theorem}

\begin{proof}
    Let $Q_{\cal N}$ denote ${\cal N}(\boldsymbol{0},\sigma^2\boldsymbol{I}_n)$ and ${\cal R}_n(\Theta,{\cal N})=\inf_{\delta_n\in{\cal D}_n}\sup_{\theta\in\Theta}R_n(\delta_n,\theta,Q_{\cal N})$ denote the minimax risk under normal errors.
    Since $Q_{\cal N}\in {\cal Q}_n$,
    \begin{align}
    {\cal R}_n(\Theta,{\cal N})\le \inf_{\delta_n\in{\cal D}_n}\sup_{\theta\in\Theta, Q\in{\cal Q}_n}R_n(\delta_n,\theta,Q)={\cal R}_n(\Theta,{\cal Q}_n).\label{eq:asy-pf2}
    \end{align}
    Below, I show that
    \begin{align}
			\sup_{\theta\in\Theta,Q\in{\cal Q}_n}R_n(\hat\delta_n,\theta,Q)\le {\cal R}_n(\Theta,{\cal N})+ c_n \label{eq:asy-pf1}
	\end{align}
	for some positive constant $c_n\rightarrow 0$.
    Once it is done, combining inequalities (\ref{eq:asy-pf2}) and (\ref{eq:asy-pf1}) yields
    $\sup_{\theta\in\Theta,Q\in{\cal Q}_n}R_n(\hat\delta_n,\theta,Q)-{\cal R}_n(\Theta,{\cal Q}_n)\le c_n$, which implies that $\lim_{n\rightarrow \infty}\left(\sup_{\theta\in\Theta, Q\in{\cal Q}_n}R_n(\hat\delta_n,\theta,Q)-{\cal R}_n(\Theta,{\cal Q}_n)\right)=0$.

    In what follows, I show that (\ref{eq:asy-pf1}) holds for the case in which $\sigma>s^*=\lim_{n\rightarrow\infty}s^*_n$.
    The proof for the case in which $\sigma<s^*$ is omitted, since it is similar, noting that we can write $\hat\delta_n(\boldsymbol Y_n)=\mathbb{P}_{\theta,Q}\left((\boldsymbol{w}^*_n)'\boldsymbol{Y}_n+((s^*_n)^2-\hat\sigma^2_n)^{1/2}\xi\ge 0|\boldsymbol Y_n\right)$ when $\hat\sigma_n< s^*_n$.
    Suppose $\sigma>s^*$.
    Let $\tilde\delta_n(\boldsymbol Y_n)=\mathbf{1}\left\{\boldsymbol{m}_n(\theta_{n,\hat \epsilon_n})'\boldsymbol{Y}_n\ge 0\right\}$.
    By construction,
    $\sup_{\theta\in\Theta,Q\in{\cal Q}_n}\mathbb{P}_{\theta,Q}\left(\hat\delta_n(\boldsymbol{Y}_n)\neq \tilde\delta_n(\boldsymbol{Y}_n)\right)\le\sup_{\theta\in\Theta,Q\in{\cal Q}_n}\mathbb{P}_{\theta,Q}\left(\hat\sigma_n\le s^*_n\right)$.
    The right-hand side converges to zero, since $\hat\sigma_n\underset{\Theta,{\cal Q}_n}{\stackrel{p}{\rightarrow}}\sigma>s^*=\lim_{n\rightarrow\infty}s^*_n$.
    Consider the following inequality:
    \begin{align*}
        &\sup_{\theta\in\Theta,Q\in{\cal Q}_n}L(\theta)(1-\mathbb{E}_{\theta,Q}[\hat\delta_n(\boldsymbol{Y}_n)])\\
        &\le\sup_{\theta\in\Theta,Q\in{\cal Q}_n}L(\theta)(1-\mathbb{E}_{\theta,Q}[\tilde\delta_n(\boldsymbol Y_n)])+\left(\sup_{\theta\in\Theta}|L(\theta)|\right)\sup_{\theta\in\Theta,Q\in{\cal Q}_n}\mathbb{E}_{\theta,Q}\left[|\hat\delta_n(\boldsymbol{Y}_n)-\tilde\delta_n(\boldsymbol Y_n)|\right].
    \end{align*}
    The second term on the right-hand side is bounded above by $\left(\sup_{\theta\in\Theta}|L(\theta)|\right)\sup_{\theta\in\Theta,Q\in{\cal Q}_n}\mathbb{E}_{\theta,Q}\left[\mathbf{1}\left\{\hat\delta_n(\boldsymbol Y_n)\neq\tilde\delta_n(\boldsymbol Y_n)\right\}\right]$,
    which converges to zero as $n\rightarrow\infty$, since $\sup_{\theta\in\Theta}|L(\theta)|<\infty$ and $\sup_{\theta\in\Theta,Q\in{\cal Q}_n}\mathbb{P}_{\theta,Q}\left(\hat\delta_n(\boldsymbol{Y}_n)\neq \tilde\delta_n(\boldsymbol{Y}_n)\right)\rightarrow 0$.
    It follows that for some positive constant $c_n\rightarrow 0$,
    \begin{align}
        \sup_{\theta\in\Theta,Q\in{\cal Q}_n}L(\theta)(1-\mathbb{E}_{\theta,Q}[\hat\delta_n(\boldsymbol{Y}_n)])\le\sup_{\theta\in\Theta,Q\in{\cal Q}_n}L(\theta)(1-\mathbb{E}_{\theta,Q}[\tilde\delta_n(\boldsymbol Y_n)])+c_n. \label{eq:asy-pf30}
    \end{align}

    Next, we bound $\sup_{\theta\in\Theta,Q\in{\cal Q}_n}L(\theta)(1-\mathbb{E}_{\theta,Q}[\tilde\delta_n(\boldsymbol Y_n)])$.
    Since $\sigma>s^*=\lim_{n\rightarrow\infty}s^*_n$, for any sufficiently large $n$, $\sigma>s^*_n$, and hence $\epsilon^*_n>0$ by Lemma \ref{lemma:hardest-1d}.
    Let $Z_n=\frac{\boldsymbol m_n(\theta_{n,\epsilon^*_n})'\boldsymbol U_n+(\boldsymbol m_n(\theta_{n,\hat \epsilon_n})-\boldsymbol m_n(\theta_{n,\epsilon^*_n}))\boldsymbol Y_n}{\sigma\epsilon^*_n}$. We have
	 \begin{align*}
	 &\sup_{\theta\in\Theta,Q\in{\cal Q}_n}L(\theta)(1-\mathbb{E}_{\theta,Q}[\tilde\delta_n(\boldsymbol{Y}_n)])=\sup_{\theta\in\Theta,Q\in{\cal Q}_n}L(\theta)\left(1-\mathbb{P}_{\theta,Q}(\boldsymbol{m}_n(\theta_{n,\hat \epsilon_n})'\boldsymbol{Y}_n\ge 0)\right)\\
    &=\sup_{\theta\in\Theta,Q\in{\cal Q}_n}L(\theta)\left(1-\mathbb{P}_{\theta,Q}\left((\boldsymbol m_n(\theta_{n,\hat \epsilon_n})-\boldsymbol m_n(\theta_{n,\epsilon^*_n}))\boldsymbol Y_n+\boldsymbol m_n(\theta_{n,\epsilon^*_n})'\boldsymbol{Y}_n\ge 0\right)\right)\\
    &=\sup_{\theta\in\Theta,Q\in{\cal Q}_n}L(\theta)\left(1-\mathbb{P}_{\theta,Q}\left(-Z_n\le \frac{\boldsymbol m_n(\theta_{n,\epsilon^*_n})'\boldsymbol m_n(\theta)}{\sigma\epsilon^*_n}\right)\right)\\
    &\le\sup_{\theta\in\Theta}L(\theta)\left(1-\Phi\left(\frac{\boldsymbol m_n(\theta_{n,\epsilon^*_n})'\boldsymbol m_n(\theta)}{\sigma\epsilon^*_n}\right)\right)\\
	 &~~~~+\left(\sup_{\theta\in\Theta}|L(\theta)|\right)\sup_{\theta\in\Theta,Q\in{\cal Q}_n}\left|\mathbb{P}_{\theta,Q}\left(-Z_n\le \frac{\boldsymbol m_n(\theta_{n,\epsilon^*_n})'\boldsymbol m_n(\theta)}{\sigma\epsilon^*_n}\right)-\Phi\left(\frac{\boldsymbol m_n(\theta_{n,\epsilon^*_n})'\boldsymbol m_n(\theta)}{\sigma\epsilon^*_n}\right)\right|.
	 \end{align*}
     The second term in the last expression is bounded above by
     $\left(\sup_{\theta\in\Theta}|L(\theta)|\right)\sup_{\theta\in\Theta,Q\in{\cal Q}_n}\sup_{z\in\mathbb{R}}\left|\mathbb{P}_{\theta,Q}\left(-Z_n\le z\right)-\Phi\left(z\right)\right|$,
     which converges to zero as $n\rightarrow \infty$, since $\sup_{\theta\in\Theta}|L(\theta)|<\infty$ and $Z_n\underset{\Theta,{\cal Q}_n}{\stackrel{d}{\rightarrow}} {\cal N}(0,1)$.
     Note that, when $\sigma>s^*_n$, the first term $\sup_{\theta\in\Theta}L(\theta)\left(1-\Phi\left(\frac{\boldsymbol m_n(\theta_{n,\epsilon^*_n})'\boldsymbol m_n(\theta)}{\sigma\epsilon^*_n}\right)\right)$ is the maximum regret of decision rule $\delta^*_n(\boldsymbol Y_n)=\mathbf{1}\left\{\boldsymbol{m}_n(\theta_{n,\epsilon^*_n})'\boldsymbol{Y}_n\ge 0\right\}$ under normal errors. Since $\delta^*_n$ is minimax regret under normal errors,
     $\sup_{\theta\in\Theta}L(\theta)\left(1-\Phi\left(\frac{\boldsymbol m_n(\theta_{n,\epsilon^*_n})'\boldsymbol m_n(\theta)}{\sigma\epsilon^*_n}\right)\right)={\cal R}_n(\Theta,{\cal N})$.
     Together with (\ref{eq:asy-pf30}), the above argument implies that for some positive  $c_n\rightarrow 0$, $\sup_{\theta\in\Theta,Q\in{\cal Q}_n}L(\theta)(1-\mathbb{E}_{\theta,Q}[\hat\delta_n(\boldsymbol{Y}_n)])\le {\cal R}_n(\Theta,{\cal N})+ c_n$.
    The same argument applies to $\sup_{\theta\in\Theta,Q\in{\cal Q}_n}-L(\theta)\mathbb{E}_{\theta,Q}[\hat\delta_n(\boldsymbol{Y}_n)]$, so that inequality (\ref{eq:asy-pf1}) holds as desired.
\end{proof}

\section{Empirical Application: Additional Tables and Figures}\label{appendix:empirical_result}

\setcounter{table}{0}
\renewcommand{\thetable}{E.\arabic{table}}

\begin{table}[!ht]\centering
	\newcolumntype{C}{>{\centering\arraybackslash}X}
	
	\caption{Child Educational Outcomes and Characteristics}
	{\footnotesize
		\begin{tabularx}{\textwidth}{lCCC}
			
			\hline\hline
			& All    & Eligible & Ineligible                    \\
			&        & villages & villages                       \\
			& (1) & (2) & (3) \\
			\hline\addlinespace[1ex]
			\multicolumn{4}{c}{Panel A. Educational outcomes (child-level means)}                   \\\addlinespace[1ex]
			Enrollment                   & 0.366  & 0.494    & 0.259                          \\
			Normalized total test scores & 0.000  & 0.248    & $-0.209$                         \\
			Highest grade child has achieved              & 0.876  & 1.132    & 0.636                          \\
			&&&\\
			\multicolumn{4}{c}{Panel B. Child and household characteristics (child-level means)}    \\\addlinespace[1ex]
			Child's age                  & 8.121  & 8.174    & 8.071                          \\
			Child is female              & 0.503  & 0.476    & 0.525                          \\
			Head's age                   & 47.653 & 47.387   & 47.904                         \\
			Head years of schooling      & 0.156  & 0.198    & 0.117                          \\
			Number of members            & 10.812 & 10.815   & 10.808                         \\
			Number of children           & 5.971  & 6.098    & 5.850                          \\
			Muslim                       & 0.587  & 0.576    & 0.597                          \\
			Basic roofing                & 0.516  & 0.534    & 0.500                          \\
			Number of motorbikes         & 0.299  & 0.319    & 0.279                          \\
			Number of phones         & 0.185  & 0.199    & 0.172                          \\
			&        &          &                                \\
			Total number of children     & 23,282  & 10,645    & 12,637                          \\
			Total number of villages     & 287    & 136      & 151\\
			\bottomrule                          
		\end{tabularx}
	}
	\caption*{
		\scriptsize {\it Notes}: This table reports child-level averages of educational outcomes and characteristics by program eligibility in the year 2008, namely 2.5 years after the start of the BRIGHT program.
		Panel A reports the educational outcomes' means.
		Panel B reports the means of child and household characteristics.
		Column (1) shows the means for children in all villages. Columns (2) and (3) show the means for children in villages selected for BRIGHT school and in unselected villages, respectively.
	}
	\label{table:summary}
\end{table}

\setcounter{figure}{0}
\renewcommand{\thefigure}{E.\arabic{figure}}

	\begin{figure}[!h]
		\centering
		\caption{Weight to Each Village Attached by Plug-in Rules} 
		\begin{subfigure}[b]{0.49\textwidth}   
			\centering 
			\includegraphics[width=\textwidth]{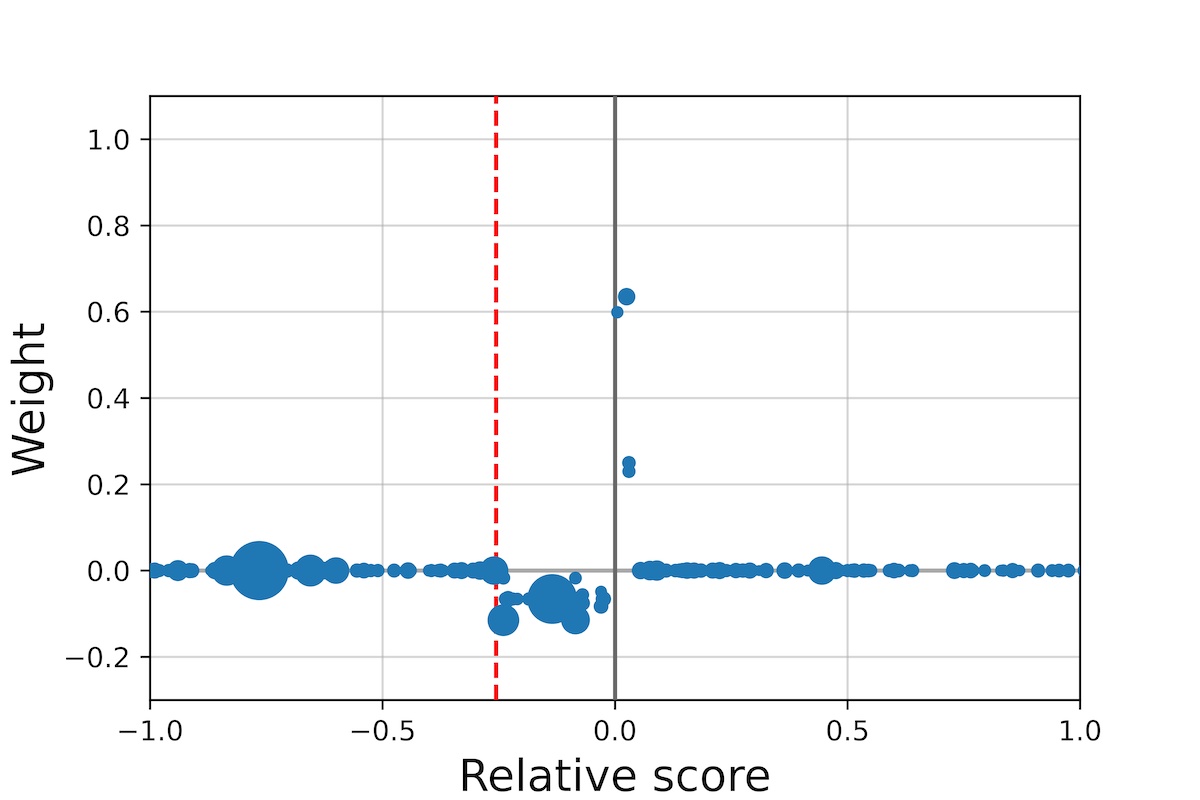}
			\vspace{-0.8em}
			\caption[]%
			{\footnotesize Minimax Affine MSE, $C=0.5$\\
            ~~~}
		\end{subfigure}
		\begin{subfigure}[b]{0.49\textwidth}   
			\centering 
			\includegraphics[width=\textwidth]{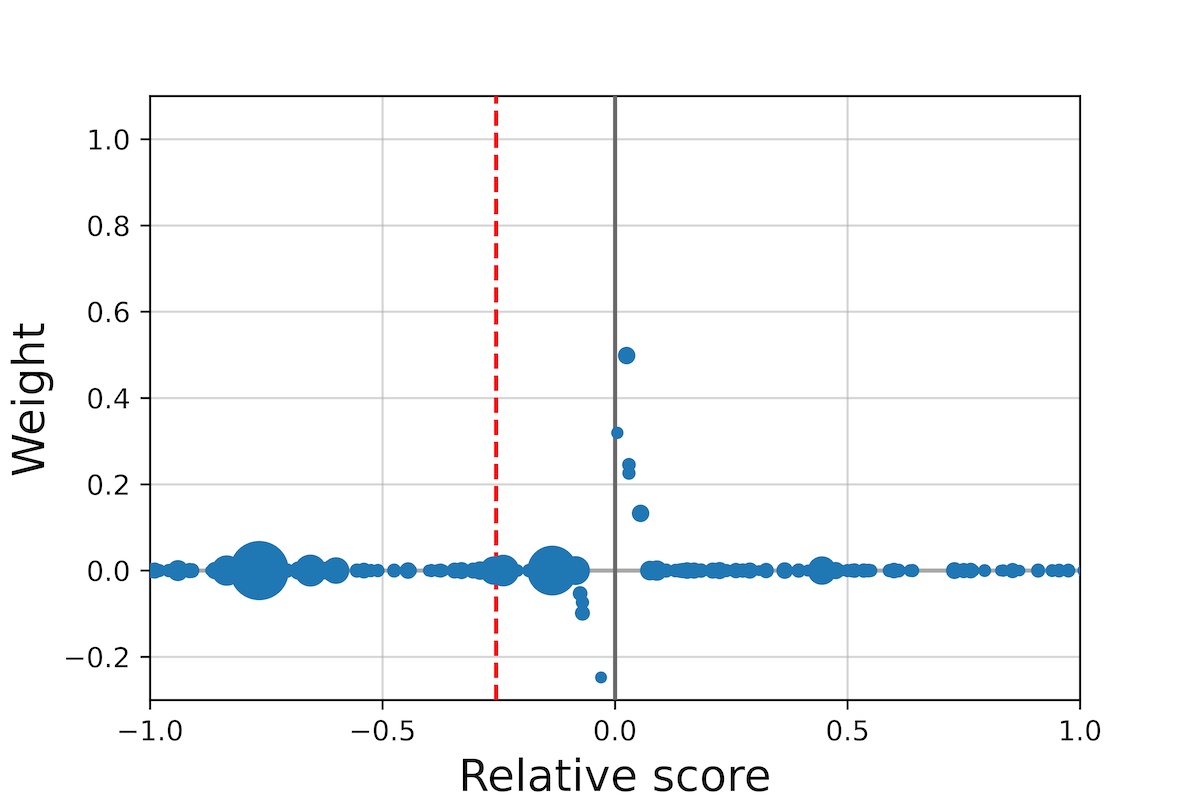}
			\vspace{-0.8em}
			\caption[]%
			{\footnotesize Minimax Affine MSE Under Constant Effect,\\ $~~~~~~~~~~~~~~~~~~~~~~~~~~~~~$$C=0.5$}
		\end{subfigure}
		\begin{subfigure}[b]{0.49\textwidth}   
			\centering 
			\includegraphics[width=\textwidth]{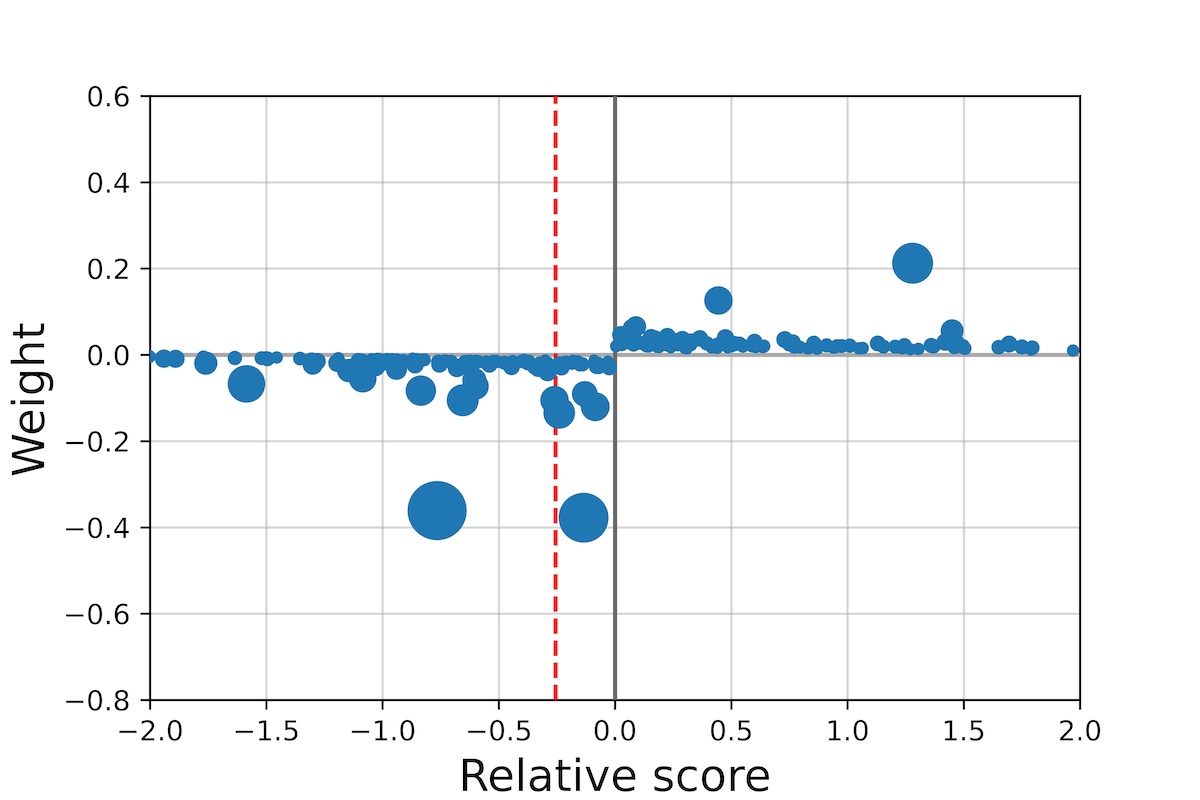}
			\caption[]%
			{\footnotesize Polynomial of Degree $2$} 
		\end{subfigure}
		\begin{subfigure}[b]{0.49\textwidth}   
			\centering 
			\includegraphics[width=\textwidth]{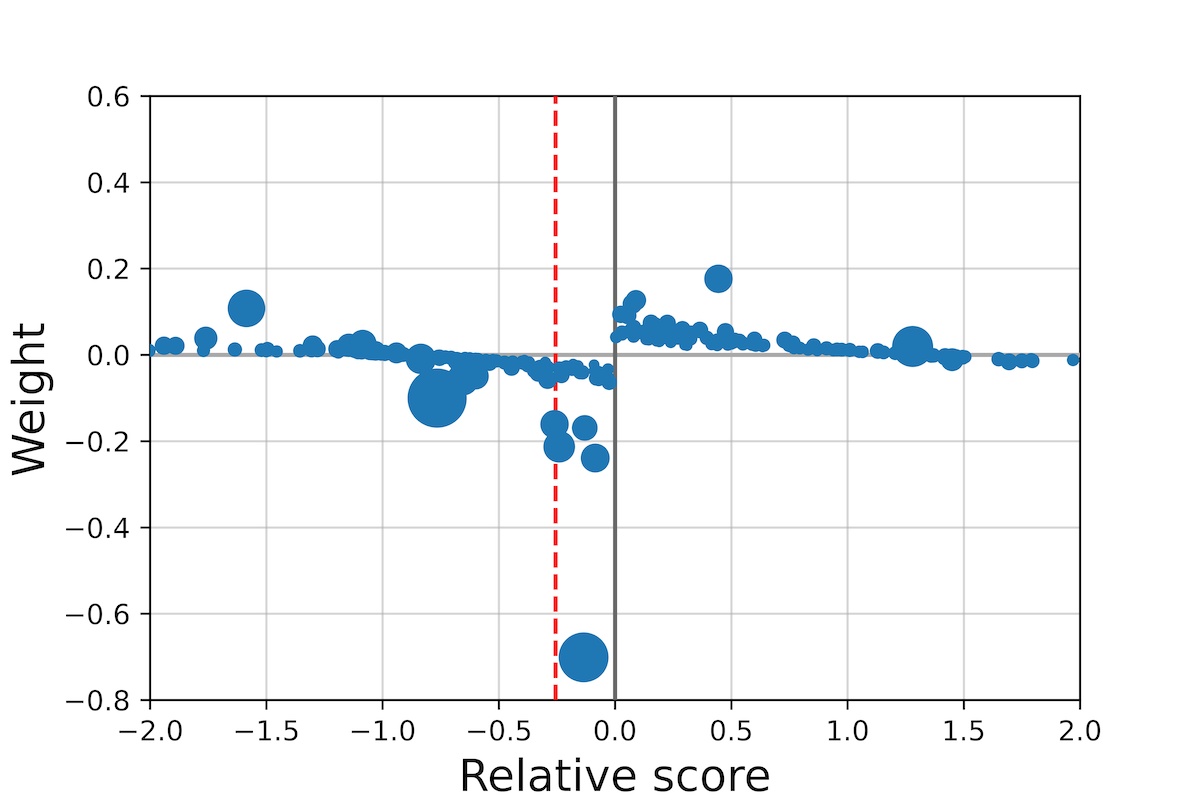}
			\caption[]%
			{\footnotesize Polynomial of Degree $4$} 
		\end{subfigure}
		
		\caption*{\scriptsize{\it Notes}: This figure shows the weight $w_i$ attached to each village by the plug-in decision rules of the form $\delta(\boldsymbol Y)=\mathbf{1}\{\sum_{i=1}^nw_iY_i\ge 0\}$.
			The weights are normalized so that $\sum_{i=1}^nw_i^2=1$.
			The horizontal axis indicates the relative score of each village.
			Each circle corresponds to each village.
			The size of circles is proportional to the inverse of the standard error of the enrollment rate $Y_i$.
			The vertical dashed line corresponds to the new cutoff $-0.256$.
			Panels (a) and (b) show results for the plug-in rules based on the minimax affine MSE estimators with or without the assumption of constant conditional treatment effects when the Lipschitz constant $C$ is 0.5.
			Panels (c) and (d) show results for the plug-in rules based on the polynomial regression estimators of degrees 2 and 4, respectively.}
		\label{fig:weight_plug}
	\end{figure}

\clearpage

	\begin{figure}[!t]
		\centering
		\caption{Optimal Decisions for Alternative New Policies}
		\begin{subfigure}[b]{0.49\textwidth}
			\centering 
			\includegraphics[width=\textwidth]{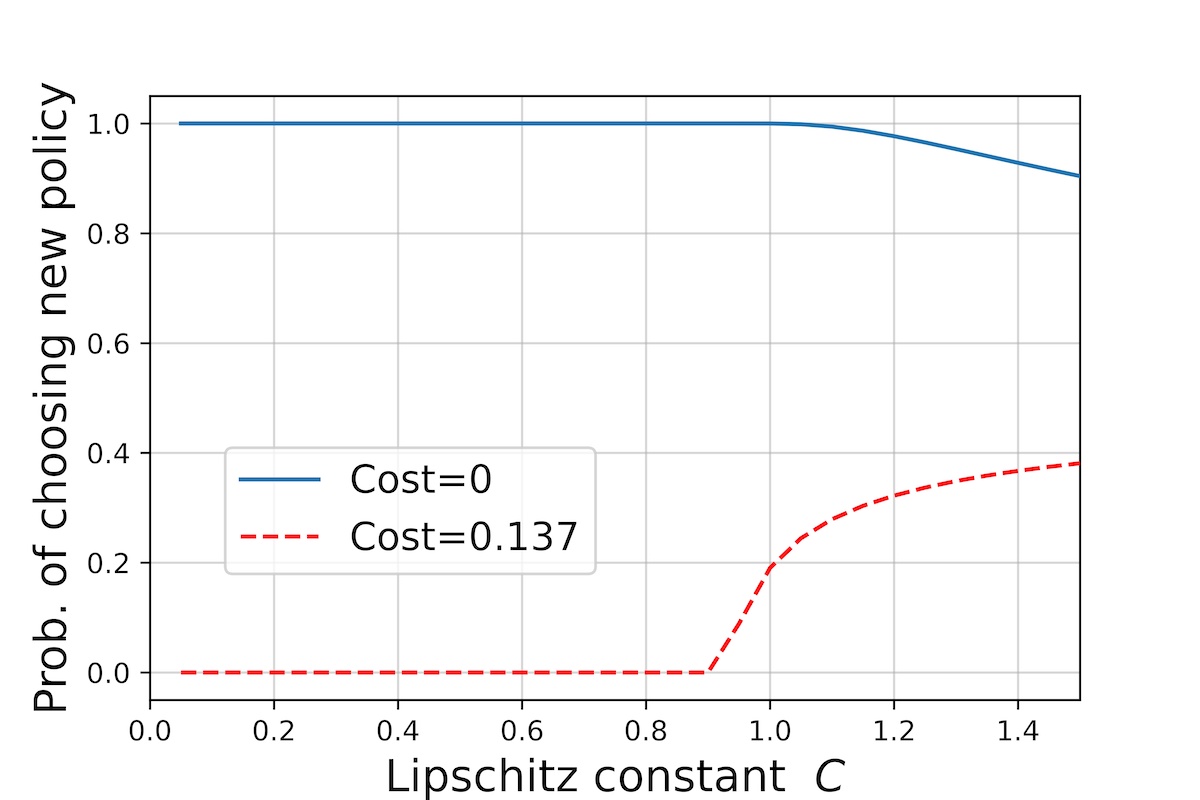}
            \vspace{-1em}
			\caption[]%
			{\footnotesize Constructing Schools in the Top 10\% of\\
            $~~~~~~$ Previously Ineligible Villages}
		\end{subfigure}
		\begin{subfigure}[b]{0.49\textwidth}   
			\centering 
			\includegraphics[width=\textwidth]{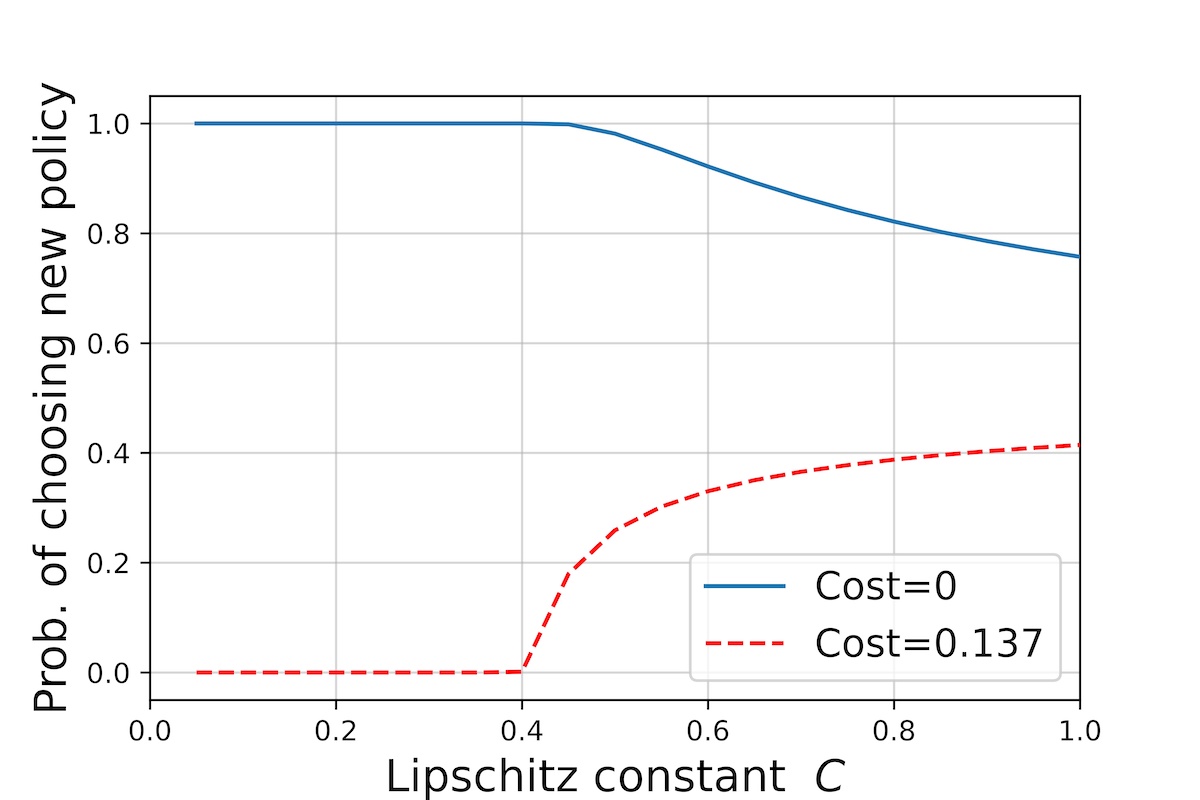}
            \vspace{-1em}
			\caption[]%
			{\footnotesize Constructing Schools in the Top 30\% of\\
            $~~~~~~$ Previously Ineligible Villages}
		\end{subfigure}
		\caption*{\scriptsize{\it Notes}: This figure shows the probability of choosing the new policy computed by the minimax regret rule.
			The new policy is to construct BRIGHT schools in previously ineligible villages whose relative scores are in the top 10\% (Panel (a)) or in the top 30\% (Panel (b)).
			The solid line shows results for the scenario in which we ignore the policy cost.
			The dashed line shows results for the scenario in which the policy cost measured in the unit of the enrollment rate is 0.137.
			I report results for the range $[0.05,0.1,...,1.45,1.5]$ of the Lipschitz constant $C$ in Panel (a) and for the range $[0.05,0.1,...,0.95,1]$ in Panel (b).
		}
		\label{fig:other_cutoffs}
	\end{figure}


\singlespacing
\bibliographystyleappendix{myecca}
\bibliographyappendix{reference}
\onehalfspacing

\end{document}